\numberwithin{equation}{section}
\newtheorem{definition}{Definition}
\newtheorem{theorem}{Theorem}
\newtheorem{lem}{Lemma}
\newtheorem{fact}{Fact}
\newtheorem{coro}{Corollary}
\newtheorem{rem}{Remark}
\theoremstyle{definition}
\DeclareMathOperator*{\sign}{\mathrm{sign}}
\DeclareMathOperator{\ii}{\textbf{i}}
\DeclareMathOperator{\rank}{rank}
\DeclareMathOperator{\diag}{diag}
\begin{document}

%
\title{Uniform Exact Reconstruction of Sparse Signals and Low-rank Matrices from Phase-Only Measurements}
%
%
%

\author{Junren~Chen, and \and Michael K. Ng,~\IEEEmembership{Senior~Member,~IEEE}
\thanks{J. Chen is with Department of Mathematics, The University of Hong Kong (e-mail: chenjr58@connect.hku.hk). The work of J. Chen was supported by a Hong Kong PhD Fellowship from Hong Kong Research Grant Council. M. K. Ng is with Department of Mathematics, The University of Hong Kong (e-mail: mng@maths.hku.hk). The work of M. K. Ng was supported by Hong Kong Research Grant Council GRF 12300218, 12300519,
17201020, 17300021, C1013-21GF, C7004-21GF and jointly by NSFC-RGC N-HKU76921. {\it (Corresponding author: Junren Chen.)}}
}

%
%

\markboth{Accepted at T-IT}%
{Shell \MakeLowercase{\textit{et al.}}: Bare Demo of IEEEtran.cls for IEEE Journals}
%



\maketitle

\begin{abstract}
 In phase-only compressive sensing (PO-CS), our goal is to recover low-complexity signals (e.g., sparse signals, low-rank matrices) from the phase of complex linear measurements.
	 While perfect recovery of signal direction in PO-CS was observed quite early, the exact reconstruction guarantee for a fixed, real signal was recently 
	done by 
	Jacques and Feuillen [IEEE Trans. Inf. Theory, 67 (2021), pp.~4150\textrm{--}4161]. However, two   questions remain open:
the uniform recovery guarantee and exact recovery of complex signal.
In this paper, we almost completely address these two open questions. 
We prove that, all complex sparse signals or low-rank matrices can be uniformly, exactly recovered from a near optimal number of complex Gaussian measurement phases. 
      By recasting PO-CS as a linear compressive sensing problem, the exact recovery follows from  restricted isometry property (RIP). Our approach to uniform recovery guarantee is based on covering arguments that  involve  a delicate control of the (original linear) measurements with overly small magnitude. To work with complex signal,  a different sign-product embedding property and a careful rescaling of the sensing matrix are employed. 
      In addition, we show an extension that 
		the uniform recovery is stable under moderate bounded noise. We also propose to add Gaussian dither before capturing the phases to   achieve full reconstruction with norm information. Experimental results are   reported to corroborate and demonstrate our theoretical results.  
\end{abstract}

\begin{IEEEkeywords}
compressed sensing, phase-only measurement, uniform recovery, sparsity, low-rankness 
\end{IEEEkeywords}

%
\IEEEpeerreviewmaketitle

\section{Introduction}
	Signal reconstruction from the phase of Fourier measurement was intensively studied about 40 years ago, see several pioneering works \cite{hayes1980signal,hayes1982reconstruction,oppenheim1980iterative,oppenheim1981importance}. Theoretically, these works showed almost all signals can be exactly reconstructed from Fourier phase, up to the trivial ambiguity of a positive scaling factor, thus confirming the importance of phase in Fourier transform \cite{oppenheim1981importance}. Algorithms developed in   early works include closed form solution \cite{hayes1980signal},   iterative algorithm \cite{hayes1980signal,quatieri1981iterative} and the Projection Onto Convex Sets (POCS) algorithm \cite{levi1983signal,urieli1998optimal}. Such considerable interest from signal processing community was mainly due to some early motivations such as blind deconvolution \cite{hayes1982reconstruction,stockham1975blind}, signal coding \cite{hayes1980signal,oppenheim1981importance}, Kinoforms \cite{espy1983effects}, while subsequent   applications extended to image restoration \cite{behar1992image,urieli1998optimal} and inpainting \cite{hua2007image}, object shape retrieval \cite{bartolini2005warp} and speech reconstruction \cite{loveimi2010objective}. Besides, generalized phase-only reconstruction, where the linear measurement is not restricted to some specific transform like Fourier transform, was studied in our recent work \cite{chen2022signal}. We established more practical uniqueness conditions that are necessary and sufficient, as well as proved some new results on minimal measurement number for uniform recovery of (almost) all signals. 

        While all   papers reviewed above consider the phase-only reconstruction of unstructured signal,   there is a line of research concerning recovery of  sparse (or more generally, low-complexity) signal from phase \cite{boufounos2013angle,boufounos2013sparse,feuillen2020ell,jacques2021importance}, which is termed as a phase-only compressive sensing (PO-CS) problem. More precisely, given the complex sensing matrix $\bm{\Phi} $, one aims to recover or estimate a sparse signal $\bm{x}$ from the phase of $\bm{\Phi x}$. The motivations   of   PO-CS are at least twofold. From a theoretical side, since the (complex) phase of a real number is just its sign ($1$ or $-1$), PO-CS is a natural generalization of 1-bit compressive sensing (1-bit CS), which is   a   well-studied nonlinear compressive sensing model and can be formulated as the estimation of a sparse real signal $\bm{x}$ from the sign of $\bm{\Phi x}$ under a real sensing matrix $\bm{\Phi}$ \cite{boufounos20081,jacques2013robust,plan2012robust,plan2013one}. Practically, such phase-only sensing scenario can be more stable under large measurement variations or corruption \cite{jacques2021importance}. Moreover, it also allows easier  data quantization due to the compactness of the measurement range $\{a\in \mathbb{C}:|a|=1\}$. \textcolor{black}{For instance, phase-only measurements can be quantized to finite bit with a simple uniform quantizer (e.g., \cite{boufounos2013angle,wang2016multiuser}). By contrast, to uniformly quantize  regular real or complex measurements to a specific number of bits, in general we still need to precisely estimate the measurement magnitude to avoid the overload issue (e.g., see     \cite[Equation (9)]{gray1993dithered}).}  

       To generalize 1-bit CS to complex sensing matrix, Boufounos first proposed and  studied  PO-CS \cite{boufounos2013angle,boufounos2013sparse}. Motivated by  restricted isometry property (RIP) in linear compressive sensing \cite{candes2005decoding,candes2006near} and the binary $\epsilon$-stable embedding (B$\epsilon$SE) in 1-bit CS \cite{jacques2013robust},   he established a kind of angle-preserving property that indicates the possibility of approximately recovering the signal. By recasting PO-CS as a linear compressive sensing problem, however, perfect reconstruction of signal direction beyond the theory was observed \cite{boufounos2013sparse}. \textcolor{black}{Since then, the  experimental exact reconstruction in PO-CS, which is not possible in 1-bit CS, remained theoretically unjustified.} In the past few years,  Jacques and his collaborators revisited the PO-CS problem. In \cite{feuillen2020ell}, Feuillen et al. proposed a non-iterative   approach called Projected Back Projection (PBP) to estimate a   sparse signal that may be complex-valued. Provably, the estimation error of PBP  decays with rate $O\big(\sqrt[\leftroot{-3}\uproot{3}4]{\frac{s}{m}}\big)$ under larger sample size $m$. Nevertheless, such result by no means leads to exact reconstruction. Indeed, even the possibility of exact reconstruction, i.e., the identifiability question, can not be confirmed by \cite{boufounos2013angle,boufounos2013sparse,feuillen2020ell}.

       The theoretical breakthrough was  recently made by Jacques and Feuillen   \cite{jacques2021importance}. Specifically, they considered   $\bm{x}\in\mathcal{K}$ for some low-complexity set $\mathcal{K}$ and  recast PO-CS as a linear compressive sensing problem (this idea goes back to \cite{boufounos2013sparse}). If $\bm{\Phi}$ has  i.i.d. complex Gaussian entries drawn from $\mathcal{N}(0,1)+\mathcal{N}(0,1)\ii$, they showed the sensing matrix of the resulting linear compressive sensing problem 
       respects RIP under a sample size proportional to the intrinsic dimension of $\mathcal{K}$ (characterized by Gaussian width). Therefore, exact signal reconstruction can be achieved by some tractable reconstruction procedure (that works under RIP) developed in the compressive sensing literature.

       At a higher level, the result in \cite{jacques2021importance} demonstrates the importance of phase   in (complex) compressive sensing --- as the  reconstruction procedure can be implemented without aid from any magnitude information. It was also reported as numerical result that PO-CS exhibits similar performance as classical linear compressive sensing at about twice the measurement number, which is, interestingly, aligned with the  minimal measurement number   for recovering almost all unstructured signals  presented in   \cite[Thm. 4.1]{chen2022signal}.

       However, unlike in classical linear compressive sensing where the RIP of sensing matrix   delivers uniform recovery guarantee for all sparse signals (e.g., \cite{Foucart2013,candes2005decoding}), \cite{jacques2021importance} only establishes  non-uniform guarantee for reconstruction of a fixed signal.\footnote{In a uniform recovery guarantee, a single measurement matrix simultaneously ensures the   recovery of all signals of interest. By contrast, in a non-uniform guarantee,   the measurement matrix only works for a fixed signal, and in general a new measurement matrix should be drawn for recovery of another signal.} In short, this is because in the corresponding linear compressive sensing problem, the sensing matrix varies with the signal $\bm{x}$, while \cite{jacques2021importance} only showed the RIP of a specific/fixed sensing matrix. \textcolor{black}{Because uniformity is regarded as an important feature of recovery guarantee in compressive sensing} \cite{genzel2022unified}, it is of particular interest to study whether a uniform exact recovery guarantee is achievable. In addition, 
       many key ingredients in \cite{jacques2021importance} heavily rely on the assumption of real signal $\bm{x}$ and do not extend to complex $\bm{x}$, so it remains unknown whether it is possible to perfectly recover a sparse complex signal from phase-only measurements.\footnote{Among existing works, only the approximate recovery guarantee in \cite{feuillen2020ell} applies to complex signal.} Therefore, Jacques and Feuillen left these two possible improvements  as open questions for future research, see   \cite[Sec. VII]{jacques2021importance}.

       In the major results of this paper, we show that under near optimal sample complexity (up to logarithmic factor), all complex sparse signals or low-rank matrices can be uniformly and perfectly reconstructed  from phase-only measurements, up to the trivial ambiguity of a positive scaling.\footnote{The signals in $\mathbb{R}_+\bm{x} = \{\lambda\bm{x}:\lambda>0\}$ can never be distinguished, thus we can only hope to recover $\bm{x}$ up to a positive scaling $\lambda$. Throughout this work, the exact reconstruction in PO-CS allows such scaling ambiguity, which may not be explicitly mentioned.} For the most classical sparse and low-rank signal structure, we thus simultaneously address the two open questions in affirmative. Our technical contributions for proving these major results are summarized as follows: 
       
       \begin{itemize}
            
           \item (Uniform Exact Recovery)\textbf{.} Since the sensing matrix in the reformulation varies with the underlying signal (\ref{3.6}), to yield a uniform guarantee we need to prove the sensing matrices for all signals simultaneously respect RIP, which is in essence bounding an empirical process from above (\ref{a3.12}). Due to the additional supremum taken for underlying signals, the empirical process essentially relies on $\sign(\cdot)$ (the function to extract complex phase), thus making the techniques in \cite{jacques2021importance} and some other tools not applicable. Instead, we resort to the more elementary covering arguments, and the main difficulty encountered is the pathological behaviour of     $\sign(\cdot)$ in $\{\eta\cdot a:a\in \mathbb{C},|a|\leq 1\}$ ($\eta>0$ is a pre-specified threshold), e.g., discontinuity and large variation. Thus, retaining only the phase would be problematic for the measurements with small absolute value (smaller than $\eta$), which are termed near vanishing measurements. As it turns out, a rather delicate   and involved analysis is needed to control the effect of near vanishing measurements; \textcolor{black}{for instance}, controlling the number of near vanishing measurements (Lemma \ref{lem10}), estimating the norm of sub-matrices (Lemma \ref{lem11}).

           
           \item (Recovery of Complex Signals)\textbf{.} Compared with real signal   \cite{boufounos2013sparse,jacques2021importance}, PO-CS of complex signal should be reformulated as a (real) linear compressive sensing problem with an extended sensing matrix. In our analysis, a signal is decomposed as a parallel part and an orthogonal part regarding the desired underlying signal. Compared with  \cite[Lem. 5.4]{jacques2021importance}, a different sign-product embedding property that only involves the real part of inner product is established for analyzing the parallel part. This is essentially due to the removal of a redundant measurement (Note that, the linear compressive sensing problem in \cite{jacques2021importance} involves $n+2$ measurements, but only $n+1$ measurements in this work). For orthogonal part, our calculations unveil the existence of a bias term (i.e., $|\Im\big<\bm{x},\bm{u}\big>|^2$ in (\ref{3.31})). To deal with this issue, a rescaling of the sensing matrix is applied to achieve a restricted isometry constant (RIC) of $\frac{1}{3}+\delta$, where $\delta$ can be made sufficiently small. This  RIP circumstance is sufficient for sparse or low-rank recovery \cite{cai2013sparse} .   
       \end{itemize}

       Besides the main results presented in Theorems \ref{theorem1}-\ref{theorem2}, we show that the uniform recovery  in PO-CS is stable under moderate bounded noise (Theorem \ref{theorem3}). Considering that a full reconstruction without scaling ambiguity is preferable in some applications, we propose a simple variant of PO-CS that can achieve uniform reconstruction with norm information, due to a random Gaussian dither added before capturing the phases (Theorem \ref{theorem4}). Beyond that, our discussions reveal that our proof for Theorem \ref{theorem1} actually applies to many other structured signal set with low covering dimension. To complement our uniform reconstruction results, we also present a non-uniform guarantee for any fixed complex signal with unit $\ell_2$-norm  (Theorem \ref{theorem5}), which represents   \textcolor{black}{an extension of the main result in \cite{jacques2021importance} to complex signals}.   Due to a finer tool and more careful analysis, Theorem \ref{theorem5} slightly refines the sample complexity needed in \cite{jacques2021importance}.

       The outline of this paper is given as follows. In Section \ref{section2} we provide the notations and preliminaries. We present the technical proof for the uniform recovery guarantee of complex sparse signals in Section \ref{section3}, which is then extended to low-rank matrices without details in Section \ref{section4}. In Section \ref{section6} we present a uniform stable recovery guarantee and the uniform reconstruction with norm in PO-CS with Gaussian dither. Some discussions on extension and limitation of our main result are provided in Section \ref{section7}, where we also present the non-uniform guarantee for complex signal recovery, with technical proof relegated to Appendix. We report the experimental results for PO-CS in Section \ref{section8} and give some remarks to conclude the paper in Section \ref{section9}.
      
    \section{Notations and Preliminaries}
    \label{section2}
    \subsection{Notations}
        We introduce the generic notations used throughout the paper, while   additional ones are defined in subsequent development when appropriate. Boldface letters  are used to represent   vectors and matrices. We write $[m] = \{1,\cdots, m\}$ for positive integer $m$.  The cardinality of a finite set $\mathcal{T}$ will be denoted by $|\mathcal{T}|$. Writing $\ii = \sqrt{-1}$ as the complex unit, for a complex number $a$ we alternatively use $\Re(a)$ or $a^\Re$ to denote its real part, $\Im(a)$ or $a^\Im$ for its imaginary part, $|a|$, $\bar{a}$ for the absolute value and conjugate, respectively. We let $\sign(a)=\frac{a}{|a|}$ be the phase for non-zero $a$. By convention we set $\sign(0) = 0$. All these operations entry-wisely operate on complex vectors and matrices.

        For a vector $\bm{a}= [a_i]\in\mathbb{C}^n$ we introduce the $\ell_2$-norm $\|\bm{a}\| = ({\sum_i |a_i|^2})^{1/2}$, $\ell_1$-norm $\|\bm{a}\|_1 = \sum_i|a_i|$ and max norm $\|\bm{a}\|_\infty = \max_i|a_i|$. Given   vector $\bm{a}$, 
        $\diag(\bm{a})$ represents the diagonal matrix with main diagonal $\bm{a}$. We also denote the number of non-zero entries by $\|\bm{a}\|_0$, then   $\Sigma^{n}_{s,c} = \{\bm{a}\in \mathbb{C}^n:\|\bm{a}\|_0\leq s\}$ (resp. $\Sigma^n_{s,r}= \{\bm{a}\in\mathbb{R}^n:\|\bm{a}\|_0\leq s\}$) is the set of all complex (resp. real) $s$-sparse signals. Let $\mathbb{S}^{n-1}_c$ (resp. $\mathbb{S}^{n-1}_r$) be the unit Euclidean sphere in $\mathbb{C}^d$ (resp. $\mathbb{R}^d$), we also frequently work with $\Sigma^{n,*}_{s,c}= \Sigma^n_{s,c}\cap \mathbb{S}^{n-1}_c$ and $\Sigma_{s,r}^{n,*} = \Sigma^n_{s,r}\cap \mathbb{S}_r^{n-1}$.

        Given a matrix $\bm{A}=[a_{ij}]\in \mathbb{C}^{n_1\times n_2}$ with singular values $\{\sigma_k:k =1,\cdots,\min\{n_1,n_2\}\}$, we let $\|\bm{A}\|_F=({\sum_{ij}|a_{ij}|^2})^{1/2}$, $\|\bm{A}\|_{*}=\sum_k \sigma_k$,  $\|\bm{A}\|=\max_k \sigma_k$, $\|\bm{A}\|_\infty = \max_{ij}|a_{ij}|$ be its Frobenius norm, nuclear norm, operator norm, and max norm, respectively. We  denote the set of $d_1\times d_2$ complex (resp. real) matrices with rank no more than $r$ by $\mathcal{M}^{n_1,n_2}_{r,c}$ (resp. $\mathcal{M}^{n_1,n_2}_{r,r}$). Their restrictions to matrices with unit Frobenius norm are denoted by $(\mathcal{M}^{n_1,n_2}_{r,c})^*$ and $(\mathcal{M}^{n_1,n_2}_{r,r})^*$. For vectors or matrices $\cdot^\top$, $\cdot^*$ stand for the transpose, conjugate transpose. We use the standard inner product $\big<\bm{A},\bm{B}\big> = \text{Tr}(\bm{A^*B})$, \textcolor{black}{which subsumes the inner product  $\big<\bm{a},\bm{b}\big>=\bm{a}^*\bm{b}$ for vectors $\bm{a},\bm{b}\in \mathbb{C}^n$}. In addition, $\bm{I}_d$ is the $d\times d$ identity matrix, and $\bm{e}_k$ is the $k$-th column of the identity matrix with self-evident dimension. We use $\mathcal{N}^{m\times n}(0,1)$ to denote a $m\times n$ matrix with i.i.d. $\mathcal{N}(0,1)$ entries, and simply $\mathcal{N}(\bm{0},\rho^2\bm{I}_d)=\rho\cdot \mathcal{N}^{d\times 1}(0,1)$ $(\rho>0)$.

        	For $\bm{A}\in \mathbb{C}^{m\times n}$, $\mathcal{S}\subset [m]$, $\mathcal{T}\subset [n]$, we use $\bm{A}^\mathcal{S}_{\mathcal{T}}$ to denote the submatrix of $\bm{A}$ constituted by rows in $\mathcal{S}$ and columns in $\mathcal{T}$. We also write $\bm{A}^\mathcal{S} := \bm{A}^{\mathcal{S}}_{[n]}$, $\bm{A}_{\mathcal{T}}:= \bm{A}_{\mathcal{T}}^{[m]}$. This also applies to column vector $\bm{x}$, for which we sometimes alternatively write $\bm{x}^{[l+1:k]} = \bm{x}^{[k]\setminus[l]}$ for $l<k$. We often switch between $\mathbb{R}$ and $\mathbb{C}$ via $[\cdot]_\mathbb{R}$, $[\cdot]_\mathbb{C}$. In particular, $[\cdot]_\mathbb{R}$ turns $\bm{A}\in \mathbb{C}^{m\times n}$ into \begin{equation}\label{c2r}
        	    [\bm{A}]_\mathbb{R} = \begin{bmatrix}
        	        \Re(\bm{A})\\
                 \Im(\bm{A})
        	    \end{bmatrix}\in \mathbb{R}^{2m\times n},
        	\end{equation}
         while $[\cdot]_\mathbb{C}$ stands for the inverse operation that maps $\bm{A}\in \mathbb{R}^{2m\times n}$ to \begin{equation}\label{r2c}
             [\bm{A}]_\mathbb{C} = \bm{A}^{[m]}+ \bm{A}^{[2m]\setminus[m]}\ii \in \mathbb{C}^{m\times n}.
         \end{equation}
         Note that we allow $[\cdot]_\mathbb{R},[\cdot]_\mathbb{C}$ to operate on a set element-wisely, e.g., $[\mathcal{K}]_\mathbb{R}=\{[\bm{x}]_\mathbb{R}:\bm{x}\in \mathcal{K}\}$ for $\mathcal{K}\subset \mathbb{C}^n$.

    In this work, $C$, $C_i$, $c$, or $c_i$ represent  absolute constants  with value varying from line to line, and we make no attempt to refine these constants. If $T_1\leq CT_2$ for some $C>0$, we write $T_1 = O(T_2)$ or $T_1\lesssim T_2$. The opposite $T_1 \geq C T_2$ would be denoted by $T_1\gtrsim T_2$ or $T_1 = \Omega(T_2)$. As standard complexity notation, $\tilde{O}(\cdot)$ and $\tilde{\Omega}(\cdot)$ further hide  logarithmic factors. The probability, expectation would be given by $\mathbbm{P}(\cdot)$, $\mathbbm{E}(\cdot)$ respectively. Given an event $E$,  $\mathbbm{1}(E)$ is an indicator function that equals to $1$ if $E$ happens, and $0$ otherwise.

    \subsection{Preliminaries}
    \subsubsection{High-dimensional Statistics}
     We first provide some necessary knowledge on high-dimensional statistics, including concentration  of sub-Gaussian or sub-exponential random variable and some covering number results.

    For a real random variable $X$, we define its sub-Gaussian norm $\|\cdot\|_{\psi_2}$, sub-exponential norm $\|\cdot\|_{\psi_1}$ as $\|X\|_{\psi_2} = \inf\{t>0:\mathbbm{E}\exp\big(\frac{X^2}{t^2}\big)<2\}$, $\|X\|_{\psi_1} = \inf \{t>0:\mathbbm{E}\exp\big(\frac{|X|}{t}\big)<2\}$. $X$ is said to be sub-Gaussian (resp. sub-exponential) if $\|X\|_{\psi_2}<\infty$ (resp. $\|X\|_{\psi_1}<\infty$). Note that we have the relation   (e.g.,   \cite[Lem. 2.7.7]{vershynin2018high})\begin{equation}
        \|XY\|_{\psi_1}\leq \|X\|_{\psi_2}\|Y\|_{\psi_2} .
        \label{830add2}
    \end{equation} For some absolute constant $C$, sub-Gaussian $X$ enjoys a probability tail (e.g.,  \cite[Prop. 2.5.2]{vershynin2018high})
    \begin{equation}
    \label{2.1}
        \mathbbm{P}\big(|X | \geq t\big)\leq 2\exp\left(-\frac{Ct^2}{\|X\|_{\psi_2}^2}\right),~~\forall~t>0.
    \end{equation} 
    For independent, zero-mean sub-Gaussian random variables $X_1,\cdots,X_n$, it holds for some $C$ that (e.g.,   \cite[Prop. 2.6.1]{vershynin2018high})
    \begin{equation}
    \label{2.2}
        \Big\|\sum_{i=1}^k X_k\Big\|_{\psi_2}^2 \leq  C \sum_{i=1}^n \|X_i\|^2_{\psi_2}
    \end{equation}
    Note that combining (\ref{2.1}) and (\ref{2.2}) immediately yields the concentration   of the mean of independent sub-Gaussian random variables. For random vector $\bm{X}\in \mathbb{R}^d$, its sub-Gaussian norm is defined by $\|\bm{X}\|_{\psi_2} = \sup_{\bm{V}\in \mathbb{S}^{n-1}_r} \|\bm{V}^\top\bm{X}\|_{\psi_2}$. If the components of $\bm{X}$ are independent sub-Gaussian variables with $O(1)$ sub-Gaussian norm, then a simple fact delivered by (\ref{2.2}) is   $\|\bm{X}\|_{\psi_2} = O(1)$.

    For sub-exponential random variables, we have the following Bernstein's inequality for concentration (e.g.,   \cite[Thm. 2.8.1]{vershynin2018high}).  
    \begin{lem}
 \label{bern}{\rm(Bernstein's inequality)\textbf{.}}
  Let $X_1,...,X_N$ be independent, sub-exponential random variables. Then for any $t>0$, for some constant $C$ we have \begin{equation}
  \begin{aligned}
  \label{lemma1}
     \nonumber
     &\mathbbm{P}\Big(\Big| \sum_{i=1}^N (X_i-\mathbbm{E} X_i)\Big|\geq t\Big)\\&\leq 2\exp\left(-C\min\left\{\frac{t^2}{\sum_{i=1}^N \| X_i\|^2_{\psi_1}},\frac{t}{\max_{i\in [N]}\| X_i\|_{\psi_1}} \right\}\right).
     \end{aligned}
 \end{equation}
 \end{lem}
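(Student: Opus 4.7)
The plan is to follow the standard Chernoff/moment-generating-function (MGF) route, since Bernstein's inequality for sub-exponentials is a classical consequence of a tail-versus-MGF equivalence together with Markov's inequality. First I would reduce to the centered case by considering $Y_i = X_i - \mathbb{E}X_i$, and note that centering preserves the sub-exponential property with $\|Y_i\|_{\psi_1} \leq 2\|X_i\|_{\psi_1}$ (up to constants that get absorbed into $C$), so I may assume $\mathbb{E}X_i = 0$ while keeping $\|X_i\|_{\psi_1}$ as the relevant parameter.

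The key lemma I would invoke is the MGF bound for a zero-mean sub-exponential variable: there exist absolute constants $c_1, c_2 > 0$ such that for every $i$,
\begin{equation*}
    \mathbb{E}\exp(\lambda X_i) \leq \exp\bigl(c_1 \lambda^2 \|X_i\|_{\psi_1}^2\bigr), \qquad \text{whenever } |\lambda| \leq \frac{c_2}{\|X_i\|_{\psi_1}}.
\end{equation*}
This is standard and can be derived by expanding the exponential and using the moment characterization $\mathbb{E}|X_i|^p \leq (Cp)^p \|X_i\|_{\psi_1}^p$. By independence, the product of the MGFs gives, for $|\lambda| \leq c_2/\max_i \|X_i\|_{\psi_1}$,
\begin{equation*}
    \mathbb{E}\exp\Bigl(\lambda \sum_{i=1}^N X_i\Bigr) \leq \exp\Bigl(c_1 \lambda^2 \sum_{i=1}^N \|X_i\|_{\psi_1}^2\Bigr).
\end{equation*}

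Next I would apply Markov's inequality $\mathbb{P}(\sum_i X_i \geq t) \leq e^{-\lambda t}\,\mathbb{E}\exp(\lambda \sum_i X_i)$ and optimize $\lambda$ over the admissible interval $[0, c_2/\max_i \|X_i\|_{\psi_1}]$. The unconstrained optimizer is $\lambda^* = t/(2c_1 \sum_i \|X_i\|_{\psi_1}^2)$, which yields exponent $-t^2/(4c_1 \sum_i \|X_i\|_{\psi_1}^2)$. If $\lambda^*$ exceeds the admissible range, one instead picks the boundary $\lambda = c_2/\max_i \|X_i\|_{\psi_1}$, yielding an exponent of order $-t/\max_i \|X_i\|_{\psi_1}$ (the linear term dominating $\lambda^2$ in this regime). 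Taking the worse of the two cases produces the $\min\{\,\cdot\,,\,\cdot\,\}$ in the statement. The lower tail is identical by replacing $X_i$ with $-X_i$, and a union bound absorbs the factor of $2$.

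The only genuinely delicate step is the MGF bound itself, since one must verify the equivalence between the Orlicz $\psi_1$-norm and the exponential-moment growth rate with constants that behave properly under centering; everything afterwards is essentially bookkeeping. Since this equivalence is standard (see, e.g., Vershynin's textbook already cited), no real obstacle arises and the whole argument is short and mechanical.
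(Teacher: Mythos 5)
Your proposal is correct and follows the standard Chernoff/MGF route. The paper itself does not prove this lemma; it cites it directly (as Theorem 2.8.1 in Vershynin's textbook), and your argument is precisely the proof given there: reduce to the centered case, invoke the sub-exponential MGF bound valid on $|\lambda|\lesssim 1/\max_i\|X_i\|_{\psi_1}$, take a product by independence, apply Markov's inequality, and optimize $\lambda$ over the constrained interval to get the two regimes of the minimum. No gap.
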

 Given a subset of $\mathbb{R}^N$ or $\mathbb{C}^N$, denoted $\mathscr{X}$, and its  finite subset $\mathcal{T} $, then $
 \mathcal{T}$ is a $\epsilon$-net of $\mathscr{X}$ if for any $\bm{x}\in \mathscr{X}$, there exists $\bm{a_x}\in \mathcal{T}$ such that $\|\bm{x}-\bm{a_x}\|  \leq \epsilon$. In other words, $\mathscr{X}\subset \cup_{\bm{a}\in\mathcal{T}}\mathcal{B}_\epsilon(\bm{a})$ where $\mathcal{B}_\epsilon(\bm{a})$ is the (real or complex) closed Euclidean ball with center $\bm{a}$ and radius $\epsilon$. The minimum cardinality of the  $\epsilon$-net of $\mathscr{X}$ is usually called its covering number, and the covering number for the set of  sparse signals or low-rank matrices will be recurring in our proof.
 \begin{lem}
  \label{lem2} {\rm (Lemma 3.3 in \cite{plan2013one})\textbf{.}}
  Given $\epsilon>0$, there exists $\mathcal{G}$ that is $\epsilon$-net of $\Sigma^{n,*}_{s,r}$ with cardinality bounded by  $|\mathcal{G}|\leq  \big(\frac{9n}{\epsilon s}\big)^s$. 
 \end{lem}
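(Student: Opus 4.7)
The plan is a routine union-of-nets argument over sparse supports. First I would write
\begin{equation*}
\Sigma^{n,*}_{s,r} \;=\; \bigcup_{\substack{S \subset [n] \\ |S|=s}} \mathbb{S}^{S},
\end{equation*}
where $\mathbb{S}^{S}$ denotes the set of unit vectors in $\mathbb{R}^n$ supported on $S$; each such $\mathbb{S}^{S}$ is isometric to the standard sphere $\mathbb{S}^{s-1}_r$ sitting in an $s$-dimensional coordinate subspace.

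Next I would apply the classical volumetric covering bound for the unit sphere: for any $\epsilon \in (0,1]$ there is an $\epsilon$-net of $\mathbb{S}^{s-1}_r$ of cardinality at most $(1 + 2/\epsilon)^s \leq (3/\epsilon)^s$. Pulling such a net back along the isometry gives, for every support $S$, an $\epsilon$-net $\mathcal{G}_S$ of $\mathbb{S}^{S}$ with $|\mathcal{G}_S|\leq (3/\epsilon)^s$. Setting $\mathcal{G} = \bigcup_{|S|=s} \mathcal{G}_S$, any $\bm{x}\in \Sigma^{n,*}_{s,r}$ lies on some $\mathbb{S}^{S}$ and is therefore within $\epsilon$ of a point of $\mathcal{G}_S \subset \mathcal{G}$, so $\mathcal{G}$ is an $\epsilon$-net of $\Sigma^{n,*}_{s,r}$.

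Finally I would assemble the cardinality bound using the standard binomial estimate $\binom{n}{s} \leq (en/s)^s$:
\begin{equation*}
|\mathcal{G}| \;\leq\; \binom{n}{s}\cdot \Big(\tfrac{3}{\epsilon}\Big)^s \;\leq\; \Big(\tfrac{en}{s}\Big)^s\Big(\tfrac{3}{\epsilon}\Big)^s \;=\; \Big(\tfrac{3en}{\epsilon s}\Big)^s \;\leq\; \Big(\tfrac{9n}{\epsilon s}\Big)^s,
\end{equation*}
where the last inequality uses the numerical fact $3e < 9$. Honestly, there is no serious obstacle in this proof; the only care needed is to invoke a version of the sphere covering bound with constant $3$ (as opposed to, say, $4$) and to note $3e \approx 8.155 < 9$ so that the final constant matches the statement. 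If one preferred a cleaner numerical consolidation, one could instead start from the bound $\binom{n}{s} \leq (en/s)^s$ and $(1+2/\epsilon)^s\leq (3/\epsilon)^s$ for $\epsilon\le 1$, and handle $\epsilon>1$ separately (where a single-point net trivially works).
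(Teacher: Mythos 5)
The paper does not prove this lemma; it simply cites it as Lemma 3.3 of Plan--Vershynin, so there is no in-paper argument to compare against. Your proof is correct and is precisely the standard argument (and essentially the one given in the cited source): decompose $\Sigma^{n,*}_{s,r}$ as a union over $\binom{n}{s}$ coordinate spheres, cover each isometric copy of $\mathbb{S}^{s-1}_r$ using the volumetric bound $(1+2/\epsilon)^s\le(3/\epsilon)^s$ for $\epsilon\le 1$, and combine with $\binom{n}{s}\le(en/s)^s$ and $3e<9$. The only slight imprecision is the aside that for $\epsilon>1$ a single-point net ``trivially works''; that is true only for $\epsilon\ge 2$ (the sphere's diameter), and the window $1<\epsilon<2$ still needs a small net. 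Since the lemma is only ever invoked in the paper with $\epsilon\ll 1$, this edge case is immaterial and does not affect the correctness of your argument in the regime that matters.
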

   Evidently,  Lemma \ref{lem2} implies $\epsilon$-net for $\Sigma^{n,*}_{s,c} $ with cardinality at most $\big(\frac{9n}{\epsilon s}\big)^{2s}$.
  \begin{lem}
      \label{lrnumber}
      {\rm (Lemma 3.1 in \cite{candes2011tight})\textbf{.}} Given $\epsilon>0$, there exists $\mathcal{G}$ that is $\epsilon$-net of $(\mathcal{M}^{n_1,n_2}_{r,r})^*$ with cardinality bounded by $|\mathcal{G}|\leq \big(\frac{9}{\epsilon}\big)^{(n_1+n_2+1)r}$.
  \end{lem}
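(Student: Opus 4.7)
The plan is to reduce the covering of $(\mathcal{M}^{n_1,n_2}_{r,r})^*$ to three separate covering problems via a truncated SVD representation. Any $\bm{M}\in(\mathcal{M}^{n_1,n_2}_{r,r})^*$ can be written as $\bm{M}=\bm{U}\bm{\Sigma}\bm{V}^\top$ with $\bm{U}\in\mathbb{R}^{n_1\times r}$, $\bm{V}\in\mathbb{R}^{n_2\times r}$ having orthonormal columns and $\bm{\Sigma}\in\mathbb{R}^{r\times r}$ diagonal with nonnegative entries satisfying $\|\bm{\Sigma}\|_F=\|\bm{M}\|_F=1$. So I would parametrize $\bm{M}$ by $(\bm{U},\bm{\Sigma},\bm{V})$ in $\mathcal{U}\times\mathcal{D}\times\mathcal{V}$, where $\mathcal{U}=\{\bm{U}\in\mathbb{R}^{n_1\times r}:\|\bm{U}\|\le 1\}$, $\mathcal{V}=\{\bm{V}\in\mathbb{R}^{n_2\times r}:\|\bm{V}\|\le 1\}$, and $\mathcal{D}$ is the unit Frobenius-norm sphere among diagonal $r\times r$ matrices (equivalently the unit sphere in $\mathbb{R}^r$).

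Next, I would construct $(\epsilon/3)$-nets for these three sets separately using the standard volumetric bound: any $d$-dimensional subset of the unit operator-norm (or Frobenius-norm) ball in a Euclidean space admits a $\delta$-net of cardinality at most $(3/\delta)^d\le(9/\epsilon)^d$ (with $\delta=\epsilon/3$), applied in the Frobenius sense. Concretely, $\mathcal{U}$ viewed inside the Frobenius ball of radius $\sqrt{r}$ actually needs a bit of care; more cleanly one can note that $\mathcal{U}$ lies in a ball of unit operator norm, and the set $\{\bm{U}:\|\bm{U}\|\le 1\}$ sits inside the Frobenius ball of radius $\sqrt{r}$, but it is easier to cover directly in operator norm with an $(\epsilon/3)$-net $\mathcal{G}_U$ of cardinality $(9/\epsilon)^{n_1 r}$. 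Similarly obtain $\mathcal{G}_V$ of size $(9/\epsilon)^{n_2 r}$ and an $(\epsilon/3)$-net $\mathcal{G}_\Sigma$ for $\mathcal{D}$ of size $(9/\epsilon)^r$.

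Then I would form the candidate net $\mathcal{G}=\{\widetilde{\bm{U}}\widetilde{\bm{\Sigma}}\widetilde{\bm{V}}^\top:\widetilde{\bm{U}}\in\mathcal{G}_U,\widetilde{\bm{\Sigma}}\in\mathcal{G}_\Sigma,\widetilde{\bm{V}}\in\mathcal{G}_V\}$ and verify its covering property by a triangle-inequality argument. Given $\bm{M}=\bm{U}\bm{\Sigma}\bm{V}^\top$, pick nearest neighbors $\widetilde{\bm{U}},\widetilde{\bm{\Sigma}},\widetilde{\bm{V}}$ and telescope:
\begin{align*}
\|\bm{M}-\widetilde{\bm{U}}\widetilde{\bm{\Sigma}}\widetilde{\bm{V}}^\top\|_F
&\le \|(\bm{U}-\widetilde{\bm{U}})\bm{\Sigma}\bm{V}^\top\|_F+\|\widetilde{\bm{U}}(\bm{\Sigma}-\widetilde{\bm{\Sigma}})\bm{V}^\top\|_F\\
&\quad+\|\widetilde{\bm{U}}\widetilde{\bm{\Sigma}}(\bm{V}-\widetilde{\bm{V}})^\top\|_F.
\end{align*}
Using $\|\bm{AB}\|_F\le\|\bm{A}\|\cdot\|\bm{B}\|_F$ together with $\|\bm{\Sigma}\|_F=\|\widetilde{\bm{\Sigma}}\|_F\le 1$ and the operator-norm bounds $\|\bm{V}\|=\|\bm{U}\|=\|\widetilde{\bm{U}}\|=\|\widetilde{\bm{V}}\|\le 1$, each of the three terms is at most $\epsilon/3$, yielding a total error of at most $\epsilon$.

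Finally, multiplying the three cardinalities gives $|\mathcal{G}|\le(9/\epsilon)^{(n_1+n_2+1)r}$ as claimed. The only mildly delicate point is making sure the approximations of $\bm{U}$ and $\bm{V}$ are taken in operator norm (or at least that the Frobenius-versus-operator distinction is correctly tracked), since $\|\cdot\|\le\|\cdot\|_F$ means operator-norm nets on these factor spaces automatically suffice and keep the exponent at $n_i r$ rather than $n_i r$-plus-overhead.
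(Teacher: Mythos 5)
Your argument is correct and is essentially the same proof as Candès and Plan's Lemma 3.1, which the paper cites without reproducing: decompose via the (truncated) SVD, cover the diagonal singular-value factor in Frobenius norm and the two Stiefel factors separately with $(\epsilon/3)$-nets, then telescope with the triangle inequality. The only cosmetic divergence is the choice of norm on the factor sets (you use the operator norm; the original uses the max-column-$\ell_2$ norm $\|\cdot\|_{1,2}$), but both are bounded by $1$ on matrices with orthonormal columns, both satisfy the needed submultiplicativity against $\|\cdot\|_F$, and both give $(9/\epsilon)^{n_ir}$ via the standard volumetric bound, so the argument goes through identically.
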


 \subsubsection{Compressive Sensing}
 As our approach is to recast PO-CS as a classical linear compressive sensing problem, we will also use some well-established facts in this field.  
 Arguably, the RIP introduced below is the central of compressive sensing theory. 

 \begin{definition}
 \label{def1}
 {\rm (Vector RIP)\textbf{.}} Given a sensing matrix $\bm{A}\in\mathbb{R}^{m\times n}$ and $1\leq s\leq n$ is an integer. We say $\bm{A}$ respects
  restricted isometry property over the set of $s$-sparse real signals 
	$\Sigma^n_{s,r}$  for some $\delta >0$ if 
	\begin{equation}
     \label{2.3}
     (1-\delta)\|\bm{u}\|^2 \leq \|\bm{Au}\|^2 \leq (1+\delta)\|\bm{u}\|^2,~~\forall \bm{u}\in \Sigma^n_{s,r}.
 \end{equation}
  The smallest $\delta$ such that (\ref{2.3}) holds is   called the restricted isometry constant (RIC) of order $s$, and denoted by $\delta^{\bm{A}}_s$.
 \end{definition}

 \begin{definition}
 {\rm (Matrix RIP)\textbf{.}} Given a linear map $\mathcal{A}(\cdot)$ from $\mathbb{R}^{n_1\times n_2}$ to $\mathbb{R}^m$ and $1\leq r\leq \min\{d_1,d_2\}$ is an integer. We say $\mathcal{A}$ respects restricted isometry property over $\mathcal{M}^{n_1,n_2}_{r,r}$ for some $\delta>0$ if \begin{equation}
 \label{a2.5}
     (1-\delta) \|\bm{X}\|_F^2 \leq \|\mathcal{A}(\bm{X})\|^2 \leq (1+\delta)\|\bm{X}\|_F^2, ~~\forall ~\bm{X}\in \mathcal{M}_{r,r}^{n_1,n_2}.
 \end{equation}
 The smallest $\delta$ such that (\ref{a2.5}) holds is denoted by $\delta_r^{\mathcal{A}}$.
 \end{definition}

  Under sufficiently small $\delta^{\bm{A}}_{2s}$ (resp. $\delta^{\mathcal{A}}_{2r}$), all signals $\bm{x}$ in $\Sigma^n_{s,r}$ (resp. $\mathcal{M}^{n_1,n_2}_{r,r}$) can be reconstructed from the compressive measurements $\bm{Ax}$ (resp. $\mathcal{A}(\bm{x})$), exactly when there is no noise or stably in the noisy setting, \textcolor{black}{via some instance optimal algorithm that is also robust to model error} 
  (see \cite{cohen2009compressed,keriven2018instance,traonmilin2018stable,bourrier2014fundamental} for instance). Specifically, we present the following result regarding basis pursuit.
  
  \begin{lem}
      \label{cslem}
      {\rm (e.g., \cite{cai2013sparse})\textbf{.}} If $\bm{A}\in \mathbb{R}^{m\times n}$ satisfies $\textcolor{black}{\delta^{\bm{A}}_{2s}}< \frac{\sqrt{2}}{2}$, then all $\bm{x}\in \Sigma^n_{s,r}$ can be exactly reconstructed from $\bm{y} = \bm{Ax}$ via basis pursuit 
      \begin{equation}
 \label{2.5}
     \bm{\hat{x}} = \mathop{\arg\min}\limits_{\bm{u}\in \mathbb{R}^n}~\|\bm{u}\|_1,~~\text{s.t. }\bm{Au}=\bm{y}.
 \end{equation}
 In parallel, if a linear map $\mathcal{A}(\cdot)$ from $\mathbb{R}^{n_1\times n_2}$ to $\mathbb{R}^m$ satisfies $\textcolor{black}{\delta^{\mathcal{A}}_{2r}}<\frac{\sqrt{2}}{2}$, then all $\bm{X}\in \mathcal{M}^{n_1,n_2}_{r,r}$ can be exactly reconstructed from $\bm{y} = \mathcal{A}(\bm{X})$ via constrained nuclear norm minimization \begin{equation}
      \label{a2.6}
     \bm{\hat{X}} = \mathop{\arg\min}\limits_{\bm{U}\in \mathbb{R}^{n_1\times n_2}}~\|\bm{U}\|_{*},~~\text{s.t. }\mathcal{A}(\bm{U})=\bm{y}.
 \end{equation} 
  \end{lem}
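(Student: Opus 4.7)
The plan is to prove the vector (sparse) case first and then describe how the matrix (low-rank) case follows by a parallel argument using SVD-based decompositions in place of support-based ones. Fix an $s$-sparse ground truth $\bm{x}$ and let $\hat{\bm{x}}$ be a minimizer of (\ref{2.5}). Set $\bm{h} := \hat{\bm{x}} - \bm{x}$; since both $\hat{\bm{x}}$ and $\bm{x}$ are feasible, $\bm{A}\bm{h} = \bm{0}$, and optimality of $\hat{\bm{x}}$ together with the standard reverse-triangle-inequality split at $S := \mathrm{supp}(\bm{x})$ (expanding $\|\bm{x}+\bm{h}\|_1 \geq \|\bm{x}\|_1 - \|\bm{h}_S\|_1 + \|\bm{h}_{S^c}\|_1$) yields the cone condition $\|\bm{h}_{S^c}\|_1 \leq \|\bm{h}_S\|_1$. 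The goal is then to show that any $\bm{h} \in \ker(\bm{A})$ satisfying this cone condition must vanish.

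The key technical ingredient is the Cai--Zhang ``sparse representation of a polytope'' lemma: any vector $\bm{v}$ with $\|\bm{v}\|_\infty \leq \alpha$ and $\|\bm{v}\|_1 \leq s\alpha$ lies in the convex hull of $s$-sparse vectors whose $\ell_\infty$ norm is at most $\alpha$. Applied to $\bm{h}_{S^c}$ for a suitably chosen $\alpha$ (governed by $\|\bm{h}_S\|$), this writes $\bm{h}_{S^c} = \sum_i \lambda_i \bm{u}_i$, with $\bm{u}_i$ being $s$-sparse on $S^c$ and $\|\bm{u}_i\|$ controlled by $\|\bm{h}_S\|$. Because $\bm{u}_i$ and $\bm{h}_S$ have disjoint supports, both $\bm{h}_S + \bm{u}_i$ and $\bm{h}_S - \bm{u}_i$ are $2s$-sparse, so the RIP of order $2s$ together with the polarization identity $\langle \bm{A}\bm{a},\bm{A}\bm{b}\rangle = \tfrac14(\|\bm{A}(\bm{a}+\bm{b})\|^2 - \|\bm{A}(\bm{a}-\bm{b})\|^2)$ delivers the bound $|\langle \bm{A}\bm{h}_S, \bm{A}\bm{u}_i\rangle| \leq \delta^{\bm{A}}_{2s}\|\bm{h}_S\|\|\bm{u}_i\|$. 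Using $\bm{A}\bm{h} = \bm{0}$ to write $\bm{A}\bm{h}_S = -\sum_i \lambda_i \bm{A}\bm{u}_i$, combined with the lower RIP bound $\|\bm{A}\bm{h}_S\|^2 \geq (1-\delta^{\bm{A}}_{2s})\|\bm{h}_S\|^2$, produces a self-consistency inequality in $\|\bm{h}_S\|$ that collapses to $\bm{h}_S = \bm{0}$ precisely when $\delta^{\bm{A}}_{2s} < \sqrt{2}/2$; the cone condition then propagates $\bm{h}_S = \bm{0}$ to $\bm{h} = \bm{0}$.

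For the low-rank case I would replicate this outline with rank replacing sparsity. Using the SVD $\bm{X} = \bm{U}\bm{\Sigma}\bm{V}^\top$, decompose $\bm{H} := \hat{\bm{X}} - \bm{X}$ into a ``tangent'' component $\bm{H}_0$ (matrices of the form $\bm{U}\bm{M}^\top + \bm{N}\bm{V}^\top$) and its orthogonal complement $\bm{H}_\perp$; nuclear-norm optimality forces $\|\bm{H}_\perp\|_* \leq \|\bm{H}_0\|_*$. The matrix version of the Cai--Zhang polytope lemma represents any $\bm{M}$ with $\|\bm{M}\| \leq \alpha$ and $\|\bm{M}\|_* \leq r\alpha$ as a convex combination of rank-$r$ matrices with operator norm at most $\alpha$; the entire calculation then carries through with $\ell_1/\ell_2/\ell_\infty$ replaced by nuclear/Frobenius/operator norms and $s$-sparse replaced by rank-$r$, yielding the same sharp threshold $\delta^{\mathcal{A}}_{2r} < \sqrt{2}/2$.

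The delicate part, and the reason this result refined earlier thresholds such as $\delta_{2s} < 1/3$ and $\delta_{2s} < \sqrt{2}-1$, is the polytope representation itself: a naive dyadic tail decomposition (Cand\`es' original strategy) loses a factor and gives a suboptimal constant. Once that representation is in hand, the remaining difficulty is the careful bookkeeping with RIP inequalities and the cone condition, together with the right choice of $\alpha$ in the polytope lemma so that the combined inequality pinches $\bm{h}_S$ to zero exactly at the stated RIC threshold.
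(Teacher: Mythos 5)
The paper states this lemma without proof, citing Cai and Zhang directly, so there is no in-paper argument to compare against; your sketch correctly reproduces the Cai--Zhang polytope-representation strategy that the citation points to, at least for the sparse case. Two technical points are worth tightening. First, the polytope lemma requires both $\|\bm{h}_{S^c}\|_1 \leq s\alpha$ and $\|\bm{h}_{S^c}\|_\infty \leq \alpha$; with $S = \mathrm{supp}(\bm{x})$ the first follows from the cone condition, but the second is not automatic. The standard fix is to replace $S$ by the index set $T$ of the $s$ largest entries of $\bm{h}$: the cone condition transfers because $\|\bm{h}_T\|_1 \geq \|\bm{h}_S\|_1 \geq \|\bm{h}\|_1/2 \geq \|\bm{h}_{T^c}\|_1$, and then $\|\bm{h}_{T^c}\|_\infty \leq \min_{i\in T}|h_i| \leq \|\bm{h}_T\|_1/s = \alpha$, so the polytope lemma applies with $\alpha = \|\bm{h}_T\|_1/s$.

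The low-rank description, however, has a genuine gap in the rank bookkeeping. Decomposing $\bm{H}$ into the tangent component $\bm{H}_0 = P_U\bm{H} + \bm{H}P_V - P_U\bm{H}P_V$ and $\bm{H}_\perp = P_{U^\perp}\bm{H}P_{V^\perp}$ gives $\mathrm{rank}(\bm{H}_0) \leq 2r$, not $r$. Feeding rank-$r$ polytope pieces $\bm{u}_i$ from $\bm{H}_\perp$ into the polarization step then needs $\bm{H}_0 \pm \bm{u}_i$ in the RIP class, i.e.\ RIP of order $2r + r = 3r$, not the $2r$ stated in the lemma. To hit $\delta^{\mathcal{A}}_{2r}$ the "head" must be rank-$r$ with row and column spaces orthogonal to the tail. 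Cai and Zhang obtain this by splitting $\bm{H}$ along its own SVD (top-$r$ singular directions versus the rest), which is the exact analogue of the index-set transfer above, and establish the nuclear-norm cone condition for that decomposition by a separate auxiliary inequality relating $\|\hat{\bm{X}}\|_*$ to $\|\bm{X}\|_*$, $\|\bm{H}_{\max(r)}\|_*$ and $\|\bm{H}_{-\max(r)}\|_*$. Replacing the tangent-space split by the head/tail-of-$\bm{H}$ split closes the gap and recovers the stated threshold.
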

  With modified constraint, the above two programs are  stable under measurement noise and robust to model error, see \cite[Thm. 2.1]{cai2013sparse} for instance.


	\section{PO-CS of Sparse Signals}
	\label{section3}
    The most classical signal structure in (1-bit) compressive sensing is undoubtedly sparsity. 
    In this section we assume the desired target signal is sparse, i.e., $\bm{x}\in\mathcal{K}$ where $\mathcal{K}=\Sigma^n_{s,r}$ for the real case, or $\mathcal{K}=\Sigma^n_{s,c}$ for the complex case. \textcolor{black}{The sensing vectors $\bm{\Phi}_k$ are drawn from $\mathcal{N}(\bm{0},\bm{I}_n)+\mathcal{N}(\bm{0},\bm{I}_n)\ii$, and we observe the phase-only measurements \begin{equation}\nonumber
        z_k= \sign\big(\big<\bm{\Phi}_k,\bm{x}\big>\big)=\sign(\bm{\Phi}_k^*\bm{x}),~k=1,2,...,m.
    \end{equation}
    Further, we define $\bm{\Phi}=[\bm{\Phi}_1,...,\bm{\Phi}_m]^*$ as the sensing matrix,\footnote{Compared to defining $\bm{\Phi}=[\bm{\Phi}_1,...,\bm{\Phi}_m]^\top$ as sensing matrix, our treatment allows us to write   $z_k=\sign\big(\big<\bm{\Phi}_k,\bm{x}\big>\big)$  and   $\bm{z}=\sign(\bm{\Phi x})$ without involving conjugate.} and note that    $\bm{\Phi}$ has entries i.i.d. drawn from $\mathcal{N}(0,1)+\mathcal{N}(0,1)\ii$, denoted by $\bm{\Phi}\sim \mathcal{N}^{m\times n}(0,1)+\mathcal{N}^{m\times n}(0,1)\ii$.}
    Now, the PO-CS model can be formulated as \begin{equation}
        \bm{z}=\sign(\bm{\Phi x}),
        \label{a3.1}
    \end{equation}
    We would exclusively use $\bm{x}$, $\bm{z}$ to respectively represent the underlying signal and the phase-only observations. 

    While assuming $\bm{\Phi}_k\sim \mathcal{N}(\bm{0},\bm{I}_n)+\mathcal{N}(\bm{0},\bm{I}_n)\ii$, it is worth pointing out that our result applies \textcolor{black}{as long as the independent sensing vectors $\bm{{\Phi}}_k$'s are non-zero almost surely, and $\frac{\bm{{\Phi}}_k}{\|\bm{{\Phi}}_k\|}$ is uniformly distributed on $\mathbb{S}^{n-1}_c$ (i.e., $\big[\frac{\bm{{\Phi}}_k}{\|\bm{{\Phi}}_k\|}\big]_{\mathbb{R}}$ is uniformly distributed on $\mathbb{S}^{2n-1}_r$). In short, this  is because the norm of each sensing vector is completely absorbed into $\sign(\cdot)$ and hence inessential.   More specifically, one can always   i.i.d. draw $r_k\sim \|\mathcal{N}(\bm{0},\bm{I}_{2n})\|$ (that is independent of $\{\bm{{\Phi}_k}\}$) to  construct $\bm{\hat{\Phi}}_k=\frac{r_k\bm{{\Phi}}_k}{\|\bm{{\Phi}}_k\|}$, and then use the data $\big(\bm{\hat{\Phi}_k},\sign(\bm{{\Phi}}_k^*\bm{x})\big)_{k=1}^m$ for reconstruction, which obviously enjoys our guarantee as  $\bm{\hat{\Phi}}_k\sim \mathcal{N}(\bm{0},\bm{I}_n)+\mathcal{N}(\bm{0},\bm{I}_n)\ii$ (e.g., \cite[Exercise 3.3.7]{vershynin2018high}) and $\sign(\bm{{\Phi}}_k^*\bm{x})=\sign(\bm{\hat{\Phi}}_k^*\bm{x})$.}  Note that a similar observation for 1-bit compressive sensing was made  in  \cite[Remark 1.5]{plan2013one}.

    We also pause to give a big picture of our technical derivation. Compared with sensing a real signal in \cite{jacques2021importance}, dealing with complex signal requires a different reformulation and more complicated calculations. More tricky technical changes include taking the real part in the sign-product embedding property (Remark \ref{rem1}) and rescaling the sensing matrix to render a sufficiently small RIC (Remark \ref{remark2}).  On the other hand, our uniform guarantee follows from   a series of covering arguments
    (e.g., \cite{baraniuk2008simple}) that strengthen each piece in  \cite{jacques2021importance} to be  uniform. Indeed,   covering argument has now become an elementary technique in the field, and the standard procedure is to first show the desired property over a discrete $\epsilon$-net (via union bound), then extends it to the whole signal set of interest. Note that such extension usually relies on certain continuity of the desired property. Thus, in the phase-only scenario, the main difficulty would be the discontinuity $0$ of $\sign(\cdot)$. In fact, $\sign(\cdot)$ behaves extremely badly around $0$. For instance,  complex linear measurements close to $0$ are also close to each other (by triangle inequality), but they can have significantly different phases. Thus, for  $\bm{\Phi}^*_k\bm{x}$ with extremely small magnitude (i.e., $|\bm{\Phi}_k^*\bm{x}|$), taking only the phase fails to preserve information and can even be misleading. Therefore, special attention will be paid to these problematic measurements.
    \textcolor{black}{Note that in PO-CS it is hopeless to recover the signal norm $\|\bm{x}\|$, hence we will soon concentrate on $\bm{x}\in \mathbb{S}^{n-1}$ in subsequent developments. This constraint provides us a  simple way to identify measurements with overly small magnitude, i.e., those with magnitude smaller than some pre-specified threshold.} Specifically, we specify a threshold $\eta$ ($\eta>0$) and collectively call $\bm{\Phi}^*_k\bm{x}$ with  $|\bm{\Phi}^*_k\bm{x}|<\eta$ near vanishing measurement. The indices of near vanishing measurements are collected in the set 
    \begin{equation}
	\label{a3.2}
	    \mathcal{J}_{\bm{x}} = \big\{k\in [m]: |\bm{\Phi}_k^*\bm{x}|<\eta\big\}.
	\end{equation}
    While near vanishing measurements are problematic in the phase-only sensing scenario, fortunately, with high probability   $|\mathcal{J}_{\bm{x}}|/m=O(\eta)$ holds uniformly over $\bm{x}\in \mathcal{K}$ (Lemma \ref{lem10}). Hence, under sufficiently small $\eta$, the influence of near vanishing measurements can be well controlled.


  \subsection{Reformulation}\label{sub3,1}
      We begin with the reformulation of PO-CS to a classical linear compressive sensing problem, which stems from the experiments in \cite{boufounos2013sparse} and also serves as the starting point of the proof in \cite{jacques2021importance}. Note that our task is to find a sparse $\bm{u}$ that admits phase consistency $\bm{z} = \sign(\bm{\Phi u})$. This leads to $\diag(\bm{z})^*\bm{\Phi u}\geq \bm{0}$ with element-wise "$\geq$" and all-zeros vector $\bm{0}$, which equals to \begin{equation}
        \nonumber
        \Im\big(\diag(\bm{z})^*\bm{\Phi u}\big) = \bm{0},~\Re\big(\diag(\bm{z})^*\bm{\Phi u}\big) \geq\bm{0}.
    \end{equation}
We simply retain the dominant information $\Im\big(\diag(\bm{z})^*\bm{\Phi u}\big) = \bm{0}$.  \textcolor{black}{For the real case $\bm{u} \in\mathcal{K} = \Sigma^n_{s,r}$,} this gives $m$ real linear measurements\begin{equation}
        \label{3.1}
        \Im\Big(\frac{1}{\sqrt{m}}\diag(\bm{z})^*\bm{\Phi }\Big) \bm{u} = \bm{0}.
    \end{equation}
    \textcolor{black}{For the complex case $\bm{u}\in\mathcal{K} = \Sigma^n_{s,c}$,} it reads 
    \begin{equation}
        \label{3.2}
       \begin{bmatrix}
            \frac{1}{\sqrt{m}}\Im\big(\diag(\bm{z})^*\bm{\Phi}\big) & \frac{1}{\sqrt{m}}\Re\big(\diag(\bm{z})^*\bm{\Phi}\big) 
        \end{bmatrix}
        \begin{bmatrix}
            \bm{u}^\Re\\ \bm{u}^\Im
        \end{bmatrix} = \bm{0}.
    \end{equation}
    Derived from the phase-only measurements $\bm{z}=\sign(\bm{\Phi x})$, (\ref{3.1}) and (\ref{3.2}) obviously do not contain any information about the signal norm $\|\bm{x}\|$. Hence, to be more aligned with the classical linear compressive sensing, we add  virtual measurement to uniquely specify the norm of the desired signal. Specifically,   due to  the observation $\bm{z^*\Phi x} = \|\bm{\Phi x}\|_1 >0$, we define $\kappa:=\mathbbm{E}|\mathcal{N}(0,1)+\mathcal{N}(0,1)\ii| =\sqrt{\frac{\pi}{2}}$ \footnote{\textcolor{black}{Here, $|\mathcal{N}(0,1)+\mathcal{N}(0,1)\ii|$ follows the Rayleigh distribution with unit variance \cite{papoulis2002probability}.}}  and impose the constraint $\frac{1}{\kappa m} \bm{z^*\Phi u}=1$,\footnote{One can take any $t_1>0$ and specify $\frac{1}{m}\bm{z^*\Phi u}= t_1$, but our   choice here is intended for subsequent RIP analysis (see Lemma \ref{lem4}, Corollary \ref{coro1}).}    which further gives two real linear measurements \cite{boufounos2013sparse,jacques2021importance} \begin{equation}
        \label{add830}
        \Re\left(\frac{ 1
        }{\kappa m}\bm{z^*\Phi u}\right)  = 1,~\Im\left(\frac{ 1
        }{\kappa m}\bm{z^*\Phi u}\right) = 0.
    \end{equation}
    However, a closer look finds that $\Im\big(\frac{1
        }{\kappa m}\bm{z^*\Phi u}\big) = 0$ can be implied by $\Im\big(\diag(\bm{z})^*\bm{\Phi u}\big)=\bm{0}$ and   is hence redundant. Thus,  we only add one virtual measurement $ \Re\big(\frac{1
        }{\kappa m}\bm{z^*\Phi u}\big)  = 1$, which gives $ \frac{ 1}{\kappa m} \Re\big(\bm{z^*\Phi }\big)\bm{u} = 1$
    for the real case and $\frac{1}{\kappa m} \Re\big(\bm{z^*\Phi}\big)\bm{u}^\Re- \frac{ 1}{\kappa m}\Im\big(\bm{z^*\Phi}\big)\bm{u}^\Im = 1$ for the complex case.   We  further introduce a positive scaling $\hat{t}$ to (\ref{3.1}), (\ref{3.2}), then for $\mathcal{K} = \Sigma^n_{s,r}$ we obtain the reformulation \begin{equation}
    \label{3.5}
        \begin{aligned}
         &\text{find }\bm{u}\in \Sigma^n_{s,r},\text{ s.t. }\bm{A}_{\bm{z},r}\bm{u} = \bm{e}_1,\text{ where }\\&\bm{A}_{\bm{z},r} = \begin{bmatrix}
            \frac{1}{\kappa m}\cdot \Re\big(\bm{z^*\Phi}\big)\\
            \frac{\hat{t}}{\sqrt{m}}\cdot\Im\big( \diag(\bm{z^*})\bm{\Phi}\big)
        \end{bmatrix}\in\mathbb{R}^{(m+1)\times n};
        \end{aligned}
    \end{equation}
   For the complex case $\bm{x}\in \Sigma^n_{s,c}$ we similarly arrive at a linear compressive sensing problem 
   \begin{equation}
   \label{3.6}
       \begin{aligned}
       &\text{find }\bm{u}\in\Sigma^{2n}_{2s,r},\text{ s.t. }\bm{A}_{\bm{z},c}\bm{u} = \bm{e}_1,\text{ where}\\ \bm{A}_{\bm{z},c}&=\begin{bmatrix}
            \frac{ 1}{\kappa m}\cdot \Re\big(\bm{z^*\Phi}\big) & - \frac{ 1}{\kappa m}\cdot\Im\big(\bm{z^*\Phi}\big) \\
           \frac{\hat{t}}{\sqrt{m}}\Im\big(\diag(\bm{z^*})\bm{\Phi}\big) &  \frac{\hat{t}}{\sqrt{m}}\Re\big(\diag(\bm{z^*})\bm{\Phi}\big)
       \end{bmatrix}\\&\in\mathbb{R}^{(m+1)\times 2n}.
       \end{aligned}
   \end{equation}
     We shall see shortly that a proper choice of $\hat{t}$ is crucial to render a small RIC of the new sensing matrix\footnote{To distinguish with the original sensing matrix $\bm{\Phi}$, we collectively call $\bm{A}_{\bm{z},r}$, $\bm{A}_{\bm{z},c}$ the new sensing matrix.} $\bm{A}_{\bm{z},c}$. Besides, we mention that $\bm{u}\in \Sigma^{2n}_{2s,r}$ in (\ref{3.6}) does not precisely rephrase the prior knowledge $\bm{x}\in \Sigma^n_{s,c}$. For instance, $\bm{x}\in \Sigma^n_{s,c}$ restricts that at most $s$ of the first $n$ entries of $\bm{u}$ are non-zero, while such information is lost in $\bm{u}\in \Sigma^{2n}_{2s,r}$. But as we do not attempt to refine multiplicative constant in this work, $\bm{u}\in \Sigma^{2n}_{2s,r}$ would be more than sufficient. 

       Based on Lemma \ref{cslem}, our strategy   is to  show the $2s$ order RIC of $\bm{A}_{\bm{z},r}$, and the $4s$ order RIC of $\bm{A}_{\bm{z},c}$, are less than $\frac{\sqrt{2}}{2}$. Compared with the classical compressive sensing where the RIP of the sensing matrix delivers uniform recovery guarantee (Lemma \ref{cslem}),  here   different $\bm{x}\in \mathcal{K}$ corresponds to different $\bm{A}_{\bm{z},r}$ or $\bm{A}_{\bm{z},c}$. Hence, the RIP of a specific new sensing matrix  can only deliver the exact reconstruction of a single underlying $\bm{x}$ (such that $\bm{z} = \sign(\bm{\Phi x})$). As
       the main aim of this paper is to prove a uniform recovery guarantee, we need to show the new sensing matrices for all $\bm{x}\in\mathcal{K}$ simultaneously respect RIP.  By Definition \ref{def1}, this can be precisely formulated for the real case $\mathcal{K} = \Sigma^n_{s,r}$ as   
        \begin{equation}
            \label{3.7}
            \begin{aligned}
            &\sup_{\bm{x}\in \mathcal{K}}\sup_{\bm{u}\in \Sigma^{n}_{2s,r}}\Big|\|\bm{A}_{\bm{z},r}\bm{u} \|^2 - \|\bm{u}\|^2\Big|< \frac{\sqrt{2}}{2} \|\bm{u}\|^2 ,
            \end{aligned}
        \end{equation}
        or for the complex  case $\mathcal{K}= \Sigma^n_{s,c}$ as
        \begin{equation}
            \label{a3.9}
            \sup_{\bm{x}\in\mathcal{K} }\sup_{\bm{u}\in \Sigma^{2n}_{4s,r}}\Big|\|\bm{A}_{\bm{z},c}\bm{u}\|^2 - \|\bm{u}\|^2\Big|< \frac{\sqrt{2}}{2}\|\bm{u}\|^2.
        \end{equation}
        We add the superscript $*$ to restrict a set to elements with unit $\ell_2$-norm, e.g., $\mathcal{K}^* = \Sigma^{n,*}_{s,r}$ (the real case), $\mathcal{K}^*=\Sigma^{n,*}_{s,c}$ (the complex case). Then by the homogeneity of $\bm{u}$ and the property of $\sign(\cdot)$,   (\ref{3.7}) is equivalent to 
        \begin{equation}
            \label{a3.10}
            \begin{aligned}
            &\sup_{\bm{x}\in \Sigma_{s,r}^{n,*}} \sup_{\bm{u}\in \Sigma^{n,*}_{2s,r}} \Big|\frac{1}{\kappa^2m^2}[\Re(\bm{z^*\Phi u})]^2\\&~~~~+\frac{\hat{t}^2}{m}\big\|\Im \big(\diag(\bm{z}^*)\bm{\Phi u}\big)\big\|^2 -1\Big|< \frac{\sqrt{2}}{2}.
        \end{aligned}
        \end{equation}
        Similarly, for (\ref{a3.9}) we identify $\bm{u}\in \Sigma^{2n}_{4s,r}$ with $[\bm{u}]_{\mathbb{C}}=\bm{u}^{[1:n]}+ \bm{u}^{[n+1:2n]}\ii\in \Sigma^{n}_{4s,c}$, the desired (\ref{a3.9}) can thus be implied by  
        \begin{equation}
        \begin{aligned}
            \label{a3.11}
            &\sup_{\bm{x}\in \Sigma_{s,c}^{n,*}} \sup_{\bm{u}\in \Sigma^{n,*}_{4s,c}} \Big|\frac{1}{\kappa^2m^2}[\Re(\bm{z^*\Phi u})]^2\\&~~~~~+\frac{\hat{t}^2}{m}\big\|\Im \big(\diag(\bm{z}^*)\bm{\Phi u}\big)\big\|^2 -1\Big|< \frac{\sqrt{2}}{2}.
            \end{aligned}
        \end{equation}
        Now it shall be clear that our goals in both  cases are to prove
        \begin{equation}
        \label{a3.12}
            \sup_{\bm{x}}\sup_{\bm{u}}  f(\bm{x},\bm{u})  < \frac{\sqrt{2}}{2}
        \end{equation}
         where we introduce a shorthand 
        \begin{equation}\label{short1}
        \begin{aligned}
            &f(\bm{x},\bm{u}) : = \Big|\frac{1}{\kappa^2m^2}[\Re(\bm{z^*\Phi u})]^2\\&~~~~~~~+\frac{\hat{t}^2}{m}\big\|\Im \big(\diag(\bm{z}^*)\bm{\Phi u}\big)\big\|^2 -1\Big|,
            \end{aligned}
        \end{equation}
        and the only difference is that the supremum is taken over $(\bm{x},\bm{u})\in \Sigma^{n,*}_{s,r}\times \Sigma^{n,*}_{2s,r}$ for the real case, while $(\bm{x},\bm{u}) \in \Sigma^{n,*}_{s,c}\times \Sigma^{n,*}_{4s,c}$ for the complex case. Thus, in the sequel we only deal with the complex case where all arguments directly apply to the real case. We will point out the   difference between real case and complex case when appropriate.

    
   To proceed our analysis, the following orthogonal decomposition is an important ingredient. Given a specific underlying signal $\bm{x}\in \Sigma^{n,*}_{s,c}$, we decompose $\bm{u}\in \Sigma^{n,*}_{4s,c}$ into $\bm{u}=\bm{u}_{\bm{x}}^{\|}+\bm{u}_{\bm{x}}^{\bot}$ where 
    \begin{equation}
    \label{a3.14}
            \bm{u}_{\bm{x}}^{\|} = \Re\big<\bm{u},\bm{x}\big>\cdot \bm{x}  ,~
            \bm{u}_{\bm{x}}^\bot = \bm{u} -  \Re\big<\bm{u},\bm{x}\big>\cdot \bm{x}.
    \end{equation}Note that they evidently satisfy $\|\bm{u}_{\bm{x}}^\|\|^2 + \|\bm{u}_{\bm{x}}^\bot\|^2 = 1$. In the sequel, we respectively call $\bm{u_x}^\|,\bm{u_x}^\bot$ the parallel part, the orthogonal part (regarding $\bm{x}$), \textcolor{black}{and now we pause to illustrate these two notions.  To explain our naming of $(\bm{u_x}^\|,\bm{u_x}^\bot)$, we identify $\bm{a}\in \mathbb{S}^{n-1}_c$ with $[\bm{a}]_\mathbb{R} \in \mathbb{S}^{2n-1}_r$, then simple algebra can verify \begin{equation}
        \begin{aligned}
    &~~~[\bm{u_x}^\|]_\mathbb{R}=\big<[\bm{u}]_\mathbb{R},[\bm{x}]_\mathbb{R}\big>\cdot [\bm{x}]_\mathbb{R},\\&[\bm{u_x}^\bot]_\mathbb{R}=[\bm{u}]_\mathbb{R}-\big<[\bm{u}]_\mathbb{R},[\bm{x}]_\mathbb{R}\big>\cdot [\bm{x}]_\mathbb{R}.
        \end{aligned}
    \end{equation}
    Thus,  {\it when $(\bm{x},\bm{u})$ are viewed as vectors in $\mathbb{R}^{2n}$,} $\bm{u}_{\bm{x}}^\|$ can be understood as the projection of $\bm{u}$ onto the linear space spanned by $\bm{x}$ (via real scalars), while $\bm{u_x}^\bot$ is orthogonal to $\bm{x}$. Compared to   the seemingly more natural decomposition $\bm{u}=\big(\big<\bm{u},\bm{x}\big>\cdot\bm{x}\big)+\big(\bm{u}-\big<\bm{u},\bm{x}\big>\cdot\bm{x}\big)$, the main intuition of viewing $(\bm{x},\bm{u})$ as $([\bm{x}]_\mathbb{R},[\bm{u}]_\mathbb{R})$ and   using (\ref{a3.14}) is that, $\bm{x}$ and $\bm{u}$ are already regarded as real signals when we derive the reformulation, see (\ref{3.2}), (\ref{3.6}). We moreover comment that,   PO-CS itself seems more like a real problem, in the sense that the trivial ambiguity is a  positive scaling factor (rather than a complex scalar). Moreover, (\ref{a3.14}) allows some useful observations such as (\ref{3.11})   that facilitate our RIP analysis.}

    Note that   RIP indeed requires that the sensing matrix   operates on $\Sigma^{n,*}_{4s,c}$ (the range of $\bm{u}$) in a near isometry way. We shall see that, $\frac{1}{\kappa^2m^2}|\Re(\bm{z^*\Phi u})|^2$, which stems from the first row of $\bm{A}_{\bm{z},c}$   (\ref{3.6})  that is added to specify signal norm, contributes to the parallel part $\|\bm{u}_{\bm{x}}^\|\|^2$. On the other hand, $\frac{\hat{t}^2}{m}\|\Im\big(\diag(\bm{z^*})\bm{\Phi u}\big)\|^2$, which corresponds to the the second row of $\bm{A}_{\bm{z},c}$  (\ref{3.6}) that comes from the phase-only measurements, will  provide the orthogonal part $\| \bm{u}^\bot_{\bm{x}}\|^2$. Indeed, a simple observation to preview such division is  that $\Im\big(\diag(\bm{z^*})\bm{\Phi u}\big)$ fails to sense the parallel part: \begin{equation}
    \label{3.11}
    \begin{aligned}
        &\Im\big(\diag(\bm{z^*})\bm{\Phi u^\|_x}\big) \\=&\Re\big<\bm{u},\bm{x}\big> \cdot \Im\big(\diag\big(\sign(\bm{\Phi x})\big)^*\bm{\Phi x}\big) = \bm{0}.
        \end{aligned}
    \end{equation} 
    Having observed this, we further decompose $f(\bm{x},\bm{u})$ by triangle inequality as 
    \begin{equation}
        \label{aa3.14}
        \begin{aligned}
        f(\bm{x},\bm{u}) &\leq \underbrace{\left|\frac{[\Re(\bm{z^*\Phi u})]^2}{\kappa^2m^2} - \|\bm{u}_{\bm{x}}^\|\|^2\right|}_{f^\|(\bm{x},\bm{u})} \\&+ \underbrace{\left|\frac{\hat{t}^2}{m}\|\Im\big(\diag(\bm{z}^*)\bm{\Phi u}\big)\|^2- \|\bm{u}_{\bm{x}}^\bot\|^2\right|}_{f^\bot(\bm{x},\bm{u})}.
        \end{aligned}
    \end{equation}
    Focusing on the complex case of (\ref{a3.12}), we only need to show
    \begin{equation}
    \label{aa3.17}\begin{aligned}
        &\sup_{\bm{x}\in \Sigma^{n,*}_{s,c}}\sup_{\bm{u}\in \Sigma^{n,*}_{4s,c}} f^\|(\bm{x},\bm{u}) \\&~~~~~+   \sup_{\bm{x}\in \Sigma^{n,*}_{s,c}}\sup_{\bm{u}\in \Sigma^{n,*}_{4s,c}} f^\bot(\bm{x},\bm{u})< \frac{\sqrt{2}}{2}.\end{aligned}
    \end{equation}
    Now we pause to present our main result in this section first. 
    \begin{theorem}
       \label{theorem1}
       {\rm (Uniform Exact Recovery of Complex Sparse Signals)\textbf{.}} Assume  the sensing matrix $\bm{\Phi} \sim \mathcal{N}^{m\times n}(0,1)+\mathcal{N}^{m\times n}(0,1)\ii$,   $\bm{x}\in \mathcal{K}$ with $\mathcal{K}= \Sigma^n_{s,r}$ or $\Sigma^n_{s,c}$, and one aims to reconstruct $\bm{x}$ from $\bm{\Phi}$ and the phase-only measurements $\bm{z} = \sign(\bm{\Phi x})$. Such PO-CS problem can be reformulated as 
       (\ref{3.5}) with $\hat{t} = 1$ if $\mathcal{K} = \Sigma^{n}_{s,r}$, or (\ref{3.6}) with $\hat{t}=\sqrt{\frac{2}{3}}$ if $\mathcal{K} = \Sigma^{n}_{s,c}$.
       Given any $\delta>0$, if   \begin{equation}
       \label{3.46}
           m\geq \frac{Cs}{\delta^4}\log\Big(\frac{n^2\log(mn)}{\delta^6 s}\Big)
       \end{equation} 
       for some sufficiently large $C$, then   with probability at least $1-c_1\exp(-c_2\delta^4m)-2(mn)^{-9}$,     
       $\bm{A}_{\bm{z},r}$ for all $\bm{x}\in \Sigma^{n}_{s,r}$ has $2s$ order RIC smaller than $\delta$, or $\bm{A}_{\bm{z},c}$ for all $\bm{x}\in \Sigma^{n}_{s,c}$ has $4s$ order RIC smaller than $\frac{1}{3}+\delta$. In particular,  under a sample complexity $m \geq C_1s\log\big(\frac{n^2\log(mn)}{s}\big)$ with sufficiently large $C_1$, with probability at least $1-2(mn)^{-9}-c_1\exp(-c_3m)$, all
       $\bm{x}\in \Sigma^n_{s,c}$ can be exactly reconstructed  (up to positive scaling factor) from $\bm{z}$ via basis pursuit of (\ref{3.5}) for the real  case, or (\ref{3.6}) for the complex case.
    \end{theorem}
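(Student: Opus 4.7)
The plan is to prove the uniform RIP condition (\ref{aa3.17}), from which Lemma \ref{cslem} immediately yields uniform exact reconstruction via basis pursuit. Following the decomposition (\ref{aa3.14}) I would bound $\sup_{\bm{x},\bm{u}} f^\|(\bm{x},\bm{u})$ and $\sup_{\bm{x},\bm{u}} f^\bot(\bm{x},\bm{u})$ separately, since they arise from the norm-specifying row and the imaginary-part rows of $\bm{A}_{\bm{z},c}$ respectively and require different calculations. Each of the two bounds will follow the same three-step recipe: (i) establish pointwise concentration around the desired mean via Bernstein's inequality (Lemma \ref{bern}); (ii) take a union bound over an $\epsilon$-net of $\Sigma^{n,*}_{s,c}\times \Sigma^{n,*}_{4s,c}$ whose cardinality is controlled by Lemma \ref{lem2}; and (iii) extend the control from the net to the whole set by a continuity argument.

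For the parallel part, writing $\frac{1}{\kappa m}\Re(\bm{z}^*\bm{\Phi u}) = \frac{1}{\kappa m}\sum_k \Re(\overline{z_k}\bm{\Phi}_k^*\bm{u})$, a direct Gaussian expectation calculation (the complex-signal analogue of \cite[Lem.~5.4]{jacques2021importance} using only the real part, reflecting the removal of the redundant imaginary virtual measurement noted earlier) gives mean $\Re\langle\bm{u},\bm{x}\rangle$, whose square matches $\|\bm{u}_{\bm{x}}^\|\|^2$. For the orthogonal part, expanding $\frac{\hat t^2}{m}\|\Im(\diag(\bm{z}^*)\bm{\Phi u})\|^2 = \frac{\hat t^2}{m}\sum_k \bigl(\Im(\overline{z_k}\bm{\Phi}_k^*\bm{u})\bigr)^2$ and again computing the expectation, I anticipate a leading term proportional to $\|\bm{u}_{\bm{x}}^\bot\|^2$ plus a bias involving $|\Im\langle\bm{x},\bm{u}\rangle|^2$, the phenomenon announced after (\ref{3.31}). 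The scaling $\hat t^2 = 2/3$ in (\ref{3.6}) is chosen precisely to normalize the leading constant to $1$, while the residual bias (bounded by $1/3$) is absorbed into the relaxed RIC target $1/3+\delta$, which is still admissible by \cite{cai2013sparse}. In both parts the summands become sub-exponential after a mild truncation of $|\bm{\Phi}_k^*\bm{x}|^{-1}$ using the threshold $\eta$, so step (i) follows from Bernstein and step (ii) from the net cardinality estimates in Lemma \ref{lem2}.

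The main obstacle will be step (iii): $\sign(\cdot)$ is discontinuous at $0$, so an arbitrarily small perturbation of $\bm{x}$ can flip $z_k = \sign(\bm{\Phi}_k^*\bm{x})$ on any coordinate where $|\bm{\Phi}_k^*\bm{x}|$ is small, inducing potentially large local variations in both $f^\|$ and $f^\bot$ that cannot be absorbed by naive Lipschitz estimates. I would isolate these problematic indices in the near-vanishing set $\mathcal{J}_{\bm{x}}$ defined in (\ref{a3.2}) and handle them in three stages: (a) invoke Lemma \ref{lem10} to uniformly bound $|\mathcal{J}_{\bm{x}}|/m = O(\eta)$ over all $\bm{x}\in\Sigma^{n,*}_{s,c}$, so that only $O(\eta m)$ coordinates are unstable; (b) invoke Lemma \ref{lem11} to control the operator norm of the submatrix $\bm{\Phi}^{\mathcal{J}_{\bm{x}}}$ restricted to unions of $O(s)$-size supports, so that the aggregated contribution of the unstable indices is itself polynomially small in $\eta$; (c) choose the net resolution $\epsilon$ and threshold $\eta$ as suitable polynomials of $\delta$ so that the total discretization error is $o(\delta)$. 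Combining these with the Bernstein--net bound on the $\epsilon$-net yields (\ref{aa3.17}), and tracking the resulting cardinalities reproduces the sample complexity (\ref{3.46}).
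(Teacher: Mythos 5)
Your proposal follows essentially the same route as the paper: decompose the empirical process into parallel and orthogonal parts, establish pointwise concentration via rotational invariance, take union bounds over covering nets, and control the extension step via the near-vanishing index set $\mathcal{J}_{\bm{x}}$ (Lemma \ref{lem10}), the submatrix operator-norm bound (Lemma \ref{lem11}), and the sign-Lipschitz estimate (Lemma \ref{signconti}), with the rescaling $\hat t=\sqrt{2/3}$ absorbing the $|\Im\langle\bm{x},\bm{u}\rangle|^2$ bias into the relaxed target RIC $\tfrac13+\delta$. One small correction: the pointwise concentration in your step (i) needs no truncation at all --- since $|z_k|=1$ the summands are already dominated by $|\bm{\Phi}_k^*\bm{u}|$ (sub-Gaussian) or $|\bm{\Phi}_k^*\bm{u}|^2$ (sub-exponential), and the paper simply uses a unitary rotation sending $\bm{x}\mapsto\bm{e}_1$ to make these exact Gaussian/chi-square forms (Lemmas \ref{lemma6}, \ref{lem8}); the threshold $\eta$ enters only in the discretization step (iii), exactly where you invoke Lemmas \ref{lem10} and \ref{lem11}.
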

    
    \begin{rem}  
    {\rm (Near optimality of uniform sparse recovery)\textbf{.}}
    Compared with the optimal sample complexity $O(s\log \big(\frac{n}{s}\big))$ for uniform $s$-sparse recovery in linear compressive sensing \cite{Foucart2013}, the presented $O(s\log\big(\frac{n^2\log(mn)}{s}\big))$ for achieving the same goal via only the measurement phase is near optimal up to logarithmic degradation. This illustrates the importance the phase in complex compressive sensing.  
    \end{rem}
    \begin{rem}\label{pricetoget}\textcolor{black}{
        {\rm (The price to get uniformity)\textbf{.}} Specialized to sparse signal, the non-uniform guarantee presented in \cite[Thm. 3.3]{jacques2021importance} and our Theorem \ref{theorem5} (non-uniform guarantee for complex signal) require $m\gtrsim \frac{s}{\delta^2}\log \frac{n}{s}$ to make the RIC   of the new sensing matrix (the one corresponding to some fixed signal $\bm{x}$) decrease linearly with $\delta$. By contrast, besides additional logarithmic factors, Theorem \ref{theorem1} requires  $m=\tilde{\Omega}\big(\frac{s}{\delta^4}\big)$   to guarantee the RIC of the new sensing matrices corresponding to all $\bm{x}\in \Sigma^n_{s,r}/\Sigma^n_{s,c}$ simultaneously linearly decrease with $\delta$. Such worse dependence on $\delta$ can be understood as the price we pay to get uniformity, and currently we do not know whether the  $\delta^{-4}$ dependence is optimal. Fortunately, since we do not attempt to refine multiplicative constant, this is almost harmless to the recovery guarantee of primary interest --- as we only need the RSC  to be smaller than some strictly positive threshold (rather than concentration around $0$).}
    \end{rem}

    \subsection{Auxiliary Estimates}
    The proof of our main result hinges on a bunch of auxiliary estimates, which are collectively presented in this subsection to improve the readability of this paper. We first give the RIP of the original sensing matrix  $\bm{\Phi}$. 
    \begin{lem}
    \label{lem3}
     Assume $\bm{\Phi}\sim \mathcal{N}^{m\times n}(0,1)+\mathcal{N}^{m\times n}(0,1)\ii$, $l$ is a some positive integer. Given any $\delta>0$, there exists some constants $c_1$, $c_2$ \textcolor{black}{depending  on $\delta,l$}, such that when 
     $m\geq  {c_1s  \log\big(\frac{n}{s}\big)}$, with probability at least $1-2\exp(-c_2m)$, we have
    $(1-\delta)\|\bm{u}\|^2\leq\frac{1}{2m}\|\bm{\Phi u}\|^2\leq (1+\delta)\|\bm{u}\|^2 $ holds uniformly for all $\bm{u}\in \Sigma^{n}_{ls,c}$. In particular we take $\delta = \frac{1}{4}$, then for  some constants $c_3,c_4$ \textcolor{black}{only depending on $l$ (note that they become absolute constants if $l$ is   specified and fixed)}, when $m\geq c_3s\log\big(\frac{n}{s}\big)$, with probability exceeding $1-2\exp(-c_4m)$,   we have\begin{equation}
    \label{a3.16}
        \frac{3}{2} \leq \frac{1}{m}\|\bm{\Phi u }\|^2 \leq \frac{5}{2},~~\forall \bm{u}\in  \Sigma^{n,*}_{ls,c}~.
    \end{equation} 
    \end{lem}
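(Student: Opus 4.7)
The plan is to transcribe the standard covering-based Gaussian RIP argument to the complex $s$-sparse setting. By homogeneity it suffices to prove $\bigl|\tfrac{1}{2m}\|\bm{\Phi u}\|^2-1\bigr|\le\delta$ uniformly over $\bm{u}\in\Sigma^{n,*}_{ls,c}$. Since every such $\bm{u}$ is supported on some $S\subset[n]$ of size at most $ls$, this reduces to controlling the operator norm of the Hermitian matrix $\bm{M}_S:=\tfrac{1}{2m}\bm{\Phi}_S^*\bm{\Phi}_S-\bm{I}_{|S|}$ simultaneously over all such supports.

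For a fixed $S$ and a fixed unit vector $\bm{v}\in\mathbb{C}^{|S|}$, embedding $\bm{v}$ into $\mathbb{C}^n$ to form the $ls$-sparse vector $\bm{u}$, each $\bm{\Phi}_k^*\bm{u}$ is distributed as $\mathcal{N}(0,1)+\mathcal{N}(0,1)\ii$ (its real and imaginary parts are independent $\mathcal{N}(0,1)$ when $\|\bm{u}\|=1$), so the independent variables $|\bm{\Phi}_k^*\bm{u}|^2$ have mean $2$ and sub-exponential norm $O(1)$ by (\ref{830add2}). Bernstein's inequality (Lemma \ref{bern}) then gives the pointwise bound $\mathbbm{P}\bigl(\bigl|\tfrac{1}{2m}\|\bm{\Phi u}\|^2-1\bigr|>\tfrac{\delta}{2}\bigr)\le 2\exp(-c\delta^2 m)$ for $\delta\in(0,1)$. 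To upgrade this to a supremum over the complex unit sphere $\mathbb{S}^{|S|-1}_c$, I would fix a $\tfrac{1}{4}$-net $\mathcal{G}_S$ of cardinality at most $9^{2|S|}$ and use the standard Hermitian-form inequality $\|\bm{M}_S\|\le 2\max_{\bm{v}\in\mathcal{G}_S}|\bm{v}^*\bm{M}_S\bm{v}|$ together with a union bound over $\mathcal{G}_S$.

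A further union bound over the $\binom{n}{ls}\le(en/(ls))^{ls}$ admissible supports produces a total failure probability bounded by $2\exp\bigl(C_0\,ls\log(n/(ls))-c'\delta^2 m\bigr)$ for absolute constants $C_0,c'$. Requiring $m\ge C_1 l s\log(n/s)/\delta^2$ with $C_1$ sufficiently large absorbs the combinatorial factor and yields the claimed tail $2\exp(-c_2 m)$ (with $c_1,c_2$ depending on $\delta,l$ via the factors $l/\delta^2$ and $c'\delta^2$); specializing to $\delta=\tfrac{1}{4}$ gives (\ref{a3.16}) with constants depending only on $l$. There is no genuine obstacle in this plan: the only steps requiring a touch of care are verifying the $O(1)$ sub-exponential norm of $|\bm{\Phi}_k^*\bm{u}|^2$ via the complex-Gaussian calculation above, and tracking multiplicative constants through the net-to-sphere passage and the two union bounds so that the stated dependence on $\delta$ and $l$ comes out correctly.
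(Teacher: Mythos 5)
Your plan is correct but takes a genuinely different route from the paper. You re-derive the RIP from scratch in the complex domain via the classical three-step recipe: (i) pointwise concentration of $|\bm{\Phi}_k^*\bm{u}|^2$ (mean $2$, $O(1)$ sub-exponential norm via the independence of $\Re(\bm{\Phi}_k^*\bm{u})$ and $\Im(\bm{\Phi}_k^*\bm{u})$ when $\|\bm{u}\|=1$) through Bernstein's inequality; (ii) a covering argument over the complex unit sphere $\mathbb{S}^{|S|-1}_c$ using the Hermitian quadratic-form bound $\|\bm{M}_S\|\le 2\max_{\bm{v}\in\mathcal{G}_S}|\bm{v}^*\bm{M}_S\bm{v}|$ for a $\tfrac14$-net; (iii) a union bound over the $\binom{n}{ls}$ supports. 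The paper instead reduces to the known real-valued RIP theorem by writing $\|\bm{\Phi u}\|^2 = \|[\bm{\Phi}^\Re,-\bm{\Phi}^\Im][\bm{u}]_\mathbb{R}\|^2+\|[\bm{\Phi}^\Im,\bm{\Phi}^\Re][\bm{u}]_\mathbb{R}\|^2$, noting each of $[\bm{\Phi}^\Re,-\bm{\Phi}^\Im]$ and $[\bm{\Phi}^\Im,\bm{\Phi}^\Re]$ is a real $m\times 2n$ standard Gaussian matrix and that $\bm{u}\in\Sigma^n_{ls,c}$ implies $[\bm{u}]_\mathbb{R}\in\Sigma^{2n}_{2ls,r}$, then citing the real RIP theorem (Baraniuk et al.) for each block and taking a union bound over the two. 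Your approach is self-contained and makes the complex-Gaussian distributional calculation explicit; the paper's reduction is shorter and reuses the established real result rather than re-proving it. Both are correct; the only minor point to track in your version is that the covering constant ($9^{2|S|}$ vs.\ $12^{2|S|}$, etc.) must match the net radius you actually use in the Hermitian-form bound, but this only affects the absolute constants.
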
 
    \noindent{\it Proof.}  We only need to prove the  statement for general $\delta>0$. For the real case $\bm{\Phi} \sim \mathcal{N}^{m\times n}(0,1)\ii$ and $\bm{u}\in \Sigma^{n}_{ls,r}$, such result (with $\frac{1}{2m}\|\bm{\Phi u}\|^2$ changed to $\frac{1}{m}\|\bm{\Phi u}\|^2$) is indeed the backbone of compressive sensing theory, see    \cite[Thm. 5.2]{baraniuk2008simple} for instance. To adjust it to the complex case here, we only need to calculate $\|\bm{\Phi u}\|^2 = \|[\bm{\Phi}^\Re,-\bm{\Phi}^{\Im}][\bm{u}]_{\mathbb{R}}\|^2 + \|[\bm{\Phi}^\Im,\bm{\Phi}^\Re][\bm{u}]_\mathbb{R}\|^2$ and note  that $[\bm{\Phi}^\Re,-\bm{\Phi}^\Im],[\bm{\Phi}^\Im,\bm{\Phi}^\Re]\sim \mathcal{N}^{m\times 2n}(0,1)$,   $\bm{u}\in \Sigma^n_{ls,c}$ implies $[\bm{u}]_{\mathbb{R}}\in \Sigma^{2n}_{2ls,r}$. 
    Then applying the result for the real case concludes the proof. \hfill $\square$
    
    \vspace{2mm}
    

\textcolor{black}{The rotational invariance of $\bm{\Phi}_k\sim \mathcal{N} (\bm{0},\bm{I}_n)+\mathcal{N}(\bm{0},\bm{I}_n)\ii$  will be recurring in our proofs. Specifically, given any complex unitary matrix $\bm{P}$ (i.e., $\bm{PP}^*=\bm{I}_n$), the rotational invariance states that $\bm{P\Phi}_k$ and $\bm{\Phi}_k$   have the same distribution. To see this, one can verify the equality $[\bm{P\Phi}_k]_\mathbb{R}=\bm{P}'[\bm{\Phi}_k]_\mathbb{R}$ where \begin{equation}\nonumber
   \bm{P}' = \begin{bmatrix}\Re(\bm{P}) & -\Im (\bm{P})\\
   \Im (\bm{P}) & \Re(\bm{P})\end{bmatrix}
\end{equation} is an orthogonal matrix (i.e., $\bm{P}'(\bm{P}')^\top= \bm{I}_{2n}$), and then use the rotational invariance of $[\bm{\Phi}_k]_\mathbb{R}\sim \mathcal{N}^{2n\times 1}(0,1)$, see \cite[Prop. 3.3.2]{vershynin2018high} for instance. For $\bm{\Phi}\sim \mathcal{N}^{m\times n}(0,1)+\mathcal{N}^{m\times n}(0,1)\ii$, similarly, $\bm{\Phi}\bm{P}\sim \mathcal{N}^{m\times n}(0,1)+\mathcal{N}^{m\times n}(0,1)\ii$ holds for any unitary matrix $\bm{P}$.}

    Due to the rotational invariance of $\bm{\Phi}$, $\mathbbm{E}\frac{1}{m}\|\bm{\Phi w}\|_1 = \mathbbm{E}|\bm{\Phi}^*_k\bm{w}| = \mathbbm{E}|\mathcal{N}(0,1)+\mathcal{N}(0,1) \ii| = \kappa$  holds for any $\bm{w}\in \mathbb{S}_{c}^{n-1}$. As   \cite[Lem. 5.2]{jacques2021importance}, the concentration of $\frac{1}{\kappa m}\|\bm{\Phi w}\|_1$ around $1$ will be needed later. However, unlike their Lemma 5.2   only established for a fixed $\bm{w}$, we need a uniform concentration property over all sparse $\bm{w}$. This can be achieved by straightforwardly applying a covering argument over $\bm{w}\in \Sigma^{n,*}_{s,c}$. \textcolor{black}{We note that, a comparable result termed    $(\ell_1,\ell_2)$-RIP was established in   \cite[Thm. 6]{feuillen2020ell}, while we still include a proof here for completeness.} 
    
    \begin{lem}
    \label{lem4}
      Assume $\bm{\Phi}\sim \mathcal{N}^{m\times n}(0,1)+\mathcal{N}^{m\times n}(0,1)\ii$ and fix any $\delta\in (0,1)$. When $m\geq C\frac{s}{\delta^2}\log\big(\frac{n}{\delta s}\big)$ for some absolute constant $C$, with probability at least $1-2\exp(-c\delta^2 m)$ we have \begin{equation}
          \label{3.14}
          \sup_{\bm{w}\in \Sigma^{n,*}_{s,c}} \Big|\frac{\|\bm{\Phi w}\|_1}{\kappa m} -1\Big|\leq \delta.      
          \end{equation}
    \end{lem}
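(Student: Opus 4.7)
The plan is to combine pointwise sub-Gaussian concentration with a covering argument over sparse vectors on the unit sphere. For the pointwise step, fix any $\bm{w}\in\mathbb{S}^{n-1}_c$. By the rotational invariance of $\bm{\Phi}_k\sim \mathcal{N}(\bm{0},\bm{I}_n)+\mathcal{N}(\bm{0},\bm{I}_n)\ii$, the variable $\bm{\Phi}^*_k\bm{w}$ has the same distribution as $\mathcal{N}(0,1)+\mathcal{N}(0,1)\ii$, so $|\bm{\Phi}_k^*\bm{w}|$ is Rayleigh with mean $\kappa=\sqrt{\pi/2}$. Since $|\bm{\Phi}_k^*\bm{w}|\le|\Re(\bm{\Phi}_k^*\bm{w})|+|\Im(\bm{\Phi}_k^*\bm{w})|$ and each summand has sub-Gaussian norm $O(1)$, the centered summand $|\bm{\Phi}_k^*\bm{w}|-\kappa$ is sub-Gaussian with absolute-constant norm. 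Applying (\ref{2.2}) to the sum and then the tail bound (\ref{2.1}) yields, for any $\delta'\in(0,1)$,
\begin{equation*}
\mathbbm{P}\!\left(\Big|\frac{\|\bm{\Phi w}\|_1}{\kappa m}-1\Big|\ge \delta'\right)\le 2\exp(-c_0\delta'^2 m),
\end{equation*}
with $c_0$ absolute.

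Next I discretize. For a parameter $\epsilon\in(0,1)$ to be chosen, Lemma \ref{lem2} (complex version) provides an $\epsilon$-net $\mathcal{G}$ of $\Sigma^{n,*}_{s,c}$ with $|\mathcal{G}|\le (9n/(\epsilon s))^{2s}$. Taking $\delta'=\delta/2$ and union-bounding over $\mathcal{G}$, with failure probability at most $2|\mathcal{G}|\exp(-c_0\delta^2 m/4)$ the inequality $|\|\bm{\Phi w}'\|_1/(\kappa m)-1|\le \delta/2$ holds for every $\bm{w}'\in\mathcal{G}$. For an arbitrary $\bm{w}\in\Sigma^{n,*}_{s,c}$, pick $\bm{w}'\in\mathcal{G}$ with $\|\bm{w}-\bm{w}'\|\le\epsilon$; then $\bm{w}-\bm{w}'\in\Sigma^{n}_{2s,c}$, and
\begin{equation*}
\bigl|\|\bm{\Phi w}\|_1-\|\bm{\Phi w}'\|_1\bigr|\le \|\bm{\Phi}(\bm{w}-\bm{w}')\|_1\le \sqrt{m}\,\|\bm{\Phi}(\bm{w}-\bm{w}')\|
\end{equation*}
by the triangle and Cauchy--Schwarz inequalities. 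Invoking the specialized form of Lemma \ref{lem3} with $l=2$, which holds with exponentially high probability under the stated sample size, gives $\frac{1}{m}\|\bm{\Phi}(\bm{w}-\bm{w}')\|^2\le \frac{5}{2}\epsilon^2$, so the right-hand side is at most $C_1 m\epsilon$. Dividing by $\kappa m$, the approximation error between $\bm{w}$ and $\bm{w}'$ in the normalized $\ell_1$-norm is bounded deterministically by $C_2\epsilon$ on this event.

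Finally, choose $\epsilon=\delta/(2C_2)$ so that the approximation slack is $\le\delta/2$; combined with the net bound this delivers the desired $\sup$-inequality with level $\delta$. Under this choice, $\log|\mathcal{G}|\le 2s\log(C_3 n/(\delta s))$, and the total failure probability is bounded by $2\exp\bigl(2s\log(C_3 n/(\delta s))-c_0\delta^2 m/4\bigr)$ plus the Lemma \ref{lem3} failure, both of which are at most $2\exp(-c\delta^2 m)$ once $m\ge C\,\delta^{-2} s\log(n/(\delta s))$ with $C$ sufficiently large (note this dominates the $m\gtrsim s\log(n/s)$ requirement of Lemma \ref{lem3}). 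The only real subtlety is controlling the approximation step: passing through $\ell_1$ introduces the factor $\sqrt{m}$ via Cauchy--Schwarz, but this is absorbed by the $\sqrt{m}$ gained from the $\ell_2$ RIP-type estimate of Lemma \ref{lem3}, so no blow-up in the sample complexity results.
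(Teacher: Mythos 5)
Your proof is correct and follows essentially the same route as the paper: pointwise sub-Gaussian concentration of $\|\bm{\Phi w}\|_1/(\kappa m)$ via rotational invariance, a union bound over an $O(\delta)$-net of $\Sigma^{n,*}_{s,c}$ from Lemma~\ref{lem2}, and extension to the full set by controlling $\|\bm{\Phi}(\bm{w}-\bm{w}')\|_1$ through Cauchy--Schwarz and the $\ell_2$ RIP bound of Lemma~\ref{lem3}. The only cosmetic difference is that the paper isolates the supremizer $\bm{\hat{w}}$ by compactness before approximating, whereas you state the deterministic approximation bound directly, but the estimates and the resulting sample complexity are identical.
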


    \noindent{\it Proof.} \textbf{(Step 1.)  Establish the concentration regarding a fixed $\bm{w}\in \Sigma^{n,*}_{s,c}$}

    Note that \begin{equation}
        \label{a3.18}
        \Big|\frac{\|\bm{\Phi w}\|_1}{\kappa m}  - 1\Big| = \Big|\frac{1}{m}\sum_{k=1}^m \Big(\frac{| \bm{\Phi}_k^*\bm{ w}|}{\kappa}  - \mathbbm{E} \frac{| \bm{\Phi}_k^* \bm{w}|}{\kappa}\Big)\Big|.
    \end{equation}
    By rotational invariance of $\bm{\Phi}_k$, $\{|\bm{\Phi}_k^*\bm{w}|:k\in [m]\}$ are independent copies of $|\mathcal{N}(0,1)+\mathcal{N}(0,1)\ii|$. Thus, we can estimate its sub-Gaussian norm by definition
    \begin{equation}
    \label{3.15}
       \begin{aligned}
        & \big\| |\bm{\Phi}_k^* \bm{w}|\big\|_{\psi_2} = \big\| |\mathcal{N}(0,1)+\mathcal{N}(0,1)\ii|\big\|_{\psi_2}\\&\leq \|\mathcal{N}(0,1)\|_{\psi_2}+\|\mathcal{N}(0,1)\|_{\psi_2} = O(1).
       \end{aligned}
    \end{equation}
    Then, by centering  (see \cite[Lem. 2.6.8]{vershynin2018high}), $\big\||\bm{\Phi_k^* w}| - \mathbbm{E}|\bm{\Phi_k^* w}|\big\|_{\psi_2} = O(1)$. Moreover, recalling (\ref{a3.18}) and using (\ref{2.2})  leads to 
   $
        \big\|\frac{\|\bm{\Phi w}\|_1}{\kappa m}-1\big\|_{\psi_2}^2 = O\big(\frac{1}{m}\big).
   $
  Thus, we apply (\ref{2.1})  and obtain  
    \begin{equation}
    \label{a3.21}
        \mathbbm{P}\left(\Big|\frac{\|\bm{\Phi w}\|_1}{\kappa m} -1\Big| \geq t\right) \leq 2\exp(-cmt^2) 
    \end{equation}
    holds for any $t\geq 0$, for some absolute constant $c$. 
    
    \vspace{1mm}
    \noindent
\textbf{(Step 2.) Strengthen (\ref{a3.21}) to a finite net and then further to $\bm{w}\in \Sigma^{n,*}_{s,c}$}

     We now invoke a covering argument to strengthen the concentration to $\bm{w}\in \Sigma^{n,*}_{s,c}$. We construct $\mathcal{G}_{\delta/4}$ as a $\frac{\delta}{4}$-net of $\Sigma^{n,*}_{s,c}$, and we can assume $|\mathcal{G}_{\delta/4}|\leq \big(\frac{36n}{\delta s}\big)^{2s}$ by Lemma \ref{lem2}. Based on (\ref{a3.21}), a union bound gives 
    \begin{equation}\nonumber
    \begin{aligned}
        &\mathbbm{P}\left(\sup_{\bm{w}\in\mathcal{G}_{\delta/4}}\Big|\frac{\|\bm{\Phi w}\|_1}{\kappa m} -1\Big|\geq t\right)\\&~~~~~~~~~\leq 2\exp\left(-cmt^2 +2s\log\Big(\frac{36n}{\delta s}\Big)\right) \end{aligned}
    \end{equation}for any $t>0$.
    We set $t = \frac{\delta }{2}$, then as long as $m=C\frac{s}{\delta^2}\log\big(\frac{n}{\delta s}\big)$ for sufficiently large $C$, with probability at least $1-2\exp(-\frac{c}{8}\delta^2m)$ we have $\sup_{\bm{w}\in\mathcal{G}_{\delta/4}}\big|\frac{1}{\kappa m}\|\bm{\Phi w}\|_1 - 1\big|\leq \frac{\delta }{2}$.

    We proceed by   assuming (\ref{a3.16}) in Lemma \ref{lem3} holds (with a parameter $l\geq 2$).  Since $\Sigma^{n,*}_{s,c}$ is compact, we can assume $\sup_{\bm{w}\in \Sigma^{n,*}_{s,c}}\big|\frac{1}{\kappa m}\|\bm{\Phi w}\|_1 - 1\big| = \big|\frac{1}{\kappa m}\|\bm{\Phi\hat{w}}\|_1 - 1\big|$ for some $\bm{\hat{w}}\in \Sigma^{n,*}_{s,c}$. Note that $\bm{\hat{w}}$ can be approximated by some $\bm{\tilde{w}}\in\mathcal{G}_{\delta/4}$ such that $\|\bm{\tilde{w}}-\bm{\hat{w}} \| \leq \frac{\delta}{4}$, then we  can estimate that
    \begin{equation}
        \begin{aligned}\nonumber
        & \Big|\frac{1}{\kappa m}\|\bm{\Phi \hat{w}}\|_1 - 1\Big|\\\leq &\Big|\frac{1}{\kappa m}\|\bm{\Phi \tilde{w}}\|_1 -1\Big| + \frac{1}{\kappa m}\big|\|\bm{\Phi\tilde{w}}\|_1 -\|\bm{\Phi\hat{w}}\|_1 \big|\\
         \leq &\sup_{\bm{w}\in\mathcal{G}_{\delta/4}} \Big|\frac{1}{\kappa m}\|\bm{\Phi w}\|_1 -1\Big| + \frac{1}{\kappa m} \|\bm{\Phi(\tilde{w}-\hat{w})}\|_1 \\
         \leq&\frac{\delta}{2} + \frac{1}{\kappa\sqrt{m}} \big\|\bm{\Phi}\cdot \frac{\bm{\tilde{w}}-\bm{\hat{w}}}{\|\bm{\tilde{w}}-\bm{\hat{w}}  \|}\big\|_2\cdot \| \bm{\tilde{w}}-\bm{\hat{w}} \|   <\delta.
        \end{aligned}
    \end{equation}
    The proof is complete. \hfill $\square$

    \vspace{2mm}
    The next lemma gives an upper bound on $\|\bm{\Phi}\|_\infty$, which is indeed a standard estimate for a finite sequence of sub-Gaussian random variables. We provide the proof for completeness.  
    \begin{lem}
      \label{lemma5}
     Assume $\bm{\Phi}\sim \mathcal{N}^{m\times n}(0,1)+\mathcal{N}^{m\times n}(0,1)\ii$. For some absolute constant $C$,   $\|\bm{\Phi}\|_\infty\leq C \sqrt{\log(mn)}$ holds with probability exceeding $1-2(mn)^{-9}$.
    \end{lem}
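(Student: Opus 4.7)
The plan is to give the standard sub-Gaussian maximal inequality argument: first bound a single entry via its sub-Gaussian tail, then take a union bound over all $mn$ entries. Since the proof is routine, the only subtlety worth noting is fixing constants so that the exponent works out to $-9$.

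First, I would observe that for each entry $\Phi_{ij}=X_{ij}+Y_{ij}\ii$ with $X_{ij},Y_{ij}\sim\mathcal{N}(0,1)$ independent, the modulus $|\Phi_{ij}|=\sqrt{X_{ij}^2+Y_{ij}^2}$ has sub-Gaussian norm of order one. This has actually already been computed in (\ref{3.15}): $\bigl\||\Phi_{ij}|\bigr\|_{\psi_2}\leq \|\mathcal{N}(0,1)\|_{\psi_2}+\|\mathcal{N}(0,1)\|_{\psi_2}=O(1)$. Thus by the general sub-Gaussian tail (\ref{2.1}), there exists an absolute constant $c>0$ with
\begin{equation}\nonumber
\mathbbm{P}\bigl(|\Phi_{ij}|\geq t\bigr)\leq 2\exp(-ct^2),\qquad \forall\,t>0.
\end{equation}

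Next, I would apply the union bound over the $mn$ entries of $\bm{\Phi}$:
\begin{equation}\nonumber
\mathbbm{P}\bigl(\|\bm{\Phi}\|_\infty\geq t\bigr)\leq \sum_{i\in [m],\,j\in [n]}\mathbbm{P}\bigl(|\Phi_{ij}|\geq t\bigr)\leq 2mn\exp(-ct^2).
\end{equation}
Now choose $t=C\sqrt{\log(mn)}$ with $C$ large enough that $cC^2\geq 10$; this forces $2mn\exp(-ct^2)\leq 2(mn)^{1-cC^2}\leq 2(mn)^{-9}$, yielding the stated bound.

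There is essentially no obstacle here — the argument is a textbook sub-Gaussian maximum inequality, and the only bookkeeping is choosing $C$ so the tail exponent exceeds $-9$ after absorbing the $mn$ union-bound factor. The sub-Gaussian norm estimate needed has already been recorded in the proof of Lemma~\ref{lem4}, so the proof can be stated very concisely.
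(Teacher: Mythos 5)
Your proof is correct and follows exactly the same route as the paper: sub-Gaussian norm of each entry from (\ref{3.15}), the tail bound (\ref{2.1}), a union bound over $mn$ entries, and then choosing $t=C\sqrt{\log(mn)}$ with $cC^2\geq 10$ (the paper takes $t=\sqrt{\tfrac{10}{c}\log(mn)}$, i.e.\ $cC^2=10$). Nothing to add.
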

    
    \noindent{\it Proof.}  As shown in (\ref{3.15}), the $(i,j)$-entry of $\bm{\Phi}$, denoted $\phi_{ij}$, has $O(1)$ sub-Gaussian norm, hence by (\ref{2.1}) for some $c$ we have $
        \mathbbm{P}(|\phi_{ij}|\geq t)\leq 2\exp(-ct^2) 
    $ for all $t>0$.
    A union bound gives $\mathbb{P}(\|\bm{\Phi}\|_\infty \geq t)\leq 2mn\cdot \exp(-ct^2)$. The proof can be concluded by setting $t = \sqrt{\frac{10}{c}\log(mn)}$. \hfill $\square$


    Then we give a lemma for       estimating $|\sign(a) - \sign(b)|$ by $|a-b|$. Note that the bound itself reflects the issue when both $a$, $b$ are close to $0$.
    
    \begin{lem}
        \label{signconti}
        Given $a,b\in \mathbb{C}$, we conventionally let $\frac{x}{0} = \infty$ for any $x\geq 0$, then we have \begin{equation}
            \label{a3.23}
            |\sign(a)-\sign(b)| \leq \min\Big\{\frac{2|a-b|}{\max\{|a|,|b|\}} ,2\Big\}.
        \end{equation}
    \end{lem}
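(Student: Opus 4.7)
The bound $|\sign(a)-\sign(b)|\le 2$ is immediate: by the paper's convention $\sign(\cdot)$ always has modulus at most $1$, so the triangle inequality gives this half. (In the degenerate case $a=b=0$, the convention $\max\{|a|,|b|\}=0$ forces $\frac{2|a-b|}{\max\{|a|,|b|\}}=\infty$ by the stated rule $x/0=\infty$, and the $\min$ collapses to $2$, which is then trivially achieved.) So the real content is the first term in the $\min$, and for it I would assume without loss of generality that $|a|\ge |b|$ (the expression is symmetric in $a,b$) and further reduce to the case $a\neq 0$, since otherwise $|a|=|b|=0$ and we are back to the degenerate case. If $b=0$ but $a\neq 0$, then $|\sign(a)-\sign(b)|=|\sign(a)|=1$ while $\frac{2|a-b|}{|a|}=2$, so the inequality holds with room to spare.

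For the main case $0<|b|\le |a|$, I would start from the exact identity
\begin{equation*}
\sign(a)-\sign(b)=\frac{a}{|a|}-\frac{b}{|b|}=\frac{a|b|-b|a|}{|a|\,|b|},
\end{equation*}
and then perform the algebraic split
\begin{equation*}
a|b|-b|a|=(a-b)|b|+b\bigl(|b|-|a|\bigr),
\end{equation*}
which isolates the differences $a-b$ and $|a|-|b|$ that we can both control by $|a-b|$. Applying the triangle inequality to the right-hand side and then invoking the reverse triangle inequality $\bigl||a|-|b|\bigr|\le |a-b|$ yields
\begin{equation*}
\bigl|a|b|-b|a|\bigr|\le |a-b|\,|b|+|b|\cdot |a-b|=2|b|\,|a-b|.
\end{equation*}
Dividing by $|a|\,|b|$ gives $|\sign(a)-\sign(b)|\le \frac{2|a-b|}{|a|}=\frac{2|a-b|}{\max\{|a|,|b|\}}$, which is precisely the bound to be established.

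This is essentially a one-line lemma once the right decomposition is chosen. There is no real obstacle; the only subtlety is choosing the splitting $a|b|-b|a|=(a-b)|b|+b(|b|-|a|)$ so that both pieces are governed by $|a-b|$ (the symmetric splitting $a(|b|-|a|)+(a-b)|a|$ works equally well and yields the same constant $2$ after using $|a|\ge |b|$). A small sanity check: the factor $2$ is tight — take $a=1$, $b=-1+\varepsilon$ with $\varepsilon\downarrow 0$, where $|\sign(a)-\sign(b)|\to 2$ while $\frac{2|a-b|}{\max\{|a|,|b|\}}\to 2$ as well — so one should not attempt to sharpen the constant. Finally, combining the two bounds via $\min\{\cdot,2\}$ produces the stated inequality in full generality.
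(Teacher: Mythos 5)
Your proof is correct and takes essentially the same approach as the paper's: both split $\sign(a)-\sign(b)$ into a piece controlled directly by $|a-b|$ and a piece controlled via the reverse triangle inequality $\bigl||a|-|b|\bigr|\le|a-b|$, then divide by $\max\{|a|,|b|\}$. Your clearing of denominators ($a|b|-b|a|=(a-b)|b|+b(|b|-|a|)$) is just a rewritten form of the paper's decomposition $\frac{a}{|a|}-\frac{b}{|b|}=\frac{a-b}{|a|}+\bigl(\frac{b}{|a|}-\frac{b}{|b|}\bigr)$.
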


    \noindent{\it Proof.} When $a=0$ or $b =0$, the result is obvious, so we assume $a,b$ are non-zero. By boundedness, $|\sign(a)-\sign(b)| \leq |\sign(a)|+|\sign(b)| = 2$. Since $a$, $b$ are symmetric, it remains to show the upper bound $\frac{2|a-b|}{|a|}$, which follows from some algebra as follows 
    \begin{equation}
        \begin{aligned}\nonumber
         &|\sign(a)-\sign(b)| = \left|\frac{a}{|a|} - \frac{b}{|b|}\right| \leq \left|\frac{a-b}{|a|}\right| + \left|\frac{b}{|a|}-\frac{b}{|b|}\right|\\
         &\leq \frac{|a-b|}{|a|} + \left|\frac{|b|}{|a|}-1\right| = \frac{|a-b|}{|a|} + \frac{\big||b|-|a|\big|}{|a|}\leq \frac{2|a-b|}{|a|},
        \end{aligned}
    \end{equation}
thus completing the proof. \hfill $\square$

    \vspace{1mm}

    Under the pre-specified threshold $\eta>0$, recall that the $k$-th measurement $\bm{\Phi}^*_k\bm{x}$ is called near vanishing measurement if $|\bm{\Phi}^*_k\bm{x}|<\eta$. The corresponding indices are collected in $\mathcal{J}_{\bm{x}}$ (\ref{a3.2}). As analyzed at the beginning of this section, the phase of near vanishing measurement is not informative for recovery. To address such issue, we establish a   bound for $\sup_{\bm{x}\in \Sigma^{n,*}_{s,c}}|\mathcal{J}_{\bm{x}}|$ in the next lemma.

     \begin{lem}
	 \label{lem10}
	 Given any $\beta\in (0,1)$ such that $\beta m$ is an integer, we assume the pre-specified threshold $\eta$ satisfies $\eta < \frac{\beta}{4}$.
	 If for some sufficiently large $C_1 $\begin{equation}
	 \label{4.3}
	     m \geq   \frac{C_1  s}{\beta^2}\log\Big(\frac{n^2\log(mn)}{\eta^2 s}\Big) ,
	 \end{equation}
	    then with probability at least $1- \exp(-c\beta^2m)-2(mn)^{-9}$ we have \begin{equation}
	 \label{4.4}
	     \sup_{\bm{x}\in \Sigma^{n,*}_{s,c}}|\mathcal{J}_{\bm{x}}|< \beta m.
	 \end{equation}
	 \end{lem}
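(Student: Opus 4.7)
The plan is to combine a sharp Bernoulli tail bound for each fixed signal with a covering argument that sidesteps the discontinuity of the indicator $\mathbbm{1}(|\bm{\Phi}_k^*\bm{x}|<\eta)$ by widening the threshold on the net. For a fixed $\bm{\tilde{x}} \in \Sigma^{n,*}_{s,c}$, rotational invariance of $\bm{\Phi}_k$ makes the real and imaginary parts of $\bm{\Phi}_k^*\bm{\tilde{x}}$ independent $\mathcal{N}(0,1)$ variables, so $|\bm{\Phi}_k^*\bm{\tilde{x}}|^2$ is exponential with rate $1/2$, giving
\[
p_\eta := \mathbbm{P}\big(|\bm{\Phi}_k^*\bm{\tilde{x}}| < 2\eta\big) = 1 - e^{-2\eta^2} \leq 2\eta^2 < \beta^2/8
\]
since $\eta < \beta/4$. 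Writing $\mathcal{J}^{2\eta}_{\bm{\tilde{x}}} := \{k : |\bm{\Phi}_k^*\bm{\tilde{x}}| < 2\eta\}$, the count $|\mathcal{J}^{2\eta}_{\bm{\tilde{x}}}|$ is $\mathrm{Bin}(m,p_\eta)$ with mean at most $m\beta^2/8$, so a Bernstein (or multiplicative Chernoff) bound applied to the centered sum yields $\mathbbm{P}(|\mathcal{J}^{2\eta}_{\bm{\tilde{x}}}| \geq \beta m) \leq \exp(-c\beta m) \leq \exp(-c\beta^2 m)$.

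To lift this to uniformity, build an $\epsilon$-net $\mathcal{G}_\epsilon$ of $\Sigma^{n,*}_{s,c}$ via Lemma \ref{lem2} with $|\mathcal{G}_\epsilon|\leq (9n/(\epsilon s))^{2s}$, and given any $\bm{x}\in\Sigma^{n,*}_{s,c}$ pick the nearest $\bm{\tilde{x}}\in\mathcal{G}_\epsilon$ with $\|\bm{x}-\bm{\tilde{x}}\|\leq\epsilon$. Since $\bm{x}-\bm{\tilde{x}}$ is $2s$-sparse, Cauchy--Schwarz and Lemma \ref{lemma5} give
\[
|\bm{\Phi}_k^*(\bm{x}-\bm{\tilde{x}})| \leq \sqrt{2s}\,\|\bm{\Phi}\|_\infty\,\|\bm{x}-\bm{\tilde{x}}\| \leq C\sqrt{s\log(mn)}\,\epsilon
\]
on the event $E_1 := \{\|\bm{\Phi}\|_\infty \leq C\sqrt{\log(mn)}\}$, which has probability at least $1-2(mn)^{-9}$. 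Choosing $\epsilon \asymp \eta/\sqrt{s\log(mn)}$ makes this at most $\eta$, so on $E_1$ the inclusion $|\bm{\Phi}_k^*\bm{x}|<\eta \Rightarrow |\bm{\Phi}_k^*\bm{\tilde{x}}|<2\eta$ holds, which gives $\mathcal{J}_{\bm{x}} \subset \mathcal{J}^{2\eta}_{\bm{\tilde{x}}}$ and therefore $\sup_{\bm{x}\in\Sigma^{n,*}_{s,c}} |\mathcal{J}_{\bm{x}}| \leq \sup_{\bm{\tilde{x}}\in\mathcal{G}_\epsilon} |\mathcal{J}^{2\eta}_{\bm{\tilde{x}}}|$.

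A union bound over the net now closes the argument: with the chosen $\epsilon$,
\[
\log|\mathcal{G}_\epsilon| \leq 2s \log\!\Big(\tfrac{9n}{\epsilon s}\Big) \lesssim s\log\!\Big(\tfrac{n^2\log(mn)}{\eta^2 s}\Big),
\]
which is dominated by $c\beta^2 m/2$ whenever (\ref{4.3}) holds with $C_1$ sufficiently large. Combining the resulting net-wide tail bound $\exp(-c\beta^2 m)$ with the $\|\bm{\Phi}\|_\infty$ failure probability $2(mn)^{-9}$ delivers (\ref{4.4}). The main obstacle is precisely the discontinuity of the near-vanishing indicator: one cannot naively transfer the Bernoulli concentration from a net to the continuum, which is why the widening from $\eta$ to $2\eta$ is essential and why invoking $\|\bm{\Phi}\|_\infty$ to control the perturbation of $\bm{\Phi}_k^*\bm{x}$ is unavoidable --- this is also the source of the extra $\log(mn)$ appearing inside the logarithmic factor of the sample complexity.
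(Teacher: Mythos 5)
Your proof is correct and follows essentially the same strategy as the paper's: a fixed-point tail bound with widened threshold $2\eta$, a Lemma~\ref{lem2} net of $\Sigma^{n,*}_{s,c}$ with granularity $\epsilon\asymp\eta/\sqrt{s\log(mn)}$, and a per-measurement perturbation bound $|\bm{\Phi}_k^*(\bm{x}-\tilde{\bm{x}})|\leq\eta$ obtained from Lemma~\ref{lemma5} and $2s$-sparsity. Your packaging is slightly cleaner (the set inclusion $\mathcal{J}_{\bm{x}}\subset\mathcal{J}^{2\eta}_{\tilde{\bm{x}}}$ rather than the paper's reformulation of (\ref{4.4}) as $\inf_{\bm{x},E}\|\bm{\Phi}^E\bm{x}\|_\infty\geq\eta$), and your pointwise bound $\mathbbm{P}(|\bm{\Phi}_k^*\tilde{\bm{x}}|<2\eta)\leq 2\eta^2$ from the Rayleigh CDF is sharper than the paper's $\leq 2\eta$ obtained from the real part alone, but both suffice under $\eta<\beta/4$ and yield the same sample complexity.
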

	 
	 \noindent{\it Proof.}  {\textbf{(Step 1.) A useful observation.}}

  Recall that $\mathcal{J}_{\bm{x}}=\{k\in [m]:|\bm{\Phi}_k^*\bm{x}| <\eta \}$, $\mathcal{K}^*=\Sigma_{s,c}^{n,*}$.
  We define $\mathcal{E}=\{E\subset [m]:|E| = \beta m\}$ and first observe that (\ref{4.4}) is equivalent to  
	 \begin{equation}
	 \label{4.5}	      \inf_{\bm{x}\in\mathcal{K}^*}\inf_{E\in \mathcal{E}} \|\bm{\Phi}^E\bm{x}\|_\infty \geq \eta.
	 \end{equation}
On one hand, if (\ref{4.4}) holds, then for any $\bm{x}\in \mathcal{K}^*$, $|\mathcal{J}_{\bm{x}}|<\beta m$. Thus, given any $E\in \mathcal{E}$, $E\setminus \mathcal{J}_{\bm{x}}$ is non-empty, and note that for any $k\in E\setminus \mathcal{J}_{\bm{x}}$, $|\bm{\Phi}_k^*\bm{x}|\geq \eta$.  We hence arrive at $\|\bm{\Phi}^E\bm{x}\|_\infty \geq \eta$ uniformly for all $\bm{x}\in \mathcal{K}^*,E\in \mathcal{E}$, i.e., (\ref{4.5}) holds true.    On the other hand, if (\ref{4.4}) does not hold, then there exists $\bm{x}_0\in \mathcal{K}^*$, such that $|\mathcal{J}_{\bm{x}_0}|\geq \beta m$. Hence, one can find $E_0\subset \mathcal{J}_{\bm{x}_0}$ such that $|E_0| = \beta m$. By definition of $\mathcal{J}_{\bm{x}_0}$ (\ref{a3.2}), we know $\|\bm{\Phi}^{E_0}\bm{x}_0\|_\infty <\eta$. This contradicts (\ref{4.5}). Therefore, we only need to show (\ref{4.5}) holds with high probability. By Lemma \ref{lemma5} with probability at least $1-2(mn)^{-9}$ we can assume $\|\bm{\Phi}\|_\infty \leq C\sqrt{\log(mn)}$. In the remainder of this proof, we first consider a fixed $\bm{x}\in \mathcal{K}^*$   {(Step 2)} and then invoke a covering argument   {(Step 3)}.

	\noindent \textbf{(Step 2.) Deal with a fixed $\bm{x}\in \mathcal{K}^*$.}

 We fix $\bm{x}\in\mathcal{K}^*$. Due to $\|\bm{x}\|=1$ and rotational invariance,   $\bm{\Phi}^*_k\bm{x} \sim \mathcal{N}(0,1)+\mathcal{N}(0,1)\ii$,   so a simple estimate follows
	 \begin{equation}
	     \begin{aligned}
	     \label{a4.5}
	      &\mathbbm{P}(|\bm{\Phi_k^*x}|\leq 2\eta) \leq \mathbbm{P} (|\Re (\bm{\Phi_k^*x})| \leq 2\eta)\\ &=\int_{-2\eta}^{2\eta} \frac{1}{\sqrt{2\pi}}\exp\big(-\frac{w^2
	     }{2}\big)\mathrm{d}w \leq 2\eta.
	     \end{aligned}
	 \end{equation}
	Because $\sum_{k=1}^m \mathbbm{1}\big(|\bm{\Phi}_k^*\bm{x}|<2\eta\big)=\big|\{k\in [m]:|\bm{\Phi}_k^*\bm{x}|<2\eta\}\big|$, by similar reasoning in part \textbf{(i)} of this proof we obtain
	 \begin{equation}\nonumber\begin{aligned}
	    & \inf_{E\in \mathcal{E}} \| \bm{\Phi}^E\bm{x}\|_\infty <2\eta \\\iff& \sum_{k=1}^m \mathbbm{1}\big(|\bm{\Phi^*}_k\bm{x}|<2\eta\big) \geq \beta m.\end{aligned}
	 \end{equation}
    	 Note that $\mathbbm{1}\big(|\bm{\Phi^*}_k\bm{x}|<2\eta\big)$ with $k\in [m]$ are i.i.d. bounded random variable, and    (\ref{a4.5}) gives $\mathbbm{E}\big(\mathbbm{1}\big(|\bm{\Phi^*}_k\bm{x}|<2\eta\big)\big) \leq 2\eta$. Thus, for given $\beta>0$ and some $\eta$ satisfying $\eta <\frac{\beta}{4}$, we can apply Hoeffding's inequality (e.g.,   \cite[Thm. 1.9]{rigollet2015high}) to obtain \begin{equation}
	     \begin{aligned}
	     \label{a4.7}
	      & \mathbbm{P}\Big(\inf_{E\in \mathcal{E}} \| \bm{\Phi}^E\bm{x}\|_\infty <2\eta\Big)\\ = &\mathbbm{P}\Big(\frac{1}{m}\sum_{k=1}^m \mathbbm{1}\big(|\bm{\Phi}_k^*\bm{x}|<2\eta\geq \beta\big)\Big)\\ \leq &\mathbbm{P}\Big(\frac{1}{m}\sum_{k=1}^m \mathbbm{1}\big(|\bm{\Phi^*}_k\bm{x}|<2\eta\big)\\&~~~~~~~~~ - \mathbbm{E}\big( \mathbbm{1}\big(|\bm{\Phi^*}_k\bm{x}|<2\eta\big)\big)\geq \frac{\beta}{2}   \Big) \\
	       \leq &\exp \big(-\frac{1}{2}\beta^2 m\big).
	     \end{aligned}
	 \end{equation}

	 \noindent \textbf{(Step 3.)     Strengthen (\ref{a4.7}) to all $\bm{x}\in \mathcal{K}^*$ via covering argument}
  
	 We construct $\mathcal{G}_{\hat{\delta}}$ as a $\hat{\delta}$ net of $\mathcal{K}^*$ (where $\hat{\delta}$ will be chosen later), by Lemma \ref{lem2} we assume $|\mathcal{G}_{\hat{\delta}}| \leq \big(\frac{18n}{{\hat{\delta}} s}\big)^s$.  Taking a union bound over $\mathcal{G}_{\hat{\delta}}$, (\ref{a4.7}) gives \begin{equation}
	 \label{a4.8}
  \begin{aligned}
	    &\mathbbm{P}\Big(\inf_{\bm{x}\in\mathcal{G}_{\hat{\delta}}}\inf_{E\in \mathcal{E}} \| \bm{\Phi}^E\bm{x}\|_\infty<2\eta\Big) \\&~~~~~\leq \exp\Big(-\frac{1}{2}\beta^2 m + s\log\big(\frac{18n}{{\hat{\delta}} s}\big)\Big).
      \end{aligned}
	 \end{equation}
	 By compactness, there exist $\bm{\hat{x}}\in \mathcal{K}^*$, $\hat{E}\in \mathcal{E}$ such that  $\inf_{\bm{x}\in \mathcal{K}^*}\inf_{E\in\mathcal{E}}\|\bm{\Phi}^E\bm{x}\|_\infty=\|\bm{\Phi}^{\hat{E}}\bm{\hat{x}}\|_\infty.$  Thus, we can pick $\bm{\tilde{x}}\in \mathcal{G}_{\hat{\delta}}$ so that $\|\bm{\tilde{x}}- \bm{\hat{x}}\| \leq {\hat{\delta}}$, which leads to 
	 \begin{equation}
	     \begin{aligned}
	     \label{a4.9}
	      &~~~\inf_{\bm{x}\in \mathcal{K}^*}\inf_{E\in\mathcal{E}}\|\bm{\Phi}^E\bm{x}\|_\infty 
	     \\& \geq \inf_{\bm{x}\in \mathcal{G}_{\hat{\delta}}}\inf_{E\in\mathcal{E}}\|\bm{\Phi}^E\bm{x}\|_\infty +  \|\bm{\Phi}^{\hat{E}}\bm{\hat{x}}\|_\infty - \|\bm{\Phi}^{\hat{E}}\bm{\tilde{x}}\|_\infty \\
	      &\geq \inf_{\bm{x}\in \mathcal{G}_{\hat{\delta}}}\inf_{E\in\mathcal{E}}\|\bm{\Phi}^E\bm{x}\|_\infty - \|\bm{\Phi} (\bm{\hat{x}}-\bm{\tilde{x}})\|_\infty  \\
	      &\geq  \inf_{\bm{x}\in \mathcal{G}_{\hat{\delta}}}\inf_{E\in\mathcal{E}}\|\bm{\Phi}^E\bm{x}\|_\infty - \|\bm{\Phi}\|_\infty\cdot \sqrt{2s}\cdot \|\bm{\hat{x}}-\bm{\tilde{x}}\| \\
	      &\geq \inf_{\bm{x}\in \mathcal{G}_{\hat{\delta}}}\inf_{E\in\mathcal{E}}\|\bm{\Phi}^E\bm{x}\|_\infty - C\sqrt{2s\log(mn)}\cdot {\hat{\delta}} \\&= \inf_{\bm{x}\in \mathcal{G}_{\hat{\delta}}}\inf_{E\in\mathcal{E}}\|\bm{\Phi}^E\bm{x}\|_\infty-\eta ,
	     \end{aligned}
	 \end{equation}
     where in the fourth line we use $\bm{\hat{x}}-\bm{\tilde{x}}\in \Sigma^n_{2s}$, and 
      we  set ${\hat{\delta}} = \frac{\eta}{C\sqrt{2s\log(mn)}}$ so that the last line holds. Also put our choice of ${\hat{\delta}}$ into (\ref{a4.8}), provided the sample complexity (\ref{4.3}) for   sufficiently large $C_1$, with probability at least $1-\exp\big(-\frac{1}{4}\beta^2m\big)$ we have $$\inf_{\bm{x}\in \mathcal{G}_{\hat{\delta}}}\inf_{E\in\mathcal{E}}\|\bm{\Phi}^E\bm{x}\|_\infty \geq 2\eta.$$Combined with (\ref{a4.9}) the desired (\ref{4.5}) holds. Thus, The proof is concluded. \hfill $\square$

     \vspace{1mm}

	   Lemma \ref{lem10} states that the number of near vanishing measurements does not exceed $\beta m$ with high probability. With a sufficiently small $\beta$, due to the boundedness $|\sign(\cdot)|\leq 1$,   the influence of the near vanishing measurements is expected to be    controllable. But   an actual attempt finds that,  we still need   the following Lemma \ref{lem11} to bound the operator norm of sub-matrices of $\bm{\Phi}$.

	  We point out that Lemma \ref{lem11} is an implication of a more in-depth result called Chevet's inequality \cite{chevet1977series,gordon1985some}, see also   \cite[Sec. 8.7]{vershynin2018high}. To be   self-contained,    we include an elementary    proof based on covering argument. 
     
    \begin{lem}
	  \label{lem11}
   \textcolor{black}{We suppose that $\bm{\Phi}\sim \mathcal{N}(0,1)+\mathcal{N}(0,1)\ii$,}  $\beta\in (0,1)$ is some given sufficiently small constant such that $\beta m$ is an integer. Let $\mathcal{E}_1 = \{\mathcal{S}\subset [m]:|\mathcal{S}| = \beta m\}$, $\mathcal{E}_2 = \{\mathcal{T}\subset [n]:|\mathcal{T}| = s\}$. Then with probability at least $1-\exp(-c\beta \log\big(\frac{72}{\beta}\big)m)$, for some absolute constant $C$ it holds that \begin{equation}
	      \label{4.9}
	      \begin{aligned}&\sup_{\mathcal{S}\in\mathcal{E}_1}\sup_{\mathcal{T}\in\mathcal{E}_2}\|\bm{\Phi}^\mathcal{S}_\mathcal{T}\| \\&~~~~~~\leq  C\sqrt{m\beta\log\Big(\frac{72}{\beta}\Big)+ s\log\Big(\frac{72n}{s}\Big)}.
	\end{aligned}  
   \end{equation}
	  \end{lem}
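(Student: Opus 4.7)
The plan is a standard two-level $\epsilon$-net/union-bound argument: first bound $\|\bm{\Phi}^{\mathcal{S}}_{\mathcal{T}}\|$ for a fixed pair $(\mathcal{S},\mathcal{T})$ by net-discretising the two unit spheres over which the operator norm is taken, and then take a further union bound over the combinatorial collections $\mathcal{E}_1$ and $\mathcal{E}_2$.

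\textbf{Step 1 (Reduce to a discrete set).} Fix $\mathcal{S}\in\mathcal{E}_1$ and $\mathcal{T}\in\mathcal{E}_2$, so that $\bm{\Phi}^{\mathcal{S}}_{\mathcal{T}}\in\mathbb{C}^{\beta m\times s}$. I would write
\[
\|\bm{\Phi}^{\mathcal{S}}_{\mathcal{T}}\| = \sup_{\bm{u}\in\mathbb{S}^{s-1}_c,\,\bm{v}\in\mathbb{S}^{\beta m-1}_c}\bigl|\bm{v}^*\bm{\Phi}^{\mathcal{S}}_{\mathcal{T}}\bm{u}\bigr|,
\]
and choose $\tfrac14$-nets $\mathcal{N}_1\subset\mathbb{S}^{s-1}_c$ and $\mathcal{N}_2\subset\mathbb{S}^{\beta m-1}_c$ with cardinalities bounded by the standard volumetric estimate $|\mathcal{N}_1|\leq 9^{2s}$ and $|\mathcal{N}_2|\leq 9^{2\beta m}$ (identifying each complex sphere with a real sphere of doubled dimension). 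A routine approximation argument then yields
\[
\|\bm{\Phi}^{\mathcal{S}}_{\mathcal{T}}\|\leq 2\max_{\bm{u}\in\mathcal{N}_1,\,\bm{v}\in\mathcal{N}_2}\bigl|\bm{v}^*\bm{\Phi}^{\mathcal{S}}_{\mathcal{T}}\bm{u}\bigr|.
\]

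\textbf{Step 2 (Single-pair concentration).} For $(\bm{u},\bm{v})$ fixed, the scalar $\bm{v}^*\bm{\Phi}^{\mathcal{S}}_{\mathcal{T}}\bm{u}$ is a linear combination of the i.i.d.\ entries of $\bm{\Phi}^{\mathcal{S}}_{\mathcal{T}}$; by Gaussianity and $\|\bm{u}\|=\|\bm{v}\|=1$ it has distribution $\mathcal{N}(0,1)+\mathcal{N}(0,1)\ii$, hence $\bigl\||\bm{v}^*\bm{\Phi}^{\mathcal{S}}_{\mathcal{T}}\bm{u}|\bigr\|_{\psi_2}=O(1)$ exactly as in (\ref{3.15}). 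The sub-Gaussian tail (\ref{2.1}) then gives $\mathbbm{P}\bigl(|\bm{v}^*\bm{\Phi}^{\mathcal{S}}_{\mathcal{T}}\bm{u}|\geq t\bigr)\leq 2\exp(-ct^2)$.

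\textbf{Step 3 (Two-level union bound and choice of $t$).} I would take a union bound over $\bm{u}\in\mathcal{N}_1$, $\bm{v}\in\mathcal{N}_2$, $\mathcal{S}\in\mathcal{E}_1$, $\mathcal{T}\in\mathcal{E}_2$, using $|\mathcal{E}_1|=\binom{m}{\beta m}\leq (e/\beta)^{\beta m}$ and $|\mathcal{E}_2|=\binom{n}{s}\leq (en/s)^s$, to obtain
\[
\mathbbm{P}\Bigl(\sup_{\mathcal{S},\mathcal{T}}\|\bm{\Phi}^{\mathcal{S}}_{\mathcal{T}}\|>2t\Bigr)\leq 2\exp\Bigl(\beta m\log(c_1/\beta)+s\log(c_2 n/s)-ct^2\Bigr),
\]
where $c_1,c_2$ absorb the factor $81$ coming from the two nets. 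Setting $t=C\sqrt{\beta m\log(72/\beta)+s\log(72n/s)}$ with $C$ large enough makes the exponent at most $-c'\bigl(\beta m\log(72/\beta)+s\log(72n/s)\bigr)\leq -c'\beta\log(72/\beta)m$, which delivers exactly (\ref{4.9}) with the stated failure probability.

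\textbf{Expected obstacle.} The argument itself is routine; the only delicate point is matching constants inside the logarithms so that the combinatorial factor $\binom{m}{\beta m}$ and the $\beta m$-dimensional net together produce an exponent of the form $\beta m\log(c/\beta)$ and not the weaker $\beta m\log(cm)$. This is precisely why one should invoke the tight binomial bound $(e/\beta)^{\beta m}$ rather than the loose $m^{\beta m}$, so that the failure probability retains the correct $\log(72/\beta)$ scaling rather than an undesired $\log m$ factor.
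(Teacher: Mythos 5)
Your proposal is correct and arrives at the same estimate by essentially the same covering-plus-union-bound strategy. The only organizational difference is where you pay the combinatorial cost: you discretize the $(\beta m-1)$- and $(s-1)$-dimensional spheres for each fixed support pair $(\mathcal{S},\mathcal{T})$ and then union over $\mathcal{E}_1\times\mathcal{E}_2$, whereas the paper directly covers the sparse spheres $\Sigma^{m,*}_{\beta m,c}\times\Sigma^{n,*}_{s,c}$ with the nets of Lemma~\ref{lem2} (which already package the binomial factors). Your version sidesteps a small technical wrinkle the paper has to handle: because the paper's net lives on the full sparse sphere, the approximation error $\bm{\hat a}-\bm{\tilde a}$ is $2\beta m$-sparse rather than $\beta m$-sparse, forcing a sparse decomposition $\bm{a}_1+\bm{a}_2$ and a correspondingly finer $\tfrac18$-net in (\ref{sparsede}); in your fixed-support route the approximation stays inside a single coordinate subspace, so a $\tfrac14$-net and the standard factor-of-two argument suffice. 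Both routes give (\ref{4.9}) with the stated probability up to absolute constants.
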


	  \noindent{\it Proof.}  First note that there exist $\bm{\hat{a}}\in \Sigma_{ \beta m,c}^{m,*}$, $\bm{\hat{b}}\in \Sigma^{n,*}_{s,c}$ such that  \begin{equation}
	    \label{3.39}  \begin{aligned}\sup_{\mathcal{S}\in\mathcal{E}_1}\sup_{\mathcal{T}\in \mathcal{E}_2} \|\bm{\Phi}_\mathcal{T}^\mathcal{S}\| &= \sup_{\bm{a}\in \Sigma^{m,*}_{ \beta m,c}}\sup_{\bm{b}\in \Sigma^{n,*}_{s,c}} \Re\big(\bm{a^*\Phi b}\big) \\&= \Re\big(\bm{\hat{a}^*\Phi \hat{b}}\big) .
     \end{aligned}
	  \end{equation}
  	  We construct $\mathcal{G}_{1}$ as a $\frac{1}{8}$-net of $\Sigma^{m,*}_{\beta m,c}$, $\mathcal{G}_{2}$ as a $\frac{1}{8}$-net of $\Sigma^{n,*}_{s,c}$, then by Lemma \ref{lem2} we can assume $|\mathcal{G}_1|\leq \big(\frac{72}{\beta}\big)^{2\beta m}$, $|\mathcal{G}_2|\leq \big(\frac{72n}{s}\big)^{2s}$.
      The remainder of this proof is a standard covering argument. For clarity we present it in two steps:   we first control $\Re(\bm{a}^*\bm{\Phi b})$ over $(\bm{a},\bm{b})\in \mathcal{G}_1\times \mathcal{G}_2$ {(Step 1)}; then,    we control the approximation error of the nets to prove the desired claim  {(Step 2)}.

   \noindent{\textbf{(Step 1.) Bound $|\Re(\bm{a^*\Phi b})|$ over $(\bm{a},\bm{b})\in \mathcal{G}_1\times \mathcal{G}_2$}}

   By rotational invariance of $\bm{\Phi}$, for fixed $\bm{a}\in \Sigma^{m,*}_{\beta m,c}$, $\bm{b}\in \Sigma^{n,*}_{s,c}$, $\Re(\bm{a^*\Phi b})\sim \mathcal{N}(0,1)$. Thus, for any $t>0$ (\ref{2.1}) gives $\mathbbm{P}\big(|\Re(\bm{a^*\Phi b})|\geq t\big)\leq 2\exp(-c_1t^2)$ for some absolute constant $c_1$, then a union bound gives \begin{equation}\nonumber\begin{aligned}
	      &\mathbbm{P}\Big(\sup_{\bm{a}\in \mathcal{G}_1}\sup_{\bm{b}\in\mathcal{G}_2}|\Re(\bm{a^*\Phi b})|\geq t\Big)\\&~~~~\leq 2\exp\Big(-c_1t^2+2\beta\log\Big(\frac{72}{\beta}\Big)\cdot m + 2s\log\Big(\frac{72n}{s}\Big)\Big).
       \end{aligned}
	  \end{equation}
	  Thus, we take $t = \frac{C}{2}\sqrt{\beta\log\big(\frac{72}{\beta}\big)m+s\log\big(\frac{72n}{s}\big)}$ 
for some sufficient large $C$, with probability at least $1- \exp\big(-c_2\beta \log\big(\frac{72}{\beta}\big)m\big)$, we have \begin{equation}
    \label{overnet1}
    \begin{aligned}
    &\sup_{\bm{a}\in \mathcal{G}_1}\sup_{\bm{b}\in\mathcal{G}_2}|\Re(\bm{a^*\Phi b})|\leq \frac{C}{2}\sqrt{\beta \log\Big(\frac{72}{\beta}\Big)m+s\log\Big(\frac{72n}{s}\Big)}.
    \end{aligned}
\end{equation}

\noindent{\textbf{(Step 2.) Strengthen (\ref{overnet1}) from $(\bm{a},\bm{b})\in \mathcal{G}_1\times \mathcal{G}_2$ to $(\bm{a},\bm{b})\in \Sigma_{\beta m,c}^{m,*}\times \Sigma^{n,*}_{s,c}$}}

Recalling (\ref{3.39}),
we can pick $\bm{\tilde{a}}\in \mathcal{G}_1$, $\bm{\tilde{b}}\in\mathcal{G}_2$ such that $\|\bm{\tilde{a}}-\bm{\hat{a}}\|\leq \frac{1}{8}$, $\|\bm{\tilde{b}}-\bm{\hat{b}} \|\leq \frac{1}{8}$, then we have  \begin{equation}
    \begin{aligned}\label{sparsede}
     &\sup_{\mathcal{S}\in\mathcal{E}_1}\sup_{\mathcal{T}\in \mathcal{E}_2} \|\bm{\Phi}_\mathcal{T}^\mathcal{S}\| =\Re\big(\bm{\hat{a}^*\Phi \hat{b}}\big)= \Re\big(\bm{\tilde{a}^*\Phi \tilde{b}}\big) \\&+ \Re \big(\bm{[\hat{a}}-\bm{\tilde{a}}]^*\bm{\Phi} \bm{\hat{b}}\big) + \Re\big(\bm{\tilde{a}}^*\bm{\Phi(\hat{b}-\tilde{b})}\big)\\
     &\stackrel{(i)}{\leq} \Re\big(\bm{\tilde{a}^*\Phi \tilde{b}}\big) + \|\bm{\hat{a}}-\bm{\tilde{a}}\|\cdot\Re\big([\bm{a}_1+\bm{a}_2]^*\bm{\Phi\hat{b}}\big) \\&~~~~~~~~~~~~~~~~+\|\bm{\hat{b}}-\bm{\tilde{b}}\|\cdot \Re\big(\bm{\tilde{a}}^*\bm{\Phi}(\bm{b}_1+\bm{b}_2)\big)
     \\&\stackrel{(ii)}{\leq}\Re\big(\bm{\tilde{a}^*\Phi \tilde{b}}\big)+ \frac{1}{8}\cdot 4\cdot \Re\big(\bm{\hat{a}^*\Phi \hat{b}}\big) \\&\leq  \sup_{\bm{a}\in \mathcal{G}_1}\sup_{\bm{b}\in\mathcal{G}_2}|\Re(\bm{a^*\Phi b})|+\frac{1}{2}\sup_{\mathcal{S}\in\mathcal{E}_1}\sup_{\mathcal{T}\in \mathcal{E}_2} \|\bm{\Phi}_\mathcal{T}^\mathcal{S}\|,
    \end{aligned}
\end{equation}
where $(i)$ follows from the decomposition $\frac{\bm{\tilde{a}}-\bm{\hat{a}}}{\|\bm{\tilde{a}}-\bm{\hat{a}}\|} = \bm{a}_1+\bm{a}_2$ for some $(\beta m)$-sparse $\bm{a}_1$, $\bm{a}_2$ satisfying $\|\bm{a}_1\|,\|\bm{a}_2\|\leq 1$,  and similarly $\frac{\bm{\hat{b}}-\bm{\tilde{b}}}{\|\bm{\hat{b}}-\bm{\tilde{b}}\|}=\bm{b}_1+\bm{b}_2$ for some $s$-sparse $\bm{b}_1,\bm{b}_2$ satisfying $\|\bm{b}_1\|,\|\bm{b}_2\|\leq 1$; $(ii)$ follows from (\ref{3.39}). Therefore, we obtain  $$\sup_{\mathcal{S}\in\mathcal{E}_1}\sup_{\mathcal{T}\in \mathcal{E}_2} \|\bm{\Phi}_\mathcal{T}^\mathcal{S}\|\leq2\sup_{\bm{a}\in \mathcal{G}_1}\sup_{\bm{b}\in\mathcal{G}_2}|\Re(\bm{a^*\Phi b})|,$$ by using (\ref{overnet1}) the desired claim is immediate. \hfill $\square$

    \subsection{The Parallel Part}
    With all above estimates in place, we are now in a position to present the proof for our main result. 
   In this subsection, our main goal is to bound the parallel part $\bm{u}_{\bm{x}}^\|$ and show $\sup_{\bm{x},\bm{u}}f^\|(\bm{x},\bm{u})$ in (\ref{aa3.17}) can be sufficiently small. To this end, we    plug in $\bm{z}= \sign(\bm{\Phi x})$, and $\|\bm{u}^\|_{\bm{x}}\|^2 = \big(\Re\big<\bm{u},\bm{x}\big>\big)^2$, then an initial attempt gives  (\ref{3.21}).\begin{figure*}[!t]
\normalsize
    \begin{equation}
    \label{3.21}
       \begin{aligned}
       &\sup_{\bm{x}\in \Sigma^{n,*}_{s,c}}\sup_{\bm{u}\in \Sigma^{n,*}_{4s,c}}f^\|(\bm{x},\bm{u}) = \sup_{\bm{x}\in \Sigma^{n,*}_{s,c}}\sup_{\bm{u}\in \Sigma^{n,*}_{4s,c}} \Big|\frac{\big[\Re\big<\sign(\bm{\Phi x}),\bm{\Phi u}\big>\big]^2}{\kappa^2m^2}-\big[\Re\big<\bm{u},\bm{x}\big>\big]^2\Big|\\&
       =  \sup_{\bm{x}\in \Sigma^{n,*}_{s,c}}\sup_{\bm{u}\in \Sigma^{n,*}_{4s,c}} \Big|\frac{1}{\kappa m}\Re\big<\sign(\bm{\Phi x}),\bm{\Phi u}\big>- \Re\big<\bm{u},\bm{x}\big>\Big|\cdot \Big| \frac{1}{\kappa m}\Re\big<\sign(\bm{\Phi x}),\bm{\Phi u}\big>+\Re\big<\bm{u},\bm{x}\big> \Big| \\
       &\leq  \left\{ \sup_{\bm{x}\in \Sigma^{n,*}_{s,c}}\sup_{\bm{u}\in \Sigma^{n,*}_{4s,c}} \Big|\frac{1}{\kappa m}\Re\big<\sign(\bm{\Phi x}),\bm{\Phi u}\big>- \Re\big<\bm{u},\bm{x}\big>\Big|\right\}  \cdot \left\{\frac{1}{m}\sup_{\bm{x}\in \Sigma^{n,*}_{s,c}}\sup_{\bm{u}\in \Sigma^{n,*}_{4s,c}}\|\sign(\bm{\Phi x})\|\cdot \|\bm{\Phi u}\|+1\right\}\\&\stackrel{(i)}{\lesssim} \sup_{\bm{x}\in \Sigma^{n,*}_{s,c}}\sup_{\bm{u}\in \Sigma^{n,*}_{4s,c}}\Big|\frac{1}{\kappa m}\Re\big<\sign(\bm{\Phi x}),\bm{\Phi u}\big>- \Re\big<\bm{u},\bm{x}\big>\Big|. 
       \end{aligned}
    \end{equation}
\hrulefill
\vspace*{4pt}
\end{figure*}
    Note that   $(i)$ is due to $\frac{1}{m}\sup_{\bm{x},\bm{u}} \|\sign(\bm{\Phi x})\|\cdot \|\bm{\Phi u}\| \leq\sup_{\bm{x},\bm{u}}\|\frac{\bm{\Phi u}}{\sqrt{m}}\| = O(1)$, which holds with high probability as long as $m \gtrsim s\log\big(\frac{n}{s}\big)$ (Lemma \ref{lem3}).

    We  need to show the last line of (\ref{3.21}) is sufficiently small. Compared with the local sign-product embedding property established in   \cite[Lem. 5.4]{jacques2021importance}, what we are going to show is indeed a global sign-product embedding property that holds uniformly for all sparse $\bm{x}$. \textcolor{black}{While \cite{jacques2021importance} and this work consider complex $\bm{\Phi}$, the real counterpart of sign-product embedding property (under $\bm{\Phi}\in \mathbb{R}^{m\times n}$) has been established in \cite{plan2012robust}, see Remark \ref{globalrem} for more discussions.} Besides, a crucial technical change is to only take the real part of the inner product, which stems from the removal the second constraint in (\ref{add830}). Otherwise, additional bias would arise and $\sup_{\bm{x},\bm{u}}f^\|(\bm{x},\bm{u})$ cannot be bounded close to $0$ (see Remark \ref{rem1} for details).

  {By using $\|\bm{\Phi x}\|_1=\big<\sign(\bm{\Phi x}),\bm{\Phi x}\big>$ we first note the following equality
   \begin{equation}
   \begin{aligned}\label{addclear1}
       &\frac{1}{\kappa m}\Re \big<\sign(\bm{\Phi x}),\bm{\Phi u}\big>-\Re\big<\bm{u},\bm{x}\big>\\&=\Re\big<\bm{u},\bm{x}\big>\Big(\frac{\|\bm{\Phi x}\|_1}{\kappa m}-1\Big)\\&+\frac{1}{\kappa m}\Re\big<\sign(\bm{\Phi x}),\bm{\Phi u_x}^\bot\big>,
       \end{aligned}
   \end{equation}}
   which allows us to decompose the last term of (\ref{3.21}) into  
    \begin{equation}
        \begin{aligned}
        \label{aa3.40}
        & \sup_{\bm{x}\in \Sigma^{n,*}_{s,c}}\sup_{\bm{u}\in \Sigma^{n,*}_{4s,c}} \Big|\frac{1}{\kappa m}\Re\big<\sign(\bm{\Phi x}),\bm{\Phi u}\big>- \Re\big<\bm{u},\bm{x}\big>\Big|\\&\leq  \sup_{\bm{x}\in \Sigma^{n,*}_{s,c}}\sup_{\bm{u}\in \Sigma^{n,*}_{4s,c}} \underbrace{|\Re\big<\bm{u},\bm{x}\big>|\Big|\frac{\|\bm{\Phi x}\|_1}{\kappa m} - 1\Big|}_{f^\|_1(\bm{x},\bm{u})}\\&+\sup_{\bm{x}\in \Sigma^{n,*}_{s,c}}\sup_{\bm{u}\in \Sigma^{n,*}_{4s,c}}\underbrace{ \frac{1}{\kappa m}\Big|\Re\big<\sign(\bm{\Phi x}),\bm{\Phi u_x^\bot}\big>\Big| }_{f^\|_2(\bm{x},\bm{u})}
        \end{aligned}
    \end{equation}
  By previous development, the bound for $\sup_{\bm{x},\bm{u}}f^\|_1(\bm{x},\bm{u})$ is immediate. 
    
    \begin{coro}
    \label{coro1}
      Given any $\delta>0$, when $m\geq \frac{C s}{\delta^2}\log\big(\frac{n}{\delta s}\big)$ for some absolute constant $C$, with probability at least $1-2\exp(-c\delta^2m)$ we have \begin{equation}\nonumber
          \sup_{\bm{x}\in \Sigma^{n,*}_{s,c}}\sup_{\bm{u}\in \Sigma^{n,*}_{4s,c}}f_1^\|(\bm{x},\bm{u})\leq \delta.
      \end{equation}
    \end{coro}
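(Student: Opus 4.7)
The plan is to exploit the factored form
\[
f_1^\|(\bm{x},\bm{u}) = \bigl|\Re\langle \bm{u},\bm{x}\rangle\bigr|\cdot \Bigl|\tfrac{\|\bm{\Phi x}\|_1}{\kappa m} - 1\Bigr|.
\]
The first factor jointly depends on $(\bm{x},\bm{u})$ but can be controlled deterministically; the second factor depends only on $\bm{x}$ and is already handled by Lemma \ref{lem4}. So the corollary should follow immediately, with no genuine obstacle.

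First, I would apply the Cauchy--Schwarz inequality: since both $\bm{x}\in \Sigma^{n,*}_{s,c}$ and $\bm{u}\in \Sigma^{n,*}_{4s,c}$ are unit-norm complex vectors,
\[
\bigl|\Re\langle \bm{u},\bm{x}\rangle\bigr| \leq |\langle \bm{u},\bm{x}\rangle| \leq \|\bm{u}\|\cdot \|\bm{x}\| = 1.
\]
This makes the inner supremum over $\bm{u}$ vacuous, and hence
\[
\sup_{\bm{x}\in\Sigma^{n,*}_{s,c}}\sup_{\bm{u}\in\Sigma^{n,*}_{4s,c}} f_1^\|(\bm{x},\bm{u})\;\leq\; \sup_{\bm{x}\in\Sigma^{n,*}_{s,c}}\Bigl|\tfrac{\|\bm{\Phi x}\|_1}{\kappa m}-1\Bigr|.
\]

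Second, I would directly invoke Lemma \ref{lem4} (with the role of $\bm{w}$ played by $\bm{x}$, and with the same $\delta$): under the sample complexity $m \geq C\tfrac{s}{\delta^2}\log(\tfrac{n}{\delta s})$, the right-hand side is bounded by $\delta$ with probability at least $1 - 2\exp(-c\delta^2 m)$. Chaining this with the Cauchy--Schwarz bound above yields exactly the claimed inequality, with the same probability and the same sample complexity. Since the proof is a two-line reduction to Lemma \ref{lem4}, there is no technically demanding step — the only ``work'' is noticing that $\bm{u}$ drops out by Cauchy--Schwarz, so that no separate covering argument over $\Sigma^{n,*}_{4s,c}$ is necessary.
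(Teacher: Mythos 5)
Your proof is correct and matches the paper's argument exactly: bound $|\Re\langle\bm{u},\bm{x}\rangle|\leq 1$ by Cauchy--Schwarz and then invoke Lemma \ref{lem4} directly.
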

    
    \noindent{\it Proof.} Because $(\bm{x},\bm{u})\in \Sigma^{n,*}_{s,c}\times \Sigma^{n,*}_{4s,c}$, we have $|\Re\big<\bm{u},\bm{x}\big>|\leq 1$. To prove the claim, it remains to invoke Lemma \ref{lem4}. \hfill $\square$
    
    \vspace{1mm}


   
   To deal with $\sup_{\bm{x},\bm{u}}f_2^\|(\bm{x},\bm{u})$,
   an approach similar to Lemma \ref{lem4}, \ref{lem10}, \ref{lem11} is in order. That is, we first study a fixed $(\bm{x},\bm{u})$ in Lemma \ref{lemma6}, and then apply a covering argument in Lemma \ref{lem7}.

    \begin{lem}
    \label{lemma6}
       Fix $\bm{x}\in\Sigma^{n,*}_{s,c}$, $\bm{u}\in\Sigma^{n,*}_{4s,c}$, for some absolute constant $c>0$ we have \begin{equation}
           \label{3.22}\begin{aligned}
           &\mathbbm{P}\Big(\Big|\frac{1}{\kappa m}\Re\big<\sign(\bm{\Phi x}),\bm{\Phi u_x^\bot}\big>\Big|\geq t\Big)\\&~~~~~~~~~~~~
           \leq 2\exp(-cmt^2),~~\forall~t>0.
           \end{aligned}
       \end{equation}
    \end{lem}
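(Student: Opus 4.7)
The quantity of interest decomposes as $\frac{1}{\kappa m}\Re\big<\sign(\bm{\Phi x}),\bm{\Phi u_x^\bot}\big>=\frac{1}{\kappa m}\sum_{k=1}^m Y_k$ with $Y_k:=\Re\bigl(\overline{\sign(\bm{\Phi}_k^*\bm{x})}\cdot\bm{\Phi}_k^*\bm{u_x^\bot}\bigr)$, and the $Y_k$ are i.i.d.\ by independence of the rows $\bm{\Phi}_k$. The plan is therefore to establish that each summand satisfies $\mathbbm{E} Y_k=0$ and $\|Y_k\|_{\psi_2}=O(1)$; granted this, (\ref{2.2}) gives $\bigl\|\sum_k Y_k\bigr\|_{\psi_2}^2\lesssim m$, and the sub-Gaussian tail (\ref{2.1}) then yields the claimed $2\exp(-cmt^2)$ bound after absorbing the constant $\kappa$ into $c$.

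The subtlety is that $\bm{u_x^\bot}$, though orthogonal to $\bm{x}$ in the real sense built into (\ref{a3.14}), is \emph{not} Hermitian-orthogonal: a direct computation gives $\bm{x}^*\bm{u_x^\bot}=-\ii\,\Im\big<\bm{u},\bm{x}\big>$, which is purely imaginary but need not vanish. I would split this off by writing
\[
\bm{u_x^\bot}=c\,\ii\bm{x}+\bm{w},\quad c:=-\Im\big<\bm{u},\bm{x}\big>\in\mathbb{R},\quad \bm{x}^*\bm{w}=0,
\]
so that $\bm{\Phi}_k^*\bm{u_x^\bot}=c\ii\,\bm{\Phi}_k^*\bm{x}+\bm{\Phi}_k^*\bm{w}$ and hence
\[
\overline{\sign(\bm{\Phi}_k^*\bm{x})}\cdot\bm{\Phi}_k^*\bm{u_x^\bot} = c\ii\,|\bm{\Phi}_k^*\bm{x}|+\overline{\sign(\bm{\Phi}_k^*\bm{x})}\cdot\bm{\Phi}_k^*\bm{w}.
\]
The first summand is purely imaginary, so taking the real part annihilates the would-be bias and leaves $Y_k=\Re\bigl(\overline{\sign(\bm{\Phi}_k^*\bm{x})}\cdot\bm{\Phi}_k^*\bm{w}\bigr)$. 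This is precisely the reason (highlighted in Remark~\ref{rem1}) for retaining only $\Re$ in the reformulation, since otherwise the $c\ii\bm{x}$ piece would contribute a nonvanishing bias that ruins the concentration.

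Because $\bm{\Phi}_k$ is a standard complex Gaussian vector, the pair $(\bm{\Phi}_k^*\bm{x},\bm{\Phi}_k^*\bm{w})$ is jointly circularly symmetric complex Gaussian with cross-covariance $\mathbbm{E}\bigl[\bm{\Phi}_k^*\bm{x}\cdot\overline{\bm{\Phi}_k^*\bm{w}}\bigr]=2\bm{w}^*\bm{x}=0$, which forces $\bm{\Phi}_k^*\bm{x}$ and $\bm{\Phi}_k^*\bm{w}$ to be independent and therefore makes $\sign(\bm{\Phi}_k^*\bm{x})$ independent of $\bm{\Phi}_k^*\bm{w}$ as well. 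Since $\bm{\Phi}_k^*\bm{w}$ is an isotropic complex Gaussian with variance $2\|\bm{w}\|^2$, multiplying by the independent unit-modulus factor $\overline{\sign(\bm{\Phi}_k^*\bm{x})}$ leaves its distribution invariant, and taking the real part yields $Y_k\sim\mathcal{N}(0,\|\bm{w}\|^2)$. Since $\|\bm{w}\|^2=\|\bm{u_x^\bot}\|^2-|\Im\big<\bm{u},\bm{x}\big>|^2\le 1$, the desired $\mathbbm{E} Y_k=0$ and $\|Y_k\|_{\psi_2}=O(1)$ follow, and summing finishes the proof via (\ref{2.2})--(\ref{2.1}). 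The main obstacle in this plan is spotting the orthogonality mismatch between (\ref{a3.14}) and the Hermitian inner product seen by $\bm{\Phi}$; once the purely imaginary residual is peeled off via the splitting above, the rest is a short rotational-invariance computation.
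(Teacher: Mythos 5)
Your proof is correct and is mathematically equivalent to the paper's: the paper rotates coordinates via a unitary $\bm{P}$ with $\bm{Px}=\bm{e}_1$, observes that the first entry $v_1$ of $\bm{Pu_x^\bot}$ is purely imaginary (so the bias term $\|\bm{g}\|_1 v_1$ dies under $\Re$), and conditions on $\bm{g}$ to see the remainder is Gaussian; your decomposition $\bm{u_x^\bot}=c\,\ii\bm{x}+\bm{w}$ with $\bm{x}^*\bm{w}=0$ is precisely that rotation written in the original coordinates, with $c\ii$ playing the role of $v_1$ and the independence of $\bm{\Phi}_k^*\bm{x}$ and $\bm{\Phi}_k^*\bm{w}$ replacing the conditioning on $\bm{g}$. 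Both routes hinge on the same two observations — the Hermitian inner product $\bm{x}^*\bm{u_x^\bot}$ is purely imaginary, and the remaining component is an isotropic complex Gaussian independent of $\sign(\bm{\Phi}_k^*\bm{x})$ — so this is essentially the same argument, presented without the explicit change of basis.
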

    
    \noindent{\it Proof.}  For a fixed $\bm{x}$, there exists a unitary matrix $\bm{P}$ (i.e., $\bm{PP^*} = \bm{I}_n$) such that $\bm{Px} = \bm{e}_1$. Furthermore, we let $\bm{\widetilde{\Phi}} = \bm{\Phi P^*}$, then $\bm{\widetilde{\Phi}}$ and $\bm{\Phi P^*}$ have the same distribution. We  divide it into two blocks $\bm{\widetilde{\Phi}} = [\bm{g},\bm{G}]$ with $\bm{g}\in\mathbb{C}^{m\times 1}$ and $\bm{G}\in \mathbb{C}^{m\times (n-1)}$. We set $\bm{\tilde{g}} = \frac{\sign(\bm{g^*})}{\sqrt{m}}\bm{G}\in\mathbb{C}^{1\times (n-1)}$. Since with full probability $\|\frac{\sign(\bm{g^*})}{\sqrt{m}}\| = 1$, and \textcolor{black}{$\bm{{g}}$ is independent of $\bm{G}$}, by conditionally on $\bm{g}$,  $\bm{\tilde{g}}$ has entries i.i.d. distributed as $\mathcal{N}(0,1)+\mathcal{N}(0,1)\ii$ almost surely. In addition, we define $\bm{v}: = \bm{Pu_x^\bot}$ and denote its $i$-th entry by $v_i$. Since $\bm{Pu_x^\bot} = \bm{Pu}-\Re\big<\bm{Pu},\bm{Px}\big>\bm{Px}= \bm{Pu}-\Re\big<\bm{Pu},\bm{e}_1\big>\bm{e}_1$, we have $\Re( v_1) = 0$.  Now it follows that\begin{equation}
        \begin{aligned}
        \label{3.23}
        &\Re \big<\sign(\bm{\Phi x}),\bm{\Phi u_x^\bot}\big> = \Re\big<\sign(\bm{\Phi P^*Px}),\bm{\Phi P^* P u_x^\bot}\big>\\& = \Re\big<\sign(\bm{\widetilde{\Phi}e}_1),\bm{\widetilde{\Phi}v}\big> =\Re\big<\sign(\bm{g}),\bm{g}v_1+\bm{Gv}^{[2:n]}\big> \\
        &= \Re\big(\|\bm{g}\|_1v_1+\sign(\bm{g}^*)\bm{G}\bm{v}^{[2:n]}\big) =\sqrt{m}\cdot \Re(\bm{\tilde{g}v}^{[2:n]}).
        \end{aligned}
    \end{equation}
    Note that $\Re(\bm{\tilde{g}w})\sim \mathcal{N}(0,1)$ if $\|\bm{w}\|=1$, and evidently we have $\|\bm{v}^{[2:n]}\|\leq 1$, and hence $\frac{1}{\kappa m}\Re\big<\sign(\bm{\Phi x}),\bm{\Phi u_x^\bot}\big> = \frac{1}{\kappa\sqrt{m}}\Re(\bm{\tilde{g}}\bm{v}^{[2:n]})$ has sub-Gaussian norm bounded by $\frac{C}{\sqrt{m}}$ for some $C$. Then the result follows immediately from (\ref{2.1}).\hfill $\square$
    
    \vspace{1mm}
     
    \begin{rem}
    \label{rem1}\textcolor{black}{Taking Corollary \ref{coro1}, Lemma \ref{lemma6} and (\ref{aa3.40}) collectively, at this moment, we can already conclude that $\big|\frac{1}{\kappa m}\Re\big<\sign(\bm{\Phi x}),\bm{\Phi u}\big>-\Re\big<\bm{u},\bm{x}\big>\big|$ concentrates close to $0$ for a fixed $(\bm{x},\bm{u})\in \Sigma^{n,*}_{s,c}\times \Sigma^{n,*}_{4s,c}$. This is the cornerstone for  proving the subsequent global sign-product embedding property (Corollary \ref{signpro}). Further, we comment that removing the second redundant constraint of (\ref{add830}) seems quite necessary for our development. Otherwise, the parallel part we want to bound would become $\sup_{\bm{x},\bm{u}}f_{*}^{\|}(\bm{x},\bm{u})=\sup_{\bm{x},\bm{u}}\big|\frac{1}{\kappa^2m^2}|\big<\sign(\bm{\Phi x}),\bm{\Phi u}\big>|^2-\|\bm{u}_{\bm{x}}^{\|}\|^2\big|$, which by treatments similar to (\ref{3.21}), (\ref{aa3.40}) leads us to bound $I^*_{\bm{x},\bm{u}}:=\big|\frac{1}{\kappa m}\big<\sign(\bm{\Phi x}),\bm{\Phi u_x}^\bot\big>\big|$ for fixed $(\bm{x},\bm{u})$ first. The issue is that, unlike  in Lemma \ref{lemma6}, $I^*_{\bm{x},\bm{u}}$ is not close to $0$. This can be seen from (\ref{3.23}) which, without taking the real part, contains a   bias term of $\|\bm{g}\|_1 v_1$ that could largely deviate from $0$, as $v_1$ is non-zero in general.}
    \end{rem}
    
    \begin{lem}  
       \label{lem7}
      Given any sufficiently small constant $\delta>0$ and threshold $\eta$ (for defining near vanishing measurement), if \begin{equation}
       \label{3.24}
           m\geq C_1s\max\Big\{\frac{1}{\eta^2}\log\Big(\frac{n^2\log({mn})}{  \eta^2s}\Big),\frac{1}{\delta^2}\log\Big(\frac{n}{\delta \eta s}\Big)\Big\}
       \end{equation} 
       for some sufficiently large $C_1$, then with probability at least $1-2(mn)^{-9}-4\exp(-c_1\delta^2m)- \exp(-c_2\eta^2 m )$, there exists some absolute constant $C_2$ such that\begin{equation}\nonumber
           \sup_{\bm{x}\in \Sigma^{n,*}_{s,c}}\sup_{\bm{u}\in \Sigma^{n,*}_{4s,c}} f_2^\|(\bm{x},\bm{u}) \leq C_2(\delta + \sqrt{\eta}).
       \end{equation}
    \end{lem}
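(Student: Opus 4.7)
The plan is a standard covering argument over the product set $\Sigma^{n,*}_{s,c}\times \Sigma^{n,*}_{4s,c}$, but with the extra care needed to transfer the pointwise bound of Lemma \ref{lemma6} past the discontinuity of $\sign(\cdot)$. Write $F(\bm{x},\bm{u}):=\frac{1}{\kappa m}\Re\big<\sign(\bm{\Phi x}),\bm{\Phi u}_{\bm x}^\bot\big>$, so that $f_2^{\|}(\bm{x},\bm{u})=|F(\bm{x},\bm{u})|$. First, construct $\mathcal{G}_1$ as an $\epsilon_1$-net of $\Sigma^{n,*}_{s,c}$ and $\mathcal{G}_2$ as an $\epsilon_2$-net of $\Sigma^{n,*}_{4s,c}$ with cardinalities controlled by Lemma \ref{lem2}, where $\epsilon_1$ and $\epsilon_2$ will be tuned later (roughly $\epsilon_1\asymp \delta\eta$ and $\epsilon_2\asymp \delta$). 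Apply Lemma \ref{lemma6} together with the union bound over $\mathcal{G}_1\times\mathcal{G}_2$ with threshold $t=\delta/2$; provided $m\gtrsim \delta^{-2}\big(s\log(n/\epsilon_1 s)+s\log(n/\epsilon_2 s)\big)$, this yields $\sup_{\mathcal{G}_1\times\mathcal{G}_2}|F|\leq \delta/2$ with the claimed probability.

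Next, fix the optimal $(\bm{\hat x},\bm{\hat u})$ by compactness and pick $(\bm{\tilde x},\bm{\tilde u})\in\mathcal{G}_1\times\mathcal{G}_2$ with $\|\bm{\hat x}-\bm{\tilde x}\|\leq\epsilon_1$ and $\|\bm{\hat u}-\bm{\tilde u}\|\leq\epsilon_2$. Decompose
\begin{equation*}
F(\bm{\hat x},\bm{\hat u})-F(\bm{\tilde x},\bm{\tilde u})=\underbrace{\tfrac{1}{\kappa m}\Re\big<\sign(\bm{\Phi\hat x})-\sign(\bm{\Phi\tilde x}),\bm{\Phi\hat u}_{\bm{\hat x}}^\bot\big>}_{\mathrm{I}}+\underbrace{\tfrac{1}{\kappa m}\Re\big<\sign(\bm{\Phi\tilde x}),\bm{\Phi}(\bm{\hat u}_{\bm{\hat x}}^\bot-\bm{\tilde u}_{\bm{\tilde x}}^\bot)\big>}_{\mathrm{II}}.
\end{equation*}
Term $\mathrm{II}$ is the benign one: since $\bm{\hat u}_{\bm{\hat x}}^\bot-\bm{\tilde u}_{\bm{\tilde x}}^\bot$ is $O(s)$-sparse and its norm is controlled by $\|\bm{\hat u}-\bm{\tilde u}\|+\|\bm{\hat x}-\bm{\tilde x}\|\lesssim \epsilon_1+\epsilon_2$, Cauchy--Schwarz combined with $\|\sign(\bm{\Phi\tilde x})\|\leq\sqrt{m}$ and Lemma \ref{lem3} (applied to a sparse signal) bounds $|\mathrm{II}|\lesssim \epsilon_1+\epsilon_2\lesssim \delta$.

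The heart of the proof is term $\mathrm{I}$, where the pathological behaviour of $\sign(\cdot)$ enters. Invoke Lemma \ref{lem10} with $\beta\asymp \eta$ so that $|\mathcal{J}_{\bm{\hat x}}\cup\mathcal{J}_{\bm{\tilde x}}|\leq 2\beta m$ uniformly. Split $[m]=S_1\sqcup S_2$ with $S_1:=\mathcal{J}_{\bm{\hat x}}\cup\mathcal{J}_{\bm{\tilde x}}$. On $S_2$ both $|\bm{\Phi}_k^*\bm{\hat x}|,|\bm{\Phi}_k^*\bm{\tilde x}|\geq\eta$, so Lemma \ref{signconti} gives
\begin{equation*}
\big|\sign(\bm{\Phi}_k^*\bm{\hat x})-\sign(\bm{\Phi}_k^*\bm{\tilde x})\big|\leq \tfrac{2}{\eta}|\bm{\Phi}_k^*(\bm{\hat x}-\bm{\tilde x})|,
\end{equation*}
and another application of Cauchy--Schwarz and Lemma \ref{lem3} (on the $O(s)$-sparse difference $\bm{\hat x}-\bm{\tilde x}$ and on $\bm{\hat u}_{\bm{\hat x}}^\bot$) contributes $\lesssim \epsilon_1/\eta\lesssim\delta$ once $\epsilon_1\asymp \delta\eta$. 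On $S_1$ use only $|\sign(\cdot)-\sign(\cdot)|\leq 2$; Cauchy--Schwarz yields a factor $\sqrt{|S_1|}\cdot\|\bm{\Phi}^{S_1}\bm{\hat u}_{\bm{\hat x}}^\bot\|$, and Lemma \ref{lem11} bounds the submatrix operator norm by $\sqrt{\beta m\log(1/\beta)+s\log(n/s)}$. Dividing by $m$ and using $\beta\asymp\eta$ together with $m\gtrsim \eta^{-2}s\log(n^2\log(mn)/\eta^2 s)$ yields a contribution $\lesssim\sqrt{\eta}$ (up to logs absorbed into the sample complexity). Combining both pieces gives $|F(\bm{\hat x},\bm{\hat u})|\leq \sup_{\mathcal{G}_1\times\mathcal{G}_2}|F|+|\mathrm{I}|+|\mathrm{II}|\lesssim \delta+\sqrt{\eta}$, as claimed.

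The main obstacle, and the reason the bound contains the unavoidable $\sqrt{\eta}$ term, is exactly the dichotomy on $S_1$: no matter how fine the net is, the discontinuity of $\sign(\cdot)$ near zero prevents a pointwise Lipschitz estimate, and one must absorb these bad indices through a submatrix operator-norm bound. Calibrating $\beta$, $\eta$, $\epsilon_1$, $\epsilon_2$ so that the covering cost, the Gaussian concentration in Lemma \ref{lemma6}, the submatrix bound of Lemma \ref{lem11} and the counting bound of Lemma \ref{lem10} all fit inside the sample complexity (\ref{3.24}) is the delicate bookkeeping that yields the stated failure probability.
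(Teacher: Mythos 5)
Your proof is correct and follows essentially the same covering argument as the paper; the nets at scales $\epsilon_1\asymp\delta\eta$ and $\epsilon_2\asymp\delta$, the union bound over the nets via Lemma \ref{lemma6}, the split of the error between the approximation of $\bm{u}^\bot_{\bm{x}}$ and the sign difference, the use of Lemma \ref{signconti} on the well-conditioned indices and Lemma \ref{lem10} to count the bad ones, and the final calibration all match the paper's structure. The one place you diverge is the treatment of the near-vanishing block $S_1$: you split the inner product between $S_1$ and $S_2$ before applying Cauchy--Schwarz, which forces you to bound $\|\bm{\Phi}^{S_1}\bm{\hat u}_{\bm{\hat x}}^\bot\|$ and so invoke the submatrix operator-norm estimate of Lemma \ref{lem11}. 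The paper instead applies Cauchy--Schwarz on the full inner product first, pulling out $\|\bm{\Phi}\bm{\hat u}_{\bm{\hat x}}^\bot\|=O(\sqrt{m})$ via Lemma \ref{lem3}, and then splits only the $\ell_2$-norm of the sign difference into $\mathcal{J}_{\bm{\hat x}}$ and $\mathcal{J}_{\bm{\hat x}}^c$; on the bad block the crude bound $\|\sign(\cdot)-\sign(\cdot)\|\leq 2$ already gives $2\sqrt{|\mathcal{J}_{\bm{\hat x}}|}/\sqrt{m}\lesssim\sqrt{\beta}\asymp\sqrt{\eta}$. Your use of Lemma \ref{lem11} is valid but unnecessary here (indeed you could replace it by the trivial $\|\bm{\Phi}^{S_1}\bm{\hat u}^\bot_{\bm{\hat x}}\|\leq\|\bm{\Phi}\bm{\hat u}^\bot_{\bm{\hat x}}\|$ and recover the paper's argument); the paper deliberately reserves Lemma \ref{lem11} for the orthogonal part (Lemma \ref{lem9}), where the squared-norm structure of $I_{22}$ genuinely requires it. Pointing this out matters pedagogically: invoking the submatrix estimate in Lemma \ref{lem7} obscures where that tool is truly load-bearing.
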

    
    \noindent{\it Proof.} By the boundedness of $\Sigma^{n,*}_{ s,c} \times \Sigma^{n,*}_{4s,c}$ we can find $(\bm{\hat{x}},\bm{\hat{u}})\in\Sigma^{n,*}_{ s,c}\times \Sigma^{n,*}_{4s,c}$ such that 
    \begin{equation}
    \label{suptopoint}
        \sup_{\bm{x}\in \Sigma^{n,*}_{s,c}}\sup_{\bm{u}\in \Sigma^{n,*}_{4s,c}}f_2^\|(\bm{x},\bm{u}) < f_2^\|(\bm{\hat{x}},\bm{\hat{u}})+\delta.
    \end{equation}  
    We divide the proof into two steps.

    \noindent{\textbf{(Step 1.) Control $f_2^\|(\bm{x},\bm{u})$ over discrete nets}}

    For a given $\delta$ we construct $\mathcal{G}_\delta\subset \Sigma^{n,*}_{4s,c}$ as a $\delta$-net, $\mathcal{G}_{\delta\eta} \subset \Sigma^{n,*}_{s,c}$ as a $(\eta\delta)$-net, then Lemma \ref{lem2} allows us to suppose $|\mathcal{G}_\delta|\leq \big(\frac{9n}{4\delta s}\big)^{4s}$ and $|\mathcal{G}_{\eta\delta}| \leq \big(\frac{9n}{\eta \delta s}\big)^{s}$.  We first extend (\ref{3.22}) to $(\bm{x},\bm{u})\in \mathcal{G}_{\eta\delta}\times \mathcal{G}_\delta$, which by a union bound gives\begin{equation}
        \label{3.25}\begin{aligned}
       & \mathbbm{P}\left(\sup_{\bm{x}\in \mathcal{G}_{\eta\delta}}\sup_{\bm{u}\in \mathcal{G}_\delta}\Big| \frac{1}{\kappa m}\Re\big<\sign(\bm{\Phi x}),\bm{\Phi u_x^\bot} \big>\Big|\geq t\right)\\&\leq 2\exp\Big(-cmt^2+4s\log\Big(\frac{9n}{4\delta s}\Big)+s\log\Big(\frac{9n}{\eta \delta s}\Big)\Big),
        \end{aligned}
    \end{equation}
    where $t>0$.
    Setting $t = \delta$, then when (\ref{3.24}) holds for sufficiently large $C_1$, it holds with probability at least $1-2\exp(-c_1\delta^2m)$ that \begin{equation}
        \label{overnet}\begin{aligned}
        &\sup_{\bm{x}\in\mathcal{G}_{\eta\delta}}\sup_{\bm{u}\in \mathcal{G}_\delta}f_2^\|(\bm{x},\bm{u})\\=&
        \sup_{\bm{x}\in\mathcal{G}_{\eta\delta}}\sup_{\bm{u}\in \mathcal{G}_\delta}\Big|\frac{1}{\kappa m}\Re\big<\sign(\bm{\Phi x}),\bm{\Phi u_x^\bot}\big>\Big|\leq \delta.\end{aligned}
    \end{equation}

    \noindent{\textbf{(Step 2.) Strengthen (\ref{overnet}) to a uniform bound for $(\bm{x},\bm{u})\in \Sigma^{n,*}_{s,c}\times \Sigma^{n,*}_{4s,c}$}}

    Due to (\ref{suptopoint}) we only need to show $f_2^\|(\bm{\hat{x},\hat{u}}) = O(\delta + \sqrt{\eta})$. For approximation, we can pick $(\bm{\tilde{x}},\bm{\tilde{u}})\in
    \mathcal{G}_{\eta\delta}\times \mathcal{G}_\delta$ such that $\|\bm{\tilde{x}}- \bm{\hat{x}}\|\leq \eta \delta$, $\|\bm{\tilde{u}}-\bm{\hat{u}}\|\leq \delta$. For the given sufficiently small $\eta$, we can find $4\eta<\beta = O(\eta)$ such that $\beta m $ is an integer. Then   Lemma \ref{lem10} states that under the sample size (\ref{3.24}), with probability at least $1-\exp(-c_2\eta^2 m ) - 2(mn)^{-9}$, we have \begin{equation}
        \label{boundnearvani}
        |\mathcal{J}_{\bm{\hat{x}}}|\leq \sup_{\bm{x}\in \Sigma^{n,*}_{s,c}}|\mathcal{J}_{\bm{x}}|<\beta m.
    \end{equation}  Now we can proceed as in (\ref{3.28add}).
     \begin{figure*}[!t]
\normalsize
\begin{equation}
        \begin{aligned}
        \label{3.28add}
       \begin{aligned}
                \frac{1}{\kappa m}\big|\Re\big<\sign(\bm{\Phi \hat{x}}),\bm{\Phi \hat{u}^\bot_{\hat{x}}}\big>\big|  \leq  & \frac{1}{\kappa m}\big|\Re\big<\sign(\bm{\Phi \tilde{x}}),\bm{\Phi \tilde{u}^\bot_{\tilde{x}}}\big>\big|  +\frac{1}{\kappa m}\Big|\Re\big<\sign(\bm{\Phi \hat{x}}),\bm{\Phi \hat{u}^\bot_{\hat{x}}}\big>-\Re\big<\sign(\bm{\Phi \tilde{x}}),\bm{\Phi \tilde{u}^\bot_{\tilde{x}}}\big>\Big|\\
               \stackrel{(i)}{\leq} &\delta + \frac{1}{\kappa m}\big|\Re\big<\sign(\bm{\Phi\hat{x}})-\sign(\bm{\Phi\tilde{x}}),\bm{\Phi \hat{u}^\bot_{\hat{x}}}\big>\big| +\frac{1}{\kappa m}\big|\Re\big<\sign(\bm{\Phi\tilde{x}}),\bm{\Phi (\bm{\hat{u}^\bot_{\hat{x}}}-\bm{\tilde{u}^\bot_{\tilde{x}}})}\big>\big| \\
               \stackrel{(ii)}{\lesssim} &\delta + \frac{1}{\sqrt{m}}\|\sign(\bm{\Phi \hat{x}})-\sign(\bm{\Phi \tilde{x}})\| + \|\bm{\hat{u}^\bot_{\hat{x}}}-\bm{\tilde{u}^\bot_{\tilde{x}}}\| .
       \end{aligned}  
        \end{aligned}
    \end{equation}
\hrulefill
\vspace*{4pt}
\end{figure*}
    Note that   $(i)$ is due to (\ref{overnet}), and in $(ii)$ we use $\Re\big<\bm{a},\bm{b}\big>\leq \|\bm{a}\|\|\bm{b}\|$ and \begin{equation}
        \label{lem5impli}
        \frac{1}{m}\sup_{\bm{u}\in\Sigma^{n,*}_{10s,c}}\|\bm{\Phi u}\|^2 = O(1).
    \end{equation} Note that by Lemma \ref{lem3} (\ref{lem5impli})   holds with probability exceeding $1-2\exp(-c_3m)$, and it is applicable because $\bm{\hat{u}^\bot_{\hat{x}}},\bm{\tilde{u}^\bot_{\tilde{x}}}\in \Sigma^{n}_{5s,c}$, $\|\bm{\hat{u}^\bot_{\hat{x}}}\|\leq 1,\|\bm{\tilde{u}^\bot_{\tilde{x}}}\|\leq 1$. According to (\ref{3.28add}) we divide the remaining proof into two steps.

    \noindent{\textbf{(Step 2.1.) Bound $m^{-1/2}\|\sign(\bm{\Phi\hat{x}})-\sign(\bm{\Phi\tilde{x}})\|$}}

We let $ \mathcal{J}_{\bm{\hat{x}}}^c : = [m]\setminus \mathcal{J}_{\bm{\hat{x}}}$, then for $k\in  \mathcal{J}_{\bm{\hat{x}}}^c$ we have $|\bm{\Phi}^*_k\bm{\hat{x}}|\geq \eta$.  We decompose the term according to 
    whether a measurement belongs to $\mathcal{J}_{\bm{\hat{x}}}$ and then apply the estimate in Lemma \ref{signconti}, it yields
    \begin{equation}
        \begin{aligned}
        \label{aa347}
         &\frac{1}{\sqrt{m}}\|\sign(\bm{\Phi \hat{x}}) - \sign(\bm{\Phi\tilde{x}})\|   \\\leq &\frac{1}{\sqrt{m}}\|\sign(\bm{\Phi}^{\mathcal{J}_{\bm{\hat{x}}}} \bm{\hat{x}}) - \sign(\bm{\Phi}^{\mathcal{J}_{\bm{\hat{x}}}}\bm{\tilde{x}})\|  \\&~~~~~~ +\frac{1}{\sqrt{m}}\|\sign(\bm{\Phi}^{\mathcal{J}^c_{\bm{\hat{x}}}}\bm{ \hat{x}}) - \sign(\bm{\Phi}^{\mathcal{J}^c_{\bm{\hat{x}}}}\bm{\tilde{x}})\| \\
         \stackrel{(i)}{\leq} & \frac{2\sqrt{|\mathcal{J}_{\bm{\hat{x}}}|}}{\sqrt{m}} + \frac{2\|\bm{\Phi}^{\mathcal{J}^c_{\bm{\hat{x}}}}(\bm{\hat{x}}-\bm{\tilde{x}})\|}{\eta\sqrt{m}}\\\stackrel{(ii)}{\lesssim} & \sqrt{\beta} + \frac{\|\bm{\hat{x}}-\bm{\tilde{x}}\|}{\eta}\leq \sqrt{\beta} + \delta.
        \end{aligned}
    \end{equation}
    Note that $(i)$ is due to Lemma \ref{signconti}, specifically we use the bound "$2$" for the measurements in $\mathcal{J}_{\bm{\hat{x}}}$, while the bound "$\frac{2|a-b|}{\max\{|a|,|b|\}}$" for the measurements in  $\mathcal{J}^c_{\bm{\hat{x}}}$. Then, $(ii)$ is because $|\mathcal{J}_{\bm{\hat{x}}}|<\beta m$ (\ref{boundnearvani}) and $\frac{1}{\sqrt{m}}\|\bm{\Phi}^{\mathcal{J}^c_{\bm{\hat{x}}}}(\bm{\hat{x}}-\bm{\tilde{x}})\|\leq \frac{1}{\sqrt{m}}\|\bm{\Phi}(\bm{\hat{x}}-\bm{\tilde{x}})\|=O(1)$ (Lemma \ref{lem3}). 

  \noindent{\textbf{(Step 2.2.) Bound $\|\bm{\hat{u}}_{\bm{\hat{x}}}^\bot -\bm{\tilde{u}}_{\bm{\tilde{x}}}^\bot \|$}}

This is a  more standard estimation: \begin{equation}
        \begin{aligned}
        \label{3.30}
               & \|\bm{\hat{u}^\bot_{\hat{x}}}-\bm{\tilde{u}^\bot_{\tilde{x}}}\| = \|\bm{\hat{u}}-\Re\big<\bm{\hat{u}},\bm{\hat{x}}\big>\bm{\hat{x}} - \bm{\tilde{u}}+\Re\big<\bm{\tilde{u}},\bm{\tilde{x}}\big>\bm{\tilde{x}}\|\\
               &\leq \|\bm{\hat{u}}-\bm{\tilde{u}}\|+|\Re\big<\bm{\tilde{u}},\bm{\tilde{x}}\big>|\cdot \|\bm{\hat{x}}-\bm{\tilde{x}}\|\\&~~~~~~~~~ + \|\bm{\hat{x}}\|\cdot \big(|\Re\big<\bm{\tilde{u}}-\bm{\hat{u}},\bm{\tilde{x}}\big>|+|\Re\big<\bm{\hat{u}},\bm{\tilde{x}}-\bm{\hat{x}}\big>|\big)\\
               &\leq 2\cdot\big(\|\bm{\hat{u}}-\bm{\tilde{u}}\| + \|\bm{\hat{x}}-\bm{\tilde{x}}\|\big) \leq 4\delta.
        \end{aligned}
    \end{equation}
    Since $\beta = O(\eta)$, putting (\ref{aa347}), (\ref{3.30}) into (\ref{3.28add}) concludes the proof. \hfill $\square$
    
    \vspace{1mm}
    
    Recall that the left-hand side of (\ref{aa3.40}) is divided into $f^\|_1(\bm{x},\bm{u})$ and $f^\|_2(\bm{x},\bm{u})$ for clarity, and we have derived the bound for these two terms. Now, we further put them together and present a   global sign-product embedding property as the following corollary, which may be interesting on its own right (Remark \ref{globalrem}). 
    \begin{coro}
    \label{signpro}
    {\rm (Global Sign-Product Embedding Property)\textbf{.}} Assume $\bm{\Phi}\sim \mathcal{N}^{m\times n}(0,1)+\mathcal{N}^{m\times n}(0,1)\ii$, given any $\delta>0$, if $m \geq \frac{Cs}{\delta^4} \log\big(\frac{n^2\log(mn)}{\delta^4 s}\big)$ holds for sufficiently large $C$, with probability at least $1-2(mn)^{-9}-7\exp(-c\delta^4m)$ we have \begin{equation}\nonumber\begin{aligned}
       &\Big|\frac{1}{\kappa m}\Re\big<\sign(\bm{\Phi u}),\bm{\Phi v}\big>-\Re\big<\bm{u},\bm{v}\big>\Big|\leq \delta \|\bm{v}\|,\\& ~~~~~~~~~~~~~~~~~\forall~ \bm{u}\in\Sigma^{n,*}_{s,c},\bm{v}\in \Sigma^n_{s,c}.\end{aligned}
    \end{equation}
    \end{coro}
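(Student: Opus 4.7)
The plan is to recognize that Corollary \ref{signpro} is essentially a repackaging of the two bounds on $f_1^\|$ and $f_2^\|$ already established in Corollary \ref{coro1} and Lemma \ref{lem7}, after first reducing to the unit-sphere case by homogeneity. Since the quantity inside the absolute value is linear in $\bm{v}$ and the desired right-hand side is $\delta\|\bm{v}\|$, I would first handle the trivial case $\bm{v}=\bm{0}$ and then assume $\bm{v}\in\Sigma_{s,c}^{n,*}$, noting $\Sigma_{s,c}^{n,*}\subset \Sigma_{4s,c}^{n,*}$ so that the earlier results are directly applicable (with the role mapping: the $\bm{x}$ of Section \ref{section3} plays the role of $\bm{u}$ in the corollary, and the $\bm{u}$ of Section \ref{section3} plays the role of $\bm{v}$).

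The core step is the identity (\ref{addclear1}), which with the renamed variables gives
\begin{equation*}
\frac{1}{\kappa m}\Re\big<\sign(\bm{\Phi u}),\bm{\Phi v}\big>-\Re\big<\bm{u},\bm{v}\big>
= \Re\big<\bm{v},\bm{u}\big>\Big(\frac{\|\bm{\Phi u}\|_1}{\kappa m}-1\Big)+\frac{1}{\kappa m}\Re\big<\sign(\bm{\Phi u}),\bm{\Phi v_{\bm{u}}^\bot}\big>.
\end{equation*}
By triangle inequality the left-hand side is bounded by $f_1^\|(\bm{u},\bm{v})+f_2^\|(\bm{u},\bm{v})$, exactly the decomposition (\ref{aa3.40}). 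I would then apply Corollary \ref{coro1} (with tolerance $\delta/2$, say) to bound $\sup f_1^\|$ by $\delta/2$, and apply Lemma \ref{lem7} with the near-vanishing threshold $\eta$ chosen as $\eta = c\delta^2$ for a small absolute constant $c$, so that the Lemma's bound $C_2(\delta+\sqrt{\eta})$ becomes at most $\delta/2$ after absorbing constants into $\delta$.

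With this choice $\eta = c\delta^2$, the more stringent of the two sample-size requirements is the $\eta^{-2}$ branch of (\ref{3.24}), giving $m\gtrsim \frac{s}{\delta^4}\log\big(\frac{n^2\log(mn)}{\delta^4 s}\big)$, which matches the stated complexity; the Corollary \ref{coro1} requirement $m\gtrsim \frac{s}{\delta^2}\log\frac{n}{\delta s}$ is automatically subsumed. Similarly, the failure probability $2(mn)^{-9}+4\exp(-c_1\delta^2 m)+\exp(-c_2\eta^2 m)$ from Lemma \ref{lem7} plus $2\exp(-c\delta^2 m)$ from Corollary \ref{coro1} collapses, under $\eta = c\delta^2$, into the advertised bound $2(mn)^{-9}+7\exp(-c\delta^4 m)$ after relabeling constants and using $\delta^4\leq \delta^2$.

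There is no real obstacle here beyond bookkeeping: the corollary is a clean corollary in the literal sense, since both the parallel-part concentration and the orthogonal-part control (which required all the covering-and-near-vanishing machinery in Lemmas \ref{lem10}--\ref{lem7}) have already been done uniformly in $(\bm{u},\bm{v})$. The only judgment call is the calibration $\eta=c\delta^2$, chosen so that $\sqrt{\eta}$ matches the target error $\delta$; the worsened $\delta^{-4}$ sample complexity (versus $\delta^{-2}$ in the non-uniform setting, see Remark \ref{pricetoget}) is precisely the cost of this calibration propagating through the $|\mathcal{J}_{\bm{x}}|$ control in Lemma \ref{lem10}.
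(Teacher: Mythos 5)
Your proposal is correct and matches the paper's intended argument exactly: Corollary \ref{signpro} is obtained by feeding the decomposition (\ref{addclear1})/(\ref{aa3.40}) through Corollary \ref{coro1} and Lemma \ref{lem7}, with the same calibration $\eta\asymp\delta^2$ that the paper uses in the proof of Theorem \ref{theorem1}, and your bookkeeping of the sample complexity and failure probability is sound.
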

    \begin{rem}
       \label{globalrem}
       From a geometry perspective, the sign-product embedding property states that the projection length of a sparse $\bm{v}$ onto a unit sparse $\bm{u}$, can be uniformly (over both $\bm{u},\bm{v}$) encoded into the projection length of $\frac{1}{\sqrt{m}}\bm{\Phi v}$ onto the normalized phase-only measurements $\frac{1}{\sqrt{m}}\sign(\bm{\Phi u})$, up to a rescaling of $\frac{1}{\kappa}$. Corollary \ref{signpro} presents a twofold extension of  \cite[Lem. 5.4]{jacques2021importance}, that is, to complex $\bm{u},\bm{v}$ and to uniformity of $\bm{u}$. However, the uniformity of $\bm{u}$ comes at the cost of worse dependence $\delta^{-4}$ on $\delta$, as contrasted to $\delta^{-2}$ in \cite[Lem. 5.4]{jacques2021importance}. Note that similar remark was already offered for     Theorem \ref{theorem1} (Remark \ref{pricetoget}). 
    \end{rem}

\begin{rem}
    \label{embeenco}
     Various embedding/encoding results were developed in the literature of 1-bit compressive sensing, e.g., the binary $\epsilon$-stable embedding (B$\epsilon$SE)     \cite{jacques2013robust}, Hamming cube encoding   \cite{plan2013one}, \textcolor{black}{while the (real) sign-product embedding property (SPE) in \cite{plan2012robust} appeared to be closest to \cite[Lem. 5.4]{jacques2021importance} and our Corollary \ref{signpro}. Specifically, let $\bm{\Phi_r}\sim\mathcal{N}^{m\times n}(0,1)$, $\bm{u},\bm{v}$ be some low-complexity signal, and $\lambda$ be a properly chosen scaling, it was proved in \cite{plan2012robust} that $\big|\frac{1}{\lambda m}\big<\sign(\bm{\Phi_r u}),\bm{\Phi v}\big>-\big<\bm{u},\bm{v}\big>\big|$ can be bounded close to $0$ for a fixed $\bm{u}$ and all $\bm{v}$  \cite[Prop. 4.2]{plan2012robust} (i.e., local SPE), or even uniformly for all $(\bm{u},\bm{v})$ \cite[Prop. 4.3, Lem. 6.4(1)]{plan2012robust} (i.e., global SPE). Interestingly, in \cite{plan2012robust} the global SPE also displays worse dependence on the embedding distortion than the local SPE.} 
\end{rem}
    \subsection{The Orthogonal Part}
    In this subsection we switch to the orthogonal part $\sup_{\bm{x}\in\Sigma^{n,*}_{s,c}}\sup_{\bm{u}\in \Sigma^{n,*}_{4s,c}} f^\bot(\bm{x},\bm{u})$.  We first calculate its expectation and show the concentration property for  fixed $(\bm{x},\bm{u})\in \mathcal{K} ^* \times \Sigma^{n,*}_{4s,c}$. The main technique is similar to the proof of Lemma \ref{lemma6}, which is to take advantage of the rotational invariance of $\bm{\Phi}$.
    \begin{lem}
    \label{lem8}
       Fix $\bm{x}\in \Sigma^{n,*}_{s,c}$, $\bm{u}\in \Sigma^{n,*}_{4s,c}$, it holds that \begin{equation}
       \label{3.31}
           \mathbbm{E}\Big(\frac{1}{m}\|\Im\big(\diag(\bm{z^*})\bm{\Phi u}\big)\|^2\Big) = \|\bm{u}_{\bm{x}}^\bot\|^2 + |\Im\big<\bm{x},\bm{u}\big>|^2.
       \end{equation}
         Moreover,  there exists $c>0$ such that for all $t>0$ \begin{equation}
             \label{3.32}
             \begin{aligned}
             &\mathbbm{P}\Big(\Big|\frac{1}{m}\|\Im\big(\diag(\bm{z}^*)\bm{\Phi u}\big) \|^2 \\&~~~~~~~~~~~~~- (\|\bm{u}_{\bm{x}}^\bot\|^2 + |\Im\big<\bm{x},\bm{u}\big>|^2)\Big|\geq t\Big)\\&\leq 2\exp(-cm\min\{t,t^2\}).
             \end{aligned}
         \end{equation}
    \end{lem}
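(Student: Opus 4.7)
\textbf{Proof strategy for Lemma \ref{lem8}.}

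The plan is to reduce the expectation to a clean two-block calculation by rotational invariance, exactly in the spirit of Lemma \ref{lemma6}, and then apply Bernstein's inequality for the deviation. I would proceed in three stages.

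\emph{Stage 1 (Rotation and reduction).} Pick a unitary $\bm{P}$ with $\bm{Px} = \bm{e}_1$, set $\bm{\tilde{\Phi}} = \bm{\Phi P}^*$ (which is distributed identically to $\bm{\Phi}$), split $\bm{\tilde{\Phi}} = [\bm{g}, \bm{G}]$, and let $\bm{v} := \bm{Pu}$. Then $\bm{\Phi x} = \bm{g}$, $(\bm{\Phi u})_k = g_k v_1 + \bm{G}(k,\cdot)\bm{v}^{[2:n]}$, $z_k = \sign(g_k)$, and a direct computation gives
\[
\Im\big(\overline{z_k}(\bm{\Phi u})_k\big) = |g_k|\,\Im(v_1) + \Im(h_k),\quad h_k := \overline{\sign(g_k)}\,\bm{G}(k,\cdot)\bm{v}^{[2:n]}.
\]
The crucial observation, which was already the engine of Lemma \ref{lemma6}, is that $\bm{G}$ is independent of $\bm{g}$, and the circular symmetry of the complex Gaussian makes $\overline{\sign(g_k)}\bm{G}(k,\cdot)$ equidistributed with $\bm{G}(k,\cdot)$ conditionally on $\bm{g}$. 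Hence $h_k \mid \bm{g}$ is $\|\bm{v}^{[2:n]}\|$ times a standard complex Gaussian, so $\Im(h_k)$ is a centered real Gaussian with variance $\|\bm{v}^{[2:n]}\|^2$.

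\emph{Stage 2 (Expectation and identification).} Expanding the square and taking expectations, the cross term vanishes by the centering of $\Im(h_k)\mid\bm{g}$, leaving
\[
\mathbbm{E}\big|\Im(\overline{z_k}(\bm{\Phi u})_k)\big|^2 = \mathbbm{E}|g_k|^2\,(\Im v_1)^2 + \mathbbm{E}(\Im h_k)^2 = 2(\Im v_1)^2 + \|\bm{v}^{[2:n]}\|^2.
\]
The claim (\ref{3.31}) follows by the algebraic identities $v_1 = \big<\bm{x},\bm{u}\big>$, $\|\bm{v}^{[2:n]}\|^2 = 1 - |v_1|^2$, and, unpacking the decomposition (\ref{a3.14}) together with $\|\bm{u}\|=1$, $\|\bm{u}_{\bm{x}}^{\bot}\|^2 = 1 - (\Re v_1)^2$; a single line of algebra then reconciles $2(\Im v_1)^2 + (1-|v_1|^2)$ with $\|\bm{u}_{\bm{x}}^\bot\|^2 + |\Im\big<\bm{x},\bm{u}\big>|^2$.

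\emph{Stage 3 (Concentration).} The summands $X_k := |\Im(\overline{z_k}(\bm{\Phi u})_k)|^2$ are independent across $k$ and satisfy $X_k^{1/2} \le |\bm{\Phi}_k^*\bm{u}|$, which by rotational invariance and $\|\bm{u}\|=1$ is distributed as $|\mathcal{N}(0,1)+\mathcal{N}(0,1)\ii|$, and hence has $O(1)$ sub-Gaussian norm. Via (\ref{830add2}), $X_k$ is sub-exponential with $\|X_k\|_{\psi_1} = O(1)$. Applying Bernstein's inequality (Lemma \ref{bern}) to $\sum_k (X_k - \mathbbm{E} X_k)$ with deviation $mt$ yields exactly the $2\exp(-cm\min\{t,t^2\})$ tail stated in (\ref{3.32}).

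\emph{Main obstacle.} The calculation is essentially routine once Stage 1 is set up; the conceptual subtlety is the appearance of the bias term $|\Im\big<\bm{x},\bm{u}\big>|^2$, which arises from $\mathbbm{E}|g_k|^2 = 2$ (not $1$) and cannot be absorbed away in the complex setting. Computing the exact expectation, rather than a mere bound, is essential here: this bias is precisely what forces the rescaling $\hat t = \sqrt{2/3}$ in (\ref{3.6}) and leads to the ultimate RIC of $\frac{1}{3}+\delta$ in Theorem \ref{theorem1}.
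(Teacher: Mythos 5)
Your proof is correct and takes essentially the same route as the paper: rotational invariance to reduce to a clean two-block Gaussian computation, an exact evaluation of the expectation exposing the bias $|\Im\langle\bm{x},\bm{u}\rangle|^2$, and Bernstein's inequality for the tail. The only cosmetic difference is that you work with $\bm{v}=\bm{Pu}$ and carry the algebra through to the end, whereas the paper first invokes the identity $\Im(\diag(\bm{z}^*)\bm{\Phi u_x^\|})=\bm{0}$ from (\ref{3.11}) to set $\bm{v}=\bm{Pu_x^\bot}$ (so that $\Re v_1=0$); both bookkeeping choices land on the same formula.
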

    \noindent
    {\it Proof.} Similar to the proof of Lemma \ref{lemma6}, we can find a unitary matrix $\bm{P}$ such that $\bm{Px} = \bm{e}_1$, then $\bm{v}=[v_i] = \bm{Pu_x^\bot}$ satisfies $\Re v_1 = 0$, \textcolor{black}{which follows from the calculation $\bm{Pu_x}^\bot =\bm{P}\bm{u}-\Re\big<\bm{u},\bm{x}\big>\bm{Px}=\bm{Pu}-\Re\big<\bm{Pu},\bm{Px}\big>\bm{Px}=\bm{Pu}-\Re\big<\bm{Pu},\bm{e}_1\big>\bm{e}_1$}. We further define $\bm{\gamma} = [\gamma_i] := \bm{P\Phi}_1\in \mathbb{R}^{n\times 1}$, whose entries are i.i.d. distributed as $\mathcal{ N}(0,1)+\mathcal{N}(0,1)\ii$. For clarity we divide the remainder of this proof into two steps.

\noindent{\textbf{(Step 1.) Prove the expectation (\ref{3.31}) via calculation}}

    Recall that $\bm{\Phi}^*_1$ is the first row of $\bm{\Phi}$,  some algebra gives (\ref{expcal}).    
 \begin{figure*}[!t]
\normalsize
\begin{equation}
         \begin{aligned}       \label{expcal}       &\mathbbm{E}\Big(\frac{1}{m}\|\Im\big(\diag(\bm{z^*})\bm{\Phi u}\big)\|^2\Big) \stackrel{(i)}{=}\mathbbm{E}\Big(\frac{1}{m}\|\Im\big(\diag(\bm{z^*})\bm{\Phi u^\bot_{x}}\big)\|^2\Big)\\ =& \mathbbm{E} \big[\Im\big( \overline{\sign(\bm{\Phi_1^*x})}\cdot \bm{\Phi}_1^*\bm{u^\bot_{x}}\big)\big]^2\stackrel{(ii)}{=} \mathbbm{E}\big[\Im\big(\sign(\gamma_1)\cdot (\bm{\gamma^*v})\big)\big]^2\\
               =&\mathbbm{E}\Big[|\gamma_1|v_1^\Im + \sum_{j=2}^n \Re\big(\sign(\gamma_1)\overline{\gamma_j}\big)\cdot v_j^\Im+\sum_{j=2}^n \Im\big(\sign(\gamma_1)\overline{\gamma_j}\big)\cdot v_j^\Re\Big]^2\\
               \stackrel{(iii)}{=}& \big(\mathbbm{E}|\gamma_1|^2\big)\cdot (v_1^\Im)^2 + \sum_{j=2}^n \big(\mathbbm{E}\big[\Re (\sign(\gamma_1)\overline{\gamma_j} )\big]^2\big)\cdot  (v_j^\Im)^2 + \sum_{j=2}^n\big(\mathbbm{E}\big[\Im(\sign(\gamma_1)\overline{\gamma_j})\big]^2\big)\cdot(v_j^\Re)^2\\
                = &2|v_1|^2 + \sum_{j=2}^n |v_j|^2 = \|\bm{v}\|^2 + |v_1|^2 = \|\bm{u}_{\bm{x}}^\bot\|^2 + |\Im\big<\bm{x},\bm{u}\big>|^2.
        \end{aligned} 
    \end{equation}
\hrulefill
\vspace*{4pt}
\end{figure*}
    This displays (\ref{3.31}). Note that we use (\ref{3.11}) in $(i)$, we plug in $\bm{\gamma}=\bm{P\Phi}_1, \bm{Px}=\bm{e}_1,\bm{Pu_x}^\bot = \bm{v}$ in $(ii)$, and $(iii)$ is because the expectation of  the cross terms  is zero.

\noindent{\textbf{(Step 2.)   Show the concentration inequality (\ref{3.32})}}
    
      Note that $$\frac{1}{m}\|\Im\big(\diag(\bm{z^*})\bm{\Phi u}\big)\|^2 = \frac{1}{m}\sum_{k=1}^m\big[\Im\big(\overline{\sign(\bm{\Phi}_k^*\bm{x})}\cdot \bm{\Phi}_k^*\bm{u^\bot_{x}}\big)\big]^2$$
    is the mean of $m$ independent copies of $\big[\Im\big(\overline{\sign(\bm{\Phi}_1^*\bm{x})}\cdot \bm{\Phi}_1^*\bm{u^\bot_{x}}\big)\big]^2$. Then we use (\ref{830add2}) and the definition of sub-Gaussian norm to estimate\begin{equation}
        \begin{aligned}\nonumber
            &\big\|\big[\Im\big(\overline{\sign(\bm{\Phi}_1^*\bm{x})}\cdot \bm{\Phi}_1^*\bm{u^\bot_{x}}\big)\big]^2\big\|_{\psi_1}\\\leq& \big\| \Im\big(\overline{\sign(\bm{\Phi}_1^*\bm{x})}\cdot \bm{\Phi}_1^*\bm{u^\bot_{x}}\big)\big\|^2_{\psi_2}\\ \leq& \big\|\bm{\Phi}_1^*\bm{u_x^\bot}\big\|^2_{\psi_2} = O(1).
        \end{aligned}
    \end{equation} 
    Therefore, we can invoke Bernstein's inequality (Lemma \ref{bern}) to obtain (\ref{3.32}).\hfill $\square$
    \begin{rem}
    \label{remark2}
        Lemma \ref{lem8} exhibits an essential difference between the real case and the complex case. That is, while for the real case in \cite{jacques2021importance} $\Im\big<\bm{x},\bm{u}\big>=0$ and hence in expectation $\frac{1}{m}\|\Im\big(\diag(\bm{z}^*)\bm{\Phi u}\big)\|^2$ exactly provides the orthogonal part $\|\bm{u}^\bot_{\bm{x}}\|^2$, in the complex case there appears a bias term $|\Im\big<\bm{x},\bm{u}\big>|^2$ that may rise up to $1$, e.g., when $\bm{x} = \ii\cdot \bm{u}$. Thus, if letting the scaling factor $\hat{t} = 1$ in (\ref{aa3.14}), $\sup_{\bm{x}\in \Sigma^{n,*}_{s,c}}\sup_{\bm{u}\in \Sigma^{n,*}_{4s,c}}f^\bot(\bm{x},\bm{u})$ can never be bounded below $1$, which even does not guarantee the identifiability of $\bm{x}$, not to mention (\ref{aa3.17}). We shall see shortly in the proof of Theorem \ref{theorem1} that, a careful choice of $\hat{t}$ can yield $\sup_{\bm{x}\in \Sigma^{n,*}_{s,c}}\sup_{\bm{u}\in \Sigma^{n,*}_{4s,c}} f^\bot(\bm{x},\bm{u})\leq \frac{1}{3}+\delta$ for any pre-specified $\delta\in (0,1)$, thus fulfilling (\ref{aa3.17}).
    \end{rem}
     We introduce the shorthand $\hat{f}^\bot(\bm{x},\bm{u})$ that contains the bias term $|\Im\big<\bm{x},\bm{u}\big>|^2$ 
     \begin{equation}
     \label{aa3.54}\begin{aligned}
         \hat{f}^\bot(\bm{x},\bm{u}) :&= \Big|\frac{1}{m}\|\Im\big(\diag(\bm{z}^*)\bm{\Phi u}\big) \|^2 \\&~~~~~~~~~~~- (\|\bm{u}_{\bm{x}}^\bot\|^2 + |\Im\big<\bm{x},\bm{u}\big>|^2)\Big|.\end{aligned}
     \end{equation}
    Applying a covering argument to strengthen Lemma \ref{lem8}   to $({\bm{x},\bm{u}})\in\Sigma^{n,*}_{s,c}\times \Sigma^{n,*}_{4s,c}$, the next Lemma plays a similar role as Lemma \ref{lem7} for  the parallel part.  
    The technical difference is that, a finer net for approximation of $\bm{x}\in \Sigma^{n,*}_{s,c}$ is needed   to overcome the difficulty of lack of good estimation on $\|\bm{\Phi x}\|_\infty$  (\ref{3.41}), which is analogous to Lemma \ref{lem10}.
    
    \begin{lem}
    \label{lem9}
    Given any sufficiently small constant $\delta>0$ and the threshold $\eta$ (for defining near vanishing measurement), if \begin{equation}
    \label{3.37}
        m\geq C_0s\cdot\max\left\{\frac{1}{\eta^2}\log \Big(\frac{n^2\log(mn)}{\eta^2s}\Big),\frac{1}{\delta^2}\log\Big(\frac{n^2\log(mn)}{\eta^2\delta^2s}\Big)\right\}
    \end{equation}
    for some sufficiently large absolute constant $C_0$, then with probability at least $1-2(mn)^{-9}-c_1\exp(-c_2\min\{\delta^2,\eta^2\}m)$, for some $C$ we have $$\sup_{\bm{x}\in \Sigma^{n,*}_{s,c}}\sup_{\bm{u}\in \Sigma^{n,*}_{4s,c}} \hat{f}^\bot(\bm{x},\bm{u})\leq C(\delta + \sqrt{\eta}).$$ 
    \end{lem}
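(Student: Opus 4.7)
\textbf{Proof plan for Lemma \ref{lem9}.} The strategy mirrors the two–step covering argument used for the parallel part in Lemma \ref{lem7}: first establish the concentration on a discrete net via Lemma \ref{lem8} and a union bound, then extend to the full continuous product set using perturbation estimates that carefully account for the discontinuity of $\sign(\cdot)$ near zero. The essential technical difference compared with Lemma \ref{lem7} is that $\hat{f}^\bot(\bm{x},\bm{u})$ involves a squared $\ell_2$ norm together with a bias term $|\Im\langle \bm{x},\bm{u}\rangle|^2$, so additional Lipschitz-type estimates are required, and the $\bm{x}$-net has to be taken finer (at scale $\eta\delta$) because the $\eta^{-1}$ factor produced by the sign estimate in Lemma \ref{signconti} must be absorbed.

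First I would construct $\mathcal{G}_{\eta\delta}\subset \Sigma^{n,*}_{s,c}$ as an $(\eta\delta)$-net and $\mathcal{G}_\delta\subset \Sigma^{n,*}_{4s,c}$ as a $\delta$-net. By Lemma \ref{lem2} their log-cardinalities are $\lesssim s\log\frac{n}{\eta\delta s}$ and $\lesssim s\log\frac{n}{\delta s}$, both of which are controlled by (\ref{3.37}). Applying Lemma \ref{lem8} with $t = \delta$ and taking a union bound gives, under (\ref{3.37}) with sufficiently large $C_0$, the event
\begin{equation}\label{net-orth}
\sup_{\bm{x}\in \mathcal{G}_{\eta\delta}}\sup_{\bm{u}\in \mathcal{G}_\delta}\hat{f}^\bot(\bm{x},\bm{u})\leq \delta
\end{equation}
with probability at least $1-2\exp(-c\delta^2 m)$.

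Next, by compactness one can find $(\bm{\hat{x}},\bm{\hat{u}})$ with $\sup_{\bm{x},\bm{u}}\hat{f}^\bot(\bm{x},\bm{u})\leq \hat{f}^\bot(\bm{\hat{x}},\bm{\hat{u}})+\delta$, and then pick $(\bm{\tilde{x}},\bm{\tilde{u}})\in \mathcal{G}_{\eta\delta}\times \mathcal{G}_\delta$ with $\|\bm{\hat{x}}-\bm{\tilde{x}}\|\leq \eta\delta$ and $\|\bm{\hat{u}}-\bm{\tilde{u}}\|\leq \delta$. It remains to bound three kinds of perturbation errors: (i) $\frac1m\bigl|\|\Im(\diag(\bm{\hat{z}}^*)\bm{\Phi \hat{u}})\|^2-\|\Im(\diag(\bm{\tilde{z}}^*)\bm{\Phi \tilde{u}})\|^2\bigr|$ where $\bm{\hat{z}}=\sign(\bm{\Phi\hat{x}}),\bm{\tilde{z}}=\sign(\bm{\Phi\tilde{x}})$; (ii) $\bigl|\|\bm{\hat{u}}^\bot_{\bm{\hat{x}}}\|^2-\|\bm{\tilde{u}}^\bot_{\bm{\tilde{x}}}\|^2\bigr|$; and (iii) $\bigl||\Im\langle \bm{\hat{x}},\bm{\hat{u}}\rangle|^2-|\Im\langle \bm{\tilde{x}},\bm{\tilde{u}}\rangle|^2\bigr|$. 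Terms (ii) and (iii) are standard Lipschitz estimates on the unit sphere and each contributes $O(\delta)$, much as in (\ref{3.30}). For (i) I use $|a^2-b^2|\leq (a+b)|a-b|$ with $a=\frac1{\sqrt m}\|\Im(\diag(\bm{\hat{z}}^*)\bm{\Phi \hat{u}})\|,b=\frac1{\sqrt m}\|\Im(\diag(\bm{\tilde{z}}^*)\bm{\Phi \tilde{u}})\|$, both of which are $O(1)$ by Lemma \ref{lem3}, so the main task is to control
\begin{equation}\nonumber
\tfrac1{\sqrt m}\big\|\Im(\diag(\bm{\hat{z}}^*)\bm{\Phi \hat{u}})-\Im(\diag(\bm{\tilde{z}}^*)\bm{\Phi \tilde{u}})\big\|.
\end{equation}
Splitting $\diag(\bm{\hat{z}}^*)\bm{\Phi\hat{u}}-\diag(\bm{\tilde{z}}^*)\bm{\Phi\tilde{u}}=\diag(\bm{\hat{z}}^*-\bm{\tilde{z}}^*)\bm{\Phi\hat{u}}+\diag(\bm{\tilde{z}}^*)\bm{\Phi(\hat{u}-\tilde{u})}$ and using $\|\bm{\tilde{z}}\|_\infty\leq 1$, one reduces to bounding $\frac1{\sqrt m}\|\sign(\bm{\Phi\hat{x}})-\sign(\bm{\Phi\tilde{x}})\|$ together with $\|\bm{\Phi}\|_\infty$ (for entrywise products) and $\frac1{\sqrt m}\|\bm{\Phi(\hat{u}-\tilde{u})}\|$ (which is $O(\|\bm{\hat{u}}-\bm{\tilde{u}}\|)=O(\delta)$ by Lemma \ref{lem3}).

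The main obstacle, which is exactly the place that drives the $\sqrt{\eta}$ term in the conclusion, is the sign difference. I handle it verbatim as in (\ref{aa347}): invoke Lemma \ref{lem10} with $4\eta<\beta=O(\eta)$ to get $|\mathcal{J}_{\bm{\hat{x}}}|<\beta m$ on the good event, split $[m]=\mathcal{J}_{\bm{\hat{x}}}\cup\mathcal{J}_{\bm{\hat{x}}}^c$, use the trivial bound $|\sign(a)-\sign(b)|\leq 2$ on $\mathcal{J}_{\bm{\hat{x}}}$ and the quantitative bound $|\sign(a)-\sign(b)|\leq \frac{2|a-b|}{\eta}$ of Lemma \ref{signconti} on $\mathcal{J}_{\bm{\hat{x}}}^c$; combined with $\|\bm{\hat{x}}-\bm{\tilde{x}}\|\leq \eta\delta$ and Lemma \ref{lem3} this yields $\frac1{\sqrt m}\|\sign(\bm{\Phi\hat{x}})-\sign(\bm{\Phi\tilde{x}})\|\lesssim \sqrt{\beta}+\frac{\|\bm{\hat{x}}-\bm{\tilde{x}}\|}{\eta}\lesssim \sqrt{\eta}+\delta$. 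This $\sqrt{\eta}$ is precisely why the bound must involve $\sqrt{\eta}$ rather than being purely $O(\delta)$, and it explains why the net for $\bm{x}$ cannot be coarser than $\eta\delta$. Piecing all three contributions with (\ref{net-orth}) and collecting the probability budget from Lemmas \ref{lem3}, \ref{lem8}, \ref{lem10} gives the failure probability $2(mn)^{-9}+c_1\exp(-c_2\min\{\delta^2,\eta^2\}m)$ claimed in the statement, and the overall bound $O(\delta+\sqrt\eta)$.
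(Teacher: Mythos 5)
There is a genuine gap in the step where you control the entrywise–product term. You propose to bound
$\frac{1}{\sqrt m}\|\diag(\bm{\hat z}^*-\bm{\tilde z}^*)\bm{\Phi}\bm{\hat u}\|$
by reusing the $\ell_2$ estimate from (\ref{aa347}), namely $\frac{1}{\sqrt m}\|\sign(\bm{\Phi\hat x})-\sign(\bm{\Phi\tilde x})\|\lesssim\sqrt\eta+\delta$, together with ``$\|\bm{\Phi}\|_\infty$ for entrywise products.'' But that inequality from Lemma \ref{lem7} was useful there because the quantity was an \emph{inner product} $\Re\langle\sign(\bm{\Phi\hat x})-\sign(\bm{\Phi\tilde x}),\bm{\Phi\hat u}_{\hat x}^\bot\rangle$ to which Cauchy--Schwarz applies with two $\ell_2$ factors. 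Here you instead face an $\ell_2$ norm of a Hadamard product, and the only way to combine the $\ell_2$ bound on $\bm{\hat z}-\bm{\tilde z}$ with the other factor is via $\|\diag(\bm{\hat z}^*-\bm{\tilde z}^*)\bm{\Phi\hat u}\|\le\|\bm{\Phi\hat u}\|_\infty\,\|\bm{\hat z}-\bm{\tilde z}\|$. Since $\|\bm{\Phi\hat u}\|_\infty\asymp\sqrt{s\log(mn)}$ (this is exactly the estimate (\ref{3.41})), this produces an extra $\sqrt{s\log(mn)}$ factor and does not yield $O(\delta+\sqrt\eta)$. This is also why your $(\eta\delta)$-net on $\bm{x}$ is too coarse.

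The paper's proof avoids this by working in $\ell_\infty$ on the sign difference over the non-vanishing indices: it takes a much finer net $\tilde\delta=\frac{\eta\delta}{\sqrt{s\log(mn)}}$ so that $\|\bm{\hat z}^{\hat E^c}-\bm{\tilde z}^{\hat E^c}\|_\infty\lesssim\delta$ (see (\ref{3.43})), and pairs this $\ell_\infty$ bound with $\frac{1}{\sqrt m}\|\bm{\Phi\hat u}\|=O(1)$; the extra logarithmic factor from the finer net is absorbed harmlessly into the sample complexity (\ref{3.37}). On the remaining near-vanishing indices $\hat E=\mathcal J_{\hat x}\cup\mathcal J_{\tilde x}$, the paper cannot use the sign-difference bound at all; it instead bounds $\frac1m\|\bm{\Phi}^{\hat E}\bm{\hat u}\|^2$ directly with the submatrix operator-norm Lemma \ref{lem11} (Chevet-type estimate), which produces the $\sqrt\eta$ in the conclusion. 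Your proposal omits Lemma \ref{lem11} entirely; the trivial $|\sign(a)-\sign(b)|\le2$ bound alone does not control $\|\bm{\Phi}^{\hat E}\bm{\hat u}\|$, which can have $|\hat E|\asymp\eta m$ rows. You should replace the $\ell_2$ sign estimate with the $\ell_\infty$ one (with the finer net) on $\hat E^c$ and add the Lemma \ref{lem11} argument on $\hat E$ to close the proof.
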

    
    \noindent{\it Proof.} The proof is again based on covering argument. We present it in two steps. 

    \noindent{\textbf{(Step 1.)   Strengthen (\ref{3.32}) to discrete nets}}

    We  construct $\mathcal{G}_\delta$ as  $\delta$-net  of $\Sigma^{n,*}_{4s,c}$, $\mathcal{G}_{\tilde{\delta}}$ as $\tilde{\delta}$-net of $\Sigma^{n,*}_{ s,c}$ (where $\tilde{\delta}$ will be chosen later), then Lemma \ref{lem2} allows us to  suppose $|\mathcal{G}_\delta|\leq \big(\frac{9n}{4\delta s}\big)^{8s}$,   $|\mathcal{G}_{\tilde{\delta}}|\leq \big(\frac{9n}{\tilde{\delta}s}\big)^{2s}$. A union bound extends (\ref{3.32}) to $(\bm{x},\bm{u}) \in \mathcal{G}_{\tilde{\delta}}\times \mathcal{G}_{\delta}$, i.e., for any $t\geq 0$,\begin{equation}
        \begin{aligned}
        \label{aa3.55}
         &\mathbbm{P}\Big(\sup_{\bm{x}\in\mathcal{G}_{\tilde{\delta}}}\sup_{\bm{u}\in\mathcal{G}_\delta}\hat{f}^\bot(\bm{x},\bm{u}) \geq t\Big) \\\leq&2\exp\Big(-cm\min\{t,t^2\}+8s\log\Big(\frac{9n}{4\delta s}\Big)\\&~~~~~~~~~~~~~~~~~~+2s\log\Big(\frac{9n}{\tilde{\delta}s}\Big)\Big) .
        \end{aligned}
    \end{equation}
   We set $t =\delta$, as $\delta$ is sufficiently small constant,  (\ref{aa3.55}) implies that the following bound \begin{equation}
       \label{overnet2}
       \sup_{\bm{x}\in\mathcal{G}_{\tilde{\delta}}}\sup_{\bm{u}\in\mathcal{G}_\delta}\hat{f}^\bot(\bm{x},\bm{u}) \leq \delta
   \end{equation}
   holds with probability exceeding $1-2\exp(-c\delta^2 m)$ as long as 
   \begin{equation}\label{ala}
       m\gtrsim \frac{s}{\delta^2}\Big[\log\Big(\frac{n}{\delta s}\Big)+\log\Big(\frac{n}{\tilde{\delta} s}\Big)\Big].
   \end{equation}
We will choose $\tilde{\delta}$ such that  (\ref{3.37}) implies (\ref{ala}), hence we can assume (\ref{overnet2}) holds.

\noindent{\textbf{(Step 2.) Strengthen (\ref{overnet2}) to from $(\bm{x},\bm{u})\in \mathcal{G}_{\tilde{\delta}}\times \mathcal{G}_\delta$ to $(\bm{x},\bm{u})\in \Sigma^{n,*}_{s,c}\times \Sigma^{n,*}_{4s,c}$}}

Given $\delta>0$,   for some $(\bm{\hat{x}},\bm{\hat{u}})\in \Sigma^{n,*}_{s,c}\times\Sigma^{n,*}_{4s,c} $ it holds that  \begin{equation}
   \sup_{\bm{x}\in \Sigma^{n,*}_{s,c}}\sup_{\bm{u}\in \Sigma^{n,*}_{4s,c}}\hat{f}^\bot(\bm{x},\bm{u})< \hat{f}^\bot(\bm{\hat{x}},\bm{\hat{u}}) +\delta.\label{suptopoint2}
\end{equation} Note that there exist $\bm{\tilde{u}}\in\mathcal{G}_\delta$ and $\bm{\tilde{x}}\in \mathcal{G}_{\tilde{\delta}}$ such that $\|\bm{\tilde{u}}-\bm{\hat{u}}\|\leq\delta$, $\|\bm{\tilde{x}}-\bm{\hat{x}}\|\leq   \tilde{\delta}$.   Further introduce the shorthand $\bm{\hat{z}} = \sign(\bm{\Phi\hat{x}})$, $\bm{\tilde{z}} = \sign(\bm{\Phi \tilde{x}})$, we perform some algebra to obtain
    \begin{equation}
        \begin{aligned}
        \label{3.40}
         &\hat{f}^\bot(\bm{\hat{x}},\bm{\hat{u}}) \leq \hat{f}^\bot(\bm{\tilde{x}},\bm{\tilde{u}})  + \hat{f}^\bot(\bm{\hat{x}},\bm{\hat{u}})  - \hat{f}^\bot(\bm{\tilde{x}},\bm{\tilde{u}})     \\&\leq \sup_{\bm{x}\in \mathcal{G}_{\tilde{\delta}}}\sup_{\bm{u}\in\mathcal{G}_{\delta}} \hat{f}^\bot(\bm{x},\bm{u})\\&+ \frac{1}{m}\big|\|\Im\big(\diag(\bm{\hat{z}}^*)\bm{\Phi \hat{u}}\big)\|^2 - \|\Im\big(\diag(\bm{\tilde{z}}^*)\bm{\Phi \tilde{u}}\big)\|^2\big|\\& + \big|\|\bm{\hat{u}}_{\bm{\hat{x}}}^\bot\|^2-\|\bm{\tilde{u}}_{\bm{\tilde{x}}}^\bot\|^2\big|+\big||\Im\big<\bm{\hat{x}},\bm{\hat{u}}\big>|^2-|\Im\big<\bm{\tilde{x}},\bm{\tilde{u}}\big>|^2\big|\\:&=I_1+I_{2}+I_{3}+I_{4}.
        \end{aligned}
    \end{equation}
    Recall that we have derived $I_1\leq \delta$ in (\ref{overnet2}). It remains to bound $I_2,I_3,I_4$ separately.

    \noindent
    \textbf{(Step 2.1.) Bound the term $I_2$}
    
    With the promised probability we can first assume $\sup_{\bm{u}\in \Sigma^{n,*}_{8s,c}}\frac{1}{\sqrt{m}}\|\bm{\Phi u}\|= O(1)$ (Lemma \ref{lem3}) and $\|\bm{\Phi}\|_\infty\lesssim \sqrt{\log(mn)}$
     (Lemma \ref{lemma5}). For any $\bm{w}\in\Sigma^{n,*}_{2s,c}$, this implies \begin{equation}
   \label{3.41}
       \begin{aligned}
        &\|\bm{\Phi w}\|_\infty= \max_{k\in [m]}|\bm{\Phi_k^*w}|\leq \max_{k\in[m]} \|\bm{\Phi_k}\|_\infty\|\bm{w}\|_1\\&\leq \|\bm{\Phi}\|_\infty\cdot\sqrt{2s}\|\bm{w}\|\lesssim \sqrt{s\log(mn)}.
       \end{aligned}
   \end{equation}
   Moreover, we can pick some $\beta$ such that $4\eta <\beta \lesssim \eta$, and $\beta m$ is an integer. Then with the promised probability, we  can invoke Lemma \ref{lem10}  to obtain     $\max\{|\mathcal{J}_{\bm{\hat{x}}}|,|\mathcal{J}_{\bm{\tilde{x}}}|\}<\beta m$, which implies that $\hat{E}: = \mathcal{J}_{\bm{\hat{x}}}\cup \mathcal{J}_{\bm{\tilde{x}}} $ satisfies $|\hat{E} |<2\beta m$. We also write $\hat{E}^c = [m]\setminus \hat{E}$. Moreover,  Lemma \ref{lem11} delivers that with high probability, 
   \begin{equation}
   \label{aa3.58}\begin{aligned}
       &\sup_{\bm{u} \in \Sigma^{n,*}_{4s,c}} \|\bm{\Phi}^{\hat{E}}\bm{u}\| \leq \sup_{|\mathcal{S}|\leq 2\beta m}\sup_{|\mathcal{T}|\leq 4s}\|\bm{\Phi}^\mathcal{S}_\mathcal{T}\|\\&= O\Big(\sqrt{\eta m \log\Big(\frac{72}{\eta}\Big)+s\log\Big(\frac{72n}{s}\Big)}\Big).\end{aligned}
   \end{equation}
    With all these preparations, we can first divide $I_2$ according to $\hat{E}$, it provides
    \begin{equation}
	     \begin{aligned}\nonumber
	       {I}_{2 } &\leq \frac{1}{m}\Big|\big\|\Im\big([\diag(\bm{\hat{z}^*})\bm{\Phi}]^{\hat{E}^c}\bm{\hat{u}}\big)\big\|^2-\big\|\Im\big([\diag(\bm{\tilde{z}^*})\bm{\Phi}]^{\hat{E}^c}\bm{\tilde{u}}\big)\big\|^2\Big|\\ &+ \frac{1}{m}\Big|\big\|\Im\big([\diag(\bm{\hat{z}^*})\bm{\Phi}]^{\hat{E} }\bm{\hat{u}}\big)\big\|^2-\big\|\Im\big([\diag(\bm{\tilde{z}^*})\bm{\Phi}]^{\hat{E}}\bm{\tilde{u}}\big)\big\|^2\Big|\\:&={I}_{21}+{I}_{22}.
	     \end{aligned}
	 \end{equation}
	      For $I_{21}$ we can proceed as in (\ref{3.42}).
        \begin{figure*}[!t]
\normalsize
\begin{equation}
        \begin{aligned}
        \label{3.42}
         &I_{21}=\frac{1}{m}\Big|\|\Im\big([\diag(\bm{\hat{z}}^*)]^{\hat{E}^c}\bm{\Phi } \bm{\hat{u}}\big)\|-\|\Im\big([\diag(\bm{\tilde{z}}^*)]^{\hat{E}^c}\bm{\Phi} \bm{ \tilde{u}}\big)\|\Big|\\&~~~~~~~~~~~~~~~~~~~ ~~~~~\cdot \Big|\|\Im\big([\diag(\bm{\hat{z}}^*)]^{\hat{E}^c}\bm{\Phi } \bm{\hat{u}}\big)\|+\|\Im\big([\diag(\bm{\tilde{z}}^*)]^{\hat{E}^c}\bm{\Phi } \bm{\tilde{u}}\big)\|\Big|\\
         &\stackrel{(i)}{\leq} \frac{1}{m}\big\|[\diag(\bm{\hat{z}}^*)]^{\hat{E}^c}\bm{\Phi \hat{u}}-[\diag(\bm{\tilde{z}}^*)]^{\hat{E}^c}\bm{\Phi \tilde{u}}\big\| \cdot \big|\|\bm{\Phi\hat{u}}\|+\|\bm{\Phi \tilde{u}}\|\big|\\
         &\stackrel{(ii)}{\lesssim}\frac{1}{\sqrt{m}}\big( \big\|[\diag(\bm{\hat{z}^*})- \diag(\bm{\tilde{z}^*})]^{\hat{E}^c}\bm{\Phi\hat{u}}\big\|+\big\|[\diag(\bm{\tilde{z}^*})]^{\hat{E}^c}\bm{\Phi}(\bm{\hat{u}}-\bm{\tilde{u}})\big\|\big)\\
         &\leq \|\bm{\hat{z}}^{\hat{E}^c}-\bm{\tilde{z}}^{\hat{E}^c}\|_\infty \cdot\frac{1}{\sqrt{m}}\|\bm{\Phi \hat{u}}\| + \frac{1}{\sqrt{m}}\|\bm{\Phi}(\bm{\hat{u}}-\bm{\tilde{u}})\|\stackrel{(iii)}{\lesssim} \|\bm{\hat{z}}^{\hat{E}^c}-\bm{\tilde{z}}^{\hat{E}^c}\|_\infty+\delta 
        \end{aligned}
    \end{equation}
\hrulefill
\vspace*{4pt}
\end{figure*}
    Note that we deal with the first factor via triangle inequality in $(i),(ii)$, and 
    use $\sup_{\bm{u}\in \Sigma^{n,*}_{8s,c}}\frac{1}{\sqrt{m}}\|\bm{\Phi u}\|= O(1)$ (Lemma \ref{lem3}) in $(ii),(iii)$. Furthermore,  by Lemma \ref{signconti} we can bound $\|\bm{\hat{z}}^{\hat{E}^c}-\bm{\tilde{z}}^{\hat{E}^c}\|_\infty$ from above as follows:  \begin{equation}
       \begin{aligned}
       \label{3.43}
        \|\bm{\hat{z}}^{\hat{E}^c}-\bm{\tilde{z}}^{\hat{E}^c}\|_\infty &= \max_{k\in{\hat{E}^c}} |\sign(\bm{\Phi}_k^*\bm{\hat{x}})-\sign(\bm{\Phi}_k^*\bm{\tilde{x}})|\\&\leq  2\max_{k\in{\hat{E}^c}}\frac{|\bm{\Phi}_k^*(\bm{\hat{x}}-\bm{\tilde{x}})|}{|\bm{\Phi}_k^*\bm{\hat{x}}|}\\&\leq 2\frac{\max_{k\in{\hat{E}^c}} |\bm{\Phi}_k^*(\bm{\hat{x}}-\bm{\tilde{x}})|}{\min_{k\in{\hat{E}^c}} |\bm{\Phi}_k^*\bm{\hat{x}}|}\\&\leq \frac{2 \|\bm{\hat{x}}-\bm{\tilde{x}}\|}{\eta}\cdot \Big( \sup_{\bm{w}\in \Sigma_{2s,c}^{n,*}}\|\bm{\Phi w}\|_\infty\Big) \\&\stackrel{(i)}{\lesssim} \frac{\tilde{\delta}\sqrt{s\log (mn)}}{\eta},
       \end{aligned}
    \end{equation}
where we use (\ref{3.41}) in the last inequality. Thus, we can take $\tilde{\delta} = \frac{\eta \delta}{\sqrt{s\log (mn)}}$ to guarantee $\|\bm{\hat{z}}^{\hat{E}^c}-\bm{\tilde{z}}^{\hat{E}^c}\|_\infty\lesssim \delta$, hence (\ref{3.42}) gives $I_{21}\lesssim \delta$. 
We still need to deal with the near vanishing part of $I_2$, i.e., $I_{22}$. To this end, previous developments provide
    \begin{equation}
	     \begin{aligned}
	     \label{a3.63}
	      {I}_{22}&=\frac{1}{m}\Big( \|\Im\big([\diag(\bm{\hat{z}^*})\bm{\Phi}]^{\hat{E} }\bm{\hat{u}}\big)\|^2\\&~~~~~~~~~~~~~~~~+\|\Im\big([\diag(\bm{\tilde{z}^*})\bm{\Phi}]^{\hat{E}}\bm{\tilde{u}}\big)\|^2 \Big)\\&\leq\frac{1}{m}\big(\|\bm{\Phi}^{\hat{E}}\bm{\hat{u}}\|^2+\|\bm{\Phi}^{\hat{E}}\bm{\tilde{u}}\|^2\big)\\&\leq \frac{2}{m}\sup_{|\mathcal{S}|\leq 2\beta m}\sup_{|\mathcal{T}|\leq 4s} \|\bm{\Phi}^\mathcal{S}_\mathcal{T}\|^2\\
	     & \stackrel{(i)}{\lesssim}\eta \log\Big(\frac{72}{\eta} \Big)+\frac{s}{m}\log\Big(\frac{72n}{s}\Big) \stackrel{(ii)}{\lesssim} \sqrt{\eta} + \delta,
	     \end{aligned}
	 \end{equation}
     where we use (\ref{aa3.58}) in $(i)$, and $(ii)$ follows from    sufficiently small $\eta$ and \textcolor{black}{the sample complexity (\ref{3.37})}. Overall, we arrive at $I_2 \lesssim \delta + \sqrt{\eta}$.

    \noindent
    {\textbf{(Step 2.2.) Derive the bounds for $I_3,I_4$}}

    These are more standard estimates. 
Specifically, 
    we have \begin{equation}
        \begin{aligned}\nonumber
            &I_{3}=  \big|\|\bm{\hat{u}^\bot_{\hat{x}}}\|^2-\|\bm{\tilde{u}^\bot_{\tilde{x}}}\|^2\big|\\&\leq \big|\|\bm{\hat{u}^\bot_{\hat{x}}}\| + \|\bm{\tilde{u}^\bot_{\tilde{x}}}\|\big|\cdot \big|\|\bm{\hat{u}^\bot_{\hat{x}}}-\bm{\tilde{u}^\bot_{\tilde{x}}}\|\big|\stackrel{(i)}{\leq} 8\delta,
        \end{aligned}
    \end{equation}
    where we use $\|\bm{\hat{u}^\bot_{\hat{x}}}\|\leq 1,\|\bm{\tilde{u}^\bot_{\tilde{x}}}\|\leq 1$ and earlier bound (\ref{3.30}).
      Likewise, for $I_{4}$  we have 
    \begin{equation}
        \begin{aligned}\nonumber
        & I_{4} = \big||\Im\big<\bm{\hat{x},\bm{\hat{u}}}\big>|^2 - |\Im\big<\bm{\tilde{x},\bm{\tilde{u}}}\big>|^2\big|\\&\leq \big||\Im\big<\bm{\hat{x},\bm{\hat{u}}}\big>|+|\Im\big<\bm{\tilde{x},\bm{\tilde{u}}}\big>|\big|\cdot \big|\Im\big<\bm{\hat{x},\bm{\hat{u}}}\big>-\Im\big<\bm{\tilde{x},\bm{\tilde{u}}}\big>\big|\\&\leq 2\cdot \big(|\Im\big<\bm{\hat{x}}-\bm{\tilde{x}},\bm{\hat{u}}\big>|+|\Im\big<\bm{\tilde{x}},\bm{\hat{u}}-\bm{\tilde{u}}\big>|\big) \leq 4\delta.
        \end{aligned}
    \end{equation}
    Putting all the pieces into (\ref{3.40}), and also counting all the involved probability terms, the proof is complete. \hfill $\square$
    
    \subsection{The Proof of Theorem \ref{theorem1}}
     \noindent{\it Proof of Theorem \ref{theorem1}.} \textbf{(Step 1.) Recall the reformulation}

     A simple inspection of   (\ref{3.7})--(\ref{aa3.17}) finds that \begin{equation}\label{upperRSC}
         \sup_{\bm{x},\bm{u}}f^\|(\bm{x},\bm{u}) +\sup_{\bm{x},\bm{u}}f^\bot(\bm{x},\bm{u})
     \end{equation}
     is an upper bound for the $2s$ order RIC of $\bm{A}_{\bm{z},r}$, or $4s$ order RIC of $\bm{A}_{\bm{z},c}$. Such upper bound is uniform over all $\bm{x}\in \mathcal{K}$. Here, for the real case where $\mathcal{K} = \Sigma^n_{s,r}$, $\sup_{\bm{x},\bm{u}}$ stands for $\sup_{\bm{x}\in \Sigma^{n,*}_{s,r}}\sup_{\bm{u}\in \Sigma^{n,*}_{2s,r}}$, for the complex case where $\mathcal{K} = \Sigma^n_{s,c}$, $\sup_{\bm{x},\bm{u}}$ refers to $\sup_{\bm{x}\in \Sigma^{n,*}_{s,c}}\sup_{\bm{u}\in \Sigma^{n,*}_{4s,c}}$.

\noindent{\textbf{(Step 2.) Bound the parallel part $\sup_{\bm{x},\bm{u}}f^\|(\bm{x},\bm{u})$}}

     Recall that $\sup_{\bm{x},\bm{u}}f^\|(\bm{x},\bm{u})$ has   been further decomposed as $\sup_{\bm{x},\bm{u}}f_1^\|(\bm{x},\bm{u})+\sup_{\bm{x},\bm{u}}f_2^\|(\bm{x},\bm{u})$ in (\ref{aa3.40}). Applying Corollary \ref{coro1} to $\sup_{\bm{x},\bm{u}}f_1^\|(\bm{x},\bm{u})$, and applying Lemma \ref{lem7} with $\eta = \delta^2$ to $\sup_{\bm{x},\bm{u}}f_2^\|(\bm{x},\bm{u})$, (under simple rescaling of $\delta$) we obtain $\sup_{\bm{x},\bm{u}}f^\|(\bm{x},\bm{u}) \leq \frac{\delta}{2}$ for both the real case and the  complex case.

\noindent{\textbf{(Step 3.) Bound the orthogonal part $\sup_{\bm{x},\bm{u}}f^\bot(\bm{x},\bm{u})$}}

     We split the discussion of $\sup_{\bm{x},\bm{u}}f^\bot(\bm{x},\bm{u})$ into the real case and the  complex case.

     \noindent({\it Real case}.)  As stated in the Theorem, we take $\hat{t}=1$ in $\bm{A}_{\bm{z},r}$. Moreover, in the real case $\bm{x},\bm{u}\in \mathbb{R}^n$, hence it always holds that $\Im\big<\bm{x},\bm{u}\big>=0$.     
     Thus, one can easily see $f^\bot(\bm{x},\bm{u}) = \hat{f}^\bot({\bm{x},\bm{u}})$ by comparing (\ref{aa3.14}) and (\ref{aa3.54}).
     Then   Lemma \ref{lem9} with $\eta = \delta$ (up to rescaling of $\eta$) yields $\sup_{\bm{x},\bm{u}}f^\bot(\bm{x},\bm{u})\leq \frac{\delta}{2}$ under the stated sample size and promised probability. Thus. in the real case (\ref{upperRSC}) can be bounded by $\delta$.

     \noindent({\it Complex case}.) As stated in the Theorem, we take $\hat{t} = \sqrt{\frac{2}{3}}$ in $\bm{A}_{\bm{z},c}$. Similar to the real case, by Lemma \ref{lem9} we can still obtain $\sup_{\bm{x},\bm{u}}\hat{f}^\bot(\bm{x},\bm{u})\leq \frac{\delta}{2}$. Moreover,  we can compare $f^{^\bot}(\bm{x},\bm{u}),\hat{f}^\bot(\bm{x},\bm{u})$ and  apply triangle inequality $(i)$, then plug in $\|\bm{u}_{\bm{x}}^\bot\|^2 = 1-[\Re\big<\bm{u},\bm{x}\big>]^2$ in $(ii)$ to proceed as follows: \begin{equation}
         \begin{aligned}
         \label{aa3.66}
         &\sup_{\bm{x},\bm{u}} f^\bot(\bm{x},\bm{u}) \\\stackrel{(i)}{\leq}& \frac{2}{3}\sup_{\bm{x},\bm{u}} \hat{f}^\bot(\bm{x},\bm{u}) +\sup_{\bm{x},\bm{u}} \frac{1}{3}\cdot\Big|2\big|\Im\big<\bm{x},\bm{u}\big>\big|^2 - \|\bm{u}_{\bm{x}}^\bot\|^2\Big|\\
         \stackrel{(ii)}{\leq} &\frac{\delta}{2}+ \frac{1}{3}\sup_{\bm{x},\bm{u}}\Big|\big|\big<\bm{x},\bm{u}\big>\big|^2 + \big|\Im\big<\bm{x},\bm{u}\big>\big|^2 -1\Big|\leq \frac{1}{3}+\frac{\delta}{2}.
         \end{aligned}
     \end{equation} 
     Therefore, in the complex case (\ref{upperRSC}) is bounded by $\frac{1}{3}+\delta$.

     \noindent{\textbf{(Step 4.) The uniform exact reconstruction for PO-CS}}
     
     Regarding the uniform exact reconstruction guarantee (i.e., the statement after "in particular" in   Theorem \ref{theorem1}),   we only need to take sufficiently small $\delta$ (e.g., $\delta = \frac{1}{3}$) to guarantee (\ref{upperRSC}) is smaller than $\frac{\sqrt{2}}{2}$. Then the claim follows from Lemma \ref{cslem}. \hfill $\square$

     \vspace{1mm}
     
     We point out that by introducing one rescaling factor $\hat{t}$ to the new sensing matrix, our choice $\hat{t}=\sqrt{\frac{2}{3}}$ for the complex case has already been optimized. 

	\section{PO-CS of Low-Rank Matrices}

	\label{section4}
	We shall present here the parallel result of uniform low-rank recovery from phase-only measurements. 
	The main differences between real matrix and complex matrix are in the global sign product embedding property (Remark \ref{rem1}) and a crucial rescaling of new sensing matrix (Remark \ref{remark2}), hence we only focus on the complex  case.

	In this section    we assume   $\bm{X}\in \mathcal{M}^{n_1,n_2}_{r,c}$     is the underlying low-rank matrix. We still use $\bm{\Phi}$ to denote the original   sensing matrix, $\bm{z}$ for phase-only measurements of $\bm{X}$, but keep in mind that here $\bm{\Phi}(\cdot)$ is a random linear map from $\mathbb{C}^{n_1\times n_2}$ to $\mathbb{C}^m$ defined for any   $\bm{U}\in \mathbb{C}^{n_1\times n_2}$ as \begin{equation}
	    \label{randomlinear}\begin{aligned}
     &\bm{\Phi}(\bm{U})  =  \big(\big<\bm{\Phi}_1,\bm{U}\big>,\big<\bm{\Phi}_2,\bm{U}\big>,\cdots,\big<\bm{\Phi}_m,\bm{U}\big>\big)^\top ,\end{aligned}
	\end{equation} where $\{\bm{\Phi}_k:k\in [m]\}$ are i.i.d. copies of $\mathcal{N}^{n_1\times n_2}(0,1)+\mathcal{N}^{n_1\times n_2}(0,1)\ii$, $\big<\bm{\Phi}_k,\bm{U}\big>=\mathrm{Tr}(\bm{\Phi}^*_k\bm{U})$. In PO-CS, our goal is to reconstruct $\bm{X}$ (up to a positive scaling factor) from the phase-only observations $\bm{z}:= \sign\big(\bm{\Phi}(\bm{X})\big)$. Analogous to (\ref{3.6}), by identifying $\bm{B}\in \mathcal{M}_{r,c}^{n_1,n_2}$ with $[\bm{B}]_\mathbb{R}\in \mathcal{M}^{2n_1,n_2}_{2r,r}$,
 we can reformulate PO-CS with $\bm{X}\in \mathcal{M}_{r,c}^{n_1,n_2}$ as (\ref{lrrefo}).
   \begin{figure*}[!t]
\normalsize
\begin{equation}
	    \begin{aligned}\label{lrrefo}
	    &~~~~~~~~\text{find }\bm{U}\in \mathbb{M}^{2n_1, n_2}_{2r,r},\text{ s.t. }\mathcal{A}_{\bm{z}}(\bm{U}) = \bm{e}_1,~\text{where}\\
	    \mathcal{A}_{\bm{z}}(\bm{U})=    &\begin{bmatrix}
	        \frac{1}{\kappa m}\big<\Re\big(\sum_{k=1}^mz_k\bm{\Phi}_k\big),\bm{U}^{[n_1]}\big>+\frac{1}{\kappa m}\big<\Im\big(\sum_{k=1}^mz_k\bm{\Phi}_k\big),\bm{U}^{[2n_1]\setminus[n_1]}\big> \\
	        -\sqrt{\frac{2}{3m}} \big<\Im(z_1\bm{\Phi}_1),\bm{U}^{[n_1]}\big> + \sqrt{\frac{2}{3m}}\big<\Re(z_1\bm{\Phi}_1),\bm{U}^{[2n_1]\setminus[n_1]}\big>\\
	        \vdots \\
	          -\sqrt{\frac{2}{3m}} \big<\Im(z_m\bm{\Phi}_m),\bm{U}^{[n_1]}\big> + \sqrt{\frac{2}{3m}}\big<\Re(z_m\bm{\Phi}_m),\bm{U}^{[2n_1]\setminus[n_1]}\big>
	    \end{bmatrix}.
	    \end{aligned}
	\end{equation}
\hrulefill
\vspace*{4pt}
\end{figure*}
    Based on Lemma \ref{cslem}, our strategy is to show $\mathcal{A}_{\bm{z}}(\cdot)$ for all $\bm{X}\in \mathcal{M}^{n_1,n_2}_{r,c}$ simultaneously respect RIP with  $\delta^{\mathcal{A}_{\bm{z}}}_{4r}<\frac{\sqrt{2}}{2}$. Some algebra verifies that proving \begin{equation}\nonumber
        \sup_{\bm{X}\in (\mathcal{M}^{n_1,n_2}_{r,c})^*}\sup_{\bm{U}\in (\mathcal{M}^{n_1,n_2}_{4r,c})^*}f(\bm{X},\bm{U})<\frac{\sqrt{2}}{2}
    \end{equation} 
	is sufficient, where we define \begin{equation}
	\label{a4.3}\begin{aligned}
	    &f(\bm{X},\bm{U}) :=\Big|\frac{1}{\kappa^2m^2}\big[\Re(\bm{z}^* \bm{\Phi}(\bm{U}))\big]^2\\&~~~~~~~~~~~~~~ + \frac{2}{3m}\big\|\Im\big(\diag(\bm{z}^*)\bm{\Phi}(\bm{U})\big)\big\|^2-1\Big|.\end{aligned}
	\end{equation}
	Our result states that the uniform low-rank recovery guarantee  can be achieved from $O(r(n_1+n_2)\log(r(n_1+n_2)))$ phase-only measurements. Notably, the measurement number for achieving uniform exact reconstruction is near optimal compared to $O(r(n_1+n_2))$ in the classical compressive sensing regime \cite{candes2011tight}.  
	\begin{theorem}{\rm (Uniform Exact Recovery of Complex Low-Rank Matrices)\textbf{.}}
	    Consider the setting of PO-CS of $\bm{X}\in \mathcal{M}^{n_1,n_2}_{r,c}$ described above. Under the sample complexity of $m  \gtrsim r(n_1+n_2)\log(r(n_1+n_2))$, with probability at least $1-  c_1\exp(-c_2m) -2m\exp(-c(n_1+n_2))$, all $\bm{X} \in\mathcal{M}^{n_1,n_2}_{r,c}$ can be exactly reconstructed (up to positive scaling factor) from $\bm{z}=\sign\big(\bm{\Phi}(\bm{X})\big)$ by finding $\bm{\hat{X}} =[\bm{\hat{U}}]_\mathbb{C}$. Here, $\bm{\hat{U}}$ is solved from \begin{equation}\nonumber
	       \bm{\hat{U}} =   \mathop{\arg\min}\limits_{\bm{U}\in \mathbb{R}^{2n_1\times n_2}} \|\bm{U}\|_{*},~~\text{s.t. }\mathcal{A}_{\bm{z} }(\bm{U})= \bm{e}_1,
	    \end{equation}
	     $\mathcal{A}_{\bm{z}}$ is defined in (\ref{lrrefo}).
	   \label{theorem2}
	\end{theorem}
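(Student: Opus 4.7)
The plan is to mirror the architecture of the proof of Theorem~\ref{theorem1}, replacing the sparse covering arguments with the low-rank covering number from Lemma~\ref{lrnumber} and the sparse RIP of $\bm{\Phi}$ with its matrix counterpart. By the reformulation~(\ref{lrrefo}) and Lemma~\ref{cslem}, it suffices to show that the upper bound
\begin{equation}\nonumber
    \sup_{\bm{X}\in(\mathcal{M}^{n_1,n_2}_{r,c})^*}\sup_{\bm{U}\in(\mathcal{M}^{n_1,n_2}_{4r,c})^*}\bigl[\,f^\|(\bm{X},\bm{U})+f^\bot(\bm{X},\bm{U})\,\bigr]
\end{equation}
on $\delta^{\mathcal{A}_{\bm{z}}}_{4r}$ can be made smaller than $\tfrac{\sqrt{2}}{2}$ under the stated sample size, where the parallel/orthogonal decomposition is defined in direct analogy with~(\ref{a3.14}) via $\bm{U}^\|_{\bm{X}}=\Re\langle\bm{U},\bm{X}\rangle\cdot\bm{X}$ and $\bm{U}^\bot_{\bm{X}}=\bm{U}-\bm{U}^\|_{\bm{X}}$, and $f^\|,f^\bot$ are the obvious matrix analogues of~(\ref{aa3.14}).

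First I would develop the matrix analogues of the auxiliary estimates in Section~\ref{section3}. The RIP of $\bm{\Phi}(\cdot)$ over $\mathcal{M}^{n_1,n_2}_{\ell r,c}$ (the analogue of Lemma~\ref{lem3}) follows from the standard covering argument using Lemma~\ref{lrnumber}, giving $\tfrac{1}{2m}\|\bm{\Phi}(\bm{U})\|^2$ near $\|\bm{U}\|_F^2$ once $m\gtrsim r(n_1+n_2)$. The uniform $(\ell_1,\ell_2)$-type concentration $|\tfrac{1}{\kappa m}\|\bm{\Phi}(\bm{W})\|_1-1|\leq\delta$ over $\bm{W}\in(\mathcal{M}^{n_1,n_2}_{r,c})^*$ (the analogue of Lemma~\ref{lem4}) follows verbatim from sub-Gaussian concentration at a fixed $\bm{W}$ plus a Lemma~\ref{lrnumber} net, yielding the required $m\gtrsim\tfrac{r(n_1+n_2)}{\delta^2}$. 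The $\|\bm{\Phi}\|_\infty$ bound (Lemma~\ref{lemma5}) only sees the total number of entries, so it generalizes immediately to $m n_1 n_2$ complex Gaussian scalars. The uniform control on near-vanishing measurements (Lemma~\ref{lem10}) goes through unchanged once $\bm{\Phi}_k^*\bm{x}$ is replaced by $\langle\bm{\Phi}_k,\bm{X}\rangle$: rotational invariance of $\bm{\Phi}_k$ acting on unit-Frobenius $\bm{X}$ still gives $\langle\bm{\Phi}_k,\bm{X}\rangle\sim\mathcal{N}(0,1)+\mathcal{N}(0,1)\ii$, and the net construction uses Lemma~\ref{lrnumber}, raising the sample-complexity exponent from $s$ to $r(n_1+n_2)$.

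Next I would establish the parallel and orthogonal bounds by exactly reproducing the covering arguments of Lemmas~\ref{lem7} and~\ref{lem9}. For the parallel part, the per-point estimate (Lemma~\ref{lemma6}) carries over by the same rotational-invariance trick (take a unitary $\bm{P}$ that maps $\bm{X}/\|\bm{X}\|_F$ to a rank-one ``standard'' matrix and expose the relevant $\mathcal{N}(0,1)+\mathcal{N}(0,1)\ii$ scalar), and extending to a uniform bound needs only the low-rank net from Lemma~\ref{lrnumber} together with the low-rank RIP of $\bm{\Phi}$. The decomposition~(\ref{addclear1}) and the inequality~(\ref{3.21}) are purely algebraic and transfer unchanged. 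For the orthogonal part, the expectation calculation~(\ref{expcal}) again uses rotational invariance and yields
\begin{equation}\nonumber
    \mathbbm{E}\,\tfrac{1}{m}\bigl\|\Im\bigl(\diag(\bm{z}^*)\bm{\Phi}(\bm{U})\bigr)\bigr\|^2=\|\bm{U}^\bot_{\bm{X}}\|_F^2+|\Im\langle\bm{X},\bm{U}\rangle|^2,
\end{equation}
so that the $\sqrt{2/3}$ rescaling hard-wired into~(\ref{lrrefo}) absorbs the bias exactly as in~(\ref{aa3.66}), ultimately giving $\sup f^\bot\leq \tfrac{1}{3}+\tfrac{\delta}{2}$ and hence a final RIC bounded above by $\tfrac{1}{3}+\delta$, which is $<\tfrac{\sqrt{2}}{2}$ for small $\delta$.

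The main obstacle is the low-rank analogue of Lemma~\ref{lem11}, which in the sparse case bounded the operator norm of an arbitrary $(\beta m)\times s$ sub-matrix of $\bm{\Phi}$ and was essential to control the ``near-vanishing block'' in~(\ref{a3.63}). In the matrix setting the right quantity is $\sup_{|\mathcal{S}|\leq 2\beta m}\sup_{\bm{U}\in(\mathcal{M}^{n_1,n_2}_{4r,c})^*}\|(\bm{\Phi}(\bm{U}))^{\mathcal{S}}\|$, which I would bound by a two-sided Chevet-style covering argument: take a $\tfrac{1}{8}$-net of $\Sigma^{m,*}_{2\beta m,c}$ (via Lemma~\ref{lem2}) and a $\tfrac{1}{8}$-net of $(\mathcal{M}^{n_1,n_2}_{4r,c})^*$ (via Lemma~\ref{lrnumber}), use $\Re(\bm{a}^*\bm{\Phi}(\bm{U}))\sim\mathcal{N}(0,1)$ at fixed $(\bm{a},\bm{U})$, and combine union bound with the usual absorption trick~(\ref{sparsede}) to obtain a high-probability estimate of order $\sqrt{\beta\log(1/\beta)\,m+r(n_1+n_2)}$. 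Plugging this into the orthogonal-part covering argument with $\eta\asymp\delta$ and $\beta\asymp\eta$ then yields the $O(\sqrt\eta+\delta)$ control needed. With these pieces the full proof repeats the four-step outline of the proof of Theorem~\ref{theorem1} almost word-for-word, giving the stated sample complexity $m\gtrsim r(n_1+n_2)\log(r(n_1+n_2))$ and failure probability $c_1\exp(-c_2m)+2m\exp(-c(n_1+n_2))$ after translating the $(mn)^{-9}$ polynomial probability into the matrix setting where $\|\bm{\Phi}\|_\infty$ contributes the $\exp(-c(n_1+n_2))$ factor per row.
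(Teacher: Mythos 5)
Your proposal follows the same overall architecture as the paper's (very brief) proof, which simply instructs the reader to mirror Theorem~\ref{theorem1} and points to Section~\ref{lowcovering} for the covering-dimension generalization. Your treatment of the parallel/orthogonal decomposition, the $\sqrt{2/3}$ rescaling to absorb the $|\Im\langle\bm{X},\bm{U}\rangle|^2$ bias, the low-rank RIP of $\bm{\Phi}$, and the Chevet-style analogue of Lemma~\ref{lem11} (using the SVD split of a rank-$8r$ matrix into two rank-$4r$ pieces in place of the sparse splitting in~(\ref{sparsede})) are all sound and consistent with what the paper does.

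One remark, because it concerns the single technical substitution the paper actually singles out. You state that the $\|\bm{\Phi}\|_\infty$ bound (Lemma~\ref{lemma5}) ``generalizes immediately to $mn_1n_2$ complex Gaussian scalars'' and that Lemma~\ref{lem10} then ``goes through unchanged.'' That is not the right generalization. The max-entry bound $\|\bm{\Phi}\|_\infty\lesssim\sqrt{\log(mn_1n_2)}$ only controls $|\langle\bm{\Phi}_k,\bm{W}\rangle|$ through $\|\bm{W}\|_1=\sum_{ij}|W_{ij}|$, which for a low-rank matrix is not comparable to $\sqrt{r}\,\|\bm{W}\|_F$. The paper instead replaces Lemma~\ref{lemma5} with an operator-norm bound $\max_k\|\bm{\Phi}_k\|\lesssim\sqrt{n_1+n_2}$ (Lemma~\ref{lrlemma}), so that $|\langle\bm{\Phi}_k,\bm{W}\rangle|\leq\|\bm{\Phi}_k\|\cdot\sqrt{\rank(\bm{W})}\cdot\|\bm{W}\|_F\lesssim\sqrt{r(n_1+n_2)}$ for unit-Frobenius rank-$2r$ matrices. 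This is the substitution needed in the analogues of (\ref{a4.9}) and (\ref{3.43}), and it is also the source of the $2m\exp(-c(n_1+n_2))$ failure probability. Your final sentence does invoke exactly this $\exp(-c(n_1+n_2))$ factor, so you eventually land on the right object; the middle sentences just attribute it to the wrong bound.
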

	The proof of Theorem \ref{theorem2} is parallel to Theorem \ref{theorem1}, hence we omit the details but only point out several technical changes. Specifically, Lemma \ref{lemma5} should be substituted with the following Lemma, which gives rise to the probability term $2m\exp(-c(n_1+n_2))$  in Theorem \ref{theorem2}.  
	\begin{lem}
	\label{lrlemma}
	For the linear random map $\bm{\Phi}$ in (\ref{randomlinear}), there exists some absolute constant $c$, $C$, such that $\max_{k\in [m]}\|\bm{\Phi}_k\| \leq C\sqrt{n_1+n_2}$ holds with probability at least $1-2m\exp(-c(n_1+n_2))$.
	\end{lem}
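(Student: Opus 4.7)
The plan is to bound the operator norm of one complex Gaussian matrix $\bm{\Phi}_1 \sim \mathcal{N}^{n_1\times n_2}(0,1)+\mathcal{N}^{n_1\times n_2}(0,1)\ii$ with probability at least $1-2\exp(-c(n_1+n_2))$, and then take a union bound over $k\in[m]$. This is a standard spectral-norm bound for (complex) Gaussian matrices; the only mild issue is getting the probability exponent to scale as $(n_1+n_2)$ rather than a constant, which forces us to cover two spheres simultaneously. The strategy closely mirrors the covering argument already carried out in Lemma \ref{lem11}.

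First, I would write $\|\bm{\Phi}_1\| = \sup_{\bm{u}\in\mathbb{S}^{n_1-1}_c,\ \bm{v}\in \mathbb{S}^{n_2-1}_c} |\bm{u}^{*}\bm{\Phi}_1 \bm{v}|$. For a fixed pair $(\bm{u},\bm{v})$ of unit vectors, rotational invariance of $\bm{\Phi}_1$ (as recalled in the discussion preceding Lemma \ref{lem4}) yields $\bm{u}^{*}\bm{\Phi}_1 \bm{v} \sim \mathcal{N}(0,1)+\mathcal{N}(0,1)\ii$, so by the sub-Gaussian tail bound (\ref{2.1}), $\mathbb{P}(|\bm{u}^{*}\bm{\Phi}_1\bm{v}|\geq t)\leq 2\exp(-c_0 t^2)$ for some absolute constant $c_0>0$ and all $t>0$.

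Next, invoke Lemma \ref{lem2} (specialized to $s=n_1$ and $s=n_2$ respectively) to construct $\tfrac{1}{4}$-nets $\mathcal{G}_1\subset \mathbb{S}^{n_1-1}_c$ and $\mathcal{G}_2\subset \mathbb{S}^{n_2-1}_c$ with $|\mathcal{G}_1|\leq 36^{2n_1}$ and $|\mathcal{G}_2|\leq 36^{2n_2}$. A union bound gives
\begin{equation*}
\mathbb{P}\Big(\sup_{\bm{u}\in\mathcal{G}_1,\bm{v}\in\mathcal{G}_2}|\bm{u}^{*}\bm{\Phi}_1\bm{v}|\geq t\Big) \leq 2\exp\bigl(-c_0 t^2 + 2n_1\log 36 + 2n_2\log 36\bigr).
\end{equation*}
Setting $t = C\sqrt{n_1+n_2}/2$ for $C$ sufficiently large makes the right-hand side at most $2\exp(-c_1(n_1+n_2))$. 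Then I would transfer the bound from the net to the full product of spheres exactly as in the Step 2 of Lemma \ref{lem11}: pick $(\bm{\hat{u}},\bm{\hat{v}})$ attaining (or within $\delta$ of) the supremum, approximate by $(\bm{\tilde{u}},\bm{\tilde{v}})\in \mathcal{G}_1\times\mathcal{G}_2$ with $\|\bm{\hat{u}}-\bm{\tilde{u}}\|, \|\bm{\hat{v}}-\bm{\tilde{v}}\|\leq \tfrac{1}{4}$, and split the difference $\bm{\hat{u}}^{*}\bm{\Phi}_1\bm{\hat{v}} - \bm{\tilde{u}}^{*}\bm{\Phi}_1\bm{\tilde{v}}$ into two cross terms; the standard algebra yields $\|\bm{\Phi}_1\|\leq 2\sup_{\mathcal{G}_1\times\mathcal{G}_2}|\bm{u}^{*}\bm{\Phi}_1\bm{v}|\leq C\sqrt{n_1+n_2}$.

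Finally I take a union bound over $k\in[m]$: since the $\bm{\Phi}_k$ are i.i.d., the event $\|\bm{\Phi}_k\|\leq C\sqrt{n_1+n_2}$ fails for some $k$ with probability at most $2m\exp(-c(n_1+n_2))$, which matches the probability stated in the lemma. There is no substantive obstacle here; the only care needed is in fixing the covering radius $\tfrac14$ (or any constant $<1$) to make the pull-back to the full spheres absorb a constant factor into $C$, and in carrying the complex sphere covering bound (a factor of $2n$ rather than $n$ in the exponent) through so that the probability exponent remains proportional to $n_1+n_2$.
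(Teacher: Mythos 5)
Your proof is correct, but it reproves a classical fact from scratch where the paper simply cites it. The paper's entire proof is two lines: invoke \cite[Thm.\ 4.4.5]{vershynin2018high}, which directly gives $\mathbbm{P}\big(\|\bm{\Phi}_k\|\geq C_0(\sqrt{n_1}+\sqrt{n_2}+t)\big)\leq 2\exp(-t^2)$ for a (sub-)Gaussian random matrix, then take a union bound over $k\in[m]$ and set $t=\sqrt{c(n_1+n_2)}$. Your covering argument is essentially a self-contained re-derivation of that cited theorem, mirroring the structure of Lemma~\ref{lem11}: fix a pair of unit vectors, get a complex Gaussian scalar via rotational invariance, union-bound over $\tfrac14$-nets of the two complex spheres (each contributing an exponent of order $n_i$), and absorb the net-approximation error into a factor of~$2$. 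The arithmetic checks out: with $\tfrac14$-nets the standard two-cross-term split gives $\|\bm{\Phi}_1\|\leq 2\sup_{\mathcal{G}_1\times\mathcal{G}_2}|\bm{u}^*\bm{\Phi}_1\bm{v}|$, and $t\asymp\sqrt{n_1+n_2}$ makes $c_0t^2$ dominate $2(n_1+n_2)\log 36$ for $C$ large. What your route buys is self-containment and consistency with the paper's other covering proofs; what it costs is length, since the union-bound-over-$k$ step is identical in both arguments and the rest is a known spectral-norm tail. One small bookkeeping point: Lemma~\ref{lem2} is stated for real sparse unit spheres and the paper's follow-up remark gives the complex bound $(\tfrac{9n}{\epsilon s})^{2s}$, so specializing to $s=n_1$ (resp.\ $s=n_2$) does give $|\mathcal{G}_1|\leq 36^{2n_1}$ and $|\mathcal{G}_2|\leq 36^{2n_2}$ as you claim, but it would be cleaner to cite the remark after Lemma~\ref{lem2} explicitly.
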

	
	\noindent{\it Proof.} By   \cite[Thm. 4.4.5]{vershynin2018high} $\mathbbm{P}(\|\bm{\Phi}_k\|\geq C_0(\sqrt{n_1}+\sqrt{n_2}+t)\leq 2\exp(-t^2)$ for some constant $C_0$, thus a union bound gives $$\mathbbm{P}\big(\max_{k\in [m]}\|\bm{\Phi}_k\| \geq C(\sqrt{n_1}+\sqrt{n_2}+t)\big)\leq 2m\exp(-t^2).$$
	Setting $ t=\sqrt{c(n_1+n_2)}$ concludes the proof. \hfill $\square$
	
	Based on Lemma \ref{lrlemma}, the estimate $\sup_{\bm{w}\in \Sigma^{n,*}_{2s,c}}\|\bm{\Phi w}\|_\infty \lesssim \sqrt{s\log(mn)}$ used in  (\ref{a4.9}) and (\ref{3.43})  can be replaced with $$\sup_{\bm{W}\in( \mathcal{M}^{n_1,n_2}_{2r,c})^*}\|\bm{\Phi }(\bm{W})\|_\infty \lesssim \sqrt{r(n_1+n_2)},$$ which is due to $|\big<\bm{\Phi}_k,\bm{W}\big>|\leq \|\bm{\Phi}_k\|\cdot\sqrt{\rank(\bm{W})}\cdot \|\bm{W}\|_F$. In addition, the RIP of $\bm{\Phi}$ over low-rank matrices needed in Lemma \ref{lem3} can be positioned in  \cite[Thm. 2.3]{candes2011tight}. For   more details on other modifications, we refer readers to Section \ref{lowcovering} where we will show that the techniques developed in Section \ref{section3} is sufficient for proving uniform recovery guarantee over $\bm{x}\in \mathcal{K}$, as long as $\mathcal{K}$ has low covering dimension.

	\section{Stability and Reconstruction with Norm}
	\label{section6}
	Having presented the uniform exact recovery guarantee over $\Sigma^{n}_{s,c}$ and $\mathcal{M}^{n_1,n_2}_{r,c}$ in PO-CS, we further 
	investigate the uniform stable recovery in a noisy setting. After that, 
 we show  that uniform full reconstruction with norm can be achieved by adding Gaussian dither before capturing the phases. For succinctness we would only present the result for the recovery of complex sparse   signal $\bm{x}\in \Sigma^n_{s,c}$. 
	
	\subsection{Uniform Stable Reconstruction}
    Our approach is to recast PO-CS as a linear compressive sensing problem, and note that the stability for linear compressive sensing has been well developed under the framework of RIP, see for instance,   \cite[Thm. 2.1]{cai2013sparse} for sparse recovery. Our analysis would be based on this well-known result. More precisely, we consider bounded complex additive noise $\bm{\tau}\in \mathbb{C}^m$ satisfying $\|\bm{\tau}\|_\infty\leq \tau_0$,  and the resulting  noisy PO-CS model \begin{equation}
    \label{5.1}
        \bm{\breve{z}} = \bm{z}+\bm{\tau} = \sign(\bm{\Phi x}) +\bm{\tau} .
    \end{equation}  
	This problem set-up accommodates many noise pattern of interest, specifically a moderate phase disturbance $\bm{z}_{\bm{\Lambda}} = \bm{\Lambda z}$ where $\bm{\Lambda}$ is a diagonal matrix with unit entries close to $1$. This also embraces the noise brought by uniform quantization over $\{|z|=1\}$. Moreover, the bounded assumption on noise has been noted to be quite necessary for success of PO-CS, see   \cite[Remark 4.4]{jacques2021importance}. Indeed, to establish the   stable recovery result, we follow a   strategy similarly to \cite{jacques2021importance}. The difference is that, due to the uniformity in Theorem \ref{theorem1}, we are now able to prove a stable recovery guarantee uniformly for all complex sparse signals, (while their result only handles a fixed real signal).
	
	\begin{theorem}
	\label{theorem3}
	{\rm (Uniform Stable Recovery under Bounded Noise)\textbf{.}}
	    Assume $\bm{x}\in \Sigma^{n}_{s,c}$. Recall that in noiseless PO-CS model $\bm{z}=\sign(\bm{\Phi z})$,   one can exactly reconstruct $\bm{x}^\star = \frac{\kappa m}{\|\bm{\Phi x}\|_1}\bm{x}$ using the reformulation (\ref{3.6}).\footnote{This is because we add the virtual measurement (\ref{add830}) to specify the signal norm as $\|\bm{\Phi x}\|_1=\kappa m$.} Consider the noisy PO-CS model (\ref{5.1}) with $\|\bm{\tau}\|_\infty\leq \tau_0$,   we use $\bm{\breve{z}}$ to construct the new sensing matrix $\bm{A}_{\bm{\breve{z}},c}$ in (\ref{3.6}) and solve $\bm{\hat{u}}\in \mathbb{R}^{2n}$ from \begin{equation}\label{stableprog}
	        \bm{\hat{u}} =  \mathop{\arg\min}\limits_{\bm{u}\in \mathbb{R}^{2n}} \|\bm{u}\|_1,~~\text{s.t. }\|\bm{A}_{\bm{\breve{z}},c}\bm{u}- \bm{e}_1\|\leq \tilde{\tau}.
	    \end{equation}
	     Let $\bm{\hat{x}} = [\bm{\hat{u}}]_\mathbb{C}$. If $\tau_0$ is sufficiently small, $m\geq Cs\log \big(\frac{n^2\log(mn)}{s}\big)$ for some sufficiently large $c$,   one can pick $\tilde{\tau} = \sqrt{2}\tau_0$, then     with probability at least $1-2(mn)^{-9} -c_1\exp(-c_2m)$  the stable recovery guarantee $$\|\bm{\hat{x}} - \bm{x}^\star\|\leq c\tau_0$$holds for all $\bm{x}\in  \Sigma^n_{s,c}$ with some absolute constant $c$.  
	\end{theorem}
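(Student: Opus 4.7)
Let $\bm{u}^\star := [\bm{x}^\star]_{\mathbb{R}} \in \Sigma^{2n}_{2s,r}$. By construction of the reformulation \eqref{3.6} and by the definition $\bm{x}^\star = \frac{\kappa m}{\|\bm{\Phi x}\|_1}\bm{x}$, one has $\bm{A}_{\bm{z},c}\bm{u}^\star = \bm{e}_1$ exactly. Theorem \ref{theorem1} (with $\delta$ sufficiently small) already supplies, on the claimed event, a uniform RIP bound $\delta^{\bm{A}_{\bm{z},c}}_{4s}\leq \tfrac{1}{3}+\delta$ for all $\bm{x}\in \Sigma^n_{s,c}$. The plan is: (i) bound the perturbation $\bm{\Delta}:=\bm{A}_{\bm{\breve{z}},c}-\bm{A}_{\bm{z},c}$ acting on $(4s)$-sparse vectors by $O(\tau_0)$; (ii) use this bound both to show $\bm{u}^\star$ is feasible in \eqref{stableprog} with $\tilde{\tau}=\sqrt{2}\tau_0$ and to transfer RIP from $\bm{A}_{\bm{z},c}$ to $\bm{A}_{\bm{\breve{z}},c}$; (iii) invoke the standard RIP-based stable recovery guarantee \cite[Thm.~2.1]{cai2013sparse}; (iv) pull the bound from $\bm{\hat{u}}$ back to $\bm{\hat{x}}$ using that $[\cdot]_{\mathbb{C}}$ is an $\ell_2$-isometry.

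For step (i), crucially $\bm{\Delta}$ does not depend on $\bm{x}$: it has the block form of \eqref{3.6} with $\bm{z}$ replaced by $\bm{\tau}$. For any $\bm{u}=[\bm{w}]_{\mathbb{R}}$ with $\bm{w}\in \Sigma^{n,*}_{4s,c}$,
\begin{equation*}
\|\bm{\Delta u}\|^2 \;=\; \tfrac{1}{\kappa^2m^2}[\Re(\bm{\tau}^*\bm{\Phi w})]^2 + \tfrac{\hat{t}^2}{m}\|\Im(\diag(\bm{\tau}^*)\bm{\Phi w})\|^2.
\end{equation*}
Cauchy--Schwarz and $\|\bm{\tau}\|\leq\sqrt{m}\,\tau_0$ control the first term by $\tfrac{\tau_0^2}{\kappa^2 m}\|\bm{\Phi w}\|^2$, while $\|\bm{\tau}\|_\infty\leq \tau_0$ controls the second by $\tfrac{\hat{t}^2\tau_0^2}{m}\|\bm{\Phi w}\|^2$. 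Applying the RIP of $\bm{\Phi}$ over $\Sigma^n_{4s,c}$ (Lemma \ref{lem3}) yields $\tfrac{1}{m}\|\bm{\Phi w}\|^2\lesssim 1$ and therefore $\|\bm{\Delta u}\|\leq C\tau_0\|\bm{u}\|$ uniformly over all $\bm{u}\in \Sigma^{2n}_{4s,r}$ and all admissible $\bm{\tau}$. A careful accounting of the constants (using $\hat{t}^2=\tfrac{2}{3}$, $\kappa^2=\tfrac{\pi}{2}$, and the refined RIP constant $\tfrac{5}{2}$ from \eqref{a3.16}) gives the slightly tighter bound $\|\bm{\Delta u}^\star\|\leq \sqrt{2}\,\tau_0$ that matches the stated $\tilde{\tau}$.

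For step (ii), since $\bm{u}^\star$ is $(2s)$-sparse and Lemma \ref{lem4} provides $\|\bm{u}^\star\|=\tfrac{\kappa m}{\|\bm{\Phi x}\|_1}\leq \tfrac{1}{1-\delta'}$, the previous paragraph delivers $\|\bm{A}_{\bm{\breve{z}},c}\bm{u}^\star-\bm{e}_1\|=\|\bm{\Delta u}^\star\|\leq \tilde{\tau}$, i.e.\ feasibility. For the RIP transfer, apply $|a^2-b^2|=|a-b|(a+b)$ with $a=\|\bm{A}_{\bm{\breve{z}},c}\bm{u}\|$, $b=\|\bm{A}_{\bm{z},c}\bm{u}\|$ on unit $(4s)$-sparse $\bm{u}$: $|a-b|\leq \|\bm{\Delta u}\|\lesssim\tau_0$ and $a+b\leq 2\sqrt{\tfrac{4}{3}+\delta}$, so $\delta^{\bm{A}_{\bm{\breve{z}},c}}_{4s}\leq \tfrac{1}{3}+\delta+C'\tau_0$, which stays below $\tfrac{\sqrt{2}}{2}$ for sufficiently small $\tau_0$. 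Then \cite[Thm.~2.1]{cai2013sparse} gives $\|\bm{\hat{u}}-\bm{u}^\star\|\leq c\,\tilde{\tau}\lesssim \tau_0$, and $\|\bm{\hat{x}}-\bm{x}^\star\|=\|\bm{\hat{u}}-\bm{u}^\star\|$ concludes the proof. The main obstacle is really just the bookkeeping in step (i): one must be careful that the combination of the two contributions to $\|\bm{\Delta u}^\star\|$, together with the Lemma \ref{lem4} correction on $\|\bm{x}^\star\|$, still fits within the clean choice $\tilde{\tau}=\sqrt{2}\tau_0$; and that the RIC inflation $C'\tau_0$ keeps the Cai--Zhang threshold $\tfrac{\sqrt{2}}{2}$ satisfied. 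Uniformity over $\bm{x}$ is free, since Theorem \ref{theorem1} is already uniform and $\bm{\Delta}$ does not depend on $\bm{x}$.
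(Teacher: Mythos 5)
Your proposal follows essentially the same route as the paper: decompose $\bm{A}_{\bm{\breve{z}},c}=\bm{A}_{\bm{z},c}+\bm{A}_{\bm{\tau},c}$ by linearity, bound $\|\bm{A}_{\bm{\tau},c}\bm{u}\|=O(\tau_0)$ uniformly over $(4s)$-sparse $\bm{u}$ and admissible $\bm{\tau}$ using Lemmas \ref{lem3}--\ref{lem4}, deduce feasibility of $\bm{u}^\star=[\bm{x}^\star]_\mathbb{R}$ in \eqref{stableprog} and RIC stability of $\bm{A}_{\bm{\breve{z}},c}$, then invoke \cite[Thm.~2.1]{cai2013sparse}. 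The only local deviation is that you control the first block row by Cauchy--Schwarz $|\Re(\bm{\tau}^*\bm{\Phi w})|\le\|\bm{\tau}\|\,\|\bm{\Phi w}\|$ whereas the paper uses the H\"older pairing $|\Re(\bm{\tau}^*\bm{\Phi w})|\le\|\bm{\tau}\|_\infty\|\bm{\Phi w}\|_1$; the latter is tidier because for $\bm{w}=\bm{x}^\star$ one has $\|\bm{\Phi x}^\star\|_1=\kappa m$ exactly (so that row contributes exactly $\tau_0^2$) and for general $\bm{w}$ the ratio $\|\bm{\Phi w}\|_1/(\kappa m)$ is controlled directly by Lemma \ref{lem4}, whereas your route trades this for the factor $\kappa^{-2}\sup_{\bm{w}}m^{-1}\|\bm{\Phi w}\|^2$ and, with the constants you cite, would not actually land on $\tilde{\tau}=\sqrt{2}\tau_0$. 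This is a harmless constant-level slip --- any $\tilde{\tau}=O(\tau_0)$ suffices for the theorem --- and the structure of your argument, including the observation that $\bm{A}_{\bm{\tau},c}$ is $\bm{x}$-independent so uniformity is inherited from Theorem \ref{theorem1}, is exactly the paper's.
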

    	
	\noindent{\it Proof.} We will use $[\cdot]_\mathbb{R}$, $[\cdot]_\mathbb{C}$ introduced in (\ref{c2r}), (\ref{r2c}). Our proof relies on \cite[Thm. 2.1]{cai2013sparse}, a stable recovery guarantee for linear compressive sensing. It states that it is sufficient for us to prove $\|\bm{A}_{\bm{\breve{z}},c}[\bm{x^\star}]_\mathbb{R}-\bm{e}_1\|\leq \tilde{\tau}$ and that $\bm{A}_{\bm{\breve{z}},c}$ has $4s$ order RIC lower than $\frac{\sqrt{2}}{2}-c_0$ for some absolute constant $c_0>0$. For clarity, we present the proof in two steps.

\noindent{\textbf{(Step 1.) Prove $\|\bm{A}_{\bm{\breve{z}},c}[\bm{x^\star}]_\mathbb{R}-\bm{e}_1\|\leq \tilde{\tau}=\sqrt{2}\tau_0$}}

 By (\ref{3.6}) $\bm{A}_{\bm{z},c}$ linearly depends on $\bm{z}$, thus we can write $\bm{A}_{\bm{\breve{z}},c} = \bm{A}_{\bm{z},c}+ \bm{A}_{\bm{\tau},c}$, which together with $\bm{A}_{\bm{z},c}[\bm{x^\star}]_\mathbb{R}=\bm{e}_1$ gives \begin{equation}
     \begin{aligned}
     \|\bm{A}_{\bm{\breve{z}},c}[\bm{x}^\star]_\mathbb{R}-\bm{e}_1\| =& \|\bm{A}_{\bm{\breve{z}},c}[\bm{x}^\star]_\mathbb{R}-\bm{A}_{\bm{{z}},c}[\bm{x}^\star]_\mathbb{R}\|\\= &\|\bm{A}_{\bm{\tau},c}[\bm{x}^\star]_\mathbb{R}\|.
     \end{aligned}
 \end{equation}Furthermore, some algebra   estimates that 
	\begin{equation}
	    \begin{aligned}
	    \label{5.3}
	    &\|\bm{A}_{\bm{\tau},c}[\bm{x}^\star]_\mathbb{R}\|^2 \\ =& \frac{1}{\kappa^2m^2} [\Re(\bm{\tau}^*\bm{\Phi x}^\star)]^2 + \frac{2}{3m}\|\Im(\diag(\bm{\tau}^*)\bm{\Phi x}^\star)\|^2 \\
	     \stackrel{(i)}{\leq}& \tau_0^2 \frac{\|\bm{\Phi x}^\star\|_1^2}{\kappa^2m^2} + \frac{2\tau_0^2}{3m}\|\bm{\Phi x}^\star\|^2\\=&\tau_0^2 + \frac{2\tau_0^2}{3}\|\bm{x}^\star\|^2 \cdot \left(\sup_{\bm{u}\in\Sigma^{n,*}_{s,c}}\frac{1}{m}\|\bm{\Phi u}\|^2\right).
	    \end{aligned}
	\end{equation}
 Note that we use $\|\bm{\tau}\|_\infty \leq \tau_0$  in $(i)$.
	By Lemma \ref{lem4}, for any $\delta>0$ the following holds with probability at least $1-2\exp(-\Omega(\delta^2m))$: $$\|\bm{x}^\star\|^2 \leq  \frac{\kappa^2m^2}{\inf_{\bm{w}\in \Sigma^{n,*}_{s,c}}\|\bm{\Phi w}\|_1^2}\leq \frac{1}{(1-\delta)^2}.$$ By Lemma \ref{lem3}, $\frac{1}{m}\|\bm{\Phi u}\|^2 \leq 1+\delta$ for any $\delta >0$ with probability at least $1-2\exp(-c(\delta) m)$. Thus, taking $\delta$ as sufficiently small constant, with probability at least $1-4\exp(-\Omega(m))$ we have $\|\bm{A}_{\bm{\tau},c}[\bm{x}^\star]_\mathbb{R}\|\leq \sqrt{2}\tau_0$, which implies $\|\bm{A}_{\bm{\breve{z}},c}[\bm{x}^\star]_\mathbb{R}-\bm{e}_1\|\leq \sqrt{2}\tau_0$ and justifies the constraint used in (\ref{stableprog}).

 \noindent{\textbf{(Step 2.)    Establish the RIP of  $\bm{A}_{\bm{\breve{z}},c}$}}

 Based on Theorem \ref{theorem1}, we   know that for any fixed $\delta_0\in (0,\frac{1}{3})$, with high probability $\bm{A}_{\bm{z},c}$ possesses $4s$ order RIP smaller than $\frac{1}{3}+\delta_0$. As we have $\bm{A}_{\bm{\breve{z}},c}=\bm{A}_{\bm{z},c}+\bm{A}_{\bm{\tau},c}$, our idea is to control the effect of $\bm{A}_{\bm{\tau},c}$.

 Specifically, for any $\bm{u}\in \Sigma^{2n,*}_{4s,r}$ we    can estimate as \begin{equation}
	    \begin{aligned}
	    \label{5.4}
	    &\|\bm{A}_{\bm{\tau},c}\bm{u}\|^2 \\=& \frac{1}{\kappa^2m^2}[\Re(\bm{\tau}^*\bm{\Phi}[\bm{u}]_\mathbb{C})]^2 + \frac{2}{3m}\|\Im (\diag(\bm{\tau}^*)\bm{\Phi}[\bm{u}]_\mathbb{C})\|^2\\
	    \stackrel{(i)}{\leq} &\tau_0^2\sup_{\bm{w}\in \Sigma^{n,*}_{4s,c}}\frac{\|\bm{\Phi}\bm{w}\|_1^2}{\kappa^2m^2} +\frac{2\tau_0^2}{3m}\sup_{\bm{w}\in \Sigma^{n,*}_{4s,c}} \|\bm{\Phi w}\|^2 \stackrel{(ii)}{<} 2\tau_0^2,
	    \end{aligned}
	\end{equation}
	where we use $\|\bm{\tau}\|_\infty\leq \tau_0$ in $(i)$, and $(ii)$ is because $$\sup_{\bm{w}\in \Sigma^{n,*}_{4s,c}}\frac{\|\bm{\Phi w}\|_1^2}{\kappa^2m^2}~~\mathrm{and}~~\frac{1}{m}\sup_{\bm{w}\in \Sigma^{n,*}_{4s,c}}\|\bm{\Phi w}\|^2$$ can be made sufficiently close to $1$ by Lemma \ref{lem3}, Lemma \ref{lem4} (up to a simple modification of proof to accommodate $4s$-sparse signals). As all the involved ingredients are uniform,   (\ref{5.4}) holds uniformly for all $\bm{\tau}$ (satisfying $\|\bm{\tau}\|_\infty\leq \tau_0$), $\bm{u}\in \Sigma^{n,*}_{4s,c}$.

 Now we are ready to estimate the $4s$ order RIC of $\bm{A}_{\bm{\breve{z}},c}$. Pick any $\bm{u}\in \Sigma^{2n,*}_{4s,r}$, using $\delta_0<\frac{1}{3}$ we can proceed as follows: \begin{equation}
	    \begin{aligned}\label{upperrip}
	     \|\bm{A}_{\bm{\breve{z}},c}\bm{u}\|^2 &\leq (\|\bm{A}_{\bm{z},c}\bm{u}\| + \|\bm{A}_{\bm{\tau},c}\bm{u}\|)^2\\&\leq (\sqrt{\frac{4}{3}+\delta_0}+\sqrt{2}\tau_0)^2 \\
	    &\leq \frac{4}{3}+\delta_0 + 2\tau_0^2 +4\tau_0 \leq \frac{4}{3}+\delta_0+ 5\tau_0.
	    \end{aligned}
	\end{equation}
	On the other hand, similarly we have \begin{equation}
	    \begin{aligned}\label{lowerrip}
	     \|\bm{A}_{\bm{\breve{z}},c}\bm{u}\|^2&\geq (\|\bm{A}_{\bm{z},c}\bm{u}\|-\|\bm{A}_{\bm{\tau},c}\bm{u}\|)^2\\&\geq (\sqrt{\frac{2}{3}-\delta_0}-\sqrt{2}\tau_0)^2\\&= \frac{2}{3}-\delta_0+2\tau_0^2-2\tau_0\sqrt{\frac{4}{3}-2\delta_0}\\&\geq \frac{2}{3}-\delta_0  -2\sqrt{2}\tau_0.
	    \end{aligned}
	\end{equation}
	Combining (\ref{upperrip}) and (\ref{lowerrip}), by taking sufficiently small $\delta_0$ and $\tau_0$, 
 $\bm{A}_{\bm{\breve{z}},c}$ has $4s$ order RIC   smaller than $\frac{\sqrt{2}}{2}-c_0$ where $c_0\in (0,\frac{\sqrt{2}}{2})$ is some absolute constant. This uniformly holds true for all $\bm{x}\in \Sigma^{n}_{s,c}$ and arbitrary noise $\bm{\tau}$ satisfying $\|\bm{\tau}\|_\infty \leq \tau_0$. 
 As stated at the beginning of this proof, a direct application of   \cite[Thm. 2.1]{cai2013sparse} leads to the desired result. \hfill $\square$

	  Theorem \ref{theorem3} can be viewed as an extension of Theorem \ref{theorem1} in that it 
   precisely recovers Theorem \ref{theorem1} when $\tau_0=0$.
	
	\subsection{Full Reconstruction via Gaussian Dithering}
	Although reconstruction of $\|\bm{x}\|$ is hopeless from $\bm{z} = \sign(\bm{\Phi} \bm{x})$, a question of both theoretical and practical interest asks how one can incorporate  norm reconstruction  into PO-CS. We propose here a simple way for this purpose, which is to add   dither before capturing the phases, and hence the reconstruction would be based on the phases of affine   measurements. Specifically, we adopt  $\bm{\tau}_{d}\sim \mathcal{N}(\bm{0},\rho^2\bm{I}_m)+  \mathcal{N}(\bm{0},\rho^2\bm{I}_m)\ii$ as   random Gaussian dither with dithering scale $\rho$ ($\rho>0$), and then change the original PO-CS model (\ref{a3.1}) to dithered PO-CS \begin{equation}
	\label{5.7}
	    \bm{z}_d = \sign(\bm{\Phi x}+ \bm{\tau}_d).
	\end{equation} 
	We shall see shortly that the dithered phases $\bm{z}_d$ manage to encode the norm information of $\bm{x}$.  Our analysis will be built upon Theorem \ref{theorem1}, and specifically we show   uniform exact reconstruction with norm information can be achieved  under a near optimal sample complexity.

	\begin{theorem}
	\label{theorem4}
	{\rm (Uniform Full Recovery with Norm)\textbf{.}}
	Consider   dithered PO-CS (\ref{5.7}) with $\bm{\Phi}\sim \mathcal{N}^{m\times n}(0,1)+\mathcal{N}^{m\times n}(0,1)\ii$, $\bm{\tau}_d \sim \mathcal{N}^{m\times 1}(0,\rho^2)+\mathcal{N}^{m\times 1}(0,\rho^2)\ii$, $\rho$ is some \textcolor{black}{fixed}, known positive dithering scale, and $\bm{\Phi}$ and $\bm{\tau}_d$ are independent. We now describe our reconstruction procedure: let $\bm{\tilde{\Phi}} = [\bm{\Phi}, \frac{\bm{\tau}_d}{\rho}]\in \mathbb{C}^{m\times (d+1)}$ and construct $\bm{A}_{\bm{z}_d,c}\in \mathbb{R}^{(m+1)\times (2n+2)}$ as in (\ref{3.6}) using $\bm{z}_d$ and $\bm{\tilde{\Phi}}$,   we obtain $\bm{\hat{u}}\in \mathbb{R}^{2n+2}$ by solving\begin{equation}\nonumber
	    \bm{\hat{u}} =  \mathop{\arg\min}\limits_{\bm{u}\in \mathbb{R}^{2n+2}} \|\bm{u}\|_1,~~\text{s.t. } \bm{A}_{\bm{z}_d,c}\bm{u}= \bm{e}_1 ;
	\end{equation} then we let $\bm{x^\sharp}: = [\bm{\hat{u}}]_\mathbb{C}\in \mathbb{C}^{n+1}$; denote the $(n+1)$-th entry    of $\bm{x^\sharp}$ by $t^\sharp $ and we finally take $\bm{\hat{x}} = [\frac{\rho}{t^\sharp}\bm{x^\sharp}]^{[1:n]}$ as the reconstructed signal. We have the following uniform exact recovery guarantee: if $m\geq  Cs\log\big(\frac{n^2\log(mn)}{s}\big)$ for some sufficiently large $C$, with probability at least $1-2(mn)^{-9}-c_1\exp(-c_2m)$, $\bm{\hat{x}}= \bm{x}$ holds uniformly   for all $\bm{x}\in \Sigma^n_{s,c}$. 
	\end{theorem}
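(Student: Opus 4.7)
The plan is to reduce Theorem \ref{theorem4} directly to Theorem \ref{theorem1} by interpreting the dithered observations as phase-only measurements of an augmented sparse signal. The key observation is the identity
\begin{equation}\nonumber
\bm{\Phi x}+\bm{\tau}_d = [\bm{\Phi},\tfrac{\bm{\tau}_d}{\rho}]\begin{bmatrix}\bm{x}\\ \rho\end{bmatrix} = \bm{\tilde{\Phi}}\bm{\tilde{x}},
\end{equation}
where $\bm{\tilde{x}} := [\bm{x};\rho]\in \mathbb{C}^{n+1}$. Thus $\bm{z}_d = \sign(\bm{\tilde{\Phi}}\bm{\tilde{x}})$ is exactly a (noiseless) PO-CS observation with sensing matrix $\bm{\tilde{\Phi}}$ and underlying signal $\bm{\tilde{x}}$.

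First I would verify the two hypotheses needed for Theorem \ref{theorem1}. On the distributional side, since $\bm{\Phi}$ and $\bm{\tau}_d$ are independent and $\tfrac{\bm{\tau}_d}{\rho}\sim \mathcal{N}^{m\times 1}(0,1)+\mathcal{N}^{m\times 1}(0,1)\ii$, all entries of $\bm{\tilde{\Phi}}$ are i.i.d.\ standard complex Gaussian, i.e., $\bm{\tilde{\Phi}}\sim \mathcal{N}^{m\times (n+1)}(0,1)+\mathcal{N}^{m\times (n+1)}(0,1)\ii$. On the sparsity side, $\bm{\tilde{x}}$ has at most $s+1$ non-zero entries, so $\bm{\tilde{x}}\in \Sigma^{n+1}_{s+1,c}$. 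Note crucially that the class of augmented signals $\{[\bm{x};\rho]:\bm{x}\in \Sigma^n_{s,c}\}$ is a subset of $\Sigma^{n+1}_{s+1,c}$, so the uniform guarantee from Theorem \ref{theorem1} over the full sparse set will a fortiori yield a uniform guarantee over this subset.

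Next I would invoke Theorem \ref{theorem1} applied to $(\bm{\tilde{\Phi}},\bm{\tilde{x}})$ with sparsity $s+1$ and ambient dimension $n+1$. The required sample size is
\begin{equation}\nonumber
m \gtrsim (s+1)\log\Big(\tfrac{(n+1)^2\log(m(n+1))}{s+1}\Big) \asymp s\log\Big(\tfrac{n^2\log(mn)}{s}\Big),
\end{equation}
which matches the stated hypothesis. Then with the claimed probability, the basis-pursuit program simultaneously reconstructs all such $\bm{\tilde{x}}$ up to a positive scaling, producing $\bm{x}^\sharp = \lambda\bm{\tilde{x}}$ for some $\lambda>0$ (specifically $\lambda = \tfrac{\kappa m}{\|\bm{\tilde{\Phi}}\bm{\tilde{x}}\|_1}$, as in the virtual-measurement construction leading to \eqref{3.6}).

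Finally I would read off $\bm{x}$ from $\bm{x}^\sharp$. Because the last entry of $\bm{\tilde{x}}$ is the \emph{known} positive real number $\rho$, the last entry of $\bm{x}^\sharp$ satisfies $t^\sharp = \lambda\rho>0$, which is strictly positive with full probability and thus makes the rescaling $\tfrac{\rho}{t^\sharp}\bm{x}^\sharp$ well-defined. A direct computation then gives $\tfrac{\rho}{t^\sharp}\bm{x}^\sharp = \bm{\tilde{x}}$, so the first $n$ entries recover $\bm{x}$ exactly, i.e., $\bm{\hat{x}}=\bm{x}$. The whole argument is a clean reduction, so there is no real technical obstacle; the only mild subtlety I would double-check is that the probability and sample-complexity bounds in Theorem \ref{theorem1} are uniform enough to cover this embedded subset (which they are, since Theorem \ref{theorem1} guarantees RIP of $\bm{A}_{\bm{z}_d,c}$ uniformly over all $\bm{\tilde{x}}\in \Sigma^{n+1}_{s+1,c}$).
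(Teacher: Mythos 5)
Your proposal is correct and follows essentially the same reduction as the paper: both view the dithered model as a noiseless PO-CS problem $\bm{z}_d=\sign(\bm{\tilde{\Phi}}\bm{x^\natural})$ for the augmented $(s+1)$-sparse signal $\bm{x^\natural}=[\bm{x};\rho]$, invoke Theorem~\ref{theorem1} to obtain $\bm{x^\sharp}=\lambda\bm{x^\natural}$ with $\lambda>0$, and then use the known last coordinate $\rho$ to resolve the scaling ambiguity via $\lambda=t^\sharp/\rho$. Your explicit checks that $\bm{\tilde{\Phi}}$ has i.i.d.\ standard complex Gaussian entries, that $t^\sharp=\lambda\rho>0$ is strictly positive (so the rescaling is well-defined), and that the $(s+1,n+1)$ sample complexity matches the stated bound are all correct and, if anything, slightly more careful than what the paper writes out.
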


	\noindent{\it Proof.} Based on Theorem \ref{theorem1}, the main idea of this proof is to view (\ref{5.7}) as a classical PO-CS problem (\ref{a3.1}). Specifically, (\ref{5.7}) is equivalent to \begin{equation}\nonumber
	    \bm{z}_d = \sign\left( \begin{bmatrix}
	        \bm{\Phi}&\frac{\bm{\tau}_d}{\rho}
	    \end{bmatrix} \begin{bmatrix}
	        \bm{x}\\ \rho
	    \end{bmatrix}\right) : = \sign(\bm{\tilde{\Phi}}\bm{x^\natural}),
	\end{equation}
	where $\bm{\tilde{\Phi}}\sim\mathcal{N}^{m\times (n+1)}(0,1)+\mathcal{N}^{m\times (n+1)}(0,1)\ii$, $\bm{x^\natural}\in \Sigma^{n+1}_{s+1,c}$. Then by Theorem \ref{theorem1}, under the conditions and reconstruction procedure stated in   Theorem \ref{theorem4}, for all $\bm{x}\in \Sigma^n_s$, $\bm{x^\sharp}$ exactly reconstruct $\bm{x^\natural}$ up to positive scaling, i.e., $\bm{x^\sharp} = \lambda \bm{x^\natural}$ for some $\lambda>0$. Note that by construction, the last entry of $\bm{x^\natural}$ equals to $\rho$, while by assumption the last entry of $\bm{x^\sharp}$ is $t^\sharp$. 
 Thus we know $\bm{x^\sharp} = \lambda \bm{x^\natural}$ holds with $\lambda = \frac{t^\sharp}{\rho}$. Combining with our choice of $\bm{\hat{x}}$, we obtain  $$\bm{\hat{x}}=\bm{\hat{x}} = \Big[\frac{\rho}{t^\sharp}\bm{x^\sharp}\Big]^{[1:n]}= \Big[\frac{1}{\lambda}\bm{x^\sharp}\Big]^{[1:n]} = [\bm{x^\natural}]^{[1:n]}=\bm{x}.$$ The proof is concluded. \hfill$\square$

	\begin{rem}
   \label{rem4}
   {\rm(Related Work)} A parallel result for 1-bit compressive sensing can be found in  \cite[Thm. 4]{knudson2016one}, which   is obtained by viewing the dithered model as the original model and then applying the uniform recovery guarantee in \cite{plan2013one}. In particular, their result delivers approximate recovery and requires a known upper bound on $\|\bm{x}\|$, while our Theorem \ref{theorem4} achieves exact reconstruction and is free of the prior estimate on $\|\bm{x}\|$. Restricted to the phase-only scenario,    \cite[Sec. SM3]{chen2022signal} (supplementary material) studied full signal reconstruction   from the phases of affine measurements. Specifically, \cite[Thm. SM3.3]{chen2022signal} (supplementary material)  states that all $\bm{x}\in \mathbb{C}^n$ can be exactly reconstructed (with norm information) from the phases of $3n$ affine measurements. Note that Theorem \ref{theorem4} achieves the same goal for all sparse signals using a measurement number proportional to the sparsity $s$ rather than $n$. 
	\end{rem}
	Analogous to the analysis in Theorem \ref{theorem3}, it is possible to establish uniform stable recovery guarantee for dithered PO-CS. We do not pursue this in the present paper. 
	\section{Discussions}
	\label{section7}
	Some discussions are in order. Specifically, our uniform exact reconstruction guarantee can be generalized to signal sets with low covering dimension, whereas our result and covering-based approach  suffer from some limitation. 
	\subsection{Generalization of Signal Structure}
	\label{lowcovering}
	We previously study sparse signals ($\bm{x}\in \Sigma^n_{s,c}$) and low-rank matrices ($\bm{x}\in \mathcal{M}^{n_1,n_2}_{r,c}$) as two canonical examples that lie in the central of compressive sensing theory. However, a more modern way to study compressive sensing is to assume $\bm{x}\in \mathcal{K}$ where $\mathcal{K}$ is a general low-complexity set beyond sparsity and low-rankness.

	In this part, we further discuss other signal structures for which the developed techniques can yield a uniform reconstruction guarantee under near optimal sample complexity.  Without a concrete structure like sparsity, we need to use some geometric quantities to characterize the intrinsic dimension of a low-complexity signal set.    
	
	\subsubsection{Covering Dimension and Gaussian Width}

	Inspired by   \cite[Definition 5.1]{dirksen2016dimensionality}, we define covering dimension as follows. 
	
	\begin{definition}\label{defi3}
    Assume $\mathcal{K}\subset \mathbb{R}^n/\mathbb{C}^n$ has diameter $\Delta_d(\mathcal{K}): = \sup_{\bm{u},\bm{v}\in \mathcal{K}}\|\bm{u}-\bm{v}\|$. We say $\mathcal{K}$ has covering dimension $K>0$ with parameter $c_0\geq 1$ and base covering $N_0>0$ if for all $0<\epsilon\leq \Delta_d(\mathcal{K})$, there exists $\mathcal{G}_\epsilon$ as $\epsilon$-net of $\mathcal{K}$ satisfying $|\mathcal{G}_\epsilon|\leq N_0\big(\frac{c_0\Delta_d(\mathcal{K})}{\epsilon}\big)^K$.
	\end{definition} 
	\begin{rem}
	   From Lemmas \ref{lem2}, \ref{lrnumber}, $\Sigma_{s,r}^{n,*}$ (resp. $(\mathcal{M}^{n_1,n_2}_{r,r})^*$) has covering dimension $O(s)$ (resp. $O(r(n_1+n_2))$). In fact, many other structured signal sets admit covering dimension much lower than the ambient dimension, e.g.,   finite union of subspace, group sparsity or   block structured sparsity. While we follow \cite{dirksen2016dimensionality} and adopt the treatment via covering dimension, similar notions with different appearance are available in \cite{oymak2015near,xu2020quantized,jacques2017small}. Readers may  consult these references for more examples of signal set with low covering dimension.
	\end{rem}

We  switch to another quantity called Gaussian width that often captures the intrinsic dimension of a set stably and accurately (e.g., \cite{plan2012robust,chandrasekaran2012convex}). The Gaussian width for $\mathcal{K}\subset \mathbb{R}^n$ is given by \begin{equation}\nonumber
        \omega(\mathcal{K}) = \mathbbm{E}\sup_{\bm{t}\in \mathcal{K}} \big<\bm{g},\bm{t}\big>,~\text{where}~\bm{g}\sim \mathcal{N} (\bm{0},\bm{I}_n).
    \end{equation}
    To define Gaussian width of a complex signal set,  in this work we identify    $\mathcal{K}\subset \mathbb{C}^n$ with 
    $[\mathcal{K}]_\mathbb{R}\subset \mathbb{R}^{2n}$ and define  
      \begin{equation}\label{complexwidth}
          \omega(\mathcal{K})=\omega\big([\mathcal{K}]_\mathbb{R}\big).
      \end{equation} 
      Here, we briefly provide some relations between covering dimension and Gaussian width. 
      We consider $\mathcal{K}_0$   contained in the unit Euclidean ball and assume it
      has covering dimension $K$ for parameter $(c_0,N_0)$. Firstly, Dudley's inequality (e.g., \cite[Thm. 8.1.10]{vershynin2018high}) implies 
      $\omega(\mathcal{K}_0) = O\Big(\sqrt{ K\log(N_0^{1/K}ec_0)}\Big)$, which further leads to \begin{equation}
          \omega(\mathcal{K}_0)=\tilde{O}(\sqrt{K})
          \label{gauviacover}
      \end{equation} by omitting logarithmic factors. Furthermore,   the covering number of $\mathcal{K}_0$ regarding covering radius $\epsilon$, formally denoted by $\mathscr{N}(\mathcal{K}_0,\epsilon)$, is defined as the minimum cardinality of an $\epsilon$-net of $\mathcal{K}_0$. Using an equivalent notion of  Kolmogorov entropy defined  as $\mathscr{H}(\mathcal{K}_0,\epsilon)=\log\mathscr{N}(\mathcal{K}_0,\epsilon)$ \cite{kolmogorov1959varepsilon}, Definition \ref{defi3} just states that \begin{equation}
          \label{featureepi}\mathscr{H}(\mathcal{K}_0,\epsilon)\leq K\log\Big(\frac{2N_0^{1/K}c_0}{\epsilon}\Big).
      \end{equation} The key feature here is that $\mathscr{H}(\mathcal{K}_0,\epsilon)$ depends on $\epsilon$ merely in a logarithmic manner, which could be much tighter than Sudakov's inequality \begin{equation}\nonumber
          \mathscr{H}(\mathcal{K}',\epsilon)\leq \frac{C\omega(\mathcal{K}')}{\epsilon^2}\label{sudakov}
      \end{equation} that holds for arbitrary $\mathcal{K}'\subset \mathbb{R}^n/\mathbb{C}^n$ (e.g., \cite[Thm. 8.1.13]{vershynin2018high}).

	\subsubsection{Generalization}
	 We  argue that our proof is extendable to signal sets with low covering dimension. Specifically, we assume $\bm{x}\in \mathcal{K}$ for some symmetric cone $\mathcal{K}$ (i.e., $\bm{u}\in \mathcal{K}$ implies $\lambda \bm{u}\in \mathcal{K}$ for all $\lambda \in \mathbb{R}$) and focus on the complex case. As in (\ref{3.6}), we can reformulate PO-CS as  a (real) linear compressive sensing problem
	 \begin{equation}
   \label{generefo}
       \begin{aligned}
       &\text{find }\bm{u}\in[\mathcal{K}]_\mathbb{R},\text{ s.t. }\bm{A}_{\bm{z},c}\bm{u} = \bm{e}_1,\text{ with }\\ \bm{A}_{\bm{z},c} = &\begin{bmatrix}
            \frac{ 1}{\kappa m}\cdot \Re\big(\bm{z^*\Phi}\big) & - \frac{ 1}{\kappa m}\cdot\Im\big(\bm{z^*\Phi}\big) \\
           \sqrt{\frac{2}{3m}}\Im\big(\diag(\bm{z^*})\bm{\Phi}\big) &  \sqrt{\frac{2}{3m}}\Re\big(\diag(\bm{z^*})\bm{\Phi}\big)
       \end{bmatrix}.
       \end{aligned}
   \end{equation}
	We stick to the RIP-based approach and aim to prove $\bm{A}_{\bm{z},c}$ respects RIP over some symmetric cone $\mathcal{U}$   simultaneously for all $\bm{x}\in \mathcal{K}$, for which it is enough to show 
	\begin{equation}\label{genetarget}
	    \sup_{\bm{x}\in \mathcal{K}^*}\sup_{\bm{u}\in (\mathcal{U}')^*}f(\bm{x},\bm{u})< \Delta_0
	\end{equation}
    	where $f(\bm{x},\bm{u})$ is defined in (\ref{short1}), $\Delta_0\in (0,1)$ is some threshold, $\mathcal{K}^*=\mathcal{K}\cap \mathbb{S}^{n-1}_c$, $\mathcal{U}'\subset \mathbb{C}^n$ is some symmetric cone satisfying $[\mathcal{U}']_\mathbb{R}\supset \mathcal{U}$, and we write $(\mathcal{U}')^*= \mathcal{U}'\cap \mathbb{S}^{n-1}_{c}$. Note that we can restrict in the supremum that $\bm{u}\in \mathbb{S}_c^{n-1}$ due to the homogeneity of $\bm{u}$ in the definition of RIP.   
    	Here, some cautiousness is needed to select $\mathcal{U}'$:

    	1) To simply confirm the {\it possibility} of uniform exact reconstruction, it is always sufficient to take $\mathcal{U}'= \mathcal{K}-\mathcal{K} $ that satisfies $[\mathcal{U}']_\mathbb{R}\supset \mathcal{U}:= [\mathcal{K}-\mathcal{K}]_\mathbb{R}$.

    	2) Practically, we want to achieve   uniform exact reconstruction via some tractable algorithm. In this case, we should select $(\mathcal{U}',\Delta_0)$ such that there exists a tractable algorithm which exactly solves (\ref{generefo}) under the RIP of $\bm{A}_{\bm{z},c}$ over $\mathcal{U}$ for some distortion $\Delta_0$. We require $\Delta_0>\frac{1}{3}$ in the complex case as we can only achieve a RIP distortion of $\frac{1}{3}+\delta$.

    	 We pause to demonstrate 2) with the concrete example of $s$-group-sparse signals. Given $g_1,g_2,..,g_N\subset [n]$ as non-overlapping groups (i.e., $g_i\cap g_j=\varnothing$ for $i\neq j$), then $\bm{x}\in \mathbb{R}^n/\mathbb{C}^n$ is $s$-group-sparse if $\sum_{i=1}^N \mathbbm{1}(\bm{x}^{g_i})\leq s$. To reconstruct $s$-group-sparse $\bm{x}$ from $\bm{A}\in \mathbb{R}^{m\times n}$ and $\bm{y}=\bm{Ax}$ one can minimize the group norm \begin{equation}\nonumber
    	     \|\bm{x}\|_g:= \begin{cases}\sum_{i=1}^N \|\bm{x}^{g_i}\|,~~~\text{if }\mathrm{supp}(\bm{x})\subset \cup_{i\in[N]}g_i,\\
    	     \infty,~~~~~~~~~~~~~~\text{otherwise}.
    	     \end{cases}
    	 \end{equation}  
    	 under the linear constraint from observations. This program exactly recovers $\bm{x}$ as long as $\bm{A}$ respects RIP over the set of $(2s)$-sparse signals   with distortion lower than $\frac{\sqrt{2}}{2}$ \cite[Sec. 4]{traonmilin2018stable}. To analyse PO-CS of $s$-group-sparse $\bm{x}\in \mathbb{C}^n$ regarding some non-overlapping groups $\{g_1,...,g_N\}$, we first observe that $[\bm{x}]_\mathbb{R}$ is $(2s)$-group-sparse regarding $\{g_1,...,g_N,g_1+n,...,g_N+n\}$. Thus, we can take $\mathcal{U}$ as  the set of all $(4s)$-group-sparse (real) signals (regarding $\{g_1,...,g_N,g_1+n,...,g_N+n\}$), while $\mathcal{U}'$ as   the set of all $(4s)$-group-sparse (complex) signals (regarding $\{g_1,...,g_N\}$).

	Recall that we assume $\bm{x}\in \mathcal{K}$ for some symmetric cone.  We claim that, our methodology can derive uniform exact reconstruction as long as $\mathcal{K}_-^*:=(\mathcal{K}-\mathcal{K})\cap \mathbb{S}^{n-1}_c$   and $(\mathcal{U}'_-)^*:=(\mathcal{U}'-\mathcal{U}')\cap\mathbb{S}^{n-1}_c$   have {\it{low covering dimension}} (note that $\mathcal{K}^*\subset \mathcal{K}_-^*$, $(\mathcal{U}')^*\subset (\mathcal{U}'_-)^*$). Consequently, our uniform exact reconstruction guarantee generalizes to many other structures such as finite union of subspace \cite{eldar2009robust}, group sparsity or block structured sparsity \cite{eldar2010block,ayaz2016uniform,traonmilin2018stable},  cosparse signals \cite{nam2013cosparse,giryes2014greedy}, to name just a few.

	To be more precise, regarding some other secondary parameters we assume $\mathcal{K}_-^*$ and $(\mathcal{U}_-')^*$ have  covering number of scaling $O(K)$ for some $K$ that is order-wisely lower than the ambient dimension $n$. To justify what we claimed in the last paragraph,  we will demonstrate that  one can prove (\ref{genetarget})  with a sample size of $m=\tilde{O}(K)$.
	This can be done by suitably adjusting the   technical ingredients used in the proof of Theorem \ref{theorem1}:

	1) For Lemma  \ref{lem3} concerning RIP of the original sensing matrix, one can show (\ref{a3.16}) remains valid for $\bm{u}\in \mathcal{K}_-^*$ and $\bm{u}\in (\mathcal{U}_-')^*$ with a sample size of  $m=O(\omega^2(\mathcal{K}^*_-)+\omega^2(\mathcal{U}')) =\tilde{O}(K)$ (\ref{gauviacover}) by using the main result in  \cite{mendelson2007reconstruction}.\footnote{Also see      \cite[Thm. 2.2]{jacques2021importance},     \cite[Thm. 2.1]{mendelson2008uniform}, as well as Fact \ref{fact1} of the present work.}  This is recurring in the whole proof, and sometimes some trivial modification is required, e.g.,  one should use $\bm{u_x}^\bot=\bm{u}-\Re\big<\bm{u},\bm{x}\big>\bm{x}$ and $\sup_{\bm{u}\in \mathcal{K}^*_- \cup (\mathcal{U}'_-)^*}\|\bm{\Phi u}\|=O(\sqrt{m})$ to proceed (\ref{3.28add}) from $(ii)$ to $(iii)$, rather than using (\ref{lem5impli}).

	2) Our proof for Lemma \ref{lem4}  directly works for $\bm{w}\in\mathcal{K}$ where $\mathcal{K}$ has low covering dimension. More specifically, (\ref{3.14}) is still valid for $\bm{w}\in \mathcal{K}^*$ using a measurement number of $m=\tilde{O}(\delta^{-2}K)$. This will be used to control   $f_1^\|(\bm{x},\bm{u})$ for analysing the parallel part (Corollary \ref{coro1}). 
	
	3) (\ref{a4.9})  in the proof of Lemma \ref{lem10} utilizes the sparsity to show  $\|\bm{\Phi}(\bm{\hat{x}}-\bm{\tilde{x}})\|\leq \eta$. For a general $\mathcal{K}^*$ with $O(K)$ covering dimension, we can use a finer  $\hat{\delta}$-net with $\hat{\delta}\asymp \frac{\eta}{\sqrt{n}}$ for covering $\mathcal{K}^*$. Combined with a universal bound $\max_{k\in [m]}\|\bm{\Phi}_k\|=O(\sqrt{n})$ that holds with high probability, one still has $\|\bm{\Phi}(\bm{\hat{x}}-\bm{\tilde{x}})\|_\infty\leq (\max_{k\in [m]}\|\bm{\Phi}_k\|)\cdot\|\bm{\hat{x}}-\bm{\tilde{x}}\|=\eta$. On the other hand, using such a finer net at worst induces some $\log n$ factors to the sample complexity  (\ref{4.3}) due to the inessential dependence of $\mathscr{H}(\mathcal{K}^*,\epsilon)$ on $\epsilon$  (\ref{featureepi}). Overall, $\sup_{\bm{x}\in \mathcal{K}^*}|\mathcal{J}_{\bm{x}}|=O(\beta m)$ holds true with measurement number $m=\tilde{O}(\beta^{-2}K)$.

	4) Instead of the operator norm bounds, 
	Lemma \ref{lem11} should be changed to a bound \begin{equation}
	    \label{newlem10}\begin{aligned}
	    \sup_{|\mathcal{S}|=\beta m}\sup_{\bm{u}\in (\mathcal{U}')^*} \|\bm{\Phi}^\mathcal{S}\bm{u}\| &= \sup_{\bm{v}\in \Sigma^{m,*}_{\beta m,c}}\sup_{\bm{u}\in (\mathcal{U}')^*} \Re\big(\bm{v}^*\bm{\Phi} \bm{u}\big)\\&=\tilde{O}(\sqrt{\beta m+K}),\end{aligned}
	\end{equation}
 which will be used once in (\ref{a3.63}). Note that (\ref{sparsede}) involves the sparse decomposition $\frac{\bm{\hat{b}}-\bm{\tilde{b}}}{\|\bm{\hat{b}}-\bm{\tilde{b}}\|}=\bm{b}_1+\bm{b}_2$ and hence does not directly generalize. To circumvent the issue, one can take  $\mathcal{G}_2$ as a  $\sqrt{\beta}$-net of $(\mathcal{U}')^*$ and then obtain  $\Re(\bm{\tilde{a}}^*\bm{\Phi}(\bm{\hat{b}}-\bm{\tilde{b}}))\leq \|\bm{\tilde{a}}\|\|\bm{\Phi}(\bm{\hat{b}}-\bm{\tilde{b}})\|\lesssim \sqrt{\beta m}$, which is then dominated by the main bound (\ref{overnet1}). Compared to original $\frac{1}{8}$-net, using a $\sqrt{\beta}$-net  $\mathcal{G}_2$ only induces logarithmic changes to the bound due to (\ref{featureepi}).

 5) With Lemmas \ref{lem3}-\ref{lem11} in place,  nearly all arguments in the main proof (specifically Lemmas \ref{lem7}, \ref{lem9}) do not depend on the sparsity and thus directly transfer to general signal structure. In fact, the only  modification is in (\ref{3.43}) that involves the estimate (\ref{3.41}) built upon sparsity. The remedy is the same as  3): we can apply the universal estimate $\sup_{\bm{w}\in \mathbb{S}^{n-1}_c}\|\bm{\Phi w}\|\leq \max_{k\in [m]}\|\bm{\Phi}_k\|=O(\sqrt{n})$ in $(i)$ of (\ref{3.43}) to obtain $\|\bm{\hat{z}}^{\hat{E}^c}-\bm{\tilde{z}}^{\hat{E}^c}\|_\infty=O(\eta^{-1}\tilde{\delta}\sqrt{n})$; then we can retain $\|\bm{\hat{z}}^{\hat{E}^c}-\bm{\tilde{z}}^{\hat{E}^c}\|_\infty=O(\delta)$ by using a finer net with $\tilde{\delta}=\frac{\eta\delta}{\sqrt{n}}$; this can only induce some $\log n$ factors to the sample complexity (\ref{3.37}) due to (\ref{featureepi}). Overall, Lemma \ref{lem9} remains valid with a measurement number of $m=\tilde{O}(\eta^{-2}K+\delta^{-2}K)$.

 Therefore, we arrive at the desired conclusion --- our covering-based approach can prove uniform   reconstruction result with $m=\tilde{O}(K)$ phase-only measurements, provided that $\mathcal{K}^*_-$ and $(\mathcal{U}'_-)^*$ have $O(K)$ covering dimension (recall that $\mathcal{K},\mathcal{U}'$ are the symmetric cones in (\ref{genetarget})).



	\subsection{Limitation of Our Result and Approach}\label{limit}
	
	While enjoying the aforementioned generalization, our uniform recovery result and proof approach do suffer from some limitation. The main downside of our Theorems \ref{theorem1}-\ref{theorem2} is that they do not accommodate model error. More specifically, Theorem \ref{theorem1} is uniform over all $s$-sparse $\bm{x}$ but provide no guarantee for  $\bm{x}\notin \Sigma^{n,*}_{s,c}$ (even if such $\bm{x}$ may still be close to some $\bm{x}'\in \Sigma^{n,*}_{s,c}$). By contrast, an {\it instance optimal} algorithm (e.g., \cite[Sec. 3.2]{traonmilin2018stable}, \cite[Sec. 2]{jacques2021importance}) in linear compressive sensing is robust to model error. In particular, if $\delta_{2s}^{\bm{A}}<\frac{\sqrt{2}}{2}$, then the basis pursuit (\ref{2.5}) with constraint changed to $\|\bm{Au}-\bm{y}\|\leq \varepsilon$ is instance optimal, in that \begin{equation}\label{6.10}
	    \|\bm{\hat{x}}-\bm{x}\| \leq C_1 \varepsilon+ C_2e_0(\bm{x},\Sigma^n_{s,r})
	\end{equation}
	holds as long as $\bm{x}$ is feasible to the constraint (i.e., $\|\bm{Ax}-\bm{y}\|\leq \varepsilon$), where\begin{equation}\label{instane0}
	    e_0(\bm{x},\Sigma^n_{s,r})=\min_{\bm{x}_1\in \Sigma^n_{s,r}}\frac{\|\bm{x}-\bm{x}_1\|}{\sqrt{s}}.
	\end{equation} It means that, under one RIP matrix $\bm{A}$, (\ref{2.5})   still delivers accurate estimation provided the model error $e_0(\bm{x},\Sigma^n_{s,r})$ is small. Therefore, albeit achieving uniform exact reconstruction using near optimal sample complexity (Remark \ref{rem1}), there is still a gap in {\it robustness to model error} between our results and the instance optimal ones in linear compressive sensing. It remains an open question whether it is possible to prove a uniform recovery guarantee that nicely accommodates model error.

	Then, we describe a concrete possibility to narrow the above gap. By proving RIP of the new sensing matrices as in \cite{jacques2021importance} and the present paper, it may be overly difficult to establish a guarantee  comparable to (\ref{6.10}) that holds for all feasible $\bm{x}$, since the new sensing matrices change with $\bm{x}$. Indeed, it may be more realistic to pursue a uniform guarantee for   signals with small model error. For instance, specialized to sparsity, it is of particular interest to establish a uniform guarantee over the set of {\it approximately sparse signals}  \begin{equation}
	     \mathcal{K}_{q,c} = \left\{\bm{x}=[x_i]\in \mathbb{C}^n: \sum_{i=1}^n|x_i|^q \leq s^{1-\frac{q}{2}},\|\bm{x}\|\leq1\right\}
	     \label{approsp}
	 \end{equation}
	 for some $q\in (0,1]$. Note that it is common to use (quasi) $\ell_q$-norm ($q\in (0,1]$) to capture approximate sparsity, e.g., \cite{plan2013one,mendelson2008uniform,negahban2011estimation,chen2022high,raskutti2011minimax}, and $\mathcal{K}_{q,c}$ is a relaxation of $\Sigma^{n,*}_{s,c}$ as $\Sigma^{n,*}_{s,c}\subset \mathcal{K}_{q,c}$. To this end, we only need to prove the new sensing matrices respect RIP simultaneously for all $\bm{x}\in \mathcal{K}_q$ and then invoke (\ref{6.10}), and it is enough to prove
	 \begin{equation}
            \label{approxigoal}\begin{aligned}
            &\sup_{\bm{x}\in\mathcal{K}_q} \sup_{\bm{u}\in\Sigma^{n,*}_{4s,c}} \Big|\frac{1}{\kappa^2m^2}[\Re(\bm{z^*\Phi u})]^2\\&~~~~~~~+\frac{2}{3m}\big\|\Im \big(\diag(\bm{z}^*)\bm{\Phi u}\big)\big\|^2 -1\Big|< \frac{\sqrt{2}}{2}.\end{aligned}
        \end{equation}
	  Compared to (\ref{a3.11}), the only difference is on the range of $\bm{x}$ where the supremum is taken.

	  Unfortunately, our   approach based on covering arguments cannot establish (\ref{approxigoal}) with $\tilde{O}(s)$ phase-only measurements. Actually, while $ \mathcal{K}_{q,c} $ has a Gaussian width of the same order as $\omega(\Sigma^{n,*}_{s,c})$ (i.e.,  $ O\big(s\log\frac{n}{s}\big)$),\footnote{Based on the fact that taking convex hull cannot change Gaussian width \cite[Prop. 7.5.2]{vershynin2018high}, this can be shown by     \cite[Lem. 3.7]{mendelson2008uniform}.}     $\mathcal{K}_{q,c}$ does not admit $\tilde{O}(s)$ covering dimension. In particular, the cardinality of an $\epsilon$-net for $\mathcal{K}_{q,c}$ is of order $O\big(\big(\frac{C_qn}{\epsilon^{q_1} s}\big)^{s/\epsilon^{q_2}} \big)$ where $C_q$, $q_1\in [-1,1]$ are absolute constants depending on $q$, $q_2 = \frac{2q}{2-q}$, see   \cite[Lem. 3.8]{mendelson2008uniform} and   \cite[Lem. 3.4]{plan2013one}. In stark contrast to Definition \ref{defi3}, the covering number of $\mathcal{K}_{q,c}$ depends on $\epsilon$ in an exponential way. 
	  Using the notion of Kolmogorov entropy, by ignoring logarithmic factors $\mathscr{H}(\mathcal{K}_{q,c},\epsilon)$  is   proportional to $s\epsilon^{-q_2}$, as opposed to the entropy for a set with low covering dimension (\ref{featureepi}). This is problematic in our covering arguments  because a net with approximation error $\delta\ll 1$ is needed in our proofs of Lemmas \ref{lem10}, \ref{lem9} (e.g., $\hat{\delta}\lesssim\frac{\eta}{\sqrt{s}}$ in Lemma \ref{lem10}). Specifically, when $\mathscr{H}(\mathcal{K}_{q,c},\epsilon)$ linearly depends on $s\epsilon^{-q_2}$, using a $\epsilon$-net of $\mathcal{K}_{q,c}$ with $\epsilon\ll 1$ significantly worsens the required sample complexity (rather than logarithmically).
	  

	 In a nutshell, our covering approach is insufficient for proving uniform recovery over a signal set that does not admit low covering dimension, even though the set may have Gaussian width much lower than $n$. One interesting example is $\mathcal{K}_{q,c}$ in (\ref{approsp}) that has Gaussian width $O\big(s\log\frac{n}{s}\big)$ but does not admit a low covering dimension. We do not know whether uniform recovery over $\mathcal{K}_{q,c}$ (with instance optimality) is achievable using $\tilde{O}(s)$ phase-only measurements. 
	 We conjecture that, resorting to   more advanced concentration inequalities or generic chaining \cite{talagrand2005generic} may be fruitful for such investigation, as the work has been reduced to bounding an empirical process (\ref{approxigoal}). The difficulty may still lie in the essential dependence of the process on $\sign(\bm{\Phi x})$.
	 In any case, our current proof for Theorem \ref{theorem1}  is of independent interest, in that it only involves the elementary covering arguments. Note that in classical compressive sensing, the role of such elementary proof is also played by \cite{mendelson2007reconstruction,baraniuk2008simple} that applied covering argument to show sub-Gaussian random matrix respects RIP over sparse signals.

	\subsection{Non-Uniform Guarantee for Complex Signal}
	As discussed above, our uniform recovery guarantee generalizes to 
	many other signal sets that admit low covering dimension, but it cannot well accommodate model error. Recall that  our Theorem \ref{theorem1} states that, all $s$-sparse {\it complex} signals can be {\it uniformly, exactly} reconstructed from $\tilde{O}(s)$ phase-only measurements, thereby simultaneously answering two open questions in \cite{jacques2021importance} in affirmative.  Interestingly, 
	 we note that our attempts to handle complex signals, and the covering approach to  uniform guarantee, are totally vertical. As it turns out, the limitation described in Section \ref{limit} mainly stems from the covering arguments for achieving a uniform guarantee. Therefore,  without pursuing the uniformity, we can extract part of the technical ingredients to establish a non-uniform guarantee for a {\it fixed}, complex signal. This   complements our    uniform reconstruction guarantee in that it  accounts for model error.

	 
	 \begin{theorem}
	 \label{theorem5}{\rm (Non-uniform Guarantee for Fixed Complex Signal)\textbf{.}}
	 Given a symmetric cone  $\mathcal{K}\subset \mathbb{C}^n$ (i.e., $\bm{u}\in \mathcal{K}$ implies $\lambda \bm{u}\in \mathcal{K}$ for all $\lambda \in \mathbb{R}$) and some $\delta>0$. We consider a fixed signal $\bm{x}\in \mathbb{S}^{n-1}_c$, the observed measurement phases $\bm{z} = \sign(\bm{\Phi x})$ where     $\bm{\Phi}\sim \mathcal{N}^{m\times n}(0,1)+\mathcal{N}^{m\times n}(0,1)\ii$, and we construct the new sensing matrix $\bm{A}_{\bm{z},c}$   as in (\ref{3.6}) with $\hat{t}=\sqrt{\frac{2}{3}}$. If \begin{equation}
	 \label{6.4}
	     m \geq \frac{C}{\delta^2}\omega^2\big(\mathcal{K}\cap\mathbb{S}^{n-1}_c\big),
	 \end{equation} 
	 then with probability at least $1-c_1\exp(-c_2\delta^2 m )$, $\bm{A}_{\bm{z},c}$ respects \rm{RIP}$([\mathcal{K}]_\mathbbm{R},\frac{1}{3}+\delta)$, i.e., \begin{equation}
	 \label{a6.5}
	    \big(\frac{2}{3}-\delta\big)\|\bm{u}\|^2\leq  \|\bm{A}_{\bm{z},c}\bm{u}\|\leq \big(\frac{4}{3}+\delta\big)\|\bm{u}\|^2 
	 \end{equation}
  holds for all $\bm{u}\in [\mathcal{K}]_\mathbb{R}$.
	 \end{theorem}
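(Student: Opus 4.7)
The plan is to adopt the same parallel/orthogonal decomposition as in the proof of Theorem \ref{theorem1}, while exploiting that $\bm{x}$ is now fixed: this eliminates the outer supremum over $\bm{x}$ and removes the need for the delicate covering arguments that controlled the near vanishing measurements (and were responsible for the $\delta^{-4}$ dependence in Theorem \ref{theorem1}). After identifying $\bm{u}\in [\mathcal{K}]_\mathbb{R}\cap \mathbb{S}^{2n-1}_r$ with $\bm{u}\in \mathcal{K}\cap \mathbb{S}^{n-1}_c$ and using the definition (\ref{complexwidth}) of the Gaussian width of a complex set, it will suffice to show
\begin{equation*}
\sup_{\bm{u}\in \mathcal{K}\cap \mathbb{S}^{n-1}_c} f^{\|}(\bm{x},\bm{u}) + \sup_{\bm{u}\in \mathcal{K}\cap \mathbb{S}^{n-1}_c} f^{\bot}(\bm{x},\bm{u}) \leq \frac{1}{3}+\delta
\end{equation*}
with probability at least $1-c_1\exp(-c_2\delta^2 m)$ under the stated sample size; the RIP bound (\ref{a6.5}) then follows via the split (\ref{aa3.14}).

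For the parallel part, I would further decompose as in (\ref{aa3.40}). The bound $\sup_{\bm{u}}f_1^{\|}\leq \big|\|\bm{\Phi x}\|_1/(\kappa m)-1\big|$ is immediate from $|\Re\big<\bm{u},\bm{x}\big>|\leq 1$, and the right-hand side concentrates below $\delta/4$ via the one-point sub-Gaussian estimate carried out in Step 1 of the proof of Lemma \ref{lem4} (which already needs only $m\gtrsim \delta^{-2}$, and so is dominated by (\ref{6.4})). For $f_2^{\|}(\bm{x},\bm{u})$, the unitary rotation used in the proof of Lemma \ref{lemma6} reduces, conditionally on $\bm{g}$, the quantity $\frac{1}{\kappa m}\Re\big<\sign(\bm{\Phi x}),\bm{\Phi u}_{\bm{x}}^\bot\big>$ to $\frac{1}{\kappa\sqrt{m}}\Re(\bm{\tilde g}\bm{v}^{[2:n]})$ where $\bm{\tilde g}$ is a fresh complex Gaussian row vector independent of $\bm{v}=\bm{Pu}_{\bm{x}}^\bot$. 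Taking the supremum over $\bm{u}\in \mathcal{K}\cap \mathbb{S}^{n-1}_c$ yields a Gaussian process whose expected supremum is bounded by $C\,\omega(\mathcal{K}\cap \mathbb{S}^{n-1}_c)/\sqrt{m}$, and Gaussian concentration (Borell--TIS) delivers a uniform deviation below $\delta/4$ under $m\gtrsim \delta^{-2}\omega^2(\mathcal{K}\cap \mathbb{S}^{n-1}_c)$. Altogether, $\sup_{\bm{u}}f^{\|}\leq \delta/2$.

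The orthogonal part is the main technical piece. Lemma \ref{lem8} already gives $\mathbbm{E}\big(\tfrac{1}{m}\|\Im(\diag(\bm{z}^*)\bm{\Phi u})\|^2\big)=\|\bm{u}_{\bm{x}}^{\bot}\|^2+|\Im\big<\bm{x},\bm{u}\big>|^2$, so with the scaling $\hat t=\sqrt{2/3}$ the deterministic bias is absorbed exactly as in (\ref{aa3.66}), contributing at most $1/3$ (using $|\big<\bm{x},\bm{u}\big>|^2+|\Im\big<\bm{x},\bm{u}\big>|^2-1\in[-1,1]$ and $\|\bm{u}_{\bm{x}}^\bot\|^2=1-[\Re\big<\bm{u},\bm{x}\big>]^2$). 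What remains is a uniform concentration of the quadratic empirical process around its mean, namely
\begin{equation*}
\sup_{\bm{u}\in \mathcal{K}\cap \mathbb{S}^{n-1}_c} \Big|\tfrac{1}{m}\|\Im(\diag(\bm{z}^*)\bm{\Phi u})\|^2-\tfrac{1}{m}\mathbbm{E}\|\Im(\diag(\bm{z}^*)\bm{\Phi u})\|^2\Big|\leq \delta.
\end{equation*}
After the unitary reduction $\bm{\widetilde\Phi}=\bm{\Phi P}^*=[\bm{g},\bm{G}]$ used in Lemma \ref{lem8}, the summands become squares of $O(1)$ sub-Gaussian random variables (linear combinations of i.i.d. standard complex Gaussian coordinates modulated by $\sign(\bm{g})$), and the process falls in the scope of a Mendelson-type restricted isometry bound for sub-Gaussian operators (e.g., \cite{mendelson2007reconstruction}), which gives exactly the Gaussian-width rate $m\gtrsim \delta^{-2}\omega^2(\mathcal{K}\cap \mathbb{S}^{n-1}_c)$ with the desired sub-Gaussian tail.

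The main obstacle I anticipate is this last step: the $\sign(\bm{g})$ factor couples two independent sources of randomness ($\bm{g}$ versus $\bm{G}$) inside a quadratic form, so invoking an empirical-process theorem off the shelf is delicate. The natural remedy is to condition on $\bm{g}$: the inner $\bm{G}$-dependence then becomes a genuine Gaussian chaos indexed by $\bm{v}=\bm{Pu}_{\bm{x}}^\bot$, whose supremum over $\bm{u}\in\mathcal{K}\cap\mathbb{S}^{n-1}_c$ is controlled by $\omega^2(\mathcal{K}\cap\mathbb{S}^{n-1}_c)/m$ via a standard sub-Gaussian concentration (Hanson--Wright with Gaussian width, in the spirit of \cite{mendelson2007reconstruction}), while the leading $\bm{g}$-only contribution collapses to $(\|\bm{g}\|_1/m)^2\cdot|\Im\big<\bm{x},\bm{u}\big>|^2$ and is handled by the scalar concentration of $\|\bm{g}\|_1/m$ around $\kappa$ used in Lemma \ref{lem4}. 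Combining $\sup_{\bm{u}}f^{\|}\leq \delta/2$ with $\sup_{\bm{u}}f^{\bot}\leq 1/3+\delta/2$ then yields (\ref{a6.5}) and closes the argument.
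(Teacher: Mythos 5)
Your plan for the parallel part is essentially identical to the paper's: bound $\sup_{\bm{u}}f_1^\|$ by the scalar concentration of $\|\bm{\Phi x}\|_1/(\kappa m)$ (a one-point version of Lemma~\ref{lem4}), and handle $\sup_{\bm{u}}f_2^\|$ via the unitary reduction of Lemma~\ref{lemma6} followed by Gaussian Lipschitz concentration with the expected supremum bounded by $\omega(\mathcal{K}\cap\mathbb{S}^{n-1}_c)/\sqrt m$. This is precisely the route the paper takes, and it subsumes the local sign-product embedding (Corollary~\ref{local}).

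Your orthogonal part has the right high-level idea but misses the key structural subtlety, and the proposed ``remedy'' contains concrete errors. After the unitary reduction, the summand is $\big<\bm{\Psi}_k,\bm{\hat v}\big>$ with $\bm{\Psi}_k$ as in (\ref{A.13}): your description of the summands as ``linear combinations of i.i.d.\ standard complex Gaussian coordinates modulated by $\sign(\bm g)$'' omits the first coordinate $|\phi_{k1}|/\sqrt2$, which is neither Gaussian nor zero-mean. This is exactly why the paper cannot apply a Gaussian-matrix RIP result (like \cite[Thm.\ 2.2]{jacques2021importance}) and instead invokes the sub-Gaussian, mean-allowing formulation \cite[Thm.\ 3.2]{dirksen2016dimensionality}, which requires estimating a $\gamma_2$-functional. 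Your primary plan (a Mendelson-style sub-Gaussian RIP bound) is in the same spirit as Dirksen's tool, but you would need to check that whichever formulation you cite tolerates the non-centered first coordinate; the paper explicitly flags this as the reason for switching tools and for the slight sample-complexity refinement.

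Your backup remedy via conditioning on $\bm g$ has a genuine gap. The $\bm g$-only piece of $\tfrac1m\sum_k\big<\bm{\Psi}_k,\bm{\hat v}\big>^2$ is $\tfrac1m\sum_k|\phi_{k1}|^2(v_1^\Im)^2 = \tfrac{\|\bm g\|^2}{m}\,|\Im\big<\bm{x},\bm{u}\big>|^2$ (recall $|v_1|=|\Im\big<\bm x,\bm u\big>|$ since $v_1^\Re=0$), which concentrates to $2\,|\Im\big<\bm x,\bm u\big>|^2$ via the $\chi^2$-concentration of $\|\bm g\|^2/m$ around $2$ --- not $\big(\|\bm g\|_1/m\big)^2|\Im\big<\bm x,\bm u\big>|^2$ as you wrote, which would converge to the wrong constant $\kappa^2=\pi/2$ and relies on the $\ell_1$-concentration of Lemma~\ref{lem4}, which is a different statistic. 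Moreover, after conditioning on $\bm g$ you also generate a cross term $\tfrac2m\sum_k|\phi_{k1}|\,v_1^\Im\cdot\big[\text{Gaussian chaos in }\bm G\big]$ whose uniform (over $\bm u\in\mathcal{K}\cap\mathbb{S}^{n-1}_c$) control is omitted from your accounting. These issues are fixable in principle, but the paper's route --- packaging everything into an isotropic sub-Gaussian $\bm{\Psi}_k$ and applying Dirksen's Theorem directly --- avoids the conditioning decomposition altogether and is cleaner.
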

	 	 Fix the distortion $\delta$, the sample complexity in Theorem \ref{theorem5} is captured by Gaussian width rather than covering dimension as in   uniform reconstruction (Section \ref{lowcovering}).  Indeed,  it extends \cite[Thm. 3.3]{jacques2021importance} to the complex case, and note that (\ref{6.4}) slightly refines the sample complexity in \cite{jacques2021importance} that reads as $O\big(\omega^2([\mathcal{K}-\mathbb{R}\bm{x}]\cap\mathbb{S}^{n-1}_r)\big)$, where $\mathbb{R}\bm{x}=\{\lambda\bm{x}:\lambda\in\mathbb{R}\}$. Compared to Theorems \ref{theorem1}-\ref{theorem2}, Theorem \ref{theorem5} allows model error as $\bm{x}\in\mathcal{K}$ is not required.    For instance, one outcome is that for any fixed $\bm{x}\in \mathbb{S}_c^{n-1}$, from $O\big(s\log\frac{n}{s}\big)$ noiseless phase-only measurements one can reconstruct $\bm{x^\star}=\frac{\kappa m}{\|\bm{\Phi x}\|_1}\bm{x}$ with an $\ell_2$-norm error of   $O\big(e_0([\bm{x}]_\mathbb{R},\Sigma^{2n}_{2s,r})\big)$ (see (\ref{instane0}) for the definition). Readers may consult \cite[Sec. 3-4]{jacques2021importance} for more discussions on the implications of Theorem \ref{theorem5}.


	 Compared with the proof of Theorem \ref{theorem1} where one aims to upper bound $\sup_{\bm{x},\bm{u}}f(\bm{x},\bm{u})$ (\ref{a3.12}), for Theorem \ref{theorem5} the goal simplifies to bounding $\sup_{\bm{u}} f(\bm{x},\bm{u})$ since $\bm{x}$ is now a fixed signal. One shall see that, this allows a significantly simpler and cleaner analysis where all technicalities for near vanishing measurements are not needed --- the dependence of $f_2^\|(\bm{x},\bm{u})$ (\ref{aa3.40}) and $\hat{f}^\bot(\bm{x},\bm{u})$ (\ref{aa3.54}) on $\sign(\cdot)$ are inessential, and the rotational transform in Lemma \ref{lemma6}, \ref{lem8} reveals that they can   actually be viewed as quite standard Gaussian processes. As a result, many well-established tools can be applied to replace the covering arguments. For example, the concentration of Lipschitz function of Gaussian variables straightforwardly handles $\sup_{\bm{u}}f_2^\|(\bm{x},\bm{u})$ for the parallel part, as done in   \cite[Lem. 5.4]{jacques2021importance}.

	  Since we position our primary contribution  in the uniform reconstruction guarantee,    the proof of Theorem \ref{theorem5} is relegated to Appendix for preserving the presentation flow. Here, we only  highlight some additional technicalities compared to the real case in \cite{jacques2021importance}. For example,  
	 the proof in \cite{jacques2021importance} utilizes the RIP of a  Gaussian matrix \cite[Thm. 2.2]{jacques2021importance} to control the orthogonal part, while this becomes insufficient in our proof since the first component of $\bm{\Psi}_k$ in  (\ref{A.13}) is not Gaussian and not zero-mean. To circumvent the issue, we invoke      \cite[Thm. 3.2]{dirksen2016dimensionality} instead and this requires more work such as estimating the $\gamma_2$-functional \cite{talagrand2005generic}. Indeed, the refinement of sample complexity is 
	due to this more refined technical tool and more careful analysis.                
	 

\section{Numerical Simulation}
\label{section8}
In this section, we conduct numerical experiments to validate our theoretical
results. Note that exact recovery guarantee for the complex case of  PO-CS is first proved in this work. To verify this, the underlying signals in all experiments are complex-valued. Specifically, we consider $(s=3)$-sparse complex signal with ambient dimension $n = 80$. Parallel to \cite{jacques2021importance}, the support of $\bm{x}$ is uniformly, randomly drawn from $\binom{80}{3}$ possibilities, then non-zero entries are i.i.d. distributed as $\mathcal{N}(0,1)+\mathcal{N}(0,1)\ii$. Eventually, we normalize $\bm{x}$ such that $\|\bm{x}\| = 1$. As assumed throughout the present paper, entries of $\bm{\Phi}$ are independent copies of  $\mathcal{N}(0,1)+\mathcal{N}(0,1)\ii$. We test the success rate under different $m$ based on 100 independent trials. In a single trial, signals in $\mathbb{C}^n$ are identified with the real ones in $\mathbb{R}^{2n}$ via $[\cdot]_\mathbb{R}$.   For recovery, we find $\bm{\hat{u}}\in \mathbb{R}^{2n}$ by solving the basis pursuit  (\ref{3.6}) \begin{equation}
    \bm{\hat{u}} =  \mathop{\arg\min}\limits_{\bm{u}\in \mathbb{R}^{2n}} \|\bm{u}\|_{1},~~\text{s.t. }\bm{A}_{\bm{z},c}\bm{u}= \bm{e}_1,\nonumber
\end{equation} 
and then reconstruct $\bm{x}$ as $\bm{\hat{x}} = [\bm{\hat{u}}]_\mathbb{C}$. We use the ADMM solver that is available online.\footnote{\url{https://web.stanford.edu/~boyd/papers/admm/basis_pursuit/basis_pursuit.html##4}}

\begin{figure*}[ht]
    \centering
    \includegraphics[scale = 0.6]{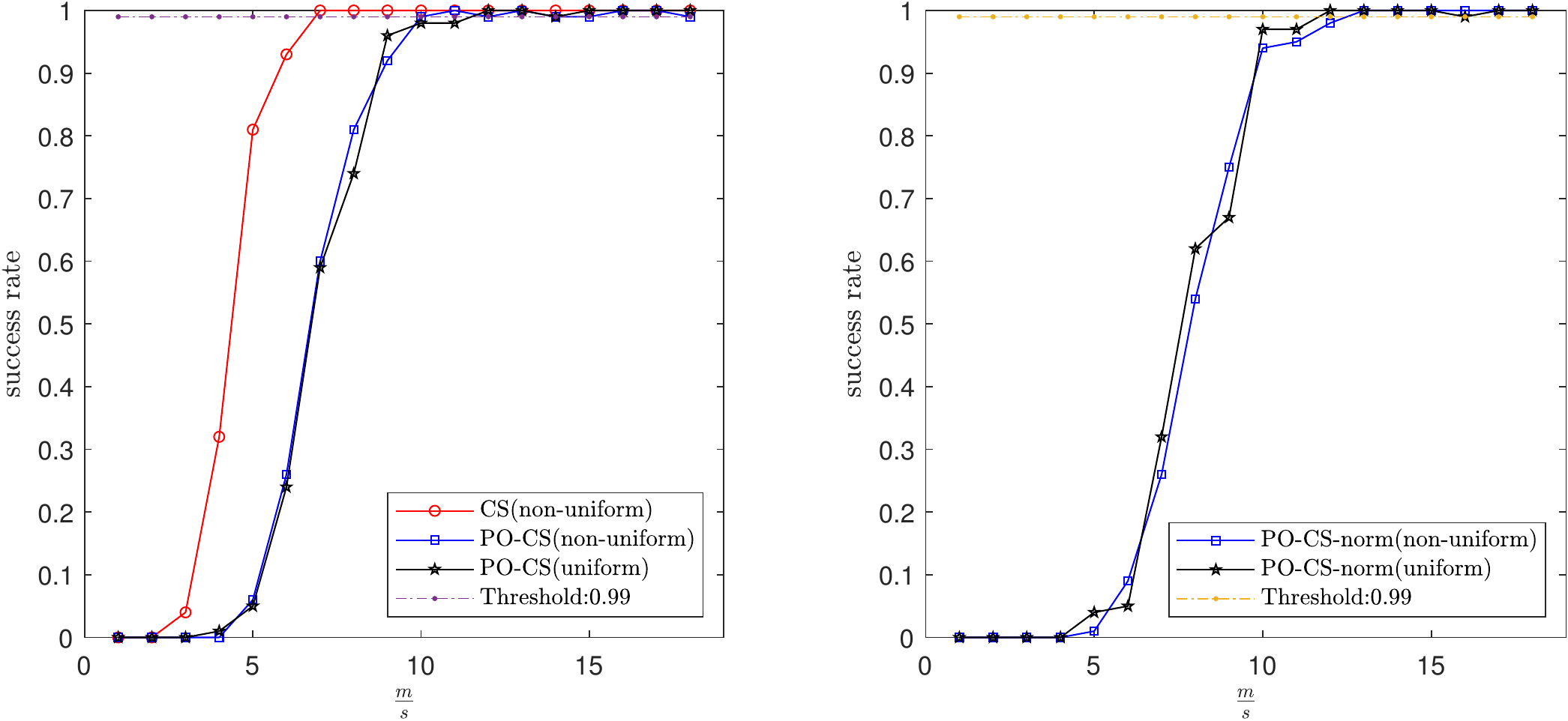}
    \caption{(left): PO-CS of direction recovery; (right): dithered  PO-CS with norm recovery.}
    \label{fig1}
\end{figure*}
The main aim of our first experiment is to confirm the uniform guarantee in Theorem \ref{theorem1}, which states that one randomly drawn matrix $\bm{\Phi}$ can simultaneously ensure the  recovery of all sparse signals. Note that it is not possible to recover  $\|\bm{x}\|$ from $\bm{z}=\sign(\bm{\Phi x})$, so we adopt an error measure given by $\| \frac{\bm{\hat{x}}}{\|\bm{\hat{x}}\|}- \bm{x}\|$ (recall that $\|\bm{x}\|=1$). A trial is claimed to be successful if $\|\frac{\bm{\hat{x}}}{\|\bm{\hat{x}}\|}- \bm{x}\|<10^{-3}$. Similar to \cite{jacques2021importance}, in our first experimental setting, both $\bm{x}$ and 
$\bm{\Phi}$ are randomly regenerated in each trial, which is evidently the experimental design for validating a non-uniform recovery guarantee.  By contrast, in our second setting, the 100 independent trials share a fixed $\bm{\Phi}$ that is drawn beforehand, while only the underlying signal is regenerated for each trial. The empirical success rates for  these two settings   are plotted in Figure \ref{fig1}(left), labeled "PO-CS(non-uniform)" and 
"PO-CS(uniform)", respectively. In addition, we are also interested in comparing PO-CS with the classical compressive sensing setting where the full measurements $\bm{z}_f = \bm{\Phi} \bm{x}$ are observed.  To be fair, we also formulate it as a real compressive sensing problem  and solve $\bm{\tilde{u}}\in \mathbb{R}^{2n}$ from 
\begin{equation}\nonumber\begin{aligned}
    &\bm{\tilde{u}} =  \mathop{\arg\min}\limits_{\bm{u}\in \mathbb{R}^{2n}} \|\bm{u}\|_{1},\\\text{s.t. }&\begin{bmatrix}
        \bm{\Phi}^\Re-\bm{\Phi}^\Im
    \end{bmatrix}\bm{u} = \bm{z}_f^\Re,\\&\begin{bmatrix}
        \bm{\Phi}^\Im&\bm{\Phi}^\Re
    \end{bmatrix}\bm{u} = \bm{z}_f^\Im,\end{aligned}
\end{equation}
which then recovers $\bm{x}$ as $\bm{\tilde{x}} = 
[\bm{\tilde{u}}]_\mathbb{C}$. For this setting we use new $\bm{\Phi}$ and $\bm{x}$ for each trial, and 
report the experimental success rate in the curve with label "CS(non-uniform)".

It is natural that success rate increases under larger sample size in 
the above three settings. Note that, the two curves for PO-CS, differentiated by whether the measurement matrix $\bm{\Phi}$ is new in each trial, are extremely close and almost coincident. This indicates that there is no need to regenerate the sensing matrix for the recovery of a new sparse signal, and hence is consistent with our   Theorem  \ref{theorem1}.
In addition, we find that a high success rate ($\geq 0.99$) for PO-CS is achieved at about twice  the measurement number needed for linear compressive sensing. Particularly, the success rate remains higher than $0.99$ when 
$\frac{m}{s}\geq 7$ for classical linear compressive sensing, or when $\frac{m}{s}\geq 12$ for PO-CS.  In fact, this phenomenon has already been experimentally concluded in \cite{jacques2021importance}, and now it is also observed in the complex case.

In our second experiment, we implement PO-CS under the additional Gaussian dither to achieve norm reconstruction. Recall that   the model now reads as $\bm{z}_d = \sign(\bm{\Phi x}+\bm{\tau}_d)$, and we use the dither $\bm{\tau}_d \sim \mathcal{N}(\bm{0},\frac{1}{9}\bm{I}_n)+\mathcal{N}(\bm{0},\frac{1}{9}\bm{I}_n)\ii$. The reconstruction procedure is given in Theorem \ref{theorem4}. Let $\bm{\hat{x}}$ be the reconstructed signal, and we claim a  trial to be successful if $\|\bm{\hat{x}}-\bm{x}\| <10^{-3}$. Recall that Theorem \ref{theorem4} delivers uniform reconstruction, that is, a single generation of the measurement ensemble $(\bm{\Phi},\bm{\tau})$ suffices to ensure the recovery of all sparse signals.  To verify this point, we similarly compare the settings with or without new sensing matrix and new dither for each trial. The experimental results  are shown in Figure \ref{fig1}(right). Evidently, the success rate in both curves are comparable, thus validating the theoretical uniform recovery.

\section{Conclusion and Future Direction}
\label{section9}
Phase-only compressive sensing (PO-CS) generalizes 1-bit CS to complex sensing matrix,  and also provides a practically appealing sensing scenario that merits advantages like stability and easier   quantization. A recent breakthrough for PO-CS     due to  Jacques and Feuillen establishes exact recovery guarantee for the direction of a fixed real signal \cite{jacques2021importance}, thus theoretically supporting    previous experimental observations   \cite{boufounos2013sparse}. Nevertheless, it remains unknown whether   uniform exact recovery for all signals of interest can be achieved, and whether a complex signal can be exactly recovered  from phase-only compressive measurements. These are among several open questions  raised in \cite{jacques2021importance}.

We almost completely address the above two open questions in this work. It was proved that  all   sparse signals in $\Sigma^n_{s,c}$ (resp. low-rank matrices in $\mathcal{M}^{n_1,n_2}_{r,c}$)  can be uniformly, exactly reconstructed up to a positive scaling,  from the near optimal $\tilde{O}(s)$ (resp. $\tilde{O}(r(d_1+d_2))$) phase-only measurements produced by a single  $\bm{\Phi}$ with i.i.d. $\mathcal{N}(0,1)+\mathcal{N}(0,1)\ii$ entries. {\em This simultaneously provides a complete affirmative answer for the two open questions, under the most classical sparse and low-rank structures}. From a technical perspective, we achieve uniform recovery by covering arguments. To handle the pathological behaviour of $\sign(a)$ when $a$ is close to 0,   a delicate analysis is carried out to control near vanishing measurements. Compared to the real case in \cite{jacques2021importance}, we establish   a different sign-product embedding property and utilize a   rescaling of the new sensing matrix for analysing PO-CS of complex signal. We discuss that, Theorem \ref{theorem1} directly generalizes to many other structured signal sets with low covering dimension, while the main downside is the lack of robustness to model error. Therefore, a non-uniform result for PO-CS of complex signal is presented to complement    our uniform guarantee. {\em This provides a complete affirmative answer for the open question of complex signal recovery in PO-CS}.

 To close this paper, we point out several open questions as future research directions. \textcolor{black}{Firstly, a uniform recovery guarantee that nicely accommodates model error is urgently desired to narrow the gap between PO-CS and linear compressive sensing. Following the RIP-based analysis in \cite{jacques2021importance} and this work, a possible first step is to pursue uniform recovery over the set of approximately sparse signals $\mathcal{K}_{q,c}$ (\ref{approsp}).} Besides, as both \cite{jacques2021importance} and this work assume a moderate bounded noise after capturing the phases, one may   study the noisy PO-CS under other noise assumptions;  for instance, i.i.d. statistical noise or noise before capturing the phases. Our last direction  is the relaxation of complex Gaussian sensing matrix, which is also raised as the third open question in \cite[Sec. VII]{jacques2021importance}. While the extension to structured random matrix was emphasized in \cite{jacques2021importance}, under complex $\bm{\Phi}$ with i.i.d. sub-Gaussian entries that has randomness closer to the current $\bm{\Phi}\sim \mathcal{N}^{m\times n}(0,1)+ \mathcal{N}^{m\times n}(0,1)\ii$, it is already open whether exact reconstruction is still   achievable. Note that in 1-bit CS, it is in general impossible to achieve meaningful recovery under sub-Gaussian $\bm{\Phi}$ \cite{ai2014one}, \textcolor{black}{while    the  sensing vectors can be sub-Gaussian or even heavy-tailed   \cite{thrampoulidis2020generalized,chen2022high,dirksen2021non} if a uniform dither is added before the 1-bit quantization. In a related setting where the measurements are quantized under a uniform quantizer, adding random dithering also allows accurate signal reconstruction under general RIP sensing matrix \cite{xu2020quantized}, sub-Gaussian \cite{thrampoulidis2020generalized,genzel2022unified,chen2022quantizing} or even heavy-tailed sensing vectors  \cite{chen2022quantizing} with possibly unknown covariance matrix. Inspired by these developments, it may   be interesting to  investigate other possible privileges of dithering (besides the full reconstruction in our Theorem \ref{theorem4}); in particular, whether substantial relaxation on the construction of $\bm{\Phi}$ is easier by using suitable dither.} 

  \bibliographystyle{plain}
\bibliography{libr}
 \begin{appendix}

\subsection{The Proof of Theorem \ref{theorem5}}
With no loss of generality, we assume the fixed underlying signal $\bm{x}$ satisfies $\|\bm{x}\|=1$. We will use the notation $[\cdot]_\mathbb{R},[\cdot]_\mathbb{C}$ introduced in (\ref{c2r}), (\ref{r2c}). We let $\mathcal{K}^* = \mathcal{K}\cap \mathbb{S}^{n-1}_{c}$, and some algebra translates (\ref{a6.5}) into 
\begin{equation}
\label{A.1}\begin{aligned}
        &\sup_{\bm{u}\in \mathcal{K}^*} \Big|\frac{1}{\kappa^2m^2}[\Re(\bm{z}^*\bm{\Phi u})]^2\\&~~~~~~~~~~~
    + \frac{2}{3m}\|\Im(\diag(\bm{z}^*)\bm{\Phi u})\|^2-1\Big|<\frac{1}{3}+\delta,\end{aligned}
\end{equation}
which can be implied by an upper bound for the parallel part 
\begin{equation}
\label{A.2}\begin{aligned}
    &\sup_{\bm{u}\in \mathcal{K}^*} f^\|(\bm{u}) \\:
    =& \sup_{\bm{u}\in \mathcal{K}^*} \Big|\frac{1}{\kappa^2m^2}[\Re(\bm{z}^*\bm{\Phi u})]^2 - [\Re\big<\bm{u},\bm{x}\big>]^2\Big| \\ =& O(\delta),\end{aligned}
\end{equation}
and the bound for the orthogonal part \begin{equation}
\label{A.3}\begin{aligned}
    &\sup_{\bm{u}\in \mathcal{K}^*}f^\bot(\bm{u}) \\:=&\sup_{\bm{u}\in \mathcal{K}^*} \left|\frac{2}{3m}\|\Im(\diag(\bm{z}^*)\bm{\Phi u})\|^2 - \|\bm{u}_{\bm{x}}^\bot\|^2\right| \\\leq &\frac{1}{3}+O(\delta). \end{aligned}
\end{equation}
Note that we use $O(\delta)$ to allow an multiplicative absolute constant; up to rescaling (\ref{A.2}) and (\ref{A.3}) jointly yield (\ref{A.1}). In the remainder of this proof, we establish (\ref{A.2}), (\ref{A.3}) in the following two subsections. For clarity, some auxiliary facts are collected in Subsection \ref{auxi}.
\subsection{The Parallel Part}
By using similar algebra as in (\ref{3.21}), (\ref{aa3.40}) we obtain (\ref{A.4}).
 \begin{figure*}[!t]
\normalsize
\begin{equation}
    \begin{aligned}  \label{A.4}
    \sup_{\bm{u}\in\mathcal{K}^*}f^\|(\bm{u})&\leq  \Big\{  \sup_{\bm{u}\in \mathcal{K}^*} \Big|\frac{1}{\kappa m}\Re\big<\sign(\bm{\Phi x}),\bm{\Phi u}\big>- \Re\big<\bm{u},\bm{x}\big>\Big|\Big\}  
    \cdot \Big\{ \sup_{\bm{u}\in \mathcal{K}^*}\frac{\|\bm{\Phi u}\|}{\sqrt{m}}+1\Big\}\\&\stackrel{(i)}{\lesssim} \sup_{\bm{u}\in \mathcal{K}^*} \Big|\frac{1}{\kappa m}\Re\big<\sign(\bm{\Phi x}),\bm{\Phi u}\big>- \Re\big<\bm{u},\bm{x}\big>\Big| | \\&\leq\sup_{\bm{u}\in \mathcal{K}^*} \underbrace{|\Re\big<\bm{u},\bm{x}\big>| \Big|\frac{\|\bm{\Phi x}\|_1}{\kappa m}-1\Big|}_{f_1^\|(\bm{u})}   + \sup_{\bm{u}\in \mathcal{K}^*} \underbrace{\frac{1}{\kappa m}\Big|\Re\big<\sign(\bm{\Phi x}), \bm{\Phi u}^\bot_{\bm{x}}\big>\Big| }_{f_2^\|(\bm{u})}.
    \end{aligned}
\end{equation}
\hrulefill
\vspace*{4pt}
\end{figure*}

\noindent{\textbf{(Step 1.) Bound $\sup_{\bm{u}\in \mathcal{K}^*}f_1^\|(\bm{u})$}}

Note that in $(i)$ we use $\sup_{\bm{u}\in \mathcal{K}^*}\frac{\|\bm{\Phi u}\|}{\sqrt{m}}=O(1)$ due to Fact \ref{fact1}. 
By  \cite[Lem. 5.2]{jacques2021importance},  $$\sup_{\bm{u}\in \mathcal{K}^*}f_1^\|(\bm{u}) \leq \Big|\frac{\|\bm{\Phi x}\|_1}{\kappa m}  - 1\Big| = O(\delta)$$ holds with probability at least $1-c_1\exp(-\Omega(\delta^2m))$.\footnote{   \cite[Lem. 5.2]{jacques2021importance} is stated for real $\bm{x}$ but the proof obviously applies to $\bm{x}\in \mathbb{C}^n$.}

\noindent{\textbf{(Step 2.) Bound $\sup_{\bm{u}\in \mathcal{K}^*}f_2^\|(\bm{u})$}}

We make use of the rotational invariance of $\bm{\Phi}$ as in the proof of Lemma \ref{lemma6}. We fix a unitary matrix $\bm{P}$ such that $\bm{Px}=\bm{e}_1$, and we let $\bm{\widetilde{\Phi}}=\bm{\Phi P^*}$. Recall that $\bm{u_x}^\bot = \bm{u}-\Re\big<\bm{u},\bm{x}\big>\bm{x}$, we let $\bm{v}=\bm{P}\bm{u_x}^\bot$. Then following the analysis in the proof of Lemma \ref{lemma6}, especially (\ref{3.23}), we know that by conditionally on the first column of $\bm{\widetilde{\Phi}}$, (almost surely)  $\sup_{\bm{u}\in \mathcal{K}^*}f_2^\|(\bm{u})$  has the same distribution   as \begin{equation}\begin{aligned}
    &\sup_{\bm{u}\in \mathcal{K}^*}\frac{1}{\kappa\sqrt{m}}|\Re\big<\bm{g}, \bm{v}^{[2:n]}\big>|,~\text{where }\\&\bm{g} \sim \mathcal{N}^{(n-1)\times 1}({0},1)+\mathcal{N}^{(n-1)\times 1}({0},1)\ii,\label{same1}\end{aligned}
\end{equation}  or equivalently, the same distribution as\begin{equation}
    \label{same2}\begin{aligned}
    &\sup_{\bm{u}\in \mathcal{K}^*}\frac{1}{\kappa\sqrt{m}}\big|\big<\bm{\hat{g}},[\bm{v}^{[2:n]}]_\mathbb{R}\big>\big|,\\&\text{where } \bm{\hat{g}}\sim \mathcal{N}^{(2n-2)\times 1}(0,1).\end{aligned}
\end{equation} Moreover,  from Fact \ref{fact2} there exists some $\bm{L}\in \mathbb{R}^{2n\times 2n}$ satisfying $\|\bm{L}\|\leq 1$ such that $[\bm{v}] _\mathbb{R}=\bm{L}[\bm{u}]_\mathbb{R}$. More evidently, there exists $\bm{L}_1\in \mathbb{R}^{(2n-2)\times 2n}$ satisfying $\|\bm{L}_1\|\leq 1$ such that $[\bm{v}^{[2:n]}]_\mathbb{R}=\bm{L}_1[\bm{v}]_\mathbb{R}$. Overall, we have $[\bm{v}^{[2:n]}]_\mathbb{R}=\bm{L}_1\bm{L}[\bm{u}]_\mathbb{R}$, hence 
$\bm{u}\in \mathcal{K}^*$ implies $[\bm{v}^{[2:n]}]_\mathbb{R}\in \bm{L}_1\bm{L}[\mathcal{K}^*]_\mathbb{R}$.  Assuming $\bm{\hat{g}}\sim \mathcal{N}(\bm{0},\bm{I}_{2n-2})$ we start from (\ref{same2}) and proceed as follows:  
\begin{equation}
   \begin{aligned}\nonumber
       \mathbbm{E} \sup_{\bm{u}\in \mathcal{K}^*} f_1^\|(\bm{u})& = \frac{1}{\kappa\sqrt{m}}\mathbbm{E}   \sup_{\bm{u}\in \mathcal{K}^*} \Big|\big<\bm{\hat{g}},[\bm{v}^{[2:n]}]_\mathbb{R}\big>\Big|\\&\stackrel{(i)}{=} \frac{1}{\kappa\sqrt{m}} \mathbbm{E} \sup_{\bm{u}\in \mathcal{K}^*}\big<\bm{\hat{g}},[\bm{v}^{[2:n]}]_\mathbb{R}\big>\\
      &\stackrel{(ii)}{\leq} \frac{1}{\kappa\sqrt{m}}\omega\big(\bm{L}_1\bm{L}[\mathcal{K}^*]_\mathbb{R}\big)\\&\stackrel{(iii)}{\leq} \frac{1}{\kappa\sqrt{m}}\omega([\mathcal{K}^*]_\mathbb{R}){=} \frac{1}{\kappa\sqrt{m}}\omega(\mathcal{K}^*)\stackrel{{(iv)}}{<}\delta,
   \end{aligned}
\end{equation}
where $(i)$ is because $\mathcal{K}$ is a symmetric cone, $(ii)$ is from $[\bm{v}^{[2:n]}]_\mathbb{R}\in \bm{L}_0[\mathcal{K}^*]_\mathbb{R}$, in $(iii)$ we use \cite[Exercise 7.5.4]{vershynin2018high} and $\|\bm{L}_1\bm{L}\|\leq \|\bm{L}_1\|\|\bm{L}\|\leq 1$, $(iv)$ guaranteed by the sample complexity (\ref{6.4}).

Moreover, we define $F(\bm{g}):= \sup_{\bm{u}\in \mathcal{K}^*}\frac{1}{\kappa\sqrt{m}}|\Re\big<\bm{g},\bm{v}^{[2:n]}\big>|$ and view it as a function of the Gaussian variable $\bm{g}$. It is $\big(\frac{1}{\kappa\sqrt{m}}\big)$-Lipschitz since
\begin{equation}\nonumber\begin{aligned}
    |{F}(\bm{g}_1) -{F}(\bm{g}_2)|& \leq \sup_{\bm{u}\in \mathcal{K}^*} \frac{1}{\kappa\sqrt{m}} |\Re\big<\bm{g}_1-\bm{g}_2, \bm{v}^{[2:n]}\big>| \\&\leq \frac{1}{\kappa\sqrt{m}}\|\bm{g}_1-\bm{g}_2\|.\end{aligned}
\end{equation}
Hence, combine these pieces and apply    \cite[Lem. 5.1]{jacques2021importance}  we obtain \begin{equation}
    \begin{aligned}\nonumber
       \mathbbm{P}(F(\bm{g}) \geq 2\delta) &\leq \mathbbm{P}(F(\bm{g}) - \mathbbm{E}F(\bm{g})>\delta)\\& \leq 2\exp\Big(-\frac{1}{2}\kappa^2\delta^2m\Big).
    \end{aligned}
\end{equation}   
Further use the relation between the distribution of $F(\bm{g})$ and $\sup_{\bm{u}\in \mathcal{K}^*}f_2^\|(\bm{u})$ (see (\ref{same1})),  we obtain  $\sup_{\bm{u}\in \mathcal{K}^*}f_2^\|(\bm{u})\leq 2\delta$ with probability at least $1-2\exp(-\Omega(\delta^2m))$. Overall, we arrive at $\sup_{\bm{u}\in\mathcal{K}^*}f^\|(\bm{u}) = O(\delta)$ with high probability.

We have indeed shown the following local sign-product embedding property in (\ref{A.4}), which generalizes  \cite[Lem. 5.4]{jacques2021importance} to the  complex case and may be of independent interest. Note that even restricted to $\mathbb{R}^n$, Corollary \ref{local} is not fully coincident with   \cite[Lem. 5.4]{jacques2021importance} since $\big<\sign(\bm{\Phi u}),\bm{\Phi v}\big>$ is complex in general.

\begin{coro}
\label{local}
{\rm (Local Sign-Product Embedding Property)\textbf{.}} Assume $\bm{\Phi}\sim \mathcal{N}^{m\times n}(0,1)+   \mathcal{N}^{m\times n}(0,1)\ii$. Given a symmetric cone $\mathcal{K}\subset \mathbb{C}^n$  and a fixed $\bm{u}\in \mathbb{S}^{n-1}_c$, for any $\delta>0$, if $m\geq \frac{C}{\delta^2}\omega^2(\mathcal{K}\cap \mathbb{S}^{n-1}_c)$, then with probability at least $1-c_1\exp(-c_2\delta^2m)$ we have \begin{equation}\nonumber
    \Big|\frac{1}{\kappa m}\Re\big<\sign(\bm{\Phi u}),\bm{\Phi v}\big> - \Re\big<\bm{u},\bm{v}\big>\Big| \leq \delta\|\bm{v}\|,~\forall~\bm{v}\in \mathcal{K}.
\end{equation}  
\end{coro}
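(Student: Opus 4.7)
The proof plan is to recognise Corollary \ref{local} as essentially a relabelling of the parallel-part bound already established in Step 2 of the proof of Theorem \ref{theorem5}: there $\bm{x}$ was the fixed unit vector fed into $\sign(\cdot)$ and $\bm{u}$ ranged over $\mathcal{K}^{*}$, whereas here $\bm{u}$ plays the former role and $\bm{v}$ the latter. By the homogeneity of the claim in $\bm{v}$ and the fact that $\mathcal{K}$ is a symmetric cone, it suffices to prove the inequality uniformly for $\bm{v}\in\mathcal{K}^{*}:=\mathcal{K}\cap\mathbb{S}_{c}^{n-1}$ with right-hand side $\delta$.

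Using $\|\bm{\Phi u}\|_{1}=\big\langle\sign(\bm{\Phi u}),\bm{\Phi u}\big\rangle$ together with the orthogonal decomposition $\bm{v}=\Re\langle\bm{v},\bm{u}\rangle\cdot\bm{u}+\bm{v}_{\bm{u}}^{\bot}$ from (\ref{a3.14}), I would write, as in (\ref{addclear1}),
\begin{equation*}
\tfrac{1}{\kappa m}\Re\big\langle\sign(\bm{\Phi u}),\bm{\Phi v}\big\rangle-\Re\langle\bm{u},\bm{v}\rangle=\Re\langle\bm{u},\bm{v}\rangle\Big(\tfrac{\|\bm{\Phi u}\|_{1}}{\kappa m}-1\Big)+\tfrac{1}{\kappa m}\Re\big\langle\sign(\bm{\Phi u}),\bm{\Phi v}_{\bm{u}}^{\bot}\big\rangle.
\end{equation*}
Since $|\Re\langle\bm{u},\bm{v}\rangle|\leq 1$, the first summand is controlled uniformly in $\bm{v}\in\mathcal{K}^{*}$ by $\big|\|\bm{\Phi u}\|_{1}/(\kappa m)-1\big|$, which is at most $\delta$ with probability $1-2\exp(-\Omega(\delta^{2}m))$ by \cite[Lem.~5.2]{jacques2021importance} applied to the fixed $\bm{u}\in\mathbb{S}_{c}^{n-1}$.

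The main work is the uniform control of $T:=\sup_{\bm{v}\in\mathcal{K}^{*}}\tfrac{1}{\kappa m}\big|\Re\big\langle\sign(\bm{\Phi u}),\bm{\Phi v}_{\bm{u}}^{\bot}\big\rangle\big|$. I would proceed exactly as in Step 2 of the Theorem \ref{theorem5} parallel-part proof: fix a unitary $\bm{P}$ with $\bm{Pu}=\bm{e}_{1}$, set $\bm{\widetilde{\Phi}}=\bm{\Phi}\bm{P}^{*}$ (which has the same distribution as $\bm{\Phi}$ by rotational invariance), and split $\bm{\widetilde{\Phi}}=[\bm{g},\bm{G}]$. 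The key computation (\ref{3.23}) from Lemma \ref{lemma6} shows that, conditionally on $\bm{g}$, almost surely
\begin{equation*}
\tfrac{1}{\kappa m}\Re\big\langle\sign(\bm{\Phi u}),\bm{\Phi v}_{\bm{u}}^{\bot}\big\rangle=\tfrac{1}{\kappa\sqrt{m}}\Re\big\langle\bm{\tilde{g}},(\bm{P}\bm{v}_{\bm{u}}^{\bot})^{[2:n]}\big\rangle,
\end{equation*}
where $\bm{\tilde{g}}$ has i.i.d.\ $\mathcal{N}(0,1)+\mathcal{N}(0,1)\ii$ entries and is independent of $\bm{g}$. After identifying $\mathbb{C}^{n-1}$ with $\mathbb{R}^{2n-2}$, $T$ becomes the supremum of a centered Gaussian process indexed by $\mathcal{K}^{*}$. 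By Fact \ref{fact2} there is a contraction $\bm{L}$ (depending only on $\bm{u}$) with $[\bm{P}\bm{v}_{\bm{u}}^{\bot}]_{\mathbb{R}}=\bm{L}[\bm{v}]_{\mathbb{R}}$; truncating to the last $2n-2$ coordinates gives a further contraction $\bm{L}_{1}\bm{L}$, and \cite[Exercise 7.5.4]{vershynin2018high} yields $\omega(\bm{L}_{1}\bm{L}[\mathcal{K}^{*}]_{\mathbb{R}})\leq\omega([\mathcal{K}^{*}]_{\mathbb{R}})=\omega(\mathcal{K}^{*})$. Since $\mathcal{K}$ is symmetric the absolute value inside the expected supremum is harmless, so $\mathbb{E}\,T\leq\omega(\mathcal{K}^{*})/(\kappa\sqrt{m})<\delta$ under the hypothesis (\ref{6.4}). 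Finally, $T$ is $(\kappa\sqrt{m})^{-1}$-Lipschitz in the Gaussian input by Cauchy--Schwarz and $\|(\bm{P}\bm{v}_{\bm{u}}^{\bot})^{[2:n]}\|\leq 1$, so Gaussian concentration via \cite[Lem.~5.1]{jacques2021importance} yields $T\leq 2\delta$ with probability $1-2\exp(-\Omega(\delta^{2}m))$. This estimate is uniform in $\bm{g}$ and therefore passes to the unconditional statement; combining with the bound on the first summand and rescaling $\delta$ by a constant completes the proof.

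No step constitutes a serious obstacle, as every ingredient is already present in the excerpt; the only mild subtlety is bookkeeping the roles of $\bm{u}$ and $\bm{v}$ so that the contraction produced by Fact \ref{fact2} is applied to $[\bm{v}]_{\mathbb{R}}$ (not $[\bm{u}]_{\mathbb{R}}$), and to confirm that the resulting Gaussian width is bounded by $\omega(\mathcal{K}^{*})$ uniformly in the fixed but arbitrary $\bm{u}\in\mathbb{S}_{c}^{n-1}$, which is exactly what yields the sample complexity $m\gtrsim\delta^{-2}\omega^{2}(\mathcal{K}^{*})$ claimed in the statement.
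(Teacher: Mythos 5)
Your proposal is correct and follows essentially the same route as the paper: Corollary \ref{local} is obtained in the paper precisely as the byproduct of Steps 1--2 of the parallel-part analysis in the proof of Theorem \ref{theorem5} (the decomposition (\ref{addclear1})/(\ref{A.4}), the bound on $\big|\|\bm{\Phi u}\|_1/(\kappa m)-1\big|$ via \cite[Lem.~5.2]{jacques2021importance}, and the Gaussian-process bound on the orthogonal term via rotational invariance, Fact \ref{fact2}, the contraction inequality, and Lipschitz concentration). Your bookkeeping of the swapped roles of $\bm{u}$ and $\bm{v}$ is exactly the relabelling the paper performs implicitly.
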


\subsection{The Orthogonal Part}
It remains to show (\ref{A.3}). Recall that  $\|\Im(\diag(\bm{z}^*)\bm{\Phi u})\|^2$   concentrates around $\|\bm{u}_{\bm{x}}^\bot\|^2 + |\Im\big<\bm{x},\bm{u}\big>|^2$ with a bias term $|\Im\big<\bm{x},\bm{u}\big>|^2$ (Lemma \ref{lem8}). Thus, we introduce 
\begin{equation}
     \label{A.9}\begin{aligned}
         &\hat{f}^\bot( \bm{u}) : = \Big|\frac{1}{m}\|\Im\big(\diag(\bm{z}^*)\bm{\Phi u}\big) \|^2 \\&~~~~~~~~~~~~~~~~~~~~~- (\|\bm{u}_{\bm{x}}^\bot\|^2 + |\Im\big<\bm{x},\bm{u}\big>|^2)\Big|.\end{aligned}
     \end{equation}
     Similar to (\ref{aa3.66}), we have\begin{equation}
         \begin{aligned}\nonumber
            \sup_{\bm{u}\in\mathcal{K}^*}f^\bot(\bm{u}) \leq \frac{1}{3}+\frac{2}{3}\sup_{\bm{u}\in \mathcal{K}^*} \hat{f}^\bot(\bm{u}),
         \end{aligned}
     \end{equation}
     thus it is sufficient to show 
     \begin{equation}
         \label{suffi}
         \sup_{\bm{u}\in \mathcal{K}^*}\hat{f}^\bot(\bm{u}) = O(\delta).
     \end{equation}
     By (\ref{3.11}), (\ref{3.31}), we can write \begin{equation}\nonumber\begin{aligned}
         &\hat{f}^\bot(\bm{u}) = \Big|\frac{1}{m}\sum_{k=1}^m \big[\Im(\overline{\sign(\bm{\Phi}_k^*\bm{x})}\bm{\Phi}_k^*\bm{u}^\bot_{\bm{x}})\big]^2\\&~~~~~~~~~~~~~~~~~~~ ~~- \mathbbm{E}\big[\Im(\overline{\sign(\bm{\Phi}_k^*\bm{x})}\bm{\Phi}_k^*\bm{u}^\bot_{\bm{x}})\big]^2\Big|\end{aligned}
     \end{equation}
    as the mean of independent, zero-mean random variable. To prove (\ref{suffi}), our strategy is to first identify the distribution of $\Im(\overline{\sign(\bm{\Phi}_k^* \bm{x})}\bm{\Phi}^*_k\bm{u_x^\bot})$ and then apply the concentration result developed in \cite{dirksen2016dimensionality}.

    \noindent{\textbf{(Step 1.) Identify the distribution of $\Im(\overline{\sign(\bm{\Phi}_k^* \bm{x})}\bm{\Phi}^*_k\bm{u_x^\bot})$}}

    We find a fixed unitary matrix $\bm{P}$ such that $\bm{Px}  = \bm{e}_1$, then we let $\bm{v} = \bm{Pu_x}^\bot = [v_i]$. Because $\bm{v}=\bm{Pu_x}^\bot=\bm{Pu}-\Re\big<\bm{Pu},\bm{Px}\big>\bm{Px}$, $\bm{v}$ satisfies $\Re (v_1)=0$. For the sensing vectors, we also let $\bm{\tilde{\Phi}}_k = \bm{P\Phi}_k = [\phi_{ki}]_{i\in [n]}$. Then by rotational invariance,  $\{\bm{\tilde{\Phi}}_k:k\in [m]\}$ are independent copies of $\mathcal{N}^{n\times 1}(0,1) +\mathcal{N}^{n\times 1}(0,1)\ii$. Moreover, some algebra delivers 
    \begin{equation}
        \begin{aligned}\label{A.15}
           &\Im\big(\overline{\sign(\bm{\Phi}_k^* \bm{x})}\bm{\Phi}^*_k\bm{u_x^\bot}\big) = \Im\Big(\overline{\sign(\bm{\tilde{\Phi}}_k^*\bm{e}_1)}\bm{\tilde{\Phi}}_k^*\bm{v}\Big)\\&=\Im\Big(\sign(\phi_{k1}) \cdot \sum_{i=1}^n\overline{\phi_{ki}}v_i \Big) \\
           & = \frac{|\phi_{k1}|}{\sqrt{2}}\cdot\big(\sqrt{2}v_1^\Im\big) + \sum_{i=2}^n\Re(\sign(\phi_{k1})\overline{\phi_{ki}})\cdot v_i^\Im \\&~~~~~~~~~~+\sum_{i=2}^n  \Im(\sign(\phi_{k1})\overline{\phi_{ki}})\cdot v_i^\Re \\&= \big<\bm{\Psi}_k, \bm{\hat{v}}\big>,
        \end{aligned}
    \end{equation}
    where 
    \begin{equation}\label{hatv}
        \bm{\hat{v}} = [\sqrt{2}v_1^\Im,v_2^\Im,\cdots,v_n^\Im,v_2^\Re,\cdots,v_n^\Re]^\top\in \mathbb{R}^{2n-1},
    \end{equation}
    $\bm{\Psi}_k$ is  given by  \begin{equation}
        \begin{aligned}
        \label{A.13}
          & \bm{\Psi}_k =  \Big[\frac{|\phi_{k1}|}{\sqrt{2}} , \Re(\sign(\phi_{k1})\overline{\phi_{k2}}),\cdots,\Re(\sign(\phi_{k1})\overline{\phi_{kn}}),\\&\Im(\sign(\phi_{k1})\overline{\phi_{k2}}),\cdots,\Im(\sign(\phi_{k1})\overline{\phi_{kn}})\Big]^\top \in \mathbb{R}^{2n-1}.
        \end{aligned}
    \end{equation} 
    Note that $|\phi_{k1}|$ and $\sign(\phi_{k1})$ are statistically independent (e.g., \cite[Exercise 3.3.7]{vershynin2018high}), thus by conditionally on $\sign(\phi_{k1})$, $\{\sign(\phi_{k1})\overline{\phi_{ki}}:2\leq i\leq n\}$ are (almost surely) i.i.d. copies of $\mathcal{N}(0,1)+\mathcal{N}(0,1)\ii$. Consequently, with an additional entry of $\frac{|\phi_{k1}|}{\sqrt{2}}$,  $\bm{\Psi}_k$ is isotropic random vector satisfying $\|\bm{\Psi}_k\|_{\psi_2}=O(1)$. 
    
    \noindent{\textbf{(Step 2.) Establish the concentration via a tool from \cite{dirksen2016dimensionality}}}
    
    By (\ref{suffi}) and (\ref{A.15}), we only need to show \begin{equation}
    \label{A.14}
        \sup_{\bm{u}\in \mathcal{K}^*}\Big|\frac{1}{m}\sum_{k=1}^m\big<\bm{\Psi}_k,\bm{\hat{v}}\big>^2 - \|\bm{\hat{v}}\|^2\Big| \leq C_1\delta  ,  
    \end{equation} 
    and recall that $\bm{\Psi}_k$ is given in (\ref{A.13}), $\bm{\hat{v}}$ defined in (\ref{hatv}) is constructed from   $\bm{v} = \bm{P}(\bm{u}-\Re\big<\bm{u},\bm{x}\big>\bm{x})$. By construction we have  $\bm{v}=\bm{P}\big(\bm{u}-\Re\big<\bm{u},\bm{x}\big>\bm{x}\big)$, then we use Fact \ref{fact2} to obtain that there exists some $\bm{L}\in \mathbb{R}^{2n\times 2n}$ satisfying $\|\bm{L}\|\leq 1$ such that $[\bm{v}]_\mathbb{R}=\bm{L}[\bm{u}]_\mathbb{R}$. More evidently, from (\ref{hatv}) we know there exists some $\bm{L}_2\in \mathbb{R}^{(2n-1)\times 2n}$ satisfying $\|\bm{L}_2\|=\sqrt{2}$ such that $\bm{\hat{v}}=\bm{L}_2[\bm{v}]_\mathbb{R}$. Overall, we have $\bm{\hat{v}}=\bm{L}_2\bm{L}[\bm{u}]_\mathbb{R}$, thus
        $\bm{u}\in \mathcal{K}^*$ implies $\bm{\hat{v}} \in \bm{L}_2\bm{L}[\mathcal{K}^*]_\mathbb{R}$. Therefore, by letting $\mathcal{K}_1 :=\bm{L}_2\bm{L}  [\mathcal{K}^*]_\mathbb{R}$,  the desired   (\ref{A.14}) can be implied by  \begin{equation}
    \label{A.16}
        \sup_{\bm{w}\in \mathcal{K}_1} \Big|\frac{1}{m}\sum_{k=1}^m\big(\big<\bm{\Psi}_k,\bm{w}\big>^2- \mathbbm{E}\big<\bm{\Psi}_k,\bm{w}\big>^2\big)\Big|\leq C_1\delta.
    \end{equation}
    We prove this by using  \cite[Thm. 3.2]{dirksen2016dimensionality}. Following the notations in \cite{dirksen2016dimensionality} we first verify the  Assumptions of \cite[Thm. 3.2]{dirksen2016dimensionality}.

    Firstly, as we have $\max_{k}\|\bm{\Psi}_k\|_{\psi_2} \leq C_2$ for some absolute constant $C_2$, for any $\bm{w}_1,\bm{w}_2\in \mathcal{K}_1$ it holds that
    \begin{equation}
        \begin{aligned}
        \label{A.17}
            d_{\psi_2}(\bm{w}_1,\bm{w}_2)&= \max_{k\in [m]} \big\|\big<\bm{\Psi}_k,\bm{w}_1\big> -\big<\bm{\Psi}_k,\bm{w}_2\big> \big\|_{\psi_2} \\&\leq C_2\|\bm{w}_1-\bm{w}_2\|\\:&=C_2d_2(\bm{w}_1,\bm{w}_2),
        \end{aligned}
    \end{equation}
    where we write $d_2(\bm{w}_1,\bm{w}_2)=\|\bm{w}_1-\bm{w}_2\|$ as the $\ell_2$ distance. 
     Assume $\bm{g}_1\sim \mathcal{N}^{(2n-1)\times 1}({0},1)$, we have\begin{equation}\nonumber\begin{aligned}
        \gamma_2(\mathcal{K}_1,d_{\psi_2})&\stackrel{(i)}{\leq } C_2\gamma_2(\mathcal{K}_1,d_{\ell_2}) \\&\stackrel{(ii)}{\lesssim} \mathbbm{E}\Big|\sup_{\bm{w}\in \mathcal{K}_1}\big<\bm{g}_1, \bm{w}\big>\Big| \stackrel{(iii)}{=}\omega(\mathcal{K}_1),\end{aligned}
    \end{equation} 
    where   $(i)$ is due to    definition of the $\gamma_2$-functional (see  \cite[Definition 3.1]{dirksen2016dimensionality}) and (\ref{A.17}), $(ii)$ is due to   \cite[Equation (5)]{dirksen2016dimensionality},   $(iii)$ is because $\mathcal{K}_1$ is symmetric.

    Secondly, we have \begin{equation}
        \begin{aligned}\nonumber
       & \bar{\Delta}_{\psi_2}(\mathcal{K}_1) = \sup_{\bm{w}\in \mathcal{K}_1}\max_{k\in [m]} \|\big<\bm{\Psi}_k,\bm{w}\big>\|_{\psi_2}\\&\leq C_2 \sup_{\bm{w}\in \mathcal{K}_1}\|\bm{w}\| \stackrel{(i)}{\leq} C_2\|\bm{L}_2\|\|\bm{L}\| \stackrel{(ii)}{\leq} C_2\sqrt{2},
        \end{aligned}
    \end{equation} 
    where we use $\mathcal{K}_1=\bm{L}_2\bm{L}[\mathcal{K}^*]_\mathbb{R}$ in $(i)$, and then $\|\bm{L}\|\leq1$, $\|\bm{L}_2\|\leq \sqrt{2}$ in $(ii)$.

    Therefore, we can invoke     \cite[Thm. 3.2]{dirksen2016dimensionality} to obtain \begin{equation}
        \begin{aligned}\nonumber
        &\mathbbm{P}\Big(\sup_{\bm{w}\in \mathcal{K}_1}\Big|\frac{1}{m}\sum_{k=1}^m\big(\big<\bm{\Psi}_k,\bm{w}\big>^2- \mathbbm{E}\big<\bm{\Psi}_k,\bm{w}\big>^2\big)\Big|\geq \\&~~~C_3\Big[\frac{\omega^2(\mathcal{K}_1)}{m}  + \frac{\omega (\mathcal{K}_1)}{\sqrt{m}} + \frac{t}{m} + \frac{\sqrt{t}}{\sqrt{m}}\Big]\Big) \leq \exp(-t)  
        \end{aligned}
    \end{equation}
    for any $t>0$.
    Thus, we take $t = \delta^2m$ and assume a  sample size $m = \Omega(\delta^{-2}\omega^2(\mathcal{K}_1))$, (\ref{A.16}) follows with probability at least $1-\exp(-C_4\delta^2m)$.

    Now it remains to confirm $\omega^2(\mathcal{K}_1)\lesssim \omega^2(\mathcal{K}\cap \mathbb{S}^{n-1}_c)$. Since $\mathcal{K}_1 = \bm{L}_2\bm{L}[\mathcal{K}^*]_\mathbb{R}$ for some $\|\bm{L}\|\leq 1,\|\bm{L}_2\|=\sqrt{2}$, by using    \cite[Exercise 7.5.4]{vershynin2018high} we obtain $\omega (\mathcal{K}_1) \leq \sqrt{2} \omega([\mathcal{K}^*]_\mathbb{R}) = \omega(\mathcal{K}\cap \mathbb{S}^{n-1}_c)$. The proof is concluded. \hfill $\square$

\subsection{Auxiliary Facts}\label{auxi}
The following two facts would be used in the proof of Theorem \ref{theorem5}.

\begin{fact}\label{fact1}
Under the setting of Theorem \ref{theorem5},   recall that $\mathcal{K}^*=\mathcal{K}\cap \mathbb{S}^{n-1}_c$, for any $\delta>0$,  then with probability at least $1-C\exp(-\Omega(\delta^2 m ))$,  $1-\delta\leq  \frac{\|\bm{\Phi u}\|^2}{{2m}}\leq 1+\delta$ holds for all $\bm{u}\in \mathcal{K}^*$. 
\end{fact}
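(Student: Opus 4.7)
The plan is to reduce the complex-valued problem to a real sub-Gaussian quadratic process and invoke an off-the-shelf concentration result whose sample complexity is controlled by Gaussian width.

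First I would unpack $\|\bm{\Phi u}\|^2$ in real coordinates. Writing $\bm{\Phi} = \bm{\Phi}^\Re + \bm{\Phi}^\Im\ii$ and using $[\bm{u}]_\mathbb{R}\in \mathbb{R}^{2n}$, a direct computation as in the proof of Lemma \ref{lem3} gives
\begin{equation*}
    \|\bm{\Phi u}\|^2 = \big\|[\bm{\Phi}^\Re,-\bm{\Phi}^\Im][\bm{u}]_\mathbb{R}\big\|^2 + \big\|[\bm{\Phi}^\Im,\bm{\Phi}^\Re][\bm{u}]_\mathbb{R}\big\|^2 = \|\bm{\Phi}'[\bm{u}]_\mathbb{R}\|^2,
\end{equation*}
where $\bm{\Phi}'\in \mathbb{R}^{2m\times 2n}$ has i.i.d.\ $\mathcal{N}(0,1)$ entries (the two blocks are jointly Gaussian, each entry standard normal, and the cross-covariances vanish). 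Under this identification, $\bm{u}\in \mathcal{K}^*$ iff $[\bm{u}]_\mathbb{R}\in [\mathcal{K}^*]_\mathbb{R}\subset \mathbb{S}^{2n-1}_r$, and the target inequality rewrites as $\big|\tfrac{1}{2m}\|\bm{\Phi}'\bm{w}\|^2-1\big|\leq \delta$ uniformly over $\bm{w}\in [\mathcal{K}^*]_\mathbb{R}$.

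Next, I would invoke a standard concentration bound for sub-Gaussian quadratic processes indexed by a set of bounded Gaussian width --- the Mendelson-type result cited in \cite{mendelson2007reconstruction} (and used in the same role in \cite[Thm. 2.2]{jacques2021importance}, see also \cite[Thm. 3.2]{dirksen2016dimensionality} which is already referenced later in the appendix). Applied to the $2m\times 2n$ standard Gaussian matrix $\bm{\Phi}'$ and the index set $[\mathcal{K}^*]_\mathbb{R}\subset \mathbb{S}^{2n-1}_r$, it yields that with probability at least $1-C\exp(-c\delta^2m)$,
\begin{equation*}
    \sup_{\bm{w}\in [\mathcal{K}^*]_\mathbb{R}}\Big|\tfrac{1}{2m}\|\bm{\Phi}'\bm{w}\|^2-1\Big|\leq \delta,
\end{equation*}
provided $m\gtrsim \delta^{-2}\omega^2([\mathcal{K}^*]_\mathbb{R})$. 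By the convention (\ref{complexwidth}), $\omega([\mathcal{K}^*]_\mathbb{R})=\omega(\mathcal{K}^*)$, so this sample complexity is exactly guaranteed by the hypothesis (\ref{6.4}) of Theorem \ref{theorem5}. Rewriting the supremum back in complex coordinates gives the claimed two-sided bound $1-\delta\leq \frac{\|\bm{\Phi u}\|^2}{2m}\leq 1+\delta$ uniformly over $\bm{u}\in \mathcal{K}^*$.

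There is no real obstacle here beyond book-keeping: the only care point is verifying that the real matrix $\bm{\Phi}'$ obtained from stacking the two blocks genuinely has i.i.d.\ standard normal entries (it does, because $\bm{\Phi}^\Re,\bm{\Phi}^\Im$ are independent $\mathcal{N}^{m\times n}(0,1)$ and the mapping $\bm{u}\mapsto([\bm{u}]_\mathbb{R},J[\bm{u}]_\mathbb{R})$ for the block rotation $J$ preserves independence), and that the Gaussian width does not inflate under the real embedding. Both are immediate. If one prefers to avoid invoking the external Mendelson-type theorem directly, the same conclusion can be extracted from \cite[Thm. 3.2]{dirksen2016dimensionality} applied with $\bm{\Psi}_k$ replaced by the appropriate rows of $\bm{\Phi}'$, exactly as is done in the orthogonal-part argument of Theorem \ref{theorem5} later in the appendix.
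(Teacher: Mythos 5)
There is a genuine gap: your claim that the stacked matrix $\bm{\Phi}'=\begin{bmatrix}\bm{\Phi}^\Re & -\bm{\Phi}^\Im\\ \bm{\Phi}^\Im & \bm{\Phi}^\Re\end{bmatrix}\in\mathbb{R}^{2m\times 2n}$ has i.i.d.\ $\mathcal{N}(0,1)$ entries is false. The two blocks $\bm{\Phi}_1=[\bm{\Phi}^\Re,-\bm{\Phi}^\Im]$ and $\bm{\Phi}_2=[\bm{\Phi}^\Im,\bm{\Phi}^\Re]$ are each legitimate $m\times 2n$ Gaussian matrices with i.i.d.\ entries, but they are built from the \emph{same} underlying Gaussian variables $\bm{\Phi}^\Re,\bm{\Phi}^\Im$. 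Concretely, the $(i,j)$ entry of $\bm{\Phi}'$ for $i\leq m,\,j\leq n$ is $\bm{\Phi}^\Re_{ij}$, and the $(m+i,\,n+j)$ entry is also $\bm{\Phi}^\Re_{ij}$; they are identical, not uncorrelated. Equivalently, rows $i$ and $m+i$ of $\bm{\Phi}'$ are dependent (they have a nonzero cross-covariance matrix). So the one-shot invocation of the Mendelson/Jacques sub-Gaussian quadratic concentration theorem applied to $\bm{\Phi}'$ as a single i.i.d.\ Gaussian ensemble is not justified, since those results presume independent isotropic rows.

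The conclusion is still correct, and the fix is essentially what the paper does: since
$\frac{1}{2m}\|\bm{\Phi u}\|^2 = \frac{1}{2}\big(\frac{1}{m}\|\bm{\Phi}_1[\bm{u}]_\mathbb{R}\|^2 + \frac{1}{m}\|\bm{\Phi}_2[\bm{u}]_\mathbb{R}\|^2\big)$
and each $\bm{\Phi}_j\sim\mathcal{N}^{m\times 2n}(0,1)$ individually, apply the concentration result (e.g.\ \cite[Thm.\ 2.2]{jacques2021importance}) separately to $\bm{\Phi}_1$ and $\bm{\Phi}_2$ on the index set $[\mathcal{K}^*]_\mathbb{R}\subset\mathbb{S}^{2n-1}_r$, each with sample complexity $m\gtrsim\delta^{-2}\omega^2([\mathcal{K}^*]_\mathbb{R})$, and union-bound over the two events; the average of two quantities each in $[1-\delta,1+\delta]$ is again in $[1-\delta,1+\delta]$. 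The dependence between $\bm{\Phi}_1$ and $\bm{\Phi}_2$ costs nothing here beyond a factor of two in the failure probability, because the union bound does not require independence. Everything else in your argument (the real-coordinate unpacking, the width identification $\omega([\mathcal{K}^*]_\mathbb{R})=\omega(\mathcal{K}^*)$, and the match with the hypothesis (\ref{6.4})) is correct.
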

\begin{proof}
The proof is based on the real case in \cite[Thm. 2.2]{jacques2021importance} (that is adapted from \cite{mendelson2008uniform}). We calculate that $\|\bm{\Phi u}\|^2=\|\bm{\Phi}_1[\bm{u}]_\mathbb{R}\|^2+ \|\bm{\Phi}_2[\bm{u}]_\mathbb{R}\|^2$ where $\bm{\Phi}_1=[\bm{\Phi}^{\Re},-\bm{\Phi}^{\Im}]$, $\bm{\Phi}_2= [\bm{\Phi}^{\Im},\bm{\Phi}^{\Re}]$. Recall that $[\mathcal{K}]_\mathbb{R}=\{[\bm{u}]_\mathbb{R}:\bm{u}\in \mathcal{K}\}$, and so $\bm{u}\in \mathcal{K}^*$ is equivalent to $[\bm{u}]_\mathbb{R}\in [\mathcal{K}]_\mathbb{R}\cap \mathbb{S}^{2n-1}_r$. 
Because $\bm{\Phi}_1,\bm{\Phi}_2\sim \mathcal{N}^{m\times 2n}(0,1)$, and in Theorem \ref{theorem5} we assume $m\gtrsim \delta^{-2}\omega(\mathcal{K}^*)=\delta^{-2}\omega\big([\mathcal{K}]_\mathbb{R}\cap \mathbb{S}_r^{2n-1}\big)$, by applying \cite[Thm. 2.2]{jacques2021importance} we obtain $$ \frac{\|\bm{\Phi}_1[\bm{u}]_\mathbb{R}\|^2}{m},\frac{\|\bm{\Phi}_2[\bm{u}]_\mathbb{R}\|^2}{m}\in [1-\delta, 1+\delta]$$ holds for all $\bm{u}\in \mathcal{K}^*$ with the promised probability. The result immediately follows from $\frac{\|\bm{\Phi u}\|^2}{2m}=\frac{1}{2}\big(\frac{\|\bm{\Phi}_1[\bm{u}]_\mathbb{R}\|^2}{m}+\frac{\|\bm{\Phi}_2[\bm{u}]_\mathbb{R}\|^2}{m}\big)$.
\end{proof}
\begin{fact}
\label{fact2}
Given a fixed $\bm{x}\in \mathbb{S}^{n-1}_c$ and a unitary matrix $\bm{P}\in \mathbb{C}^{n\times n}$, for any $\bm{u}\in \mathbb{C}^n$ we define $\bm{v}=\bm{P}\big(\bm{u}-\Re\big<\bm{u},\bm{x}\big>\bm{x}\big)\in \mathbb{C}^n$. Then there exists a matrix $\bm{L}\in \mathbb{R}^{2n\times 2n}$ satisfying $\|\bm{L}\|\leq 1$ such that $[\bm{v}]_\mathbb{R}=\bm{L}[\bm{u}]_\mathbb{R}$ holds for all $\bm{u}\in \mathbb{C}^n$. 
\end{fact}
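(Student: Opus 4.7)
\noindent\textbf{Proof proposal for Fact \ref{fact2}.} The plan is to exhibit $\bm{L}$ explicitly as the product of an orthogonal projection and an orthogonal matrix in $\mathbb{R}^{2n\times 2n}$, so that the norm bound $\|\bm{L}\|\leq 1$ drops out immediately from submultiplicativity.

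First, I would decompose $\bm{v}$ as $\bm{v}=\bm{P}\bm{w}$, where $\bm{w}:=\bm{u}-\Re\langle\bm{u},\bm{x}\rangle\bm{x}$. The point is that the map $\bm{u}\mapsto\bm{w}$ is only $\mathbb{R}$-linear (the factor $\Re\langle\bm{u},\bm{x}\rangle$ is $\mathbb{R}$-linear but not $\mathbb{C}$-linear in $\bm{u}$), while $\bm{u}\mapsto\bm{P}\bm{u}$ is $\mathbb{C}$-linear. Both therefore induce $\mathbb{R}$-linear maps on $\mathbb{R}^{2n}$ when we identify $\bm{u}$ with $[\bm{u}]_\mathbb{R}$, so the composition is represented by some real matrix $\bm{L}\in\mathbb{R}^{2n\times 2n}$, and my task reduces to bounding its norm.

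Next, I would identify the real matrix representing each factor. For the projection piece, the key algebraic identity is $\Re\langle\bm{u},\bm{x}\rangle=\langle[\bm{u}]_\mathbb{R},[\bm{x}]_\mathbb{R}\rangle$ (standard reinterpretation of the complex inner product). Since $\|[\bm{x}]_\mathbb{R}\|=\|\bm{x}\|=1$, the map $[\bm{u}]_\mathbb{R}\mapsto[\bm{w}]_\mathbb{R}=[\bm{u}]_\mathbb{R}-\langle[\bm{u}]_\mathbb{R},[\bm{x}]_\mathbb{R}\rangle[\bm{x}]_\mathbb{R}$ is precisely the orthogonal projection onto the hyperplane $[\bm{x}]_\mathbb{R}^{\perp}\subset\mathbb{R}^{2n}$; call the corresponding real matrix $\bm{\Pi}=\bm{I}_{2n}-[\bm{x}]_\mathbb{R}[\bm{x}]_\mathbb{R}^\top$, which satisfies $\bm{\Pi}=\bm{\Pi}^\top=\bm{\Pi}^2$ and hence $\|\bm{\Pi}\|=1$. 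For the unitary piece, the excerpt already records that $[\bm{P}\bm{w}]_\mathbb{R}=\bm{P}'[\bm{w}]_\mathbb{R}$, where
\[
\bm{P}'=\begin{bmatrix}\Re(\bm{P}) & -\Im(\bm{P})\\ \Im(\bm{P}) & \Re(\bm{P})\end{bmatrix}\in\mathbb{R}^{2n\times 2n}
\]
is orthogonal; in particular $\|\bm{P}'\|=1$.

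Putting the pieces together I would take $\bm{L}:=\bm{P}'\bm{\Pi}$, so that $[\bm{v}]_\mathbb{R}=\bm{P}'[\bm{w}]_\mathbb{R}=\bm{P}'\bm{\Pi}[\bm{u}]_\mathbb{R}=\bm{L}[\bm{u}]_\mathbb{R}$, and then $\|\bm{L}\|\leq\|\bm{P}'\|\,\|\bm{\Pi}\|\leq 1$. I do not anticipate any real obstacle: the only subtlety is being careful that the $\Re\langle\bm{u},\bm{x}\rangle\bm{x}$ piece is handled on the real side (as a rank-one projection in $\mathbb{R}^{2n}$) rather than on the complex side (where $\bm{u}\mapsto\langle\bm{u},\bm{x}\rangle\bm{x}$ would also be a $\mathbb{C}$-projection of unit norm, but would not match the formula in the statement). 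Once this identification is made, the argument is essentially one line.
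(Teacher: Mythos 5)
Your proposal is correct, and it takes a somewhat more constructive route than the paper. The paper establishes existence of $\bm{L}$ abstractly by observing that $\mathscr{L}(\bm{u})=\bm{P}(\bm{u}-\Re\langle\bm{u},\bm{x}\rangle\bm{x})$ is $\mathbb{R}$-linear, and then bounds $\|\bm{L}\|$ by directly computing $\|\bm{L}\|=\sup_{\bm{u}\in\mathbb{S}^{n-1}_c}\|\bm{v}\|=\sup_{\bm{u}\in\mathbb{S}^{n-1}_c}\|\bm{u}_{\bm{x}}^\bot\|\le 1$, invoking the unitarity of $\bm{P}$ and the Pythagorean identity $\|\bm{u}_{\bm{x}}^\|\|^2+\|\bm{u}_{\bm{x}}^\bot\|^2=1$. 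You instead exhibit $\bm{L}$ explicitly as $\bm{P}'\bm{\Pi}$, where $\bm{\Pi}=\bm{I}_{2n}-[\bm{x}]_\mathbb{R}[\bm{x}]_\mathbb{R}^\top$ is the real orthogonal projection matching the identity $\Re\langle\bm{u},\bm{x}\rangle=\langle[\bm{u}]_\mathbb{R},[\bm{x}]_\mathbb{R}\rangle$, and $\bm{P}'$ is the real orthogonal representation of $\bm{P}$ already recorded in the paper's preliminaries; the norm bound then follows from submultiplicativity. The underlying reasons for the bound are the same in both proofs (the unitary preserves $\ell_2$-norm, the projection is nonexpansive), but your version has the small advantage of producing a concrete formula for $\bm{L}$, while the paper's version is marginally shorter because it bypasses naming the projection matrix. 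Both are equally rigorous.
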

\begin{proof}
Given $\bm{x},\bm{P}$, we write $\bm{v}=\mathscr{L}(\bm{u})=\bm{P}\big(\bm{u}-\Re\big<\bm{u},\bm{x}\big>\bm{x}\big)$. Because for any $\bm{u}_1,\bm{u}_2\in\mathbb{C}^n$ and $t\in \mathbb{R}$ it holds that $\mathscr{L}(\bm{u}_1+t\bm{u}_2)=\mathscr{L}(\bm{u}_1)+t\mathscr{L}(\bm{u}_2)$,   there exists some $\bm{L}\in \mathbb{R}^{2n\times 2n}$ such that for any $\bm{u}\in \mathbb{C}^n$ we have the following: 
  \begin{equation}\label{uvrelation}\nonumber
      \bm{v}=\bm{P}\big(\bm{u}-\Re\big<\bm{u},\bm{x}\big>\bm{x}\big) \iff [\bm{v}]_\mathbb{R} = \bm{L}[\bm{u}]_\mathbb{R}.
  \end{equation}
  It remains to show $\|\bm{L}\|\leq 1$.
 Note that $[\bm{u}]_\mathbb{R}\in \mathbb{S}^{2n-1}_r$ if and only if $\bm{u}\in \mathbb{S}^{n-1}_c$, and for any $\bm{w}\in \mathbb{S}_r^{2n-1}$ there exists some $\bm{u}\in \mathbb{S}^{n-1}_c$ such that $\bm{w}=[\bm{u}]_\mathbb{R}$. Thus, assuming $\bm{v}$ is determined by $\bm{u}$ as in (\ref{uvrelation}), 
 we have \begin{equation}
     \begin{aligned}\nonumber
     \|\bm{L}\|&=\sup_{\bm{u}\in \mathbb{S}^{n-1}_c}\|[\bm{v}]_\mathbb{R}\|=\sup_{\bm{u}\in \mathbb{S}^{n-1}_c}\|\bm{v}\|\\& = \sup_{\bm{u}\in \mathbb{S}^{n-1}_c} \big\|\bm{P}\big(\bm{u}-\Re\big<\bm{u},\bm{x}\big>\bm{x}\big)\big\|\\
     &\stackrel{(i)}{=}\sup_{\bm{u}\in \mathbb{S}^{n-1}_c}\big\|\bm{u}-\Re\big<\bm{u},\bm{x}\big>\bm{x}\big\| =\sup_{\bm{u}\in \mathbb{S}_c^{n-1}}\big\|\bm{u_x}^\bot\big\|\stackrel{(ii)}{\leq}1.
     \end{aligned}
 \end{equation}
 Note that $(i)$ is because $\bm{P}$ is unitary, $(ii)$ is due to $\|\bm{u_x}^\bot\|^2+\|\bm{u_x}^\|\|^2=1$. The proof is complete. 
\end{proof}


  \end{appendix}
%





\end{document}